\documentclass[11pt]{article}
\usepackage{appendix}
\usepackage{mathtools}
\usepackage[T1]{fontenc}
\usepackage{amsfonts}
\usepackage{amsmath}
\usepackage{amssymb}
\usepackage{pgfplots}
\usepackage{amsthm}
\usepackage{amsmath,amssymb,bm,booktabs,graphicx,xcolor}
\usepackage{caption}
\usepackage{algorithm}
\usepackage{algpseudocode}
\usepackage{subcaption}
\usepackage{float}
\usepackage{microtype}
\usepackage{enumitem}
\usepackage{thmtools}
\usepackage{bbm}
\usepackage{bm}
\usepackage{mathrsfs}
\usepackage{color}
\usepackage{pdfsync}
\usepackage{enumitem}
\usepackage{mathtools}
\usepackage{xcolor} 
\definecolor{darkgreen}{rgb}{0,0.5,0} 
\definecolor{darkbkue}{rgb}{1,0,0} 
\usepackage{natbib}
\setcitestyle{authoryear,aysep={,}}  
\usepackage[colorlinks=true, linkcolor=darkgreen, citecolor=darkgreen, urlcolor=blue]{hyperref}
\makeatletter
\mathtoolsset{showonlyrefs}
\pgfplotsset{compat=1.18}
\renewcommand\@makefnmark{%
  \hbox{\textsuperscript{\textcolor{darkgreen}{\@thefnmark}}}}
\makeatother
\usepackage{tikz}

\usepackage[scaled]{helvet}
\DeclareMathOperator*{\argmin}{argmin}

\newcommand{\RR}{\mathbb{R}}
\newcommand{\R}{\RR}

\usepackage[margin=31 mm]{geometry}
\newcommand{\mykill}[1]{}

\usepackage[capitalize, nameinlink]{cleveref}

\newlist{hypotheses}{enumerate}{1}
\setlist[hypotheses]{
    label=\textcolor{darkgreen}{(H\arabic*)},
    ref=(H\arabic*)
}

\usepackage{amsfonts,bbm,bm}
\newcommand{\rom}[1]{%
  \textup{\uppercase\expandafter{\romannumeral#1}}%
}

\theoremstyle{plain}
\newtheorem{theorem}{Theorem}[section]
\newtheorem{proposition}[theorem]{Proposition}
\newtheorem{lemma}[theorem]{Lemma}

\theoremstyle{definition}
\newtheorem{definition}[theorem]{Definition}
\newtheorem{remark}[theorem]{Remark}

\theoremstyle{remark}

\newlist{myenum}{enumerate}{3}
\setlist[myenum,1]{label={\rm (H\arabic*)},
                   ref  ={\rm (H\arabic*)}}
\crefname{myenumi}{property}{properties}

\renewenvironment{thebibliography}[1]{%
\begin{oldthebibliography}{#1}%
\setlength{\baselineskip}{.9em}
\linespread{1}
\small
\setlength{\parskip}{.40ex}%
\setlength{\itemsep}{.30em}%
}%
{%
\end{oldthebibliography}%
}

\definecolor{grape}{rgb}{0.43, 0.17, 0.71}

\begin{document}

\title{ Huber–Wasserstein barycenters for robust distribution-valued data}
\date{}
\author{  
  Carlos Cardoso-Perelló\thanks{Department of Statistics, Columbia University, \textcolor{darkgreen}{cc5415@columbia.edu}.}
\and Alberto Gonz{\'a}lez-Sanz%
  \thanks{Department of Statistics, Columbia University, \textcolor{darkgreen}{ag4855@columbia.edu}.}  
  }
  
\maketitle 
\vspace{-1.5em}
\begin{abstract}
We propose a robust barycenter for distribution-valued data by incorporating the Huber loss directly into the optimal transport cost. In contrast to metric-space Huber means, which apply the Huber loss to the Wasserstein distance after optimization, our construction acts on individual transport displacements, preserving quadratic behavior locally while limiting the influence of large displacements. The resulting Huber--Wasserstein barycenters form a natural interpolation between Wasserstein means and $L^1$-type Wasserstein medians.

We establish the analytical and statistical foundations of this construction. For optimal transport with Huber loss, we prove regularity and uniqueness properties of dual potentials, existence of optimal transport maps, and stability as the Huber parameter varies. For the associated barycenter problem, we prove existence and characterization results, consistency of empirical plug-in estimators, and a finite-sample breakdown point essentially equal to $1/2$. In dimension one, we further derive the pointwise influence function and
asymptotic distribution, quantify the associated robustness--efficiency
trade-off, and show that displacement-wise Huberization can retain
first-order information that is lost by distance-based Huberization under
localized shape contamination. Numerical experiments on contaminated distribution-valued data demonstrate the robustness of the proposed barycenters and illustrate their interpolation between mean- and median-like behavior.
\end{abstract}

 \vspace{1em}

{\small
\noindent \emph{Keywords} Distribution-valued Data; Fr\'echet mean; Huber Loss; Optimal Transport; Wasserstein Barycenter. 

\noindent \emph{MSC2020 subject classifications. Primary: } 62G35, 62G30.  
}
 \vspace{.2em}
 \section{Introduction}
 Modern statistical problems increasingly involve data objects that do not naturally belong to Euclidean spaces. Among the most important examples is distribution-valued data, where each observation is itself a probability measure rather than a scalar or vector. Such data arise in a wide range of applications, including mortality and age-at-death distributions, regional income and house-price distributions \cite{chen2023wasserstein}, neuroimaging intensity distributions \cite{petersen2016functional,petersen2021wasserstein}, particle-size and sediment-size distributions in geoscience \cite{vandenboogaart2014bayes}, and collections of histograms, point clouds, or empirical measures obtained from repeated samples \cite{bigot2019statistical,bachoc2023improved,szabo2016learning}. Distributional representations also appear in modern machine-learning and AI pipelines, where texts, images, and media objects may be encoded or compared through probability distributions \cite{ghorbani2020distributional,chan2022data}. These examples motivate the development of statistical methods that respect the intrinsic geometry of spaces of probability measures.

A central task in distributional data analysis is to define a meaningful notion of center, or mean, for a probability measure ${\bf P}$ on the space of probability distributions. The standard approach is to use a Fr\'echet mean, which requires a suitable metric between probability measures. Let $\mathcal{P}_p(\mathbb{R}^d)$ denote the set of probability measures on $\mathbb{R}^d$ with finite moments of order $p$.  On $\mathcal{P}_p(\mathbb{R}^d)$, the canonical choice is the $p$-Wasserstein distance
\begin{equation}
    \label{eq:Wasserstein-distance}
    \mathcal{W}_p(\mu,\nu)
    =
    \left(
    \inf_{\pi \in \Pi(\mu,\nu)}
    \int \|x-y\|^p\,d\pi(x,y)
    \right)^{\frac{1}{p}},
\end{equation}
where $\Pi(\mu,\nu)$ is the set of probability measures on $\mathbb{R}^d \times \mathbb{R}^d$ with marginals $\mu$ and $\nu$.  The Wasserstein distance is particularly well suited for distributional data because it takes into account the geometry of the underlying space. In particular, it endows $\mathcal{P}_p(\mathbb{R}^d)$ with a rich structure, which is in several aspects analogous to Euclidean geometry; see \cite{Panaretos.Zemel.2019.ARS}.

Let us focus on $p=2$. Under the $   \mathcal{W}_2$ metric, the Fr\'echet mean of a probability measure ${\bf P}$ on $\mathcal{P}_2(\mathbb{R}^d)$ is the Wasserstein barycenter, defined by
\begin{equation}
    \label{eq:Wasserstein-bary}
    m({\bf P})
    :=
    \argmin_{\nu\in \mathcal{P}_2(\mathbb{R}^d)}
    \int \mathcal{W}_2^2(\mu,\nu)\,d{\bf P}(\mu).
\end{equation}
Since their introduction in \cite{Agueh.Carlier.2011.SIMA}, Wasserstein barycenters have been studied extensively. Existence results were established for discrete ${\bf P}$ in \cite{Agueh.Carlier.2011.SIMA}, and for general ${\bf P}$ in \cite{LeGouic.Loubes.2017.PTRF}. Computational methods for Wasserstein barycenters have been developed in
\cite{AlvarezEsteban.delBarrio.2016.JMAA,Cuturi.Doucet.2014.ICML,
Benamou.Carlier.2015.SISC}.
Their statistical behavior and computational complexity have been analyzed in
\cite{Bigot.Klein.2018.ESAIMPS,Kroshnin.Tupitsa.2019.ICML,
Altschuler.BoixAdsera.2021.JMLR,Heinemann.Munk.2022.SIMODS}.
For broader background and applications, we refer to
\cite{Peyre.Cuturi.2019.FTML,Panaretos.Zemel.2019.ARS}.

Despite their theoretical and practical appeal, Wasserstein barycenters can be highly
sensitive to outlying distributions. This issue has motivated several robust approaches in
Wasserstein space. An early line of work is based on impartial trimming:
\cite{AlvarezEsteban.etal.2008.TrimmedComparison} introduced robust comparisons
between probability distributions by optimally trimming mass within the distributions before
computing their Wasserstein discrepancy. Building on this principle at the level of
distribution-valued data, \cite{AlvarezEsteban.etal.2018.WideConsensus} proposed
trimmed Wasserstein barycenters for robust consensus aggregation of probability measures
and established existence and consistency results for the resulting robustified Fr\'echet means.
Similar ideas were developed in \cite{delBarrio.etal.2019.RobustClustering}, which
introduced trimmed \(k\)-barycenters for robust clustering of probability measures. In these latter approaches, robustness is achieved by trimming entire distribution-valued observations that are sufficiently discrepant from the consensus.

Related robust notions of center are provided by Wasserstein medians; see
\cite{Carlier.Chenchene.2024.SIMA} for \(\mathcal{W}_1\)-type Wasserstein medians and
\cite{You.Shung.Giuffre.2025} for Fr\'echet medians with respect to the \(\mathcal{W}_2\) metric. More recently,
\cite{Lee-Jung.2026.JRRSS-b} proposed a Huber-type interpolation between Fr\'echet means and
medians on general metric spaces. Applied to the Wasserstein space, their construction
amounts to placing the Huber loss outside the optimal transport problem, leading to
\begin{equation}
    \label{eq:Wasserstein-bary-huber-BAD}
    m_{H,c}'(\mathbf{P})
    :=
    \argmin_{\nu\in \mathcal{P}_2(\mathbb{R}^d)}
    \int \rho_c\big(\mathcal{W}_2(\mu,\nu)\big)\,d{\bf P}(\mu),
\end{equation}
where $\rho_c$ is the Huber loss with parameter $c>0$, defined as
\begin{equation}
    \label{eq:Huber}
    \rho_c(t)
    :=
    \begin{cases}
        \frac{1}{2}t^2, & \text{if } |t|<c,\\
        c|t|-\frac{c^2}{2}, & \text{if } |t|\geq c.
    \end{cases}
\end{equation}
The Huber loss behaves quadratically near the origin and linearly for large values,
thereby reducing the contribution of extreme observations.  Consequently,
\eqref{eq:Wasserstein-bary-huber-BAD} provides a natural robust alternative to the
classical Wasserstein barycenter. However, in general Wasserstein spaces, the map
$\nu \mapsto \rho_c\!\bigl(\mathcal{W}_2(\mu,\nu)\bigr)$
does not inherit the same convexity structure as the squared-Wasserstein objective.
This creates additional theoretical and computational difficulties, including the possible
presence of multiple local minimizers and a more delicate optimization landscape. The
one-dimensional case is exceptional, since the quantile representation restores convexity. 

We propose a different robustification strategy; instead of applying the Huber loss to the Wasserstein distance itself, we incorporate the Huber loss at the level of the optimal transport cost. Specifically, we define
\begin{equation}
    \label{eq:Wasserstein-bary-huber}
    m_{c}(\mathbf{P})
    :=
    \argmin_{\nu\in \mathcal{P}_1(\mathbb{R}^d)}
    \Gamma_{c,{\mathbf{P}}}(\nu),
    \qquad
    \Gamma_{c,{\mathbf{P}}}(\nu)
    :=
    \int
    \left(
    \mathcal{T}_{\rho_c}(\nu,\mu)
    -
    c\int \|x\|\,d\mu(x)
    \right)
    d{\mathbf{P}}(\mu),
\end{equation}
where
\begin{equation}
    \label{eq:primal-Huber-intro}
    \mathcal{T}_{\rho_c}(\mu,\nu)
    :=
    \inf_{\pi \in \Pi(\mu,\nu)}
    \int \rho_c(\|x-y\|)\,d\pi(x,y)
\end{equation}
is the optimal transport cost associated with the Huber loss. The centering term \(c\int \|x\|\,d\mu(x)\) ensures that
\(\Gamma_{c,P}(\nu)\) is finite for every
\(P\in\mathcal P(\mathcal P_1(\mathbb R^d))\), without requiring
any integrability condition on the first moments of the random measure
\(\mu\sim \mathbf{P}\).

\begin{figure}[H]
\centering
\includegraphics[width=0.98\linewidth]{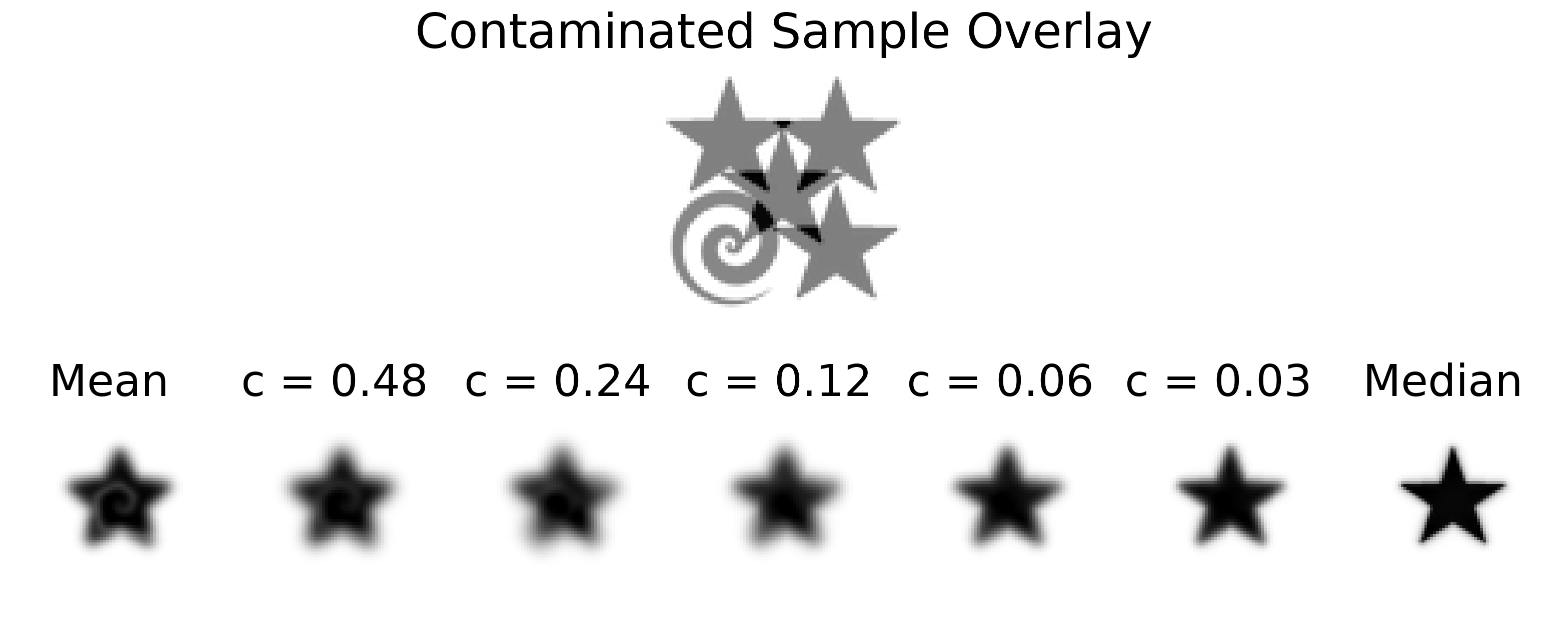}
\caption{Star--spiral contamination experiment. The Huber barycenters of the stars and the spirals interpolate between the Wasserstein mean and median as $c$ decreases to $0$.}
\label{fig:star_intro}
\end{figure}

This construction differs fundamentally from the metric-space Huber mean in
\eqref{eq:Wasserstein-bary-huber-BAD}. By modifying the transportation cost itself,
rather than the Wasserstein distance after optimization, the Huber loss acts locally on
individual transport displacements. The resulting objective $\Gamma_{c,{\bf P}}$ remains
convex in the standard linear geometry of probability measures, while limiting the
contribution of large transport displacements. We further show that the associated empirical
Huber--Wasserstein barycenter has a finite-sample breakdown point essentially equal to
$1/2$. Thus, our construction combines the convex structure of transport-based barycenter
problems with the soft-clipping principle of classical Huber robustification.

The proposed construction also connects naturally with existing notions of centrality in Wasserstein space: as $c\to\infty$, the Huber cost approaches the quadratic cost and the corresponding barycenter recovers the classical Wasserstein barycenter, whereas as $c\to 0$, after the appropriate normalization, it approaches the $L^1$-type Wasserstein median studied in \cite{Carlier.Chenchene.2024.SIMA}. This behavior is illustrated in \Cref{fig:star_intro}. More generally, barycenters associated with nonquadratic transportation costs have also
been investigated. \cite{Brizzi.Friesecke.Ried.NonLin.2025} study
\(p\)-Wasserstein barycenters, while
\cite{Brizzi.Friesecke.Ried.2026.hWasserstein} develop a barycenter theory for general
smooth strictly convex ground costs with nondegenerate Hessian. The Huber cost falls
outside the latter framework: it is only strictly convex in its quadratic region and becomes
affine in the radial variable beyond the cutoff \(c\). Thus, its large-displacement regime is
precisely a degenerate case not covered by the strictly convex \(h\)-Wasserstein theory.

\paragraph{Main contributions}  First, we develop the optimal-transport theory
associated with the Huber ground cost. Despite the loss of strict convexity in its
linear regime, we establish regularity and stability of Kantorovich potentials,
uniqueness under suitable conditions, existence of Monge solutions for absolutely
continuous source measures, and quantitative convergence toward quadratic optimal
transport as \(c\to\infty\). Second, we develop a population and statistical theory for
the resulting Huber--Wasserstein barycenters on
\(\mathcal P_1(\mathbb R^d)\): we prove existence and convexity, establish consistency
under both one-stage and two-stage sampling, and characterize the minimizers through
averaged Kantorovich potentials. Third, we show that the robustification parameter provides a genuine
interpolation between two transport geometries: as $c\downarrow 0$ the
barycenters converge to $W_1$-type Wasserstein medians, whereas as
$c\uparrow\infty$ they converge to classical quadratic Wasserstein
barycenters. Fourth, we establish both global and local robustness
properties. The empirical Huber--Wasserstein barycenter has a finite-sample
breakdown point essentially equal to $1/2$. In dimension one, we derive an
explicit quantile characterization, obtain its pointwise influence function
and asymptotic distribution, and quantify the corresponding
robustness--efficiency trade-off. We further show that, under localized
shape contamination, displacement-wise Huberization can retain first-order
information that is lost when a single Huber weight is assigned to the
distribution as a whole.

\paragraph{Related work}
Robustness of optimal transport discrepancies  has been pursued through several
distinct mechanisms; see \cite{Ronchetti.2023} for an overview from the perspective of
classical robust statistics. One early approach is to modify the ground cost. In computer
vision, \cite{Pele.Werman.2009} introduced Earth Mover's Distances with thresholded
ground distances, showing that truncation improves robustness to outlier noise and
quantization effects. More recently, \cite{Mukherjee.etal.2021} developed the ROBOT
framework for outlier-robust optimal transport. Their formulation penalizes perturbations of a marginal in total variation and admits an equivalent optimal transport formulation with truncated ground cost
$C_\lambda(x,y)=\min\{C(x,y),2\lambda\}.$
Thus, transportation costs above the threshold are completely saturated. Building on this
framework, \cite{Ma.Liu.LaVecchia.Lerasle.2025} develop a statistical theory for robust
optimal transportation, including measure-theoretic properties, concentration inequalities,
minimum-distance estimation, and a data-driven procedure for selecting the truncation
parameter.

Other approaches robustify different components of the transport problem.
\cite{Staerman.etal.2021} exploit the Kantorovich--Rubinstein dual representation of
\(\mathcal{W}_1\) and replace empirical expectations by Median-of-Means estimators.
\cite{Balaji.Chellappa.Feizi.2020} relax the marginal constraints through an unbalanced
optimal transport formulation, thereby allowing observations identified as outliers to
receive less mass. \cite{Nietert.Goldfeld.Cummings.2022} instead introduce a
partial-transport-based Wasserstein discrepancy that allows a prescribed amount of mass
to be removed and establish duality, structural properties, and statistical robustness
guarantees under Huber contamination.

From the viewpoint of classical robust statistics, the truncated ground cost used in
ROBOT corresponds to a hard-rejection mechanism. As observed by
\cite{Ronchetti.2023}, when the original transportation cost is quadratic, this
hard truncation may entail a substantial loss of efficiency at the model.
Ronchetti explicitly suggests a Huber loss as a potentially preferable alternative,
while noting that an analogous robustness theory was not available for this case.
In classical \(M\)-estimation terms, the distinction is that Huberization clips the
score rather than saturating the loss itself.

This soft-clipping principle is precisely
the one pursued in our construction. Rather than bounding the transportation loss, we replace the quadratic ground
cost by the Huber loss. Consequently, large transport displacements are not rejected:
their marginal contribution is clipped,
$
\rho_c'(r)=\min\{r,c\},$
while their transportation cost continues to grow linearly. This distinction is important
both statistically and analytically. Unlike a hard-truncated quadratic cost, the Huber cost
is convex and continuously differentiable, retains the quadratic geometry for small
displacements, and transitions continuously to a linear transport regime for large
displacements.

Related robust barycenter formulations based on robustified optimal transport have also
been considered. \cite{Le.etal.2021} study a fixed-support robust barycenter obtained by
relaxing marginal constraints through a Kullback--Leibler penalty, and
\cite{Wang.etal.2024} develop computational methods for fixed- and free-support variants.
More recently, \cite{Cheng.Liu.2026} construct a robust Wasserstein barycenter from the
hard-truncated ground metric
$d_\lambda(x,y)=\min\{d(x,y),\lambda\}$
and establish existence and consistency. Our proposal differs from these methods in using
an unbounded but linearly growing convex Huber cost rather than trimming entire
distribution-valued observations, relaxing or deleting transport mass, or saturating the
ground cost. This structure leads to a continuous interpolation between quadratic and
linear optimal transport and enables us to study the associated dual potentials, optimal
transport maps, barycenter stability, and finite-sample breakdown behavior.

A separate line of research concerns the robustness of the optimal transport map itself. \cite{Avella-Gonzalez.25,paindaveine:passeggeri2024} study the breakdown point of transport-based quantiles by treating the optimal transport map from a fixed reference distribution to a target distribution as a statistical functional of the latter. Their results relate the breakdown point of the map evaluated at a reference point to the halfspace depth of that point. This connection has subsequently been extended to broad classes of convex transportation costs in \cite{GonzalezSanz.AvellaMedina.2026.GeneralCosts}, showing in particular that the breakdown behavior of these transport maps is largely insensitive to the choice of regular ground cost. Complementarily, \cite{GonzalezSanz.Sheng.Wu.AvellaMedina.2026.Influence} investigate local robustness through the influence function of transport-based quantiles. The statistical object considered here is different. Our observations are themselves probability measures, so contamination occurs at the level of a population of distributions, and the object whose robustness we study is their barycenter rather than the value of a transport map at a fixed reference point. In particular, our finite-sample breakdown result quantifies how many entire distribution-valued observations must be replaced before the set of Huber--Wasserstein barycenters can be driven arbitrarily far away.

\paragraph{Organization of the paper.} The remainder of the paper is organized as follows. In \Cref{sect:stability-properties-cost}, we develop the theory of optimal transport with Huber cost, including Kantorovich duality, the regularity and stability of dual potentials, uniqueness of potentials, the existence of optimal transport maps, and the convergence of the Huber transport cost and maps to their quadratic counterparts as $c\to\infty$. In \Cref{sect:Barycenters}, we introduce the Huber--Wasserstein barycenter and establish existence, stability, and statistical consistency under one- and two-stage sampling. We further analyze its limiting behavior as $c\downarrow 0$ and $c\uparrow\infty$, characterize its minimizers in terms of averaged Kantorovich potentials, and determine its finite-sample breakdown point. \cref{sect:numerics} presents numerical experiments illustrating the interpolation between the Wasserstein barycenter and the Wasserstein median in several contaminated distributional-data settings. The appendix contains the proofs omitted from the main text (\hyperref[app:proofs]{Appendix A}), together with computational details for the Huber--Sinkhorn algorithm (\hyperref[app:huber-sinkhorn-barycenter]{Appendix B}).
\section{Optimal transport with Huber loss: stability and basic properties}\label{sect:stability-properties-cost}
In this section we study the optimal transport problem with Huber loss: 
\begin{equation}
    \label{eq:primal-Huber}
    \mathcal{T}_{\rho_c}(\mu,\nu)
    :=
    \inf_{\pi \in \Pi(\mu,\nu)}
    \int \rho_c(\|x-y\|)\,d\pi(x,y)
\end{equation}
In \Cref{sect:duality} we state the dual problem to \eqref{eq:primal-Huber}. In \Cref{sect:stability-potential} we study the stability and Lipschitz properties of the dual potentials; in particular, \Cref{lemma:Lipschitz-extension} shows that the dual potentials are $c$-Lipschitz, from which we infer in \Cref{theorem:stability-cost} that 
$$  |\mathcal{T}_{ \rho_c}(\mu_1,\nu) - \mathcal{T}_{ \rho_c}(\mu_2,\nu)|\leq c \cdot  \mathcal{W}_1(\mu_1,\mu_2).$$
\Cref{sect:potentials-and-map} shows that the dual potentials are unique under mild assumptions on the measures $\mu$ and $\nu$. \Cref{theorem:Huber-maps} shows that there exists a solution of \eqref{eq:primal-Huber} concentrated on the graph of a function, in the spirit of \cite{Cuesta1989NotesOT, brenier1991polar}. \Cref{propositon:Stable-cost-quantitative} and \cref{theorem:stability_map} show a quantitative result on the stability as $c\to \infty$ of the $ \mathcal{T}_{\rho_c}(\mu,\nu)$ and the OT map. The proofs of all the results of this section can be found in \hyperref[app:proofs]{Appendix A}. 

\subsection{Duality}\label{sect:duality}
A function $f \colon \mathbb{R}^d \to \mathbb{R} \cup \{-\infty\}$ is called \textit{$g$-concave} if it can be expressed as
\begin{align}\label{CLTgeneq:c_concave}
f(x) = \inf_{(y, t) \in \mathcal{T}} \big\{ g(\|x-y\|) - t \big\},
\end{align}
for some set $\mathcal{T} \subset \mathbb{R}^d \times \mathbb{R}$.  For such a function, its \textit{$g$-superdifferential} $\partial^{g} f$ is defined as the set of pairs $(x, y) \in \mathbb{R}^d \times \mathbb{R}^d$ satisfying
\begin{equation}
    \label{eq:def-superdiff}
    \partial^{g} f:=\left\{(x,y)\in \mathbb{R}^d\times \mathbb{R}^d:\forall z \in \mathbb{R}^d,\, f(z)-f(x) \leq  g(\|z-y\|) -  g(\|x-y\|) \right\}.
\end{equation}
We use the notation $
\partial^{g} f(x):=\{ y: (x,y)\in \partial^{g} f\}. $
The \emph{$g$-conjugate of $f$}  is defined as
\begin{align}\label{CLTgeneq:c_conj}
f^{g}({y}):=\inf_{{x}\in {\mathbb{R}^d}}\left\{ g (\|x-y\|)-f({x})\right\},\,\forall y\in \mathbb{R}^d.
\end{align}
If $f:\R^d\rightarrow \R\cup \{-\infty\}$ is $g$-concave, then it follows that
$ f(x) + f^g(y)\leq g(\|x-y\|) $ with equality if and only if $y\in \partial^{ g} f(x)$ (cf.~\citealp[Proposition~3.3.7]{RachevRueschendorf1998}). The domains of $ f$ and $\partial^g f$ are denoted as 
$$ {\rm dom}(f)= \{ x: f(x) \in \R\}\quad {\rm and}\quad {\rm dom}(\partial^{ g} f)= \{ x: \partial^{ g} f(x) \neq \emptyset \} . $$
Choosing $g=\rho_c$, we obtain the following result whose proof can be found in \cite{Villani.09}. 

\begin{theorem}\label{theorem:duality}
    Assume that $\mu,\nu\in \mathcal{P}_1(\R^d)$. Then  
    \begin{enumerate}
    \item (Existence of plans) there exists $\pi_{*} \in \Pi(\mu,\nu)$ solving \eqref{eq:primal-Huber}. 
        \item  (Strong duality) It follows that
        \begin{equation}\label{eq:Huber-dual}
\mathcal{T}_{\rho_c}(\mu,\nu)
=
\sup_{\substack{
f(x)+g(y) \le \rho_c(\|x-y\|)\\
f\in L^1(\mu),\; g\in L^1(\nu)
}}
\left\{
\int f\,d\mu + \int g\,d\nu
\right\}.
\end{equation}
       \item (Dual solutions) There exists a pair $(\psi,\varphi)$ solving  \eqref{eq:Huber-dual}. Furthermore, any pair  $(\psi,\varphi)$ solving \eqref{eq:Huber-dual} consists of $\rho_c $-conjugate, $\rho_c $-concave functions such that 
       \begin{equation}
           \label{eq:Dual-Primal-relation}
           \psi(x)+\varphi(y) = \rho_c(\|x-y\|) , \quad \text{for } \pi_*\text{-a.e.\ } (x,y)\in \R^{d}\times \R^d, 
       \end{equation}
       for some $\pi_{*} \in \Pi(\mu,\nu)$ solving \eqref{eq:primal-Huber}. 
       \item (Cyclically monotone support) A plan $\pi_{*} \in \Pi(\mu,\nu)$ solves \eqref{eq:primal-Huber} if and only if ${\rm supp}(\pi_*)$ is $\rho_c$-cyclically monotone. 
    \end{enumerate}
\end{theorem}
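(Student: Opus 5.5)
The plan is to check that the Huber cost $h(x,y):=\rho_c(\|x-y\|)$ satisfies the hypotheses of the general Kantorovich theory in \cite{Villani.09} (Theorems~4.1 and~5.10), and then read off the four items. Three properties of $h$ are needed. First, $h$ is continuous and nonnegative, so it is lower semicontinuous and bounded below. Second, $h$ has linear growth: $\frac12\min\{t^2, ct\}\le\rho_c(t)\le c\,t$. Hence
\[
h(x,y)\le c\|x\|+c\|y\| =: a(x)+b(y),\qquad a\in L^1(\mu),\; b\in L^1(\nu),
\]
since $\mu,\nu\in\mathcal P_1(\R^d)$. Third, as a consequence, every $\pi\in\Pi(\mu,\nu)$ has cost at most $c\int\|x\|d\mu+c\int\|y\|d\nu<\infty$, so the primal value is finite.

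Item 1 is then Villani's Theorem~4.1. $\Pi(\mu,\nu)$ is tight, hence weakly compact by Prokhorov, and $\pi\mapsto\int h\,d\pi$ is weakly lower semicontinuous because $h\ge0$ is lower semicontinuous. So a minimizer $\pi_*$ exists.

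Items 2 and 4 come from Villani's Theorem~5.10, which applies because $h$ is bounded below by $0$ and above by $a\oplus b$ with integrable $a,b$. Part (i) of that theorem gives the duality formula \eqref{eq:Huber-dual}, with the supremum restricted to $L^1$ pairs. Part (ii) gives the equivalence between optimality and $h$-cyclical monotonicity of the support. The finiteness of the cost, which Villani's statement requires, was established above.

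Item 3 is where more care is needed.
\begin{itemize}
\item \emph{Existence.} Villani's Theorem~5.10(iii) gives existence of dual maximizers under the integrable-upper-bound hypothesis, namely a pair $(\psi,\psi^{h})$ with $\psi$ $h$-concave.
\item \emph{Conjugacy.} For an arbitrary maximizer $(f,g)$, replace it by $(g^{h},g^{hh})$. This double $h$-convexification preserves admissibility and does not decrease either integral. Together with the primal–dual identity, this gives
\[
\int\bigl(h-f-g\bigr)\,d\pi_*=0 .
\]
Since $h-f-g\ge0$, the integrand vanishes $\pi_*$-a.e. That is exactly \eqref{eq:Dual-Primal-relation}.
\item \emph{Integrability.} One must check that the convexified functions stay in $L^1$. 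Because $\rho_c$ is $c$-Lipschitz, $h$-concave functions are $c$-Lipschitz, hence finite with at most linear growth and thus integrable against $\mathcal P_1$ measures.
\end{itemize}

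The main obstacle is the claim that \emph{any} maximizer is itself $\rho_c$-conjugate and $\rho_c$-concave, rather than merely equal a.e. to such a pair. Equality holds only $\mu$- and $\nu$-a.e. (on the supports). I would interpret the statement in that sense, or identify each maximizer with its convexified representative, following \cite[Proposition~3.3.7]{RachevRueschendorf1998}.
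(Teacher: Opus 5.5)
Your proposal is correct and follows the paper's route. The paper simply invokes Villani's general Kantorovich theory (Theorems~4.1 and~5.10) for the continuous cost $\rho_c(\|x-y\|)$, and you supply the verification of its hypotheses through the bounds $0\le\rho_c(\|x-y\|)\le c\|x\|+c\|y\|$. For item~4, passing from ``concentrated on a $\rho_c$-cyclically monotone set'' to ``${\rm supp}(\pi_*)$ is $\rho_c$-cyclically monotone'' requires the closed set $\Gamma$ that Villani provides when the cost and the lower bounds are continuous, which is the case here. Your caveat on item~3 is well taken and goes beyond the paper: maximizers can be altered on null sets, so read literally the claim is false, and the correct reading is that every maximizer $(f,g)$ agrees $\mu$- and $\nu$-a.e.\ with the $\rho_c$-conjugate pair $(g^{\rho_c},g^{\rho_c\rho_c})$, which is the representative implicitly used later, e.g.\ in \Cref{lemma:Lipschitz-extension}.
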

Any function $\psi$ for which there exists another function $\varphi$ such that $(\psi,\varphi)$ solves \eqref{eq:Huber-dual} is called a {\it (Huber) dual potential}. We write ${\rm Sol}^*(\nu,\mu)$ for the set of dual potentials $\psi$ between $\nu$ and $\mu$, and ${\rm Sol}_2^*(\nu,\mu)$ for the set of pairs $(\psi,\varphi)$.
\subsection{Stability and Lipschitz properties of dual solutions}\label{sect:stability-potential}
In the following result we use the fact that 
\begin{equation}
    \label{eq:Huber-loos-is-lip}
     |\rho_c(t)- \rho_c(s)| \leq c |t-s| 
\end{equation}
to guarantee that the dual solutions are Lipschitz with Lipschitz constant $c$. 
\begin{lemma}\label{lemma:Lipschitz-extension}
   Let \(f:\mathbb R^d\to\mathbb R\cup\{-\infty\}\) be a proper
\(\rho_c\)-concave function, i.e.~\(\operatorname{dom}(f)\neq\varnothing\). Then 
\begin{enumerate}
    \item  ${\rm dom}( f )=\R^d$.
    \item $f$ is Lipschitz in $\R^d$ with constant $c$.
\end{enumerate}
As a consequence, for $\mu,\nu\in \mathcal{P}_1(\R^d)$ any pair $(\psi,\varphi)$ solving \eqref{eq:Huber-dual}  is formed by $c$-Lipschitz functions. 
\end{lemma}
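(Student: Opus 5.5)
The plan is to work directly from the representation \eqref{CLTgeneq:c_concave}. Write $f(x)=\inf_{(y,t)\in\mathcal T}\{\rho_c(\|x-y\|)-t\}$ with $\mathcal T\neq\varnothing$. Each function $h_{y,t}(x):=\rho_c(\|x-y\|)-t$ is real-valued. By \eqref{eq:Huber-loos-is-lip} and the reverse triangle inequality $|\|x-y\|-\|z-y\||\le\|x-z\|$, each $h_{y,t}$ is $c$-Lipschitz on $\mathbb R^d$. The proof is then the standard fact that an infimum of a family of $L$-Lipschitz functions is either identically $-\infty$ or finite everywhere and $L$-Lipschitz.

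\textbf{Step 1 (domain).} Fix $x_0\in{\rm dom}(f)$, which exists by properness. For any $x$ and any $(y,t)\in\mathcal T$,
\[
h_{y,t}(x)\ge h_{y,t}(x_0)-c\|x-x_0\|\ge f(x_0)-c\|x-x_0\| .
\]
Taking the infimum over $(y,t)\in\mathcal T$ gives $f(x)\ge f(x_0)-c\|x-x_0\|>-\infty$. Also $f(x)<+\infty$, because $\mathcal T\neq\varnothing$ and every $h_{y,t}(x)$ is finite. Hence ${\rm dom}(f)=\mathbb R^d$.

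\textbf{Step 2 (Lipschitz bound).} Now that $f$ is finite everywhere, take any $x,z$ and any $(y,t)\in\mathcal T$. Then $f(x)\le h_{y,t}(x)\le h_{y,t}(z)+c\|x-z\|$. Taking the infimum over $(y,t)$ yields $f(x)\le f(z)+c\|x-z\|$. Exchanging the roles of $x$ and $z$ gives $|f(x)-f(z)|\le c\|x-z\|$.

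\textbf{Step 3 (consequence for dual pairs).} By \Cref{theorem:duality}(3), any solution $(\psi,\varphi)$ of \eqref{eq:Huber-dual} consists of $\rho_c$-concave, mutually $\rho_c$-conjugate functions. Properness holds because $\psi\in L^1(\mu)$ is finite $\mu$-a.e., so ${\rm dom}(\psi)\neq\varnothing$, and likewise for $\varphi$. For $\varphi$, the conjugate $\varphi=\psi^{\rho_c}$ is an infimum of the $c$-Lipschitz functions $y\mapsto\rho_c(\|x-y\|)-\psi(x)$, so the same argument applies. Steps 1–2 then give that both are real-valued and $c$-Lipschitz.

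The only delicate point is properness in Step 3. Functions in $L^1$ are a priori defined only up to null sets, while the theorem identifies the optimizers with genuine $\rho_c$-concave representatives. I would use those representatives and note that they are finite on a set of full measure, which is nonempty. Everything else is routine.
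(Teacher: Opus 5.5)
Your proof is correct, but it is organized differently from the paper's.

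\textbf{Your route.} You work directly with the defining family $h_{y,t}(x)=\rho_c(\|x-y\|)-t$. Each member is $c$-Lipschitz by \eqref{eq:Huber-loos-is-lip}. Taking the infimum over $\mathcal T$ in the single estimate $h_{y,t}(x)\le h_{y,t}(z)+c\|x-z\|$ then gives both finiteness everywhere (from one point $x_0\in{\rm dom}(f)$) and the $c$-Lipschitz bound.

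\textbf{The paper's route.} The paper writes the function as a transform $\psi^{\rho_c}$. It proves that $\psi^{\rho_c}$ is $c$-Lipschitz on its domain through $\varepsilon$-approximate minimizers, which is essentially your Step 2. It then obtains (i) by bounding the next transform $\inf_y\{\rho_c(\|x-y\|)-\psi^{\rho_c}(y)\}$ from below. That bound combines the Lipschitz estimate with the affine minorant $\rho_c(t)\ge c|t|-c^2/2$.

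\textbf{Comparison.} Your argument is shorter and uses properness exactly where it is needed. It does not require realizing $f$ as the transform of a function already known to be Lipschitz, which the paper uses implicitly via $f=(f^{\rho_c})^{\rho_c}$. It also works verbatim for any cost $h(\|x-y\|)$ with $h$ Lipschitz.

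The paper's argument instead relies on the Huber loss growing at exactly its Lipschitz rate $c$. In return, it shows that the $\rho_c$-transform of any function that is $c$-Lipschitz on a nonempty set is real-valued everywhere. This is in the same spirit as the ball-restriction argument of \Cref{lemma:sup-c-conjugate-reestriced}.

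Your Step 3 also makes explicit a point the paper passes over. Properness of the dual optimizers comes from their $\rho_c$-concave representatives being finite $\mu$-a.e., respectively $\nu$-a.e.
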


Now we state some results regarding the stability of the $\rho_c$-conjugation.  We start with two technical lemmas. We denote by $\mathbb{B}_r(x)$ the ball of radius $r$ centered at $x$.
\begin{lemma}\label{lemma:sup-c-conjugate-reestriced}
    Let $f:\R^d\to \R$ be $c$-Lipschitz. Then 
    $$ f^{\rho_c}(y)= \inf_{x\in \overline{\mathbb{B}}_c(y)}\{ \rho_c(\|x-y\|)-f(x)\} = \min_{x\in \overline{\mathbb{B}}_c(y)}\{ \rho_c(\|x-y\|)-f(x)\} , 
    $$
where the infimum is attained. 
\end{lemma}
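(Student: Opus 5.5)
The plan is to show that any point $x$ outside $\overline{\mathbb{B}}_c(y)$ is never better than its radial projection onto the sphere of radius $c$ around $y$. That reduces the infimum to a compact ball, where attainment follows from continuity.

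Fix $y$ and take $x$ with $r:=\|x-y\|>c$. Set $x':=y+c\,\frac{x-y}{r}$, so that $\|x'-y\|=c$ and $\|x-x'\|=r-c$. Since $r\ge c$, we are in the linear regime of \eqref{eq:Huber}, and
\[
\rho_c(r)-\rho_c(c)=c r-\tfrac{c^2}{2}-\tfrac{c^2}{2}=c(r-c).
\]
Because $f$ is $c$-Lipschitz,
\[
f(x)\le f(x')+c\|x-x'\|=f(x')+c(r-c).
\]
Combining the two displays,
\[
\rho_c(\|x-y\|)-f(x)\ \ge\ \rho_c(c)+c(r-c)-f(x')-c(r-c)\ =\ \rho_c(\|x'-y\|)-f(x').
\]
So every point outside the closed ball is dominated by a point inside it. Hence the infimum over $\mathbb{R}^d$ in the definition \eqref{CLTgeneq:c_conj} of $f^{\rho_c}$ equals the infimum over $\overline{\mathbb{B}}_c(y)$. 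This is the key step, and in fact the only non-routine one: it uses that the Huber slope $c$ exactly matches the Lipschitz constant, which is why the radius is $c$. One could equally use \eqref{eq:Huber-loos-is-lip} for the slope bound, but along a ray the equality is exact.

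For attainment, the map $x\mapsto \rho_c(\|x-y\|)-f(x)$ is continuous, since $\rho_c$ is continuous and $f$ is Lipschitz, hence continuous and real-valued. It therefore attains its minimum on the compact set $\overline{\mathbb{B}}_c(y)$. In particular the infimum is finite, so $f^{\rho_c}(y)\in\mathbb{R}$ for every $y$, and the infimum is a minimum.
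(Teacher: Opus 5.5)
Your proof is correct and follows essentially the same route as the paper. The paper also translates $y$ to the origin, replaces any point with $\|x-y\|>c$ by its radial projection onto the sphere of radius $c$, and uses the $c$-Lipschitz bound together with the linear regime of $\rho_c$ to show the objective does not increase. It then obtains attainment from compactness of $\overline{\mathbb{B}}_c(y)$, as you do.
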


\begin{lemma}\label{lemma:cojugate-converges}
    Let $\{f_n\}$ be a sequence of $\rho_c$-concave functions converging uniformly on compact sets to $f$. Then $f_n^{\rho_c}$ converges uniformly on compact sets to $f^{\rho_c}$. 
\end{lemma}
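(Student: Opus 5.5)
The plan is to reduce the conjugate to a minimization over a ball of fixed radius. On that localized form, uniform convergence of $f_n$ passes directly to uniform convergence of $f_n^{\rho_c}$.

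\textbf{Step 1: the limit is $c$-Lipschitz and real-valued.} Assuming each $f_n$ is proper, \Cref{lemma:Lipschitz-extension} gives that every $f_n$ is finite on $\mathbb{R}^d$ and $c$-Lipschitz. For fixed $x,z$ we have $|f_n(x)-f_n(z)|\le c\|x-z\|$, and since $f_n\to f$ pointwise, passing to the limit gives the same bound for $f$. Uniform convergence on compacts makes $f$ finite everywhere. Hence $f$ is real-valued and $c$-Lipschitz. (If some $f_n$ were identically $-\infty$, we would exclude this case, or interpret the statement for proper functions.)

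\textbf{Step 2: localize both conjugates.} By \Cref{lemma:sup-c-conjugate-reestriced}, applied to $f_n$ and to $f$, for every $y$
$$f_n^{\rho_c}(y)=\min_{x\in\overline{\mathbb{B}}_c(y)}\{\rho_c(\|x-y\|)-f_n(x)\},\qquad f^{\rho_c}(y)=\min_{x\in\overline{\mathbb{B}}_c(y)}\{\rho_c(\|x-y\|)-f(x)\}.$$
Fix a compact set $K$ and put $K_c:=\{x:\operatorname{dist}(x,K)\le c\}$, which is compact. Both minima then involve only values of $f_n$ and $f$ on $K_c$.

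\textbf{Step 3: compare the two minima.} For two functions $a,b$ on the same set, $|\min a-\min b|\le\sup|a-b|$. Applying this with $a(x)=\rho_c(\|x-y\|)-f_n(x)$ and $b(x)=\rho_c(\|x-y\|)-f(x)$ on $\overline{\mathbb{B}}_c(y)$ gives, for every $y\in K$,
$$|f_n^{\rho_c}(y)-f^{\rho_c}(y)|\le\sup_{x\in\overline{\mathbb{B}}_c(y)}|f_n(x)-f(x)|\le\sup_{x\in K_c}|f_n(x)-f(x)|.$$
Taking the supremum over $y\in K$, the right-hand side tends to $0$ by uniform convergence of $f_n$ to $f$ on the compact set $K_c$. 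This proves uniform convergence of $f_n^{\rho_c}$ to $f^{\rho_c}$ on $K$.

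\textbf{Main obstacle.} The delicate point is Step 1: the limit $f$ must itself be $c$-Lipschitz so that \Cref{lemma:sup-c-conjugate-reestriced} applies to it. The reason is that $f$ need not be $\rho_c$-concave a priori. Without the localization for $f$, the infimum defining $f^{\rho_c}$ ranges over all of $\mathbb{R}^d$, where no uniform control is available. The pointwise-limit argument in Step 1 handles this, since Lemma~\ref{lemma:sup-c-conjugate-reestriced} only requires $c$-Lipschitz continuity and not $\rho_c$-concavity.
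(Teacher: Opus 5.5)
Your proposal is correct and follows essentially the same route as the paper: localize both conjugates to $\overline{\mathbb{B}}_c(y)$ via \Cref{lemma:sup-c-conjugate-reestriced}, then bound $|f_n^{\rho_c}(y)-f^{\rho_c}(y)|$ by the supremum of $|f_n-f|$ over a fixed enlarged compact set. The paper obtains this bound from two one-sided inequalities using explicit minimizers $x_n(y)$ and $x(y)$, where you use $|\min a-\min b|\le\sup|a-b|$; your Step~1, which checks that the limit $f$ is $c$-Lipschitz so that the localization lemma applies to $f$, makes explicit a point the paper uses tacitly.
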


We can now easily show the following stability results. 
\begin{theorem}[Stability of the cost]\label{theorem:stability-cost}
    Let $\mu_1,\mu_2,\nu\in \mathcal{P}_1(\R^d)$
    $$  |\mathcal{T}_{ \rho_c}(\mu_1,\nu) - \mathcal{T}_{ \rho_c}(\mu_2,\nu)|\leq c \cdot  \mathcal{W}_1(\mu_1,\mu_2).$$
\end{theorem}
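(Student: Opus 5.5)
The plan is to argue through the dual formulation from \Cref{theorem:duality}, combined with the Lipschitz property of dual potentials in \Cref{lemma:Lipschitz-extension} and the Kantorovich--Rubinstein duality for $\mathcal{W}_1$. By symmetry of the claimed inequality in $\mu_1,\mu_2$, it suffices to prove
$$\mathcal{T}_{\rho_c}(\mu_1,\nu)-\mathcal{T}_{\rho_c}(\mu_2,\nu)\le c\,\mathcal{W}_1(\mu_1,\mu_2).$$

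First, I would note that all three costs are finite. Since $\rho_c(t)\le c\,t$ for $t\ge0$, the product coupling gives $\mathcal{T}_{\rho_c}(\mu_i,\nu)\le c\int\!\!\int\|x-y\|\,d\mu_i\,d\nu<\infty$ for $\mu_i,\nu\in\mathcal{P}_1(\mathbb{R}^d)$.

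Next, using item 3 of \Cref{theorem:duality}, I take an optimal dual pair $(\psi,\varphi)$ for the problem between $\mu_1$ and $\nu$. Then
$$\mathcal{T}_{\rho_c}(\mu_1,\nu)=\int\psi\,d\mu_1+\int\varphi\,d\nu.$$
By \Cref{lemma:Lipschitz-extension}, $\psi$ and $\varphi$ are finite everywhere and $c$-Lipschitz. Because $\mu_2,\nu\in\mathcal{P}_1$, a $c$-Lipschitz function grows at most linearly, so $\psi\in L^1(\mu_2)$. The constraint $\psi(x)+\varphi(y)\le\rho_c(\|x-y\|)$ does not depend on the marginals, so $(\psi,\varphi)$ is admissible in the dual problem \eqref{eq:Huber-dual} for the pair $(\mu_2,\nu)$. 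Weak duality then gives
$$\mathcal{T}_{\rho_c}(\mu_2,\nu)\ge\int\psi\,d\mu_2+\int\varphi\,d\nu.$$

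Subtracting this from the equality for $\mu_1$, the $\nu$-terms cancel, and we are left with
$$\mathcal{T}_{\rho_c}(\mu_1,\nu)-\mathcal{T}_{\rho_c}(\mu_2,\nu)\le\int\psi\,d(\mu_1-\mu_2).$$
To bound the right-hand side, take an optimal $\mathcal{W}_1$ coupling $\gamma$ of $\mu_1$ and $\mu_2$. Then
$$\int\psi\,d(\mu_1-\mu_2)=\int\big(\psi(x)-\psi(x')\big)\,d\gamma(x,x')\le c\int\|x-x'\|\,d\gamma=c\,\mathcal{W}_1(\mu_1,\mu_2).$$
This step only uses the $c$-Lipschitz property of $\psi$, so the full Kantorovich--Rubinstein theorem is not needed.

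The only delicate point is the integrability needed for the dual pair to be admissible for $(\mu_2,\nu)$. This is exactly what \Cref{lemma:Lipschitz-extension} provides: everywhere-finiteness and the global Lipschitz bound. The rest of the argument is a routine exchange of roles.
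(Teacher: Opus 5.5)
Your proof is correct and follows the paper's argument: you take an optimal $c$-Lipschitz dual pair for $(\mu_1,\nu)$, use it as an admissible pair for $(\mu_2,\nu)$, cancel the $\nu$-terms, bound $\int\psi\,d(\mu_1-\mu_2)$ by $c\,\mathcal W_1(\mu_1,\mu_2)$, and conclude by symmetry. The only differences are cosmetic: you bound the last integral with an optimal coupling rather than citing Kantorovich--Rubinstein duality, and you spell out that $\psi\in L^1(\mu_2)$, which the paper leaves implicit.
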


\begin{theorem}[Stability of the potentials]\label{theorem:stsbilitypot}
    Fix $\{\mu_n\}_n, \{\nu_n\}_n \subset  \mathcal{P}_1(\R^d)$, and assume that there exist $\mu,\nu\in  \mathcal{P}_1(\R^d)$ such that  $\mathcal{W}_1(\mu_n, \mu) \to 0$ and $\mathcal{W}_1(\nu_n, \nu) \to 0$. Let $f_n$ be a sequence of dual potentials for $(\mu_n,\nu_n)$ such that there is a point $x_0\in \mathbb{R}^d$ so that
    \begin{equation}
        \label{eq:bounded_at_a_point}
    \sup_n |f_n(x_0)|\leq C.
    \end{equation}
    Then $f_n$ admits a further subsequence $f_{n_k}$ converging to $f_\infty$ uniformly on the compact sets, where $f_\infty$ is $c$-Lipschitz and is a dual potential for $(\mu,\nu)$
\end{theorem}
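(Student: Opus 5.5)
The plan is to combine Arzelà--Ascoli with the conjugate stability of \Cref{lemma:cojugate-converges}, and then pass to the limit in the dual objective. By \Cref{lemma:Lipschitz-extension}, each $f_n$ is real-valued on $\R^d$ and $c$-Lipschitz. Together with \eqref{eq:bounded_at_a_point} this gives
\[
|f_n(x)|\le C + c\|x-x_0\|,
\]
so $\{f_n\}$ is equicontinuous and locally uniformly bounded. Arzelà--Ascoli and a diagonal argument over the balls $\overline{\mathbb{B}}_k(x_0)$ then give a subsequence $f_{n_k}\to f_\infty$ uniformly on compact sets. The $c$-Lipschitz property passes to the pointwise limit.

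Next I handle the partner potentials. By \Cref{theorem:duality}, the potential $f_n$ has a partner $g_n=f_n^{\rho_c}$, and $(f_n,g_n)$ is optimal for \eqref{eq:Huber-dual} between $\mu_n$ and $\nu_n$. Here one may replace the partner by the conjugate without loss, since conjugation only improves the dual value while keeping feasibility. Each $f_n$ is $\rho_c$-concave, so \Cref{lemma:cojugate-converges} gives $g_{n_k}\to g_\infty:=f_\infty^{\rho_c}$ uniformly on compact sets.

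I also need a linear growth bound for the $g_n$. By \Cref{lemma:sup-c-conjugate-reestriced}, $g_n(y)=\min_{x\in\overline{\mathbb{B}}_c(y)}\{\rho_c(\|x-y\|)-f_n(x)\}$, which yields
\[
|g_n(y)|\le \tfrac{c^2}{2}+C+c(\|y-x_0\|+c).
\]
Feasibility $f_\infty(x)+g_\infty(y)\le\rho_c(\|x-y\|)$ follows from the definition of the conjugate, and both limits are integrable against $\mu$ and $\nu$ because they grow at most linearly.

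The main step is optimality of $(f_\infty,g_\infty)$ for $(\mu,\nu)$. Write
\[
\int f_{n_k}\,d\mu_{n_k}-\int f_\infty\,d\mu=\int (f_{n_k}-f_\infty)\,d\mu_{n_k}+\Big(\int f_\infty\,d\mu_{n_k}-\int f_\infty\,d\mu\Big).
\]
The second term vanishes because $f_\infty$ is continuous with linear growth and $\mathcal{W}_1(\mu_n,\mu)\to0$. For the first term, I split it over a ball $\mathbb{B}_R(x_0)$ and its complement. On the ball it is controlled by uniform convergence. Off the ball it is bounded by
\[
2\int_{\|x-x_0\|>R}\big(C'+c\|x-x_0\|\big)\,d\mu_{n_k},
\]
which is small uniformly in $k$ because $\mathcal{W}_1$-convergence implies uniform integrability of first moments. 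The same argument applies to $g_{n_k}$ against $\nu_{n_k}$.

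It follows that $\int f_\infty\,d\mu+\int g_\infty\,d\nu=\lim_k \mathcal{T}_{\rho_c}(\mu_{n_k},\nu_{n_k})$. Applying \Cref{theorem:stability-cost} twice, once in each argument using the symmetry of the cost, this limit equals $\mathcal{T}_{\rho_c}(\mu,\nu)$. Hence $(f_\infty,g_\infty)$ attains the supremum in \eqref{eq:Huber-dual}, and $f_\infty$ is a dual potential for $(\mu,\nu)$.

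I expect the main obstacle to be the tail control in this interchange of limit and integral. The key is the uniform linear growth bound, which comes from the common Lipschitz constant together with the anchor \eqref{eq:bounded_at_a_point}.
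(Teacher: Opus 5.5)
Your argument is correct. It follows the paper's skeleton: Arzelà--Ascoli via \Cref{lemma:Lipschitz-extension}, locally uniform convergence of the conjugates via \Cref{lemma:cojugate-converges}, and a tail-controlled passage to the limit in $\int f_{n_k}\,d\mu_{n_k}+\int f_{n_k}^{\rho_c}\,d\nu_{n_k}$. Where you differ is in how optimality of the limit pair is established.

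The paper never identifies the limiting value directly. It passes to the limit in $\int f_{n_k}\,d\mu_{n_k}+\int f_{n_k}^{\rho_c}\,d\nu_{n_k}\ge\int f\,d\mu_{n_k}+\int g\,d\nu_{n_k}$ for every $c$-Lipschitz feasible competitor $(f,g)$. It then uses that the dual supremum is attained within that class.

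You instead apply \Cref{theorem:stability-cost} in each argument to get $\mathcal T_{\rho_c}(\mu_{n_k},\nu_{n_k})\to\mathcal T_{\rho_c}(\mu,\nu)$. The feasible limit pair therefore attains the primal value and is optimal by strong duality. This is not circular, since \Cref{theorem:stability-cost} rests only on the Lipschitz bound for potentials and Kantorovich--Rubinstein duality.

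Your route is shorter. Your splitting $\int(f_{n_k}-f_\infty)\,d\mu_{n_k}+\int f_\infty\,d(\mu_{n_k}-\mu)$ also removes the need for the paper's radius $R$ with a $\mu$- and $\nu$-negligible sphere, and for weak convergence of the restricted measures.

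The paper's competitor-comparison scheme is the more self-contained ``limits of maximizers are maximizers'' argument. It uses no a priori information on the continuity of $(\mu,\nu)\mapsto\mathcal T_{\rho_c}(\mu,\nu)$, and instead recovers that continuity along the subsequence as a by-product.
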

\subsection{Uniqueness of the potentials and existence of transport maps}\label{sect:potentials-and-map}
First, we highlight that under the standard assumptions on $\mu$, the solutions of \eqref{eq:Huber-dual} are unique up to additive constants. 
The following result is a direct consequence of \cite[Corollary~2]{Staudt.et.al.2025.SIMA}. Note that the same result does not apply to the Euclidean distance, as it is not differentiable at zero.
\begin{theorem}[Uniqueness of potentials]\label{theorem:uniqueness-potentials}
Set $\mu, \nu\in \mathcal{P}_1(\R^d)$, and assume that $\mu\ll \mathcal{L}_d$ is such that $\mu(\partial {\rm supp}(\mu))=0$ and ${\rm supp}(\mu)$ is connected. Define the equivalence relation $\sim$ on pairs of functions $(f,g)\in L^1(\mu)\times L^1(\nu)$ as
    $$
    (f_1,g_1)\sim (f_2,g_2)\iff \exists a\in \mathbb{R}: f_1\equiv f_2+a, g_1\equiv g_2-a.
    $$
    Then ${\rm Sol}_2^*(\nu,\mu)/\sim$ is a singleton. 
\end{theorem}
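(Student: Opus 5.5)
The plan is to reduce the statement to a gradient-uniqueness argument: any two optimal pairs have potentials whose gradients agree $\mu$-a.e., and connectedness of the support then forces them to differ by a constant. This is the mechanism behind \cite[Corollary~2]{Staudt.et.al.2025.SIMA}, so the work is mostly checking its hypotheses for $h(z)=\rho_c(\|z\|)$.

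\textbf{Step 1 (regularity of the cost).} $h$ is $C^1$ on all of $\mathbb{R}^d$, with
$$\nabla h(z)=\min\{\|z\|,c\}\,\frac{z}{\|z\|}, \qquad \nabla h(0)=0 .$$
It is differentiable at the origin because $\rho_c(t)=t^2/2$ near $0$; this is exactly what fails for the Euclidean distance. It is also $c$-Lipschitz.

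\textbf{Step 2 (potentials are differentiable a.e.\ and share gradients).} Let $(\psi_1,\varphi_1)$ and $(\psi_2,\varphi_2)$ be two dual solutions. By \Cref{lemma:Lipschitz-extension} they are $c$-Lipschitz, so by Rademacher they are differentiable $\mathcal L_d$-a.e., hence $\mu$-a.e. since $\mu\ll\mathcal L_d$.

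Take an optimal plan $\pi_*$. By \Cref{theorem:duality}, for $\pi_*$-a.e.\ $(x,y)$ we have $y\in\partial^{\rho_c}\psi_i(x)$ for both $i$. The point is that the relevant plan is the same for both pairs: for any optimal $\pi_*$, duality gives $\int(\rho_c-\psi_i-\varphi_i)\,d\pi_*=0$ with a nonnegative integrand, so equality holds $\pi_*$-a.e.

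At a differentiability point $x$ with $y\in\partial^{\rho_c}\psi_i(x)$, the map $z\mapsto \rho_c(\|z-y\|)-\psi_i(z)$ is minimized at $x$. Hence
$$\nabla\psi_i(x)=\nabla h(x-y), \qquad i=1,2,$$
so $\nabla\psi_1=\nabla\psi_2$ $\mu$-a.e. Note that $\nabla h$ need not be injective (it is not, in the linear regime), but we never need to invert it. We only need both gradients to equal the same vector $\nabla h(x-y)$.

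\textbf{Step 3 (from equal gradients to a constant).} Set $u=\psi_1-\psi_2$. It is Lipschitz with $\nabla u=0$ a.e.\ on $\{\mu>0\}$. This is the main obstacle: we need $\nabla u=0$ a.e.\ on an open connected set, not merely on a set of positive $\mu$-density.

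First try: work on $\Omega=\operatorname{int}\operatorname{supp}(\mu)$. Since $\mu(\partial\operatorname{supp}\mu)=0$, $\mu$ is carried by $\Omega$. Inside $\Omega$, however, the density may vanish on a positive-measure set, and there $\nabla u$ is uncontrolled. This is precisely where the argument of \cite{Staudt.et.al.2025.SIMA} is needed. It uses $c$-concavity together with the semiconcavity-type structure of $\rho_c$-concave functions: each is an infimum of uniformly $C^1$ functions with modulus controlled by $\nabla h$. From this one shows that the $\rho_c$-superdifferentials agree on all of $\operatorname{supp}(\mu)$ by closedness of $\partial^{\rho_c}$, so $u$ is locally constant on $\Omega$.

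The hypothesis that $\operatorname{supp}(\mu)$ is connected, combined with $\mu(\partial\operatorname{supp}\mu)=0$, then gives $u\equiv a$ $\mu$-a.e. I would simply invoke their Corollary~2, after verifying its hypotheses: $h\in C^1$, potentials locally Lipschitz, and the assumptions on $\mu$.

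\textbf{Step 4 (the second potential).} With $\psi_1=\psi_2+a$ $\mu$-a.e., take $\varphi_i=\psi_i^{\rho_c}$. The infimum defining the conjugate can be restricted to $\operatorname{supp}\mu$ via optimality: for $\nu$-a.e.\ $y$ it is attained at points $x$ with $(x,y)\in\operatorname{supp}\pi_*$. This yields $\varphi_1=\varphi_2-a$ $\nu$-a.e.

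If the conjugate is not attained in this way, use the $\pi_*$-a.e.\ equality instead: $\varphi_i(y)=\rho_c(\|x-y\|)-\psi_i(x)$ on $\operatorname{supp}\pi_*$, which gives the same conclusion $\nu$-a.e. Therefore $\operatorname{Sol}_2^*(\nu,\mu)/\sim$ is a singleton.
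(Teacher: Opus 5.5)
Your proposal takes the same route as the paper. The paper gives no separate argument: it derives the statement directly from \cite[Corollary~2]{Staudt.et.al.2025.SIMA}, the relevant point being (as in your Step~1) that $\rho_c(\|\cdot\|)$, unlike $\|\cdot\|$, is differentiable at the origin. Your Step~4, which transfers the constant to the second potential through the $\pi_*$-a.e.\ contact equality, is a correct supplement.

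Your heuristic for Step~3 overstates what is available, in two places. First, the two $\rho_c$-superdifferentials need not coincide on ${\rm supp}(\mu)$; closedness of the contact sets only places ${\rm supp}(\pi_*)$ inside both. Second, connectedness of ${\rm supp}(\mu)$ does not make its interior connected. So the passage from ``locally constant on $\Omega$'' to ``constant'' rests on the cited corollary, not on your sketch.
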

For costs of the form $c(x,y)=h(\|x-y\|)$, the existence and structure of optimal plans are highly dependent on the convexity of $h$. When $h$ is strictly convex and has superlinear growth, the optimal transport problem is known to admit a unique solution, and, under mild assumptions on the reference measure $\mu$, this solution is induced by a transport map; see, for instance, \cite{Gangbo.McCann.1996.Acta}.  The situation is more involved when the cost is convex but not strictly convex. In this case, uniqueness of optimal plans may fail even in one dimension, and the existence theory for Monge solutions in arbitrary dimension becomes substantially more subtle.

The singular linear cost $c(x,y)=\|x-y\|$ represents a particularly important degenerate case and has been studied extensively; see, for example, \cite{Caffarelli.Feldman.2002.JAMS,Evans.Gangbo.1999.MemoirsAMS,Ambrosio.Pratelli.2003.LNM,Champion.dePascale.2011.DMJ}. The Huber cost
$    c(x,y)=\rho_c(\|x-y\|)$
lies between the strictly convex and linear regimes: it is quadratic near the origin, but becomes linear at infinity, combining strictly convex behavior at small displacements with degenerate linear behavior at large displacements. To the best of our knowledge, the existence of optimal transport maps for this specific cost has not been addressed in the literature.

In the sequel, we call a plan $\pi\in \Pi(\mu,\nu)$ attaining the infimum in \eqref{eq:primal-Huber} a \textit{Huber plan}. The following result concerns the existence of such optimal transport maps.
\begin{theorem}\label{theorem:Huber-maps}
    Assume that $\mu,\nu\in \mathcal{P}_1(\R^d)$ with $\mu\ll \mathcal{L}_d$. Then there exists  a Huber plan   $\pi_{H_c}\in \Pi(\mu,\nu)$ of the form $\pi_{H_c}=(I\times H_c)_\# \mu$, where $H_c$ is the Huber OT map which pushes $\mu$ forward to $\nu$. 
\end{theorem}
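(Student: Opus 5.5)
We need to prove that a Monge map exists for the Huber cost when $\mu\ll\mathcal L_d$. The difficulty is that $\rho_c$ is not strictly convex, so the standard twist argument fails in the linear regime. Our plan is a two-region decomposition. Take an optimal plan $\pi_*$ and a pair of dual potentials $(\psi,\varphi)$ from \Cref{theorem:duality}. By \Cref{lemma:Lipschitz-extension}, $\psi$ is $c$-Lipschitz, so by Rademacher it is differentiable $\mu$-a.e. Fix a differentiability point $x$ and $y\in\partial^{\rho_c}\psi(x)$. The function $z\mapsto \rho_c(\|z-y\|)-\psi(z)$ is minimized at $z=x$. Since $\rho_c(\|\cdot\|)$ is $C^1$ with gradient $\min\{\|v\|,c\}\,v/\|v\|$, we get
$$\nabla\psi(x)=\min\{\|x-y\|,c\}\frac{x-y}{\|x-y\|}.$$
Two cases follow. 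If $\|\nabla\psi(x)\|<c$, then $y=x-\nabla\psi(x)$ is uniquely determined. If $\|\nabla\psi(x)\|=c$, only the direction is determined: $y\in x-\nabla\psi(x)\,[1,\infty)/c$, a half-line. So set $A=\{x:\|\nabla\psi(x)\|<c\}$ and $B=\{x:\|\nabla\psi(x)\|=c\}$. On $A$ the plan is already induced by the map $x\mapsto x-\nabla\psi(x)$. Everything reduces to $B$, where mass moves along transport rays exactly as in the Monge problem with cost $\|x-y\|$.

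On $B$ we intend to replace $\pi_*|_{B\times\R^d}$ by a map-induced plan with the same marginals and the same cost. For pairs $(x,y)$ with $x\in B$ we have $\|x-y\|\ge c$, so $\rho_c(\|x-y\|)=c\|x-y\|-c^2/2$. The cost restricted there is therefore affine in the Euclidean distance. Let $\nu_B$ be the second marginal of $\pi_*|_{B\times\R^d}$. Any plan $\gamma\in\Pi(\mu|_B,\nu_B)$ supported in the set $\{(x,y):x\in B,\ y\in x-\nabla\psi(x)[1,\infty)/c\}$ satisfies $\psi(x)+\varphi(y)=\rho_c(\|x-y\|)$ on its support. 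This holds because $\psi$ is $c$-Lipschitz and monotone of slope $c$ along the ray, which is the standard transport-ray argument. Hence $\gamma+\pi_*|_{A\times\R^d}$ is again optimal by the dual equality. The task is thus to find such a $\gamma$ given by a map. This is a one-dimensional transport problem along each ray of the $1$-Lipschitz function $\psi/c$, restricted to the region where the gradient has full norm.

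The key step, and the expected main obstacle, is to disintegrate $\mu|_B$ along the transport rays of $\psi/c$ with absolutely continuous conditional measures. We plan to invoke the Sudakov/Caffarelli--Feldman--McCann/Ambrosio--Pratelli theory cited above (\cite{Caffarelli.Feldman.2002.JAMS,Ambrosio.Pratelli.2003.LNM}). For a $1$-Lipschitz potential, the set of ray endpoints is Lebesgue-null. The ray directions are also locally Lipschitz on "sheaves", which makes the conditional measures of $\mu|_B$ on each ray absolutely continuous with respect to $\mathcal{H}^1$.

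Given that, on each ray we take the monotone (quantile) rearrangement between the conditional of $\mu|_B$ and the corresponding conditional of $\nu_B$. The latter is obtained by disintegrating $\pi_*|_{B\times\R^d}$ along the ray index, since $y$ lies on the ray through $x$ in the direction $-\nabla\psi(x)$. Some care is needed to check that moving forward along the ray keeps $\|x-y\|\ge c$. We will handle this by noting that the monotone map preserves the order of $\pi_*$'s displacements on average. Alternatively, we can use $\rho_c$-cyclical monotonicity (\Cref{theorem:duality}(4)) to show that the monotone coupling on the ray is still optimal for the 1D convex cost $\rho_c$. Non-atomic conditionals give a map, and measurable selection glues the rays together into $H_c$. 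This yields $\pi_{H_c}=(I\times H_c)_\#\mu$.
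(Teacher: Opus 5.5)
\textbf{The optimality certificate on $B$ is false.} Your split into $A=\{\|\nabla\psi\|<c\}$ and $B=\{\|\nabla\psi\|=c\}$, and the plan to treat $B$ by one-dimensional rearrangements along the transport rays of $u=\psi/c$, are in the spirit of the paper. The step that fails is the one you use to certify optimality on $B$.

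It is not true that every $\gamma\in\Pi(\mu|_B,\nu_B)$ supported in $\{(x,x+t\tau(x)):t\ge c\}$, where $\tau(x):=-\nabla\psi(x)/c$, satisfies $\psi(x)+\varphi(y)=\rho_c(\|x-y\|)$. Equality at $(x,x+t\tau)$ is very restrictive. Test $\varphi(y)\le\rho_c(\|z-y\|)-\psi(z)$ with $z=x+s\tau$, $0\le s\le t-c$, and use that $\psi$ is $c$-Lipschitz. This forces $\psi$ to decrease with slope exactly $c$ on all of $[x,x+(t-c)\tau]$. In other words, the truncated point $x+(t-c)\tau$ must lie on the transport ray of $u$ through $x$, and that ray typically ends long before the half-line does.

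Here is a concrete counterexample. Take $d=1$, $c=1$, $\mu$ uniform on $[0,1]\cup[10,11]$, and $\nu$ uniform on $[3,4]\cup[6,7]$. The optimal (monotone) plan moves $[0,1]$ by $+3$ and $[10,11]$ by $-4$, so $\mu$-a.e.\ point lies in $B$ and $\nu_B=\nu$. Now consider the swapped plan $x\mapsto x+6$ on $[0,1]$ and $x\mapsto x-7$ on $[10,11]$. It lies in $\Pi(\mu|_B,\nu_B)$ and is supported in your half-line set, yet it costs $6$ instead of $3$. So ``$\gamma+\pi_*|_{A\times\R^d}$ is optimal by the dual equality'' does not follow. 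The ``order preserved on average'' remark does not repair this: keeping $\|x-y\|\ge c$ is not the issue. (That property is in fact automatic for the monotone coupling by stochastic dominance.)

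\textbf{How to repair it.} Use a cost comparison instead of a dual certificate; your final ``alternatively'' sentence gestures at this, but it must replace the dual-equality argument, not supplement it.

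First, define the residual set as $\{x:\pi_*^x(\{y:\|x-y\|>c\})>0\}$. Points of $B$ whose mass moves exactly $c$ already have $y=x-\nabla\psi(x)$ determined, and they need not lie in the transport set. For a.e.\ residual $x$, $x$ lies in the relative interior of a unique ray $C$ of $u$. Moreover $\tau(x)=\tau_C$, because $u$ is $1$-Lipschitz, has unit slope along $C$, and is differentiable at $x$.

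Next, disintegrate the residual part of $\pi_*$ by the ray label of the source. Each conditional plan couples $\alpha_C$ with a measure $(\nu_B)_C$ on the line spanned by $C$. Since $t\mapsto\rho_c(|t|)$ is convex, the monotone map from the non-atomic $\alpha_C$ to $(\nu_B)_C$ costs no more than the conditional of $\pi_*$. Gluing these maps and integrating over $C$ gives a map-induced plan in $\Pi(\mu,\nu)$ with cost at most $\mathcal T_{\rho_c}(\mu,\nu)$, which is therefore optimal.

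\textbf{Comparison with the paper.} The paper instead truncates each residual displacement by $c$, via $T(x,y)=y+c(x-y)/\|x-y\|$. This produces an $\|\cdot\|$-optimal plan concentrated on $\Gamma_u=\{u(x)-u(z)=\|x-z\|\}$. It then applies Ambrosio--Pratelli's Theorem 6.1 verbatim to obtain a map $R$, translates back with $G=R+c\tau$, and compares costs through $\rho_c(\|x-G(x)\|)=c\|x-R(x)\|+c^2/2$.

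The truncation is what makes the black-box use legitimate. The original Huber plan is not concentrated on $\Gamma_u$: in general $u(x)-u(y)<\|x-y\|$, since $\psi$ need not keep slope $c$ on the last stretch of length $c$ before $y$. In your direct version you would have to invoke the ray decomposition of $\mu$, the absolute continuity of its conditionals, and the measurable gluing yourself, for a plan not supported on $\Gamma_u$.
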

The proof of this result is relegated to the supplementary material. We now outline its main steps. Let \(\pi_c\) be an optimal Huber transport plan and decompose it according to whether \(\|x-y\|\le c\) or \(\|x-y\|>c\). On the first region, the Huber cost is quadratic, so the corresponding part of \(\pi_c\) is quadratically cyclically monotone and is induced by the gradient of a convex function. On the second region, each displacement is truncated at distance \(c\), which yields an \(L^1\)-optimal transport plan calibrated by a \(1\)-Lipschitz potential. A measurable selection along the associated transport rays provides a map with the same marginals and no larger linear cost. The original residual target is then recovered by translating the selected map by \(c\) along each oriented ray. Finally, the quadratic and residual maps are pasted together, and the resulting map is shown to push \(\mu\) forward to \(\nu\) while attaining the optimal Huber cost.

\subsection{Relation with the 2-Wasserstein distance}
We examine the relationship between the optimal transport problem with Huber loss and that with quadratic loss. The following result gives sufficient conditions so that both problems have the same optimal transport map
\begin{proposition}
\label{prop:ot_is_huber_ot}
Fix \(c>0\) and let \(\mu,\nu\in\mathcal P_2(\mathbb R^d)\). The following
statements are equivalent:
\begin{enumerate}
    \item There exists
    \begin{equation}
    \label{eq:OT-plan}
    \pi_*
    \in
    \argmin_{\pi\in\Pi(\mu,\nu)}
    \int \frac12\|x-y\|^2\,d\pi(x,y)
    \end{equation}
    which is also Huber optimal and satisfies
    \(\pi_*(\{(x,y):\|x-y\|\le c\})=1\).

    \item \(\frac12\mathcal W_2^2(\mu,\nu)=\mathcal T_{\rho_c}(\mu,\nu).\)

    \item Every solution of \eqref{eq:OT-plan} is concentrated on
    \(\{(x,y):\|x-y\|\le c\}\) and is also Huber optimal.

    \item There exists a Huber optimal plan concentrated on
    \(\{(x,y):\|x-y\|\le c\}\).
\end{enumerate}
\end{proposition}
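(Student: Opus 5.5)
The whole argument rests on the pointwise inequality $\rho_c(t)\le \tfrac12 t^2$ for all $t\ge0$, with equality if and only if $t\le c$. For $t\ge c$ we have $\tfrac12t^2-(ct-\tfrac{c^2}{2})=\tfrac12(t-c)^2\ge0$, with equality only at $t=c$. Integrating against any plan $\pi\in\Pi(\mu,\nu)$ gives
\[
\int\rho_c(\|x-y\|)\,d\pi\le\int\tfrac12\|x-y\|^2\,d\pi,
\]
with equality if and only if $\pi(\{\|x-y\|\le c\})=1$. Taking infima over $\Pi(\mu,\nu)$ yields $\mathcal T_{\rho_c}(\mu,\nu)\le\tfrac12\mathcal W_2^2(\mu,\nu)$, and both quantities are finite since $\mu,\nu\in\mathcal P_2$. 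I will prove the cycle (1)$\Rightarrow$(2)$\Rightarrow$(3)$\Rightarrow$(4)$\Rightarrow$(1). Existence of quadratic and Huber optimal plans is standard, and for the latter it is given by \Cref{theorem:duality}.

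(1)$\Rightarrow$(2). Let $\pi_*$ be quadratically optimal, Huber optimal, and concentrated on $\{\|x-y\|\le c\}$. Then
\[
\mathcal T_{\rho_c}(\mu,\nu)=\int\rho_c\,d\pi_*=\int\tfrac12\|x-y\|^2\,d\pi_*=\tfrac12\mathcal W_2^2(\mu,\nu).
\]

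(2)$\Rightarrow$(3). Let $\pi$ be any solution of \eqref{eq:OT-plan}. Then
\[
\tfrac12\mathcal W_2^2=\int\tfrac12\|x-y\|^2\,d\pi\ \ge\ \int\rho_c\,d\pi\ \ge\ \mathcal T_{\rho_c}=\tfrac12\mathcal W_2^2,
\]
so both inequalities are equalities. The second equality says that $\pi$ is Huber optimal. The first says that $\int(\tfrac12\|x-y\|^2-\rho_c(\|x-y\|))\,d\pi=0$ with a nonnegative integrand, so the integrand vanishes $\pi$-a.e. By the equality case above, $\pi$ is then concentrated on $\{\|x-y\|\le c\}$, a closed set.

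(3)$\Rightarrow$(4). Quadratic optimal plans exist on $\mathcal P_2$, and by (3) any one of them is a Huber optimal plan concentrated on $\{\|x-y\|\le c\}$.

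(4)$\Rightarrow$(1). This is the only step with content. Let $\pi$ be Huber optimal and concentrated on $\{\|x-y\|\le c\}$. Then
\[
\int\tfrac12\|x-y\|^2\,d\pi=\int\rho_c\,d\pi=\mathcal T_{\rho_c}\le\tfrac12\mathcal W_2^2\le\int\tfrac12\|x-y\|^2\,d\pi.
\]
The last inequality holds simply because $\pi\in\Pi(\mu,\nu)$. Hence all terms are equal, so $\pi$ attains $\tfrac12\mathcal W_2^2$, i.e.\ it solves \eqref{eq:OT-plan}. Taking $\pi_*=\pi$ gives (1).

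The main point to keep straight is this sandwich: the quadratic optimality in (4)$\Rightarrow$(1) comes from the global inequality $\mathcal T_{\rho_c}\le\tfrac12\mathcal W_2^2$, not from any local cyclical-monotonicity argument. The only technical care needed is the a.e. equality case together with measurability of the set $\{\|x-y\|\le c\}$, which holds since it is closed.
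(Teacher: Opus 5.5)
Your proof is correct and takes essentially the same approach as the paper: the pointwise bound $\rho_c(t)\le t^2/2$ with equality exactly when $t\le c$, the resulting sandwich $\mathcal T_{\rho_c}(\mu,\nu)\le\frac12\mathcal W_2^2(\mu,\nu)$, and the vanishing of $\frac12 t^2-\rho_c(t)$ almost everywhere in the equality case. The only difference is that you close the cycle as (1)$\Rightarrow$(2)$\Rightarrow$(3)$\Rightarrow$(4)$\Rightarrow$(1), whereas the paper proves (1)$\Rightarrow$(4)$\Rightarrow$(2)$\Rightarrow$(3)$\Rightarrow$(1).
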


\begin{remark}
One might wonder whether it is sufficient to assume only that there exists a
quadratic optimal plan concentrated on
\(\{(x,y):\|x-y\|\le c\}\). This is not true in general.
Indeed, let \(c=1\), \(d=2\), \(x_1=y_2=(0,0)\), \(y_1=(1,0)\), and
$x_2=\left(-\frac1{12},\frac{\sqrt{143}}{12}\right).$ 
Then \(\|x_2\|=\|y_1\|=1\) and
\(\|x_2-y_1\|^2=13/6\). Define $\mu=\frac12(\delta_{x_1}+\delta_{x_2})$ and $
\nu=\frac12(\delta_{y_1}+\delta_{y_2})$. The two extreme points of the transport polytope are  
  $\pi_*=
\frac12\delta_{(x_1,y_1)}
+
\frac12\delta_{(x_2,y_2)}$ and  $\widetilde\pi
=
\frac12\delta_{(x_1,y_2)}
+
\frac12\delta_{(x_2,y_1)}$. 
The plan \(\pi_*\) is concentrated on \(\{\|x-y\|\le1\}\), and its
quadratic cost is
$\int\frac12\|x-y\|^2\,d\pi_*=\frac12.$
Since 
\[
\int\frac12\|x-y\|^2\,d\widetilde\pi
=
\frac14\cdot   \frac{13}{6}
>
\frac12, 
\]
it follows that \(\pi_*\) is the unique quadratic optimal plan. Nevertheless,
$\int\rho_1(\|x-y\|)\,d\pi_*=\frac12,$
whereas
\[
\int\rho_1(\|x-y\|)\,d\widetilde\pi
=
\frac12\left(\sqrt{\frac{13}{6}}-\frac12\right)
<
\frac12.
\]
Hence \(\pi_*\) is not Huber optimal. 
\end{remark}
Now we study the stability of the problem when $c\to \infty$. Here $(t)_+=\max(0,t).$ We start with the stability of the cost. 
\begin{proposition}
\label{propositon:Stable-cost-quantitative}
Assume that \(\mu,\nu\in\mathcal P_2(\mathbb R^d)\). Then
\[
0
\le
\frac12\mathcal W_2^2(\mu,\nu)-\mathcal T_{\rho_c}(\mu,\nu)
\le
\int\left(\|x\|^2-\frac{c^2}{4}\right)_+\,d\mu(x)
+
\int\left(\|y\|^2-\frac{c^2}{4}\right)_+\,d\nu(y).
\]
Consequently,
\(\mathcal T_{\rho_c}(\mu,\nu)\to\frac12\mathcal W_2^2(\mu,\nu)\) as
\(c\to\infty\).
\end{proposition}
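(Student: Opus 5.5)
Lower bound: since $\rho_c(t)\le \tfrac12 t^2$ for all $t\ge 0$, every plan $\pi\in\Pi(\mu,\nu)$ satisfies $\int\rho_c(\|x-y\|)\,d\pi\le\int\tfrac12\|x-y\|^2\,d\pi$. Taking the infimum over $\pi$ gives $\mathcal T_{\rho_c}(\mu,\nu)\le\tfrac12\mathcal W_2^2(\mu,\nu)$.

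Upper bound: let $\pi_c$ be a Huber optimal plan, which exists by \Cref{theorem:duality}. Then
\[
\tfrac12\mathcal W_2^2(\mu,\nu)-\mathcal T_{\rho_c}(\mu,\nu)\le\int\Bigl(\tfrac12\|x-y\|^2-\rho_c(\|x-y\|)\Bigr)\,d\pi_c(x,y).
\]
The integrand vanishes for $t:=\|x-y\|<c$. For $t\ge c$ it equals $\tfrac12 t^2-ct+\tfrac{c^2}{2}=\tfrac12(t-c)^2$. So the pointwise key inequality to prove is
\[
\tfrac12\bigl((t-c)_+\bigr)^2\le\Bigl(\|x\|^2-\tfrac{c^2}{4}\Bigr)_++\Bigl(\|y\|^2-\tfrac{c^2}{4}\Bigr)_+,\qquad t=\|x-y\|.
\]
Integrating it against $\pi_c$ and using the marginal constraints gives exactly the stated bound.

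The pointwise inequality is the only step needing care, and it is elementary. Set $a=\|x\|$ and $b=\|y\|$, so $t\le a+b$; we may assume $t>c$. If both $a\le c/2$ and $b\le c/2$, then $t\le c$, a contradiction, so at least one exceeds $c/2$. The right-hand side is at least $(a^2-c^2/4)_++(b^2-c^2/4)_+$.
\begin{itemize}
\item Case $a,b\ge c/2$: write $a=c/2+u$ and $b=c/2+v$ with $u,v\ge 0$. Then $t-c\le u+v$, and the right-hand side is $u(c+u)+v(c+v)\ge u^2+v^2\ge\tfrac12(u+v)^2$.
\item Case $b<c/2\le a$: then $t-c\le a+b-c<a-c/2=u$. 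The right-hand side is $\ge u(c+u)\ge u^2\ge\tfrac12 u^2$.
\end{itemize}
Both cases give the inequality, which in fact holds with room to spare.

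Convergence: $\mu,\nu\in\mathcal P_2$, so $(\|x\|^2-c^2/4)_+\le\|x\|^2\in L^1(\mu)$, and the integrand decreases pointwise to $0$ as $c\to\infty$. By dominated convergence the right-hand side tends to $0$, which yields $\mathcal T_{\rho_c}(\mu,\nu)\to\tfrac12\mathcal W_2^2(\mu,\nu)$.

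The main (mild) obstacle is the pointwise case analysis; the rest is immediate from optimality of $\pi_c$ and the marginal constraints.
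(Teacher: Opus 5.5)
Your proposal is correct and follows the same route as the paper's proof. You bound $\frac12\|x-y\|^2-\rho_c(\|x-y\|)$ pointwise by $(\|x\|^2-c^2/4)_++(\|y\|^2-c^2/4)_+$, integrate against a Huber-optimal plan using its marginals, and conclude by dominated convergence. The only difference is in how the pointwise bound is proved: the paper uses the chain $\frac12(t-c)^2\le(\frac{t^2}{2}-\frac{c^2}{2})_+\le(\|x\|^2+\|y\|^2-\frac{c^2}{2})_+$ followed by $(a+b-r)_+\le(a-r/2)_++(b-r/2)_+$, whereas you use $t\le\|x\|+\|y\|$ and a case analysis, leaving the symmetric case $a<c/2\le b$ implicit.
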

The following result uses the quadratic detachment of the $L^2$-cost to derive the stability of the maps. 

\begin{theorem}[Stability of the OT map]\label{theorem:stability_map}
Assume that \(\mu,\nu\in\mathcal P_2(\mathbb R^d)\), with
\(\mu,\nu\ll\mathcal L^d\). Let \(H_c\) be a Huber OT map and let \(H_0\)
denote the quadratic OT map pushing \(\mu\) forward to \(\nu\). If \(H_0\) is
\(\alpha\)-Lipschitz, then
\[
\|H_0-H_c\|_{L^2(\mu)}
\le
\left[
2\alpha
\left\{
\int\left(\|x\|^2-\frac{c^2}{4}\right)_+\,d\mu(x)
+
\int\left(\|y\|^2-\frac{c^2}{4}\right)_+\,d\nu(y)
\right\}
\right]^{\frac{1}{2}}.
\]
\end{theorem}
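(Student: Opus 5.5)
We argue through the quadratic detachment estimate for the Brenier map. Since $H_0$ is the quadratic OT map from $\mu$ to $\nu$ and is $\alpha$-Lipschitz, we write $H_0=\nabla\phi$ with $\phi$ convex and $\nabla\phi$ $\alpha$-Lipschitz, i.e.\ $\phi$ is $\alpha$-smooth. Let $\phi^*$ be its Legendre conjugate; then $\phi^*$ is $1/\alpha$-strongly convex. For any $x,y$ we therefore have the strengthened Fenchel--Young inequality
\[
\phi(x)+\phi^*(y)-\langle x,y\rangle\ \ge\ \frac{1}{2\alpha}\|y-\nabla\phi(x)\|^2 .
\]
This inequality is the core of the argument. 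It follows from the strong convexity of $\phi^*$ at the point $\nabla\phi(x)$, whose subgradient there is $x$, together with the identity $\phi(x)+\phi^*(\nabla\phi(x))=\langle x,\nabla\phi(x)\rangle$.

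Next we integrate this inequality against the Huber plan $\pi_{H_c}=(I\times H_c)_\#\mu$ from \Cref{theorem:Huber-maps}, whose marginals are $\mu$ and $\nu$. Using $\|x-y\|^2/2=\|x\|^2/2+\|y\|^2/2-\langle x,y\rangle$ and Kantorovich duality for the quadratic cost (with potentials $\|x\|^2/2-\phi$ and $\|y\|^2/2-\phi^*$, which are optimal because $\nabla\phi$ is the Brenier map), we obtain
\[
\int \tfrac12\|x-H_c(x)\|^2\,d\mu-\tfrac12\mathcal W_2^2(\mu,\nu)
=\int\bigl(\phi(x)+\phi^*(H_c(x))-\langle x,H_c(x)\rangle\bigr)\,d\mu
\ \ge\ \frac1{2\alpha}\|H_c-H_0\|^2_{L^2(\mu)} .
\]
Integrability is fine because $\mu,\nu\in\mathcal P_2$ and $\phi,\phi^*$ have quadratic growth.

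It remains to bound the left-hand side, the quadratic excess cost of the Huber plan. We do not compare it with $\mathcal T_{\rho_c}$ directly. Instead, we use the proof mechanism of \Cref{propositon:Stable-cost-quantitative}. The pointwise gap $\frac12 t^2-\rho_c(t)=\frac12(t-c)_+^2$ vanishes unless $\|x-y\|>c$, and on that set at least one of $\|x\|,\|y\|$ exceeds $c/2$. This suggests the pointwise bound
\[
\tfrac12\|x-y\|^2-\rho_c(\|x-y\|)\le \Bigl(\|x\|^2-\tfrac{c^2}{4}\Bigr)_++\Bigl(\|y\|^2-\tfrac{c^2}{4}\Bigr)_+ ,
\]
which is presumably what underlies that proposition's right-hand side; we call that right-hand side $R_c$. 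We verify the bound with $\|x-y\|\le\|x\|+\|y\|$ and a short case analysis: set $t=\|x-y\|$, and note that $(t-c)^2/2\le (a+b-c)_+^2/2$ for $a=\|x\|$, $b=\|y\|$.

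Integrating this bound against $\pi_{H_c}$ gives
\[
\int\tfrac12\|x-H_c\|^2\,d\mu\le \mathcal T_{\rho_c}(\mu,\nu)+R_c\le \tfrac12\mathcal W_2^2(\mu,\nu)+R_c ,
\]
where the second inequality is the first inequality of \Cref{propositon:Stable-cost-quantitative}. Hence $\frac1{2\alpha}\|H_c-H_0\|^2\le R_c$, i.e.\ $\|H_c-H_0\|^2_{L^2(\mu)}\le 2\alpha R_c$, which is the claim.

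The main obstacle is the pointwise inequality in the third step. If the elementary case analysis fails to give exactly $R_c$, we fall back on reusing verbatim whatever pointwise estimate the proof of \Cref{propositon:Stable-cost-quantitative} establishes. That proof must bound the same gap integrated against a plan with marginals $\mu,\nu$, and this is precisely what we need here with the plan $\pi_{H_c}$. The hypothesis $\mu,\nu\ll\mathcal L^d$ is used only to guarantee that the Brenier map $H_0$ and a Huber map $H_c$ both exist.
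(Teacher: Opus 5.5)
Your proposal is correct and is essentially the paper's argument. Your strengthened Fenchel--Young inequality, integrated against $(I\times H_c)_\#\mu$, is exactly the paper's step: it integrates $f_0^*(H_c(x))-f_0^*(H_0(x))\ge\langle x,H_c(x)-H_0(x)\rangle+\frac{1}{2\alpha}\|H_c(x)-H_0(x)\|^2$ and uses $(H_c)_\#\mu=(H_0)_\#\mu$. After that, both proofs write $\frac12\int\|x-H_c\|^2\,d\mu=\mathcal T_{\rho_c}(\mu,\nu)+\int(\frac12\|x-y\|^2-\rho_c(\|x-y\|))\,d\pi_c$, drop the nonnegative term $\frac12\mathcal W_2^2-\mathcal T_{\rho_c}$, and apply the same pointwise bound. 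The paper gets that bound via $\frac12(t-c)_+^2\le(\frac{t^2}{2}-\frac{c^2}{2})_+$, $\frac12\|x-y\|^2\le\|x\|^2+\|y\|^2$ and $(a+b-r)_+\le(a-\frac r2)_++(b-\frac r2)_+$, which also settles your ``case analysis''.

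One small correction: $\phi^*$ need not have quadratic growth, since it can be $+\infty$ off the closure of the range of $\nabla\phi$. Integrability of $\phi^*(H_c)$ still holds, because $\int\phi^*(H_c)\,d\mu=\int\phi^*\,d\nu=\int\phi^*(\nabla\phi)\,d\mu<\infty$.
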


\section{Wasserstein barycenter with Huber loss}\label{sect:Barycenters}
In this section we study the Wasserstein barycenter with Huber loss. 
Let \(\mathcal P(\mathcal P_1(\mathbb R^d))\) denote the set of Borel
probability measures on \(\mathcal P_1(\mathbb R^d)\), endowed with the topology
of weak convergence. Fix \(\mathbf P\in\mathcal P(\mathcal P_1(\mathbb R^d))\), \(c>0\), and define
\[
\Gamma_{c,\mathbf P}(\nu)
:=
\int
\left(
\mathcal T_{\rho_c}(\nu,\mu)
-
c\int\|x\|\,d\mu(x)
\right)
\,d\mathbf P(\mu).
\]
We subtract \(c\int \|x\|\,d\mu(x)\) in \eqref{eq:Wasserstein-bary-huber} to ensure that the objective is well defined even without imposing any moment condition on \(\mathbf{P}\)
\begin{proposition}\label{propo:finite}
For every \(\mu,\nu\in\mathcal P_1(\mathbb R^d)\),
\begin{equation}
\label{eq:bound-integrand-bary}
\left|
\mathcal T_{\rho_c}(\nu,\mu)
-
c\int\|x\|\,d\mu(x)
\right|
\le
c\int\|y\|\,d\nu(y)+\frac{c^2}{2}.
\end{equation}
In particular, for every
\(\mathbf P\in\mathcal P(\mathcal P_1(\mathbb R^d))\) and
\(\nu\in\mathcal P_1(\mathbb R^d)\), the quantity
\(\Gamma_{c,\mathbf P}(\nu)\) is finite.
\end{proposition}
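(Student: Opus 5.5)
The plan is to sandwich $\mathcal T_{\rho_c}(\nu,\mu)$ between $c\int\|x\|\,d\mu(x)$ shifted by quantities depending only on $\nu$ and $c$. Both bounds will come from elementary pointwise inequalities for the Huber loss, integrated against a coupling.

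\emph{Pointwise inequalities.} From the definition \eqref{eq:Huber} one has, for all $t\ge 0$,
\[
ct-\frac{c^2}{2}\le \rho_c(t)\le ct .
\]
For $t\ge c$ the lower bound is an identity and the upper bound holds trivially. For $t<c$ the lower bound reads $\tfrac12 t^2-ct+\tfrac{c^2}{2}=\tfrac12(t-c)^2\ge 0$, and the upper bound follows from $\tfrac12 t^2\le ct$ since $t<c$.

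\emph{Upper bound.} Take the independent coupling $\nu\otimes\mu\in\Pi(\nu,\mu)$. Using $\rho_c(t)\le ct$ and the triangle inequality $\|y-x\|\le \|y\|+\|x\|$ gives
\[
\mathcal T_{\rho_c}(\nu,\mu)\le c\int\|y\|\,d\nu(y)+c\int\|x\|\,d\mu(x).
\]
Hence $\mathcal T_{\rho_c}(\nu,\mu)-c\int\|x\|\,d\mu\le c\int\|y\|\,d\nu$.

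\emph{Lower bound.} Let $\pi$ be any plan in $\Pi(\nu,\mu)$; an optimal one exists by \Cref{theorem:duality}, but any plan suffices. Using $\rho_c(t)\ge ct-c^2/2$ and $\|y-x\|\ge \|x\|-\|y\|$ gives
\[
\int\rho_c(\|y-x\|)\,d\pi\ge c\int\|x\|\,d\mu-c\int\|y\|\,d\nu-\frac{c^2}{2}.
\]
Taking the infimum over $\pi$ yields $\mathcal T_{\rho_c}(\nu,\mu)-c\int\|x\|\,d\mu\ge -c\int\|y\|\,d\nu-c^2/2$. Combining the two sides gives \eqref{eq:bound-integrand-bary}. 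All integrals are finite because $\mu,\nu\in\mathcal P_1(\mathbb R^d)$.

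\emph{Finiteness of $\Gamma_{c,\mathbf P}$.} The right-hand side of \eqref{eq:bound-integrand-bary} does not depend on $\mu$, so the integrand defining $\Gamma_{c,\mathbf P}(\nu)$ is bounded uniformly in $\mu$. It remains to check measurability of $\mu\mapsto \mathcal T_{\rho_c}(\nu,\mu)-c\int\|x\|\,d\mu$ on $\mathcal P_1(\mathbb R^d)$. I expect this to be the only mildly delicate point, and it reduces to known facts. First, by \Cref{theorem:stability-cost} the map $\mu\mapsto\mathcal T_{\rho_c}(\nu,\mu)$ is $c$-Lipschitz for $\mathcal W_1$, hence Borel on $\mathcal P_1$. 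Second, $\mu\mapsto\int\|x\|\,d\mu$ is $1$-Lipschitz for $\mathcal W_1$, since $x\mapsto\|x\|$ is 1-Lipschitz. The one subtlety is the topology: the paper endows $\mathcal P(\mathcal P_1(\mathbb R^d))$ with weak convergence. I would note that the Borel $\sigma$-algebra of $(\mathcal P_1,\mathcal W_1)$ coincides with the trace of the weak Borel $\sigma$-algebra. Alternatively, I would argue that each map is a monotone limit of weakly continuous functionals (e.g.\ $\int\min(\|x\|,n)\,d\mu$) and hence measurable. A bounded measurable integrand is integrable against the probability measure $\mathbf P$, so $\Gamma_{c,\mathbf P}(\nu)$ is finite.
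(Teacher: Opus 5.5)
Your proof is correct and follows the paper's argument: both bounds come from the pointwise inequalities $ct-c^2/2\le\rho_c(t)\le ct$, combined with the triangle and reverse triangle inequalities and an infimum over couplings. Finiteness of $\Gamma_{c,\mathbf P}(\nu)$ then follows by integrating the $\mu$-independent bound, and your added measurability check via the $c$-Lipschitz dependence on $\mu$ from \Cref{theorem:stability-cost} fills in a detail the paper leaves implicit.
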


The result above motivates the definition of Huber-Wasserstein barycenters.
\begin{definition}
  A  Huber-Wasserstein barycenter with  parameter $c$ is defined as
  $$
  m_c({\bf P}):= \argmin_{\nu\in \mathcal{P}_1(\R^d)} \Gamma_{c,{\bf P}}(\nu).
  $$
\end{definition}

\subsection{Stability and existence}\label{sect:Stability-OT}
The goal of this section is to prove the existence and stability of barycenters.
We first present a stability result, crucial to proving the existence of barycenters: 

\begin{proposition}
\label{theorem:Stability}
Fix
\(\{{\bf P}_n\}_{n=1}^{\infty}\cup\{\mathbf P\}
\subset\mathcal P(\mathcal P_1(\mathbb R^d))\), with
\({\bf P}_n\to{\bf P}\) in \(\mathcal P(\mathcal P_1(\mathbb R^d))\) as
\(n\to\infty\). Then the following statements hold:
\begin{enumerate}
\item \textup{(Weak coercivity)}
There exists a constant \(C>0\) such that, for every
{\(\varepsilon\geq0\)} and \(n\in\mathbb N\), if
\(\nu\in\mathcal P_1(\mathbb R^d)\) satisfies
\(\Gamma_{c,{\bf P}_n}(\nu)\leq\varepsilon\), then
\[
\int\|y\|\,d\nu(y)\leq C(1+\varepsilon)
\quad\text{and}\quad
\Gamma_{c,{\bf P}_n}(\nu)\geq-C(1+\varepsilon).
\]

\item \textup{(Stability)}
Let \(\{\nu_n\}_n\) be such that, for some
\(\varepsilon_n\geq0\) with \(\varepsilon_n\to0\),
\[
\Gamma_{c,{\bf P}_n}(\nu_n)
\leq
\Gamma_{c,{\bf P}_n}(\nu)+\varepsilon_n
\]
for all \(n\in\mathbb N\) and \(\nu\in\mathcal P_1(\mathbb R^d)\).
Then \(\{\nu_n\}_n\) is relatively compact in
\(\mathcal P(\mathbb R^d)\), and every limit point belongs to
\(m_c({\bf P})\).
\end{enumerate}
\end{proposition}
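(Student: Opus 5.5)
The plan is to work with the integrand $F(\nu,\mu):=\mathcal T_{\rho_c}(\nu,\mu)-c\int\|x\|\,d\mu(x)$, which is finite by \Cref{propo:finite}, and to use two elementary facts about the Huber loss: $\rho_c(t)\ge ct-\frac{c^2}{2}$, and \eqref{eq:Huber-loos-is-lip}. The upper bound on near-minimal values is straightforward. Comparing with $\nu=\delta_0$ gives $F(\delta_0,\mu)=\int(\rho_c(\|x\|)-c\|x\|)\,d\mu\in[-\frac{c^2}{2},0]$. Hence $\inf_\nu\Gamma_{c,\mathbf P_n}\le 0$ uniformly in $n$, and the hypothesis $\Gamma_{c,\mathbf P_n}(\nu)\le\varepsilon$ is compatible with near-minimality.

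\textbf{Part 1 (weak coercivity).} For each $\mu$, take a Huber optimal plan $\pi_\mu\in\Pi(\nu,\mu)$ (\Cref{theorem:duality}), chosen measurably in $\mu$. Then
\[
F(\nu,\mu)\ \ge\ c\int\big(\|x-y\|-\|x\|\big)\,d\pi_\mu(y,x)-\frac{c^2}{2}.
\]
Pointwise, $\|x-y\|-\|x\|\ge -\|y\|$ always, and $\|x-y\|-\|x\|\ge \|y\|-2R$ whenever $\|x\|\le R$.

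To choose $R$ uniformly in $n$, I use that $\mathbf P_n\to\mathbf P$ weakly. By Prokhorov, $\{\mathbf P_n\}$ is tight in $\mathcal P(\mathcal P_1)$, and hence the intensity measures $\bar\mu_n:=\int\mu\,d\mathbf P_n(\mu)$ form a tight family in $\mathcal P(\mathbb R^d)$. So I fix $R$ with $\sup_n\bar\mu_n(\|x\|>R)$ small.

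Integrating against $\mathbf P_n$ and writing $\Pi_n:=\int\pi_\mu\,d\mathbf P_n$, a plan between $\nu$ and $\bar\mu_n$, gives
\[
\Gamma_{c,\mathbf P_n}(\nu)\ \ge\ c\int\|y\|\big(1-2q_n(y)\big)\,d\nu(y)-2cR-\frac{c^2}{2},
\qquad q_n(y):=\Pi_n(\|x\|>R\mid y).
\]
The key remaining step is to show that the region where $q_n$ is not small carries a controlled part of $\int\|y\|\,d\nu$. I would try to use cyclical monotonicity of $\operatorname{supp}\pi_\mu$ (\Cref{theorem:duality}, part 4), together with the $c$-Lipschitz dual potentials from \Cref{lemma:Lipschitz-extension}. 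The aim is to argue that mass at large $\|y\|$ cannot be preferentially matched to the small-probability tail $\{\|x\|>R\}$ without the cost exceeding what coupling to the bulk in $\overline{\mathbb B}_R$ would give. If this succeeds, it yields $\Gamma_{c,\mathbf P_n}(\nu)\ge \frac{c}{2}\int\|y\|\,d\nu-C'$. Combined with $\Gamma\le\varepsilon$, this gives the moment bound $\int\|y\|\,d\nu\le C(1+\varepsilon)$. The lower bound on $\Gamma$ then follows from \Cref{propo:finite}: $\Gamma\ge -c\int\|y\|\,d\nu-\frac{c^2}{2}$. \emph{This control of $q_n$ is the main obstacle and the step I am least sure of.} If the monotonicity argument fails, I would fall back on a dual bound using a truncated test function such as $(\|y\|-R)_+$ in the Kantorovich--Rubinstein formula.

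\textbf{Part 2 (stability).} Since the $\nu_n$ are near-minimizers and $\inf\Gamma_{c,\mathbf P_n}\le 0$, Part 1 bounds $\int\|y\|\,d\nu_n$ uniformly. This gives tightness, hence relative compactness in $\mathcal P(\mathbb R^d)$; after truncating by the uniform tightness of $\bar\mu_n$, it should give $\mathcal W_1$-compactness as well. Let $\nu_{n_k}\to\nu_\infty$.

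For the liminf inequality I would prove the following. (i) $\nu\mapsto F(\nu,\mu)$ is lower semicontinuous under weak convergence: the cost $\rho_c(\|x-y\|)$ is continuous and bounded below, and optimal plans are tight. (ii) $\mu\mapsto F(\nu,\mu)$ is continuous on $\mathcal P_1$, which follows from \Cref{theorem:stability-cost}. Then I would pass to the limit using a Skorokhod representation of $\mathbf P_{n_k}\to\mathbf P$ and Fatou's lemma, with the integrable lower bound $-c\int\|y\|\,d\nu_{n_k}-\frac{c^2}{2}$ supplied by \Cref{propo:finite}. This gives $\Gamma_{c,\mathbf P}(\nu_\infty)\le\liminf_k\Gamma_{c,\mathbf P_{n_k}}(\nu_{n_k})$.

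For the limsup against a fixed competitor $\nu$, the integrand is continuous in $\mu$ and dominated by $c\int\|y\|\,d\nu+\frac{c^2}{2}$, which does not depend on $\mu$. Hence $\Gamma_{c,\mathbf P_n}(\nu)\to\Gamma_{c,\mathbf P}(\nu)$ by weak convergence. Chaining the two bounds gives $\Gamma_{c,\mathbf P}(\nu_\infty)\le\Gamma_{c,\mathbf P}(\nu)$ for all $\nu$, so every limit point $\nu_\infty$ lies in $m_c(\mathbf P)$.
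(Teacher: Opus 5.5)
Your proposal for Part 1 has a genuine gap, and the step you flag is not merely unproven but false. On $\{\|x\|>R\}$ you use $\|x-y\|-\|x\|\ge-\|y\|$, which loses almost $2\|y\|$ when $\|y\|\gg\|x\|$. Cyclical monotonicity works against you: optimal plans pair the far mass of $\nu$ precisely with the tail of $\mu$.

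Here is a concrete failure. Fix $R$ and take $n>R$. Let $\mathbf P_n:=\delta_{\mu_n}$ with $\mu_n:=(1-n^{-2})\delta_0+n^{-2}\delta_{ne_1}$, so that $\mathbf P_n\to\delta_{\delta_0}$ in $\mathcal P(\mathcal P_1(\mathbb R^d))$ and $\bar\mu_n(\|x\|>R)=n^{-2}$. Let $\nu:=(1-n^{-2})\delta_0+n^{-2}\delta_{Le_1}$ with $L>n+c$.

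Since $\rho_c(L)+\rho_c(n)\ge\rho_c(L-n)+2\rho_c(n)>\rho_c(L-n)$ (superadditivity of the convex $\rho_c$ with $\rho_c(0)=0$), the unique Huber plan matches $0$ with $0$ and $Le_1$ with $ne_1$. Hence $q_n=1$ on the atom that carries all of $\int\|y\|\,d\nu=L/n^2$. Your displayed bound then reduces to $\Gamma_{c,\mathbf P_n}(\nu)\ge-c\int\|y\|\,d\nu-2cR-\frac{c^2}{2}$. In fact $\Gamma_{c,\mathbf P_n}(\nu)=c\int\|y\|\,d\nu-\frac{2c}{n}-\frac{c^2}{2n^2}$, so no analysis of $q_n$ can restore a positive coefficient.

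The Kantorovich--Rubinstein fallback with $(\|y\|-R)_+$ does not help either. It leaves the term $-c\int\|x\|\,d\mu$ to be integrated against $\mathbf P_n$, and this may be infinite because no moment condition is imposed on $\mathbf P$. Tightness of the intensity measures $\bar\mu_n$ is simply the wrong kind of control here.

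The missing idea, which is the paper's, is to split over the measures $\mu$ rather than over the points $x$. This also removes any need for plans or measurable selections. Tightness of $\{\mathbf P_n\}$ in $\mathcal P(\mathcal P_1(\mathbb R^d))$ yields a $\mathcal W_1$-compact set $\mathcal K$ with $p_n:=\mathbf P_n(\mathcal K)\ge\frac34$ for all $n$. Compactness in $\mathcal W_1$ gives $K_1:=\sup_{\mu\in\mathcal K}\int\|x\|\,d\mu<\infty$.

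Now use $\rho_c(t)\ge ct-\frac{c^2}{2}$ together with $\mathcal W_1(\nu,\mu)\ge\int\|y\|\,d\nu-\int\|x\|\,d\mu$:
\begin{itemize}
\item for $\mu\in\mathcal K$, the integrand is at least $c\int\|y\|\,d\nu-2cK_1-\frac{c^2}{2}$;
\item for every $\mu$, it is at least $-c\int\|y\|\,d\nu-\frac{c^2}{2}$ by \Cref{propo:finite}.
\end{itemize}
Integrating gives $\Gamma_{c,\mathbf P_n}(\nu)\ge(2p_n-1)c\int\|y\|\,d\nu-2cK_1-\frac{c^2}{2}\ge\frac c2\int\|y\|\,d\nu-2cK_1-\frac{c^2}{2}$. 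Both bounds of Part 1 follow from this as you indicate; any uniform mass fraction above $\frac12$ would work.

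Your Part 2 is essentially the paper's argument once Part 1 is in place. Skorokhod plus Fatou is a valid substitute for the paper's Portmanteau step on $\mathbf P_n\otimes\delta_{\nu_n}$. The joint lower semicontinuity you need holds because \Cref{theorem:stability-cost} gives continuity in $\mu$ uniformly in $\nu$. Your additional claim of $\mathcal W_1$-compactness is neither needed nor justified.
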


From \cref{theorem:Stability}, we can deduce the following theorem:
\begin{theorem}[Existence of barycenters]
\label{theorem:Existence-General}
For any \({\bf P}\in\mathcal P(\mathcal P_1(\mathbb R^d))\), the set
\(m_c({\bf P})\) of Huber--Wasserstein barycenters with parameter
\(c>0\) is non-empty, bounded, convex\footnote{Here convexity is in the standard
flat geometry.}, and closed in \(\mathcal W_1\).
\end{theorem}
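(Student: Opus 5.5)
Non-emptiness follows from \cref{theorem:Stability} with the constant sequence \({\bf P}_n={\bf P}\). By \Cref{propo:finite}, \(\inf_\nu\Gamma_{c,{\bf P}}(\nu)\) is finite: evaluating at \(\delta_0\) gives an upper bound, and the weak coercivity of \cref{theorem:Stability}(1) gives a lower bound on any sublevel set. Choose a minimizing sequence \(\nu_n\) with \(\Gamma_{c,{\bf P}}(\nu_n)\le \inf\Gamma_{c,{\bf P}}+\varepsilon_n\), where \(\varepsilon_n\downarrow 0\). This is exactly the hypothesis of part (2), so \(\{\nu_n\}\) is relatively compact and every limit point lies in \(m_c({\bf P})\). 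Hence \(m_c({\bf P})\neq\varnothing\).

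Boundedness also comes from part (1). Every \(\nu\in m_c({\bf P})\) satisfies \(\Gamma_{c,{\bf P}}(\nu)\le \Gamma_{c,{\bf P}}(\delta_0)\le c^2/2\), using \eqref{eq:bound-integrand-bary} with \(\nu=\delta_0\). Apply (1) with \(\varepsilon=\max\{0,\Gamma_{c,{\bf P}}(\delta_0)\}\) to get \(\int\|y\|\,d\nu\le C(1+\varepsilon)\) uniformly over \(m_c({\bf P})\). Equivalently, \(\mathcal W_1(\nu,\delta_0)\) is bounded on \(m_c({\bf P})\).

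For convexity, the main step is to show that \(\nu\mapsto \mathcal T_{\rho_c}(\nu,\mu)\) is convex under linear interpolation. Take \(\nu_0,\nu_1\in\mathcal P_1\), \(t\in[0,1]\), and optimal plans \(\pi_0\in\Pi(\nu_0,\mu)\) and \(\pi_1\in\Pi(\nu_1,\mu)\) from \Cref{theorem:duality}. The mixture \((1-t)\pi_0+t\pi_1\) is a coupling of \((1-t)\nu_0+t\nu_1\) and \(\mu\). Since the cost is linear in the plan, this gives \(\mathcal T_{\rho_c}((1-t)\nu_0+t\nu_1,\mu)\le(1-t)\mathcal T_{\rho_c}(\nu_0,\mu)+t\mathcal T_{\rho_c}(\nu_1,\mu)\). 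One could argue via duality as a supremum of linear functionals instead, but the mixture argument is more direct. Subtracting the \(\nu\)-independent term \(c\int\|x\|\,d\mu\) and integrating against \({\bf P}\) preserves the inequality. The integrals are finite by \Cref{propo:finite}, so no \(\infty-\infty\) issue arises. Thus \(\Gamma_{c,{\bf P}}\) is convex, and its set of minimizers is convex.

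For closedness, the main point is lower semicontinuity of \(\Gamma_{c,{\bf P}}\) in \(\mathcal W_1\); in fact it is continuous. Let \(\nu_k\to\nu\) in \(\mathcal W_1\) with \(\nu_k\in m_c({\bf P})\). By \Cref{theorem:stability-cost}, \(|\mathcal T_{\rho_c}(\nu_k,\mu)-\mathcal T_{\rho_c}(\nu,\mu)|\le c\,\mathcal W_1(\nu_k,\nu)\) uniformly in \(\mu\). Integrating gives \(|\Gamma_{c,{\bf P}}(\nu_k)-\Gamma_{c,{\bf P}}(\nu)|\le c\,\mathcal W_1(\nu_k,\nu)\to0\), so \(\Gamma_{c,{\bf P}}(\nu)=\min\Gamma_{c,{\bf P}}\) and \(\nu\in m_c({\bf P})\).

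I expect the only delicate point to be the non-emptiness step. Part (2) of \cref{theorem:Stability} gives relative compactness only in \(\mathcal P(\mathbb R^d)\), meaning weak convergence, while the barycenter must lie in \(\mathcal P_1\). This is not a real obstacle, since the stability statement already asserts that limit points belong to \(m_c({\bf P})\subset\mathcal P_1(\mathbb R^d)\). Alternatively, the uniform first-moment bound from (1) and Fatou's lemma show directly that any weak limit has a finite first moment.
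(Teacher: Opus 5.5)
Your proof is correct. For non-emptiness, boundedness and convexity it is the paper's argument: a minimizing sequence fed into \cref{theorem:Stability}(ii) with ${\bf P}_n={\bf P}$, the weak-coercivity bound of \cref{theorem:Stability}(i) applied to minimizers, and mixing of couplings. In the boundedness step you take $\varepsilon=\max\{0,\Gamma_{c,{\bf P}}(\delta_0)\}$, while the paper takes $\varepsilon=0$ using $\Gamma_{c,{\bf P}}(\delta_0)\le 0$; either works.

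The only genuine difference is closedness. The paper re-applies \cref{theorem:Stability}(ii) with ${\bf P}_n={\bf P}$ and $\varepsilon_n=0$, which rests on the Portmanteau lower-semicontinuity argument. That route actually shows $m_c({\bf P})$ is closed under weak convergence, not only in $\mathcal W_1$. You instead integrate the estimate of \cref{theorem:stability-cost} to get that $\Gamma_{c,{\bf P}}$ is $c$-Lipschitz with respect to $\mathcal W_1$; the centering terms cancel because each integrand is integrable by \Cref{propo:finite}. Your route is more elementary and gives continuity of the objective rather than only lower semicontinuity. Its cost is that it yields closedness only in the $\mathcal W_1$ topology, which is exactly what the statement asks for.
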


\subsubsection{Empirical estimation}

We consider two sampling models. In the \emph{one-stage sampling model}, we observe independent random
probability measures $\mu_1,\ldots,\mu_n
\overset{\mathrm{i.i.d.}}{\sim}{\bf P},$ 
and form the empirical distribution
${\bf P}_n
:=
\frac1n\sum_{i=1}^n\delta_{\mu_i}.$
An empirical Huber--Wasserstein barycenter is any $\widehat\nu_{n,c}\in m_c({\bf P}_n).$ In many applications, the probability measures \(\mu_i\) are not observed
directly. Instead, each \(\mu_i\) is estimated from observations sampled
from it \cite{szabo2016learning,bachoc2023improved,bachoc2026wassersteinspatialdepth}. This gives the following two-stage sampling model: $\mu_1,\ldots,\mu_n
\overset{\mathrm{i.i.d.}}{\sim}{\bf P},$ 
and, conditionally on \(\mu_1,\ldots,\mu_n\), sample 
$X_{i1},\ldots,X_{iN}
\overset{\mathrm{i.i.d.}}{\sim}\mu_i,
$ for $
i=1,\ldots,n.$
Define
$\widehat\mu_{i,N}
:=
\frac1{N}\sum_{j=1}^{N}\delta_{X_{ij}}$
and the empirical distribution of the estimated measures
$\widehat{\bf P}_{n,N}
:=
\frac1n\sum_{i=1}^n\delta_{\widehat\mu_{i,N}},$
A two-stage empirical Huber--Wasserstein barycenter is any $\widehat\nu_{n,N,c}
\in
m_c(\widehat{\bf P}_{n,N}).$ 
\begin{proposition}
    [Consistency under one-stage and two-stage sampling]
\label{corollary:consistency-Huber-barycenters}
Fix \(c>0\) and  ${\bf P}\in\mathcal P\bigl(\mathcal P_1(\mathbb R^d)\bigr)$. 

\begin{enumerate}
\item \textup{(One-stage sampling.)}
Let $\widehat\nu_{n,c}\in m_c({\bf P}_n).$ 
Then, almost surely, \(\{\widehat\nu_{n,c}\}_n\) is relatively compact in
\(\mathcal P(\mathbb R^d)\), and every limit point belongs to
\(m_c({\bf P})\). If $m_c({\bf P})=\{\nu_c\},$ 
then
\[
\widehat\nu_{n,c}\longrightarrow\nu_c
\qquad\text{in }\mathcal P(\mathbb R^d)
\quad\text{almost surely}.
\]

\item \textup{(Two-stage sampling.)}
Assume that $N=N(n)\to \infty,$ 
and let $\widehat\nu_{n,\mathbf N,c}
\in m_c(\widehat{\bf P}_{n,\mathbf N}).$ 
Then, almost surely,
\(\{\widehat\nu_{n,\mathbf N,c}\}_n\) is relatively compact in
\(\mathcal P(\mathbb R^d)\), and every limit point belongs to
\(m_c({\bf P})\). If $m_c({\bf P})=\{\nu_c\},$ 
then
\[
\widehat\nu_{n,\mathbf N,c}\longrightarrow\nu_c
\qquad\text{in }\mathcal P(\mathbb R^d)
\quad\text{almost surely}.
\]
\end{enumerate}
\end{proposition}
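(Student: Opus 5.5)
The plan is to reduce both statements to the stability result \Cref{theorem:Stability}(2), taking $\varepsilon_n=0$ and using exact minimizers. That result applies as soon as the random laws converge to $\mathbf P$ in $\mathcal P(\mathcal P_1(\mathbb R^d))$, which carries the weak topology induced by $\mathcal W_1$ on $\mathcal P_1(\mathbb R^d)$. So the whole task is to prove that, almost surely, $\mathbf P_n\to\mathbf P$ and $\widehat{\mathbf P}_{n,N}\to\mathbf P$ weakly. The conclusions then follow directly: relative compactness, limit points lying in $m_c(\mathbf P)$, and, when $m_c(\mathbf P)=\{\nu_c\}$, convergence of the whole sequence to $\nu_c$ (every subsequence has a further subsequence converging to $\nu_c$).

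For one-stage sampling, I would use the fact that $(\mathcal P_1(\mathbb R^d),\mathcal W_1)$ is a Polish space. Varadarajan's theorem then gives $\mathbf P_n\to\mathbf P$ weakly almost surely. No moment condition on $\mathbf P$ is needed, which matches the centering built into $\Gamma_{c,\mathbf P}$.

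For two-stage sampling, I would use a metric that metrizes weak convergence on $\mathcal P(\mathcal P_1)$, namely the bounded-Lipschitz metric $d_{BL}$ built on $\mathcal W_1\wedge 1$. This gives
\[
d_{BL}(\widehat{\mathbf P}_{n,N},\mathbf P)\le d_{BL}(\widehat{\mathbf P}_{n,N},\mathbf P_n)+d_{BL}(\mathbf P_n,\mathbf P),
\qquad
d_{BL}(\widehat{\mathbf P}_{n,N},\mathbf P_n)\le \frac1n\sum_{i=1}^n \min\{1,\mathcal W_1(\widehat\mu_{i,N},\mu_i)\}.
\]
The second term on the right of the first inequality tends to $0$ by the one-stage step. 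The remaining task is to show that the average $A_n:=\frac1n\sum_i Z_{i,n}$, with $Z_{i,n}=\min\{1,\mathcal W_1(\widehat\mu_{i,N(n)},\mu_i)\}\in[0,1]$, tends to $0$ almost surely.

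This is the main obstacle, because the samples form a triangular array: the $X_{ij}$ may be resampled for each $n$, and the $\mu_i$ need not have uniform moment control. I would proceed in two steps.

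\begin{enumerate}
\item \emph{Mean convergence.} Conditional on $\mu_i$, we have $\mathcal W_1(\widehat\mu_{i,N},\mu_i)\to0$ almost surely as $N\to\infty$ (Varadarajan plus the strong law for first moments, since $\mu_i\in\mathcal P_1$). By dominated convergence, $g(N):=\mathbb E Z_{1,N}\to0$, and $\mathbb E A_n=g(N(n))\to0$.
\item \emph{Almost sure concentration.} The $Z_{i,n}$ are independent across $i$ for fixed $n$ and bounded in $[0,1]$. Hoeffding's inequality gives $\mathbb P(|A_n-\mathbb E A_n|>t)\le 2e^{-2nt^2}$, which is summable in $n$. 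Borel--Cantelli then yields $A_n\to0$ almost surely, regardless of how the samples are coupled across $n$.
\end{enumerate}

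Combining the two bounds gives $\widehat{\mathbf P}_{n,N}\to\mathbf P$ almost surely, and \Cref{theorem:Stability} finishes the proof as in the one-stage case.
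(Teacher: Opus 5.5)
Your proof is correct. The one-stage part is exactly the paper's argument (Varadarajan's theorem followed by \Cref{theorem:Stability} with $\varepsilon_n=0$), but for the two-stage part you take a genuinely shorter route.

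The paper proves $\sup_{f\in\mathrm{BL}_1}\bigl|\frac1n\sum_i f(\widehat\mu_{i,N})-\mathbb E f(\mu_1)\bigr|\to0$ from scratch. It uses Hoeffding and Borel--Cantelli for each fixed $f$, then a finite Arzel\`a--Ascoli net on a compact $K\subset\mathcal P_1(\mathbb R^d)$ with ${\bf P}(K)\ge1-\varepsilon$, and the strong law for $\frac1n\sum_i\mathbf 1_{\{\mu_i\notin K\}}$. This leaves a remainder which is, up to a constant, your average $A_n$, and it is controlled by the same Hoeffding/Borel--Cantelli/dominated-convergence argument you give.

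Your triangle inequality through $\mathbf P_n$, together with the coupling bound $d_{BL}(\widehat{\mathbf P}_{n,N},\mathbf P_n)\le A_n$, isolates $A_n$ as the only genuinely two-stage ingredient. Everything else is recycled from the one-stage result, so the per-function concentration and the compactness net become unnecessary.

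What the paper's longer argument buys is independence from Varadarajan's theorem for $\mathbf P_n$. Its concentration steps use only independence within each fixed $n$, after replacing the strong law by Hoeffding for the indicator term. So it would adapt to a design in which the $\mu_i$ are themselves redrawn for each $n$. Your reduction instead uses that $\mu_1,\mu_2,\dots$ is a single i.i.d. sequence; that is the model in the statement, so nothing is lost.

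One small wording point: since you allow the $X_{ij}$ to be redrawn for each $N$, the claim $\mathcal W_1(\widehat\mu_{i,N},\mu_i)\to0$ almost surely should be read along a nested realization. This is legitimate because $g(N)$ depends only on the law of $(\mu_1,\widehat\mu_{1,N})$.
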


\subsubsection{Stability with respect to the robustification parameter}

Define the Wasserstein median functional by
\[
\mathcal F_1(\nu)
:=
\int_{\mathcal P_1(\mathbb R^d)}\left(
\mathcal W_1(\nu,\mu)-\int \|x\|d\mu \right)\,d{\bf P}(\mu),
\]
and let
\[
m_0({\bf P})
:=
\argmin_{\nu\in\mathcal P_1(\mathbb R^d)}
\mathcal F_1(\nu).
\]
We also define the quadratic Wasserstein barycenter functional as
\[
\mathcal F_2(\nu)
:=
\frac12\int_{\mathcal P_2(\mathbb R^d)}
\mathcal W_2^2(\nu,\mu)\,d{\bf P}(\mu),
\]
if $\mathbf{P}$ is supported on $\mathcal{P}_2$, and +$\infty$ otherwise. We write
\[
m_\infty({\bf P})
:=
\argmin_{\nu\in\mathcal P_2(\mathbb R^d)}
\mathcal F_2(\nu).
\]
\begin{theorem}[Stability as \(c\downarrow0\) and \(c\uparrow\infty\)]
\label{theorem:stability-in-c}
The following statements hold.

\begin{enumerate}
\item \textup{(Limit as \(c\downarrow0\).)} Assume that ${\bf P}\in
\mathcal P\bigl(\mathcal P_1(\mathbb R^d)\bigr)$. 
Let \(c_n\downarrow0\), and choose $\nu_n\in m_{c_n}({\bf P}).$ 
Then \(\{\nu_n\}_n\) is weakly relatively compact, and every weak limit point
belongs to \(m_0({\bf P})\). If \(m_0({\bf P})=\{\nu_0\}\), then $\nu_n\to \nu_0$ in $\mathcal{P}(\R^d)$. 
\item \textup{(Limit as \(c\uparrow\infty\).)}
Assume that ${\bf P}\in
\mathcal P\bigl(\mathcal P_2(\mathbb R^d)\bigr),$ and 
$\int_{\mathcal P_2(\mathbb R^d)}
\int_{\mathbb R^d}\|x\|^2\,d\mu(x)\,d{\bf P}(\mu)<\infty.$
Let \(c_n\uparrow\infty\), and choose $\nu_n\in m_{c_n}({\bf P})$
Then \(\{\nu_n\}_n\) is weakly relatively compact, and every weak limit point
belongs to \(m_\infty({\bf P})\). If \(m_\infty({\bf P})=\{\nu_\infty\}\), then $\nu_n\to\nu_\infty$ in $\mathcal{P}(\R^d)$. 
\end{enumerate}
\end{theorem}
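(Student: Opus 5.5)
Both parts follow the same $\Gamma$-convergence template: rescale the Huber functional so that it converges to the limiting functional, obtain uniform coercivity for the minimizers, and pass to the limit using lower semicontinuity together with pointwise convergence at competitors.

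\emph{Case $c\downarrow 0$.} Set $G_n(\nu):=c_n^{-1}\Gamma_{c_n,\mathbf P}(\nu)$, so that $m_{c_n}(\mathbf P)=\argmin G_n$. Since $c\,t-\tfrac{c^2}{2}\le \rho_c(t)\le c\,t$, we have
\[
c\,\mathcal W_1(\nu,\mu)-\tfrac{c^2}{2}\;\le\;\mathcal T_{\rho_c}(\nu,\mu)\;\le\; c\,\mathcal W_1(\nu,\mu),
\]
and hence $|G_n(\nu)-\mathcal F_1(\nu)|\le c_n/2$ uniformly in $\nu$. (Finiteness of $\mathcal F_1$ follows as in \Cref{propo:finite}, via $|\mathcal W_1(\nu,\mu)-\int\|x\|d\mu|\le\int\|y\|d\nu$.) Thus $\nu_n$ is a $c_n$-approximate minimizer of $\mathcal F_1$: $\mathcal F_1(\nu_n)\le \mathcal F_1(\nu)+c_n$ for all $\nu$. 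Coercivity comes from the argument of \Cref{theorem:Stability}(1), run for $\mathcal F_1$. For $R>0$, pick $K$ with $\mathbf P(\mu(\mathbb B_R)\ge 1/2)\ge 3/4$. On that event, $\mathcal W_1(\nu,\mu)-\int\|x\|d\mu\ge \int\|y\|d\nu-2R$ roughly, by comparing with mass near the origin. This gives $\mathcal F_1(\nu)\ge a\int\|y\|d\nu-b$ with $a>0$, so $\sup_n\int\|y\|d\nu_n<\infty$ and the sequence is tight. For a weak limit $\nu_\ast$ along a subsequence, $\mathcal W_1(\cdot,\mu)$ is weakly lsc and bounded below by $\int\|x\|d\mu-\int\|y\|d\nu_n$, which is uniformly integrable in the needed sense. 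Fatou then gives $\mathcal F_1(\nu_\ast)\le\liminf \mathcal F_1(\nu_n)\le \inf \mathcal F_1$, so $\nu_\ast\in m_0(\mathbf P)$. Uniqueness of the minimizer plus relative compactness yields convergence of the full sequence.

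\emph{Case $c\uparrow\infty$.} Here I use $\Gamma_{c,\mathbf P}(\nu)+c\int\!\!\int\|x\|d\mu\,d\mathbf P=\int\mathcal T_{\rho_c}(\nu,\mu)d\mathbf P=:\widetilde\Gamma_c(\nu)$. This is legitimate because of the second-moment assumption, and it has the same minimizers. By \Cref{propositon:Stable-cost-quantitative} and monotonicity of $\rho_c$ in $c$, $\widetilde\Gamma_{c}(\nu)\uparrow\mathcal F_2(\nu)$ pointwise. For a fixed $\nu\in\mathcal P_2$, the error $\int(\int(\|x\|^2-c^2/4)_+d\mu)d\mathbf P$ tends to $0$ by dominated convergence. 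Then $\widetilde\Gamma_{c_n}(\nu_n)\le \widetilde\Gamma_{c_n}(\bar\nu)\le\mathcal F_2(\bar\nu)$ for any fixed $\bar\nu\in\mathcal P_2$, for instance $\bar\nu=\delta_0$. For tightness, fix $c_1$. Monotonicity gives $\widetilde\Gamma_{c_1}(\nu_n)\le\widetilde\Gamma_{c_n}(\nu_n)\le M$, and the coercivity of \Cref{theorem:Stability}(1) at the fixed parameter $c_1$ bounds $\int\|y\|d\nu_n$. For a weak limit $\nu_\ast$ and each fixed $k$, lower semicontinuity of $\widetilde\Gamma_{c_k}$ (same Fatou argument, with the integrand bounded below by $-c_k\int\|y\|d\nu$-type terms) gives
\[
\widetilde\Gamma_{c_k}(\nu_\ast)\le\liminf_n\widetilde\Gamma_{c_k}(\nu_n)\le\liminf_n\widetilde\Gamma_{c_n}(\nu_n)\le \inf_\nu\widetilde\Gamma_{c_n}(\nu)\le\inf\mathcal F_2 .
\]
Letting $k\to\infty$ with monotone convergence yields $\mathcal F_2(\nu_\ast)\le\inf\mathcal F_2<\infty$. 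In particular $\nu_\ast\in\mathcal P_2$ and it is a minimizer.

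\emph{Main obstacle.} The delicate step is the lower semicontinuity under merely weak convergence with only first moments bounded. $\mathcal W_1(\nu_n,\mu)$ need not converge, so I rely on weak lsc of the transport cost for lsc nonnegative costs, together with a Fatou lemma in $\mu$. The required lower bound is $\mathcal T_{\rho_c}(\nu_n,\mu)-c\int\|x\|d\mu\ge -c\sup_n\int\|y\|d\nu_n-c^2/2$, which is an integrable constant and suffices. In the $c\downarrow0$ case I need this bound for $\mathcal F_1$ directly.
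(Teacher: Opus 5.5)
Your template matches the paper's proof step by step. In (i): the uniform $c/2$ comparison with $\mathcal F_1$, first-moment coercivity, tightness, and lower semicontinuity plus Fatou. In (ii): the uncentered functional, comparison with $\delta_0$, monotonicity in $c$, coercivity at a fixed parameter, and a limit in $k$. However, the last step of (ii) is circular. You justify $\widetilde\Gamma_c(\nu)\uparrow\mathcal F_2(\nu)$ only through \Cref{propositon:Stable-cost-quantitative}, which requires $\nu\in\mathcal P_2(\mathbb R^d)$; its error term is $+\infty$ otherwise. The weak limit $\nu_\ast$ is a priori only in $\mathcal P_1(\mathbb R^d)$, yet you use $\lim_k\widetilde\Gamma_{c_k}(\nu_\ast)=\mathcal F_2(\nu_\ast)$ precisely to conclude $\nu_\ast\in\mathcal P_2(\mathbb R^d)$. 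Monotonicity of $\rho_c$ in $c$ only gives $\sup_c\inf_\pi\le\inf_\pi\sup_c$. What is missing is the reverse inequality, namely $\sup_c\mathcal T_{\rho_c}(\nu,\mu)=\tfrac12\mathcal W_2^2(\nu,\mu)$ for every $\nu\in\mathcal P_1(\mathbb R^d)$, including the value $+\infty$ when $\nu\notin\mathcal P_2(\mathbb R^d)$.

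The paper proves this explicitly. It takes $\rho_{a_j}$-optimal plans $\pi_j\in\Pi(\nu,\mu)$ and extracts a weakly convergent subsequence, which is possible because the marginals are fixed. It then uses $\rho_{a_j}\ge\rho_b$ for large $j$ and lower semicontinuity to get $\liminf_j\mathcal T_{\rho_{a_j}}(\nu,\mu)\ge\int\rho_b(\|x-y\|)\,d\pi$, and lets $b\uparrow\infty$. Alternatively, you can close the gap directly. Since $\rho_c$ is convex and nondecreasing on $[0,\infty)$ and $\|y\|\le\|x-y\|+\|x\|$, one has $\rho_c(\|y\|/2)\le\tfrac12\rho_c(\|x-y\|)+\tfrac12\rho_c(\|x\|)$. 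Hence $\widetilde\Gamma_{c_n}(\nu_n)\le\tfrac12\int\!\!\int\|x\|^2\,d\mu\,d\mathbf P=:C_2$ yields $\int\rho_{c_k}(\|y\|/2)\,d\nu_n\le C_2$ for $n\ge k$. Weak lower semicontinuity followed by $k\to\infty$ then gives $\int\|y\|^2\,d\nu_\ast\le 8C_2$. After that, your $\mathcal P_2$ argument legitimately applies.

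A smaller point in (i): your explicit coercivity sketch is false as written, although the argument of \Cref{theorem:Stability} that you cite does work. A condition such as $\mathbf P(\mu(\mathbb B_R(0))\ge 1/2)\ge 3/4$ gives no control on $\int\|x\|\,d\mu$, and the claimed bound $\mathcal W_1(\nu,\mu)-\int\|x\|\,d\mu\ge\int\|y\|\,d\nu-2R$ fails. For example, take $\mu=\tfrac12\delta_0+\tfrac12\delta_{Le_1}$ and $\nu=\tfrac12\delta_0+\tfrac12\delta_{te_1}$ with $0<t\le L$. Then the left side equals $-t/2=-\int\|y\|\,d\nu$. What you need, and what the paper uses, is a compact $\mathcal K\subset\mathcal P_1(\mathbb R^d)$ with $p=\mathbf P(\mathcal K)>1/2$. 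Then $M_{\mathcal K}=\sup_{\mu\in\mathcal K}\int\|x\|\,d\mu<\infty$, and splitting the integral gives $\mathcal F_1(\nu)\ge(2p-1)\int\|y\|\,d\nu-2pM_{\mathcal K}$.
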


\subsection{Characterization}
In the following result, we characterize the set of  Huber--Wasserstein barycenters of \({\bf P}\) with parameter
\(c>0\). 

\begin{theorem}
\label{theorem:charcterization}
Fix \({\bf P}\in\mathcal P(\mathcal P_1(\mathbb R^d))\).
Then \(\nu\) is a Huber--Wasserstein barycenter of \({\bf P}\) with parameter
\(c>0\) if and only if there exists a jointly measurable function $(\mu,x)\longmapsto f_{\nu,\mu}(x)$ 
such that
\begin{enumerate}
\item for \({\bf P}\)-a.e.\ \(\mu\), the function \(f_{\nu,\mu}\) is a
Huber potential for \((\nu,\mu)\), in the sense of \cref{theorem:duality};

\item for \(\nu\)-a.e.\ \(x\), $\int f_{\nu,\mu}(x)\,d{\bf P}(\mu)=0;$

\item for every \(x\in\mathbb R^d\), $\int f_{\nu,\mu}(x)\,d{\bf P}(\mu)\geq0.$ 
\end{enumerate}
\end{theorem}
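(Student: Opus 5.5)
The plan is to treat $\Gamma_{c,\mathbf P}$ as a convex functional on the flat convex set $\mathcal P_1(\mathbb R^d)$ and to read conditions 1--3 as a first-order optimality condition, in which the averaged potential $F_\nu(x):=\int f_{\nu,\mu}(x)\,d\mathbf P(\mu)$ plays the role of a subgradient.

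Two preliminary observations will be used throughout.
\begin{itemize}
\item For any Huber potential $f$ for $(\nu,\mu)$ with partner $f^{\rho_c}$, and any $\nu'\in\mathcal P_1(\mathbb R^d)$, weak duality in \cref{theorem:duality} gives
\[
\mathcal T_{\rho_c}(\nu',\mu)\ge \int f\,d\nu'+\int f^{\rho_c}\,d\mu,
\]
with equality when $\nu'=\nu$. Hence $\mathcal T_{\rho_c}(\nu',\mu)-\mathcal T_{\rho_c}(\nu,\mu)\ge\int f\,d(\nu'-\nu)$.
\item The quantity $\int f\,d(\nu'-\nu)$ is invariant under adding constants to $f$. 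So I may normalize every potential by $f_{\nu,\mu}(x_0)=0$ at a fixed point $x_0$. By \cref{lemma:Lipschitz-extension} this gives $|f_{\nu,\mu}(x)|\le c\|x-x_0\|$ uniformly in $\mu$, which makes all the Fubini exchanges below legitimate without any moment assumption on $\mathbf P$.
\end{itemize}

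\emph{Sufficiency.} Suppose a jointly measurable family satisfies 1--3. Integrating the first observation in $\mathbf P$ and using Fubini, I get for every $\nu'$
\[
\Gamma_{c,\mathbf P}(\nu')-\Gamma_{c,\mathbf P}(\nu)\ge\int F_\nu\,d\nu'-\int F_\nu\,d\nu .
\]
The centering terms cancel. By condition 3 the first term is $\ge 0$, and by condition 2 the second term is $0$. Hence $\nu\in m_c(\mathbf P)$.

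\emph{Necessity.} Let $\nu\in m_c(\mathbf P)$ and $\nu'\in\mathcal P_1(\mathbb R^d)$, and set $\nu_t=(1-t)\nu+t\nu'$.
\begin{enumerate}
\item \emph{Danskin step.} I will show
\[
\lim_{t\downarrow0}\frac{\mathcal T_{\rho_c}(\nu_t,\mu)-\mathcal T_{\rho_c}(\nu,\mu)}{t}=\max_{f\in{\rm Sol}^*(\nu,\mu)}\int f\,d(\nu'-\nu).
\]
The lower bound comes from the first observation. For the upper bound, take normalized optimal potentials $f_t$ for $(\nu_t,\mu)$ and use $\mathcal T(\nu,\mu)\ge\int f_t\,d\nu+\int f_t^{\rho_c}d\mu$. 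By \cref{theorem:stsbilitypot}, since $\nu_t\to\nu$ in $\mathcal W_1$, the $f_t$ converge along subsequences locally uniformly to a potential for $(\nu,\mu)$. The $c$-Lipschitz bound then passes the limit inside $\int f_t\,d(\nu'-\nu)$.
\item \emph{Integrated optimality.} Difference quotients are bounded by $c\,\mathcal W_1(\nu,\nu')$ via \cref{theorem:stability-cost}, so dominated convergence and minimality of $\nu$ give
\[
\int\max_{f\in{\rm Sol}^*(\nu,\mu)}\int f\,d(\nu'-\nu)\,d\mathbf P(\mu)\ \ge\ 0\qquad\text{for all }\nu'.
\]
\item \emph{Minimax.} I need to exchange the $\max$ with the quantifier over $\nu'$. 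Let $\mathcal K$ be the set of jointly measurable selections $\mu\mapsto f_{\nu,\mu}\in{\rm Sol}^*(\nu,\mu)$ normalized at $x_0$. This set is convex, because ${\rm Sol}^*$ is convex by linearity of the dual problem. It is compact for the topology of local uniform convergence integrated against $\mathbf P$, using Arzel\`a--Ascoli for uniformly $c$-Lipschitz functions together with closedness from \cref{theorem:stsbilitypot}. The pairing $(f,\nu')\mapsto\iint f\,d(\nu'-\nu)\,d\mathbf P$ is bilinear, and continuous in $f$. Sion's theorem then yields a single selection with $\iint f_{\nu,\mu}\,d(\nu'-\nu)\,d\mathbf P\ge0$ for all $\nu'$.
\item \emph{Reading off the conditions.} Taking $\nu'=\delta_x$ gives $F_\nu(x)\ge\int F_\nu\,d\nu=:a$ for every $x$. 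Replace each pair $(f_{\nu,\mu},f_{\nu,\mu}^{\rho_c})$ by $(f_{\nu,\mu}-a,\,f_{\nu,\mu}^{\rho_c}+a)$; this is a shift by the same constant for every $\mu$, so the pairs remain optimal. Then $F_\nu\ge0$ everywhere and $\int F_\nu\,d\nu=0$, so $F_\nu=0$ $\nu$-a.e. This is exactly conditions 1--3.
\end{enumerate}

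I expect the main obstacle to be step 3, the measurable-selection and compactness part. Joint measurability of $(\mu,x)\mapsto f_{\nu,\mu}(x)$ and compactness of $\mathcal K$ both require care, and the Danskin limit must hold uniformly enough in $\mu$ to be integrated. My first attempt would be a Kuratowski--Ryll-Nardzewski selection on the Polish space of normalized $c$-Lipschitz functions with the local-uniform topology. Measurability of $\mu\mapsto{\rm Sol}^*(\nu,\mu)$ follows from its closed graph, which is again \cref{theorem:stsbilitypot}. If Sion's theorem proves awkward, the fallback is to prove the result first for finitely supported $\mathbf P$, where finite-dimensional convex duality suffices, and then pass to general $\mathbf P$ by approximation using \cref{theorem:Stability}.
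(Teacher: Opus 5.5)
Your sufficiency argument and Steps 1, 2 and 4 of the necessity part are sound. They are essentially \cref{lemma:Gateaux} and \cref{lemma:Gateaux-Ganmma}, and reading the conditions off with $\nu'=\delta_x$ is fine. The gap is the compactness claim in Step 3, on which Sion's theorem depends. Arzel\`a--Ascoli only controls the $x$-variable: it shows that each fibre ${\rm Sol}^*(\nu,\mu)$, normalized at $x_0$, is compact. \cref{theorem:stsbilitypot} only shows that each fibre is closed. Neither says anything about how a selection depends on $\mu$. Suppose $\mathbf P$ has a non-atomic part on which ${\rm Sol}^*(\nu,\mu)$ contains two distinct normalized potentials $f^0_\mu\neq f^1_\mu$; this happens, e.g., when $\nu$ is a Dirac mass. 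Then the selections $\mu\mapsto f^{r_n(\mu)}_\mu$, with $r_n$ a Rademacher-type sequence of $\{0,1\}$-valued functions of $\mu$, have no subsequence converging in $\mathbf P$-measure. So $\mathcal K$ is not compact for ``local uniform convergence integrated against $\mathbf P$''. Since $\mathcal P_1(\mathbb R^d)$ on the other side is not compact either, Sion's theorem cannot be applied. Your fallback does not escape this either. First, the prescribed $\nu$ is in general not a barycenter of the finitely supported approximants $\mathbf P_n$; \cref{theorem:Stability} only gives accumulation at \emph{some} element of $m_c(\mathbf P)$. Second, passing to the limit in the resulting selections again needs compactness in $\mu$.

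The missing idea is to get compactness in $\mu$ only in a weak sense, and to recover strong convergence through convexity and averaging. The paper works with the set $\mathcal G$ of averaged potentials $x\mapsto\int f_{\nu,\mu}(x)\,d\mathbf P(\mu)$. These are equi-Lipschitz, so Arzel\`a--Ascoli does give relative compactness in $L^1(\nu+\beta)$ modulo constants, where $\nu+\beta$ has full support.

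The real work is closedness of $\mathcal G$:
\begin{itemize}
\item the linear-growth bound makes the selections uniformly integrable in $L^1(\mathbf P\otimes(\nu+\beta))$, so Dunford--Pettis gives a weak limit;
\item Banach--Saks gives Ces\`aro means converging a.e.;
\item full support plus equi-Lipschitz continuity upgrades this to local uniform convergence for $\mathbf P$-a.e.\ $\mu$;
\item convexity of ${\rm Sol}^*(\nu,\mu)$ together with \cref{theorem:stsbilitypot} shows the limit is again a selection.
\end{itemize}
A Hahn--Banach separation of $\mathcal G$ from the cone $M$ then produces a single averaged potential; Sion applied to the compact convex set $\mathcal G$ would also work.

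Alternatively, you could keep $\mathcal K$ but give it the weak $L^1(\mathbf P\otimes\lambda)$ topology, with $\lambda$ of full support and finite first moment. Compactness then follows from Dunford--Pettis and Mazur by the same closedness argument. You would still need to check that $f\mapsto\iint f\,d(\nu'-\nu)\,d\mathbf P$ is continuous for singular $\nu'$ such as $\delta_x$, which again goes through the averaging and full-support step. If potentials are unique up to constants for $\mathbf P$-a.e.\ $\mu$, then $\mathcal K$ is a point and your argument is complete as written; the difficulty is exactly the non-unique case.
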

The proof of \Cref{theorem:charcterization} is postponed to the
supplementary material. It relies on the following auxiliary lemmas,
which we state here because they may be of independent interest. In
particular, combining \Cref{lemma:Gateaux} with
\Cref{theorem:uniqueness-potentials}, we obtain that the functional
$\nu \mapsto \mathcal T_{\rho_c}(\mu,\nu)$
 is directionally differentiable along admissible mixture directions. If 
 $\mu\ll \mathcal{L}_d$ is such that $\mu(\partial {\rm supp}(\mu))=0$ and ${\rm supp}(\mu)$, the directional
derivative is linear in the direction $\gamma-\nu$.
This differentiability property generally fails for the
$1$-Wasserstein distance, since the associated Kantorovich potentials
need not be unique. The same regularity is inherited by the
barycenter objective functional. Thus, replacing the Euclidean
distance by the Huber loss yields an additional degree of smoothness,
in analogy with the role played by the Huber loss in classical robust
statistics on the real line.   

\begin{lemma}[Subgradient of the cost functional]
\label{lemma:Gateaux}
Fix \(\mu,\nu,\gamma\in\mathcal P_1(\mathbb R^d)\), and let
\({\rm Sol}^*(\nu,\mu)\) be the set of Huber potentials for the pair \((\nu,\mu)\).
Then, for every \(f_{\nu,\mu}\in{\rm Sol}^*(\nu,\mu)\),
\begin{equation}
\label{Subgradient-1}
\mathcal T_{\rho_c}(\mu,\nu)
\leq
\mathcal T_{\rho_c}(\mu,\gamma)
+
\int f_{\nu,\mu}\,d(\nu-\gamma).
\end{equation}
Furthermore,
\[
\lim_{t\downarrow0}
\frac{
\mathcal T_{\rho_c}\bigl(\mu,\nu+t(\gamma-\nu)\bigr)
-
\mathcal T_{\rho_c}(\mu,\nu)
}{t}
=
\sup_{f\in{\rm Sol}^*(\nu,\mu)}
\int f\,d(\gamma-\nu).
\]
\end{lemma}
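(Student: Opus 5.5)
The plan is to derive both claims from Kantorovich duality (\Cref{theorem:duality}), the $c$-Lipschitz property of potentials (\Cref{lemma:Lipschitz-extension}), and the compactness of potentials (\Cref{theorem:stsbilitypot}). Since the cost $\rho_c(\|x-y\|)$ is symmetric, $\mathcal T_{\rho_c}(\mu,\nu)=\mathcal T_{\rho_c}(\nu,\mu)$. I write potentials for $(\nu,\mu)$ with the $\nu$-variable first.

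\emph{Step 1: the subgradient inequality \eqref{Subgradient-1}.} Fix $f=f_{\nu,\mu}\in{\rm Sol}^*(\nu,\mu)$ and let $g$ be its partner, so that $\mathcal T_{\rho_c}(\nu,\mu)=\int f\,d\nu+\int g\,d\mu$.
\begin{itemize}
\item By \Cref{lemma:Lipschitz-extension}, $f$ is $c$-Lipschitz, so $|f(x)|\le |f(x_0)|+c\|x-x_0\|$. Hence $f\in L^1(\gamma)$ because $\gamma\in\mathcal P_1(\mathbb R^d)$.
\item The pair $(f,g)$ satisfies $f(x)+g(y)\le\rho_c(\|x-y\|)$, so it is admissible in the dual problem \eqref{eq:Huber-dual} for $(\gamma,\mu)$. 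This gives $\mathcal T_{\rho_c}(\gamma,\mu)\ge\int f\,d\gamma+\int g\,d\mu$.
\end{itemize}
Subtracting this bound from the identity for $\mathcal T_{\rho_c}(\nu,\mu)$ yields \eqref{Subgradient-1}.

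\emph{Step 2: lower bound on the derivative.} Set $\nu_t:=\nu+t(\gamma-\nu)\in\mathcal P_1(\mathbb R^d)$. Apply \eqref{Subgradient-1} with the roles of $\nu$ and $\gamma$ played by $\nu$ and $\nu_t$. For every $f\in{\rm Sol}^*(\nu,\mu)$ this gives
\[
\mathcal T_{\rho_c}(\mu,\nu_t)-\mathcal T_{\rho_c}(\mu,\nu)\ge \int f\,d(\nu_t-\nu)=t\int f\,d(\gamma-\nu).
\]
Dividing by $t$, taking $\liminf_{t\downarrow0}$, and then the supremum over $f$ shows that the $\liminf$ is at least $\sup_{f\in{\rm Sol}^*(\nu,\mu)}\int f\,d(\gamma-\nu)$.

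\emph{Step 3: upper bound on the derivative.} Take $f_t\in{\rm Sol}^*(\nu_t,\mu)$ and apply \eqref{Subgradient-1} at the base point $\nu_t$ with comparison measure $\nu$:
\[
\mathcal T_{\rho_c}(\mu,\nu_t)-\mathcal T_{\rho_c}(\mu,\nu)\le\int f_t\,d(\nu_t-\nu)=t\int f_t\,d(\gamma-\nu).
\]
\begin{itemize}
\item \emph{Normalization.} Adding a constant to $f_t$ and subtracting it from its partner keeps it a potential and does not change $\int f_t\,d(\gamma-\nu)$. So we may assume $f_t(x_0)=0$ for a fixed $x_0$. Then $|f_t(x)|\le c\|x-x_0\|$ uniformly in $t$.
\item \emph{Convergence of the measures.} Choose $t_k\downarrow0$ along which the $\limsup$ of the difference quotients is attained. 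Convexity of $\mathcal W_1$ under mixtures gives $\mathcal W_1(\nu_{t_k},\nu)\le t_k\,\mathcal W_1(\gamma,\nu)\to0$.
\item \emph{Compactness.} By \Cref{theorem:stsbilitypot}, applied with the boundedness condition \eqref{eq:bounded_at_a_point} at $x_0$, a further subsequence of $f_{t_k}$ converges uniformly on compact sets to some $f_\infty\in{\rm Sol}^*(\nu,\mu)$.
\item \emph{Passage to the limit.} The envelope $c\|x-x_0\|$ is integrable against both $\gamma$ and $\nu$. Dominated convergence therefore gives $\int f_{t_k}\,d(\gamma-\nu)\to\int f_\infty\,d(\gamma-\nu)$.
\end{itemize}
Consequently the $\limsup$ is at most $\int f_\infty\,d(\gamma-\nu)$, which is at most the supremum.

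Combining Steps 2 and 3 shows that the limit exists and equals the stated supremum; in particular the supremum is finite. The main obstacle is Step 3: it needs the joint compactness of potentials along the moving base measure $\nu_t$, together with the identification of the limit as a potential for $(\nu,\mu)$. This is exactly what \Cref{theorem:stsbilitypot} provides once the normalization at $x_0$ and the $\mathcal W_1$-convergence $\nu_t\to\nu$ are in place.
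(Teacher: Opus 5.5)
Your proposal is correct and follows essentially the same route as the paper. Both arguments obtain the subgradient inequality from dual admissibility of the potential pair for $(\gamma,\mu)$. Both bound the difference quotient from both sides using potentials at $\nu$ and at $\nu_t$, and then identify the limit via normalization, \Cref{theorem:stsbilitypot}, and dominated convergence. Your choice of a sequence realizing the $\limsup$ before extracting a convergent subsequence is a slightly cleaner phrasing of the paper's ``$t_n$ was arbitrary'' step.
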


\begin{lemma}[Subgradient of the barycenter objective]
\label{lemma:Gateaux-Ganmma}
Fix \(\nu,\gamma\in\mathcal P_1(\mathbb R^d)\) and
\({\bf P}\in\mathcal P(\mathcal P_1(\mathbb R^d))\). Let
$(\mu,x)\mapsto f_{\nu,\mu}(x)$
be jointly measurable and such that, for \({\bf P}\)-a.e.\ \(\mu\),
\(f_{\nu,\mu}\) is a potential for \((\nu,\mu)\). Then
\begin{equation}
\label{Subgradient-1-Gamma}
\Gamma_{c,{\bf P}}(\nu)
\leq
\Gamma_{c,{\bf P}}(\gamma)
+
\int\!\!\int
f_{\nu,\mu}(x)\,d(\nu-\gamma)(x)\,d{\bf P}(\mu).
\end{equation}
Furthermore,
\begin{align}
\label{eq:Gateaux-Gamma}
&\lim_{t\downarrow0}
\frac{
\Gamma_{c,{\bf P}}\bigl(\nu+t(\gamma-\nu)\bigr)
-
\Gamma_{c,{\bf P}}(\nu)
}{t}
\nonumber\\
&\qquad=
\int
\left(
\sup_{f\in{\rm Sol}^*(\nu,\mu)}
\int f(x)\,d(\gamma-\nu)(x)
\right)d{\bf P}(\mu).
\end{align}
\end{lemma}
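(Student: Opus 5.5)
Integrate the pointwise subgradient inequality of \Cref{lemma:Gateaux} against ${\bf P}$ for \eqref{Subgradient-1-Gamma}, and use dominated convergence with monotone difference quotients for \eqref{eq:Gateaux-Gamma}. Throughout, the centering term $c\int\|x\|\,d\mu(x)$ does not depend on $\nu$, so it cancels in every difference $\Gamma_{c,{\bf P}}(\nu)-\Gamma_{c,{\bf P}}(\gamma)$; \Cref{propo:finite} guarantees that each integrand is ${\bf P}$-integrable. Hence we may write
\[
\Gamma_{c,{\bf P}}(\nu)-\Gamma_{c,{\bf P}}(\gamma)=\int\bigl(\mathcal T_{\rho_c}(\nu,\mu)-\mathcal T_{\rho_c}(\gamma,\mu)\bigr)\,d{\bf P}(\mu),
\]
and the integrand is bounded by $c\,\mathcal W_1(\nu,\gamma)$ by \Cref{theorem:stability-cost}.

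For \eqref{Subgradient-1-Gamma}, apply \eqref{Subgradient-1} for ${\bf P}$-a.e.\ $\mu$ with $f_{\nu,\mu}$ and integrate in $\mu$. I need the map $\mu\mapsto\int f_{\nu,\mu}\,d(\nu-\gamma)$ to be measurable and integrable. Measurability follows from joint measurability and Fubini. For integrability, normalize first: by \Cref{lemma:Lipschitz-extension}, $f_{\nu,\mu}$ is $c$-Lipschitz, so $|f_{\nu,\mu}(x)-f_{\nu,\mu}(0)|\le c\|x\|$. The integral against the signed measure $\nu-\gamma$, which has zero total mass, is unchanged by additive constants. It is therefore bounded by $c\int\|x\|\,d(\nu+\gamma)$, uniformly in $\mu$. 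Integrating gives the inequality.

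For the derivative, set $\nu_t=\nu+t(\gamma-\nu)$. For each $\mu$ the map $t\mapsto\mathcal T_{\rho_c}(\mu,\nu_t)$ is convex: it is a supremum of affine functions of the measure by \eqref{eq:Huber-dual}, or equivalently convex by mixing plans. So the difference quotients $D_t(\mu)$ are nondecreasing in $t$ and converge as $t\downarrow0$ to $\sup_{f\in{\rm Sol}^*(\nu,\mu)}\int f\,d(\gamma-\nu)$, by \Cref{lemma:Gateaux}. They are also bounded in absolute value, since \Cref{theorem:stability-cost} gives $|D_t(\mu)|\le c\,\mathcal W_1(\nu_t,\nu)/t$. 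Moreover $\mathcal W_1(\nu_t,\nu)\le t\,\mathcal W_1(\gamma,\nu)$, which follows by using the plan that keeps mass $1-t$ in place and couples the mass $t$ optimally. Hence $|D_t(\mu)|\le c\,\mathcal W_1(\gamma,\nu)$, a constant. Dominated convergence (or monotone convergence) then lets me exchange the limit and the ${\bf P}$-integral, which yields \eqref{eq:Gateaux-Gamma}.

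The main obstacle is measurability of $\mu\mapsto\sup_{f\in{\rm Sol}^*(\nu,\mu)}\int f\,d(\gamma-\nu)$. I would sidestep it by observing that this function is the pointwise limit of the measurable functions $D_{1/k}(\mu)$. Each $D_{1/k}$ is measurable because $\mu\mapsto\mathcal T_{\rho_c}(\mu,\nu_t)$ is $c$-Lipschitz in $\mathcal W_1$, and hence continuous on $\mathcal P_1(\mathbb R^d)$. So the limit is measurable and no selection argument is needed.
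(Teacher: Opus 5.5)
Your proposal is correct and follows the paper's proof. You integrate \eqref{Subgradient-1} against ${\bf P}$, then pass to the limit in the difference quotients using \Cref{lemma:Gateaux} and dominated convergence with a bound uniform in $\mu$ and $t$. The differences are minor: the paper obtains the dominating constant $2c\bigl(\int\|x\|\,d\gamma+\int\|x\|\,d\nu\bigr)$ by sandwiching the quotient between integrals of origin-normalized potentials, whereas you get the slightly sharper $c\,\mathcal W_1(\gamma,\nu)$ from \Cref{theorem:stability-cost}, and you also make explicit the measurability of the limiting integrand, which the paper leaves implicit.
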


\subsection{Finite-sample breakdown point}
In this section, we analyze the robustness of the empirical
Huber--Wasserstein barycenter through its finite-sample breakdown
point. This is a widely used measure of robustness that quantifies the
smallest proportion of observations that must be replaced by arbitrary
contaminants in order to make an estimator arbitrarily unstable \cite{Huber.Ronchetti.2009}. It
therefore provides a natural way to assess the resistance of the
Huber--Wasserstein barycenter to severe contamination.

The Hausdorff distance between two bounded closed sets
\(A,B\subset\mathcal P_1(\mathbb R^d)\) is defined as
\[
{\rm dist}(A,B)
:=
\max\left\{
\sup_{\alpha\in A}\inf_{\beta\in B}\mathcal W_1(\alpha,\beta),
\sup_{\beta\in B}\inf_{\alpha\in A}\mathcal W_1(\alpha,\beta)
\right\}.
\]

\begin{definition}
Let
$\boldsymbol{\mu}
=
\frac1n\sum_{i=1}^n\delta_{\mu_i}$
be an empirical probability measure. The breakdown point of
\(m_c(\boldsymbol{\mu})\) is
\[
{\rm BP}(m_c(\boldsymbol{\mu}))
:=
\frac1n
\inf
\left\{
s\in\{1,\ldots,n\}:
\sup_{\boldsymbol{\nu}\in\mathcal Q_s(\boldsymbol{\mu})}
{\rm dist}
\bigl(m_c(\boldsymbol{\nu}),m_c(\boldsymbol{\mu})\bigr)
=\infty
\right\},
\]
where \(\mathcal Q_s(\boldsymbol{\mu})\) is the set of
empirical probability measures with \(n\) atoms sharing at
least \(n-s\) atoms with \(\boldsymbol{\mu}\).
\end{definition}

\begin{theorem}[Breakdown point]
\label{Theorem:BP}
We have
\[
{\rm BP}(m_c(\boldsymbol{\mu}))
\in
\left[
\frac1n\left\lceil\frac n2\right\rceil,
\frac1n\left(\left\lfloor\frac n2\right\rfloor+1\right)
\right].
\]
\end{theorem}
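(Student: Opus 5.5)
Two bounds are needed: a lower bound showing that fewer than $\lceil n/2\rceil$ replaced atoms cannot drive the barycenter set to infinity, and an upper bound giving an explicit contamination with $\lfloor n/2\rfloor+1$ atoms that does. Throughout I use the first-moment functional $M(\nu):=\int\|y\|\,d\nu(y)$. Since $\mathcal W_1(\nu,\delta_0)=M(\nu)$ and, by \cref{theorem:Existence-General}, $m_c(\boldsymbol\mu)$ is a nonempty bounded set, the Hausdorff distance is infinite if and only if $\sup_{\nu\in m_c(\boldsymbol\nu)}M(\nu)$ is unbounded over $\boldsymbol\nu\in\mathcal Q_s(\boldsymbol\mu)$.

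\emph{Lower bound.} Fix $s<\lceil n/2\rceil$, so $s<n/2$ and at least $n-s>n/2$ atoms are uncontaminated. Let $R:=\max_i M(\mu_i)$, computed over the original atoms. For each atom I use the two-sided estimate
\[
cM(\nu)-cM(\mu)-\tfrac{c^2}{2}\le \mathcal T_{\rho_c}(\nu,\mu)-cM(\mu)\le cM(\nu)+\tfrac{c^2}{2}.
\]
The right inequality is \cref{propo:finite}. The left one follows from $\rho_c(t)\ge ct-c^2/2$ together with $\int\|x-y\|\,d\pi\ge M(\nu)-M(\mu)$. For contaminated atoms only the lower bound $\mathcal T_{\rho_c}(\nu,\mu)-cM(\mu)\ge -cM(\nu)-c^2/2$ is available, which comes from the same proposition. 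Summing over the atoms of $\boldsymbol\nu$ gives
\[
\Gamma_{c,\boldsymbol\nu}(\nu)\ \ge\ \frac{n-s}{n}\Bigl(cM(\nu)-cR\Bigr)-\frac{s}{n}\,cM(\nu)-\frac{c^2}{2}
\ =\ \frac{n-2s}{n}\,cM(\nu)-cR-\frac{c^2}{2}.
\]
On the other hand, $\Gamma_{c,\boldsymbol\nu}(\delta_0)\le c^2/2$, again by \cref{propo:finite}. Any minimizer $\nu$ therefore satisfies $M(\nu)\le n(cR+c^2)/\bigl(c(n-2s)\bigr)$. This bound does not depend on the contaminants, so breakdown requires at least $\lceil n/2\rceil$ replacements.

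\emph{Upper bound.} Take $s=\lfloor n/2\rfloor+1$, so $s>n/2$. Replace $s$ atoms by $\delta_{z}$ with $z=te$ and $t\to\infty$, and suppose the barycenters $\nu_t$ stayed in a set with $M(\nu_t)\le K$. Consider the translate $\nu_t':=(\cdot+\lambda e)_\#\nu_t$, with $\lambda>0$ fixed and $t$ large. For the $s$ contaminated atoms, pushing mass toward $z$ decreases $\mathcal T_{\rho_c}(\nu,\delta_z)=\int\rho_c(\|y-z\|)\,d\nu(y)$ by roughly $c\lambda$ each, because for mass far from $z$ the loss is in its linear regime with slope $c$ and the direction points toward $z$. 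More precisely, the decrease is at least $c\lambda-o(1)$ after accounting for the small mass of $\nu_t$ lying far from the origin, which is controlled by tightness and $M\le K$. For each of the $n-s$ clean atoms the increase is at most $c\lambda$, by \cref{theorem:stability-cost} with $\mathcal W_1(\nu_t,\nu_t')\le\lambda$. The net change is at most $-(2s-n)c\lambda/n+o(1)<0$, which contradicts optimality of $\nu_t$. Hence $M(\nu_t)\to\infty$ and breakdown occurs at $s$.

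The main obstacle is the upper bound: showing rigorously that the gain from the far atoms is essentially $c\lambda$. The difficulty is tail mass of $\nu_t$ near $z$, where the Huber slope is smaller than $c$. I would first handle this through a uniform integrability and tightness argument using $M(\nu_t)\le K$. Alternatively, I would compare $\nu_t$ directly against $\nu_t'$ restricted to the region $\|y\|\le t/2$.
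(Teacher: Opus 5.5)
Your proof is correct. Your lower bound is essentially the paper's argument. The paper first replaces the clean atoms by $\delta_0$ (the auxiliary $\boldsymbol\gamma$), pays an error $C_1$ via \cref{theorem:stability-cost}, and then uses $\rho_c(t)\ge ct-c^2/2$ and $\|x-y\|\ge\|x\|-\|y\|$ exactly as you do; your per-atom estimate simply skips the auxiliary measure.

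There is one slip in it. The two facts you cite give $\mathcal T_{\rho_c}(\nu,\mu)\ge cM(\nu)-cM(\mu)-c^2/2$. After subtracting $cM(\mu)$, the correct lower bound is therefore $cM(\nu)-2cM(\mu)-c^2/2$. Your displayed version is false: take $\nu=\mu=\delta_z$ with $\|z\|>c/2$. The slip only turns $cR$ into $2cR$ in your final moment bound, so nothing breaks.

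The upper bound is where you genuinely differ. The paper makes a direct energy comparison. After replacing the clean atoms by $\delta_0$ (again at cost at most $C_1$), it tests the barycenter $\nu_m$ against the competitor $\delta_{me_1}$. This gives $\Gamma_{c,\boldsymbol\gamma^m}(\nu_m)\le \frac{(n-s)cm}{n}+2C_1$. On the other side, $\rho_c(t)\ge ct-c^2/2$ and $\|me_1-x\|\ge m-\|x\|$ give $\Gamma_{c,\boldsymbol\gamma^m}(\nu_m)\ge\frac{csm}{n}+\frac{c(n-2s)}{n}M(\nu_m)-\frac{c^2}{2}$. Since $2s>n$, this yields $M(\nu_m)\ge m-C$ for every barycenter, with no limiting or contradiction argument.

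You instead argue by a first-order perturbation. Translating a barycenter with $M(\nu_t)\le K$ by $\lambda e$ gains about $c\lambda$ on each of the $s$ contaminated atoms and loses at most $c\lambda$ on each clean one. The obstacle you flag closes as follows:
\begin{enumerate}
\item By Markov, $\nu_t(\|y\|>R)\le K/R$, and on that set the change is at worst $c\lambda$ because $\rho_c$ is $c$-Lipschitz.
\item On $\{\|y\|\le R\}$, both distances to $z$ are in the linear regime once $t$ is large, and $\|y-te\|-\|y-(t-\lambda)e\|\ge\lambda\bigl(1-2R/(t-\lambda+R)\bigr)$.
\item Hence the limsup of $n\bigl(\Gamma_{c,\boldsymbol\nu}(\nu_t')-\Gamma_{c,\boldsymbol\nu}(\nu_t)\bigr)$ is at most $c\lambda\bigl(n-2s+2sK/R\bigr)$, which is negative once $R>2sK/(2s-n)$.
\end{enumerate}

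The paper's route is shorter and quantitative: barycenters are dragged essentially all the way to the contamination, at a linear rate in $m$. Yours gives only qualitative divergence. In exchange, it isolates the mechanism: each atom exerts a pull of size at most $c$, and the majority wins. This is the classical reason bounded-score Huber estimators break down at one half.
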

The theorem shows that the finite-sample breakdown point is essentially
$1/2$. In particular, replacing fewer than half of the observations
cannot drive the Huber--Wasserstein barycenter arbitrarily far away,
whereas contaminating slightly more than half is sufficient to do so.
For odd $n$, the two bounds coincide and
${\rm BP}(m_c(\boldsymbol{\mu}))
=
\frac{n+1}{2n},$
while for even $n$ the breakdown point lies between $1/2$ and
$1/2+1/n$.
\subsection{The one-dimensional case}
\label{sec:one-dimensional}

For $\mu\in\mathcal P_1(\mathbb R)$, let $Q_\mu(u)
:=
\inf\bigl\{x\in\mathbb R:F_\mu(x)\geq u\bigr\}$ for $u\in(0,1),$ 
denote its quantile function.
Then \cite[Theorem~5.1]{Ambrosio.Pratelli.2003.LNM} yields that, for every $\mu,\nu\in\mathcal P_1(\mathbb R)$,
\begin{equation}
\label{eq:huber-quantile-cost}
T_{\rho_c}(\mu,\nu)
=
\int_0^1
\rho_c\bigl(\lvert Q_\mu(u)-Q_\nu(u)\rvert\bigr)\,du.
\end{equation}
For $u\in(0,1)$, ${\bf P}\in\mathcal P\bigl(\mathcal P_1(\mathbb R)\bigr)$
 and $c>0$, define
\begin{equation}
\label{eq:renormalized-pointwise-functional}
\Phi_{c,u}(z)
:=
\int_{\mathcal P_1(\mathbb R)}
\left[
\rho_c\bigl(\lvert z-Q_\mu(u)\rvert\bigr)
-c\lvert Q_\mu(u)\rvert
\right]\,d{\bf P}(\mu),
\qquad z\in\mathbb R.
\end{equation}
This quantity is finite for every $(u,z)\in(0,1)\times\mathbb R$. Note that by \eqref{eq:huber-quantile-cost}, for every
$\nu\in\mathcal P_1(\mathbb R)$,
\begin{equation}
\label{eq:objective-quantile-decomposition}
\Gamma_{c,{\bf P}}(\nu)
=
\int_0^1
\Phi_{c,u}\bigl(Q_\nu(u)\bigr)\,du.
\end{equation}
Using this relation we derive the following result. 
\begin{proposition}[One-dimensional Huber--Wasserstein barycenters]
\label{prop:huber-barycenter-1d}
Fix $u\in(0,1)$, ${\bf P}\in\mathcal P\bigl(\mathcal P_1(\mathbb R)\bigr)$
 and $c>0$. The set
\begin{equation}
\label{eq:pointwise-minimizer-set}
M_c(u)
:=
\operatorname{argmin}_{z\in\mathbb R}\Phi_{c,u}(z)
=
\bigl\{z\in\mathbb R:S_{c,u}(z)=0\bigr\}
\end{equation}
is a nonempty compact interval. Furthermore, 
$\nu\in\mathcal P_1(\mathbb R)$  is a Huber--Wasserstein barycenter if and only if 
\begin{equation}
\label{eq:population-huber-score}
\int_{\mathcal P_1(\mathbb R)}
\psi_c\bigl(Q_\nu(u)-Q_\mu(u)\bigr)\,d{\bf P}(\mu)
=0
\qquad\text{for a.e. }u\in(0,1), 
\end{equation}
where  $\psi_c(t):=
\max\{-c,\min\{t,c\}\}.$  
\end{proposition}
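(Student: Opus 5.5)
Throughout I take $S_{c,u}(z):=\int\psi_c\bigl(z-Q_\mu(u)\bigr)\,d{\bf P}(\mu)$, the score of $\Phi_{c,u}$; this is consistent with \eqref{eq:population-huber-score}.

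The plan splits into a pointwise (fixed $u$) part and a functional part, linked through the quantile decomposition \eqref{eq:objective-quantile-decomposition}.

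\textbf{Pointwise part.} Fix $u$ and write $q_\mu:=Q_\mu(u)$.
\begin{enumerate}
\item The map $z\mapsto\rho_c(|z-q_\mu|)$ is convex and $C^1$, with derivative $\psi_c(z-q_\mu)$.
\item By \eqref{eq:Huber-loos-is-lip}, difference quotients are bounded by $c$. Dominated convergence then shows that $\Phi_{c,u}$ is finite, convex and differentiable, with $\Phi_{c,u}'=S_{c,u}$.
\item $S_{c,u}$ is continuous and nondecreasing, since $\psi_c$ is bounded, continuous and monotone.
\item Again by dominated convergence, $S_{c,u}(z)\to\pm c$ as $z\to\pm\infty$, because $\psi_c(z-q_\mu)\to\pm c$ for each fixed $\mu$.
\item Hence $\{S_{c,u}=0\}$ is a nonempty, closed, bounded interval $[a(u),b(u)]$. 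By convexity of $\Phi_{c,u}$ it coincides with $\operatorname{argmin}\Phi_{c,u}$, which gives \eqref{eq:pointwise-minimizer-set}.
\end{enumerate}

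\textbf{Monotonicity in $u$.} Since $u\mapsto Q_\mu(u)$ is nondecreasing for every $\mu$, the map $u\mapsto S_{c,u}(z)$ is nonincreasing for each fixed $z$. Consequently $a(\cdot)$ and $b(\cdot)$ are nondecreasing. This is the structural fact that lets pointwise minimizers be glued into a quantile function. Measurability in $u$ follows from joint measurability of $(u,\mu)\mapsto Q_\mu(u)$.

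\textbf{Functional part.} For every $\nu\in\mathcal P_1(\mathbb R)$,
\[
\Gamma_{c,{\bf P}}(\nu)\ \ge\ \int_0^1\min\Phi_{c,u}\,du .
\]
Equality holds iff $Q_\nu(u)\in M_c(u)$ for a.e.\ $u$, which by the pointwise part is exactly \eqref{eq:population-huber-score}. It therefore suffices to show that the lower bound is attained. Then barycenters are precisely the measures with $Q_\nu(u)\in M_c(u)$ a.e.: sufficiency is immediate, and necessity holds because any $\nu$ violating the condition on a set of positive measure has strictly larger $\Gamma$.

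To show attainment, take a barycenter $\nu^*$, which exists by \cref{theorem:Existence-General}. Define the projection
\[
p(u):=\min\bigl\{b(u),\max\{a(u),Q_{\nu^*}(u)\}\bigr\}.
\]
Because $a$, $b$ and $Q_{\nu^*}$ are all nondecreasing, $p$ is nondecreasing; after left-continuous regularisation it is the quantile function of some probability measure $\tilde\nu$. Convexity of $\Phi_{c,u}$ gives $\Phi_{c,u}(p(u))\le\Phi_{c,u}(Q_{\nu^*}(u))$, with strict inequality wherever $Q_{\nu^*}(u)\notin M_c(u)$. If $\tilde\nu\in\mathcal P_1$, minimality of $\nu^*$ forces $Q_{\nu^*}\in M_c$ a.e., so the lower bound is attained by $\nu^*$ itself.

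\textbf{Main obstacle.} The remaining step is to prove $p\in L^1(0,1)$, i.e.\ to control the endpoints $a(u)$ and $b(u)$ in the tails $u\to0,1$. Note that $|p|\le|Q_{\nu^*}|$ whenever $a(u)\le0\le b(u)$, so only the cases where $M_c(u)$ lies entirely on one side of $0$ matter. In those cases I would bound $|p(u)|$ by comparing with $\nu^*$ via the slope estimate
\[
\Phi_{c,u}(z)-\Phi_{c,u}(0)=\int_0^z S_{c,u}.
\]
This should show that minimizers cannot lie much farther from the origin than $Q_{\nu^*}(u)$ at the same $u$. If that comparison fails, I would fall back on the coercivity bound of \cref{theorem:Stability}(1) applied to $\tilde\nu$, obtained through a truncation and monotone convergence argument.
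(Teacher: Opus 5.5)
Your pointwise analysis and the monotonicity of the endpoints $a(\cdot),b(\cdot)$ match the paper's proof. So does the overall strategy of gluing pointwise minimizers into a quantile function; the only difference is that the paper glues $\min M_c(u)$ directly instead of projecting an existing barycenter. The genuine gap is the step you flag yourself. Nothing in the proposal shows that the glued function $p$ lies in $L^1(0,1)$. Without that, $\tilde\nu$ is not an admissible competitor in $\mathcal P_1(\mathbb R)$, so necessity is not proved. The paper passes over the same point with a one-line ``in particular, $\nu'\in\mathcal P_1$''. Your first suggestion is not a proof as it stands. At fixed $u$, the slope identity $\Phi_{c,u}(z)-\Phi_{c,u}(0)=\int_0^z S_{c,u}$ sees only the law of $Q_\mu(u)$ under ${\bf P}$. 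Nothing ties that law to the single number $Q_{\nu^*}(u)$ except the global minimality of $\nu^*$, which is what you are trying to exploit.

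Your fallback does work, and the missing observation is short. Since $p(u)\in M_c(u)$, the truncation $p_R:=\max\{-R,\min\{R,p\}\}$ lies on the segment between $0$ and a minimizer of the convex function $\Phi_{c,u}$. Hence $\Phi_{c,u}(p_R(u))\le\Phi_{c,u}(0)$ for every $u$. Let $\nu_R$ be the compactly supported measure with quantile function $p_R$. Integrating via \eqref{eq:objective-quantile-decomposition} gives $\Gamma_{c,{\bf P}}(\nu_R)\le\Gamma_{c,{\bf P}}(\delta_0)\le 0$, where the last inequality uses $\rho_c(t)\le ct$. \Cref{theorem:Stability}(1), with ${\bf P}_n\equiv{\bf P}$ and $\varepsilon=0$, then gives $\int_0^1\min\{|p|,R\}\,du\le C$ uniformly in $R$. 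Monotone convergence yields $p\in L^1(0,1)$. This argument never uses $\nu^*$, so it applies verbatim to $a(\cdot)$. It therefore also justifies the paper's claim about $\nu'$, and lets you drop the detour through \cref{theorem:Existence-General}.

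Alternatively, you can avoid integrability altogether. For an arbitrary barycenter $\nu$, build $p$ from $Q_\nu$ and set $q:=\operatorname{med}\{Q_\nu-1,\,p,\,Q_\nu+1\}$, i.e.\ move $Q_\nu$ toward $p$ by at most $1$.
\begin{itemize}
\item As a median of three nondecreasing functions, $q$ is nondecreasing.
\item Since $|q-Q_\nu|\le 1$, $q$ is integrable.
\item Because $S_{c,u}<0$ on $(-\infty,a(u))$ and $S_{c,u}>0$ on $(b(u),\infty)$, $\Phi_{c,u}$ is strictly monotone outside $M_c(u)$. Hence $\Phi_{c,u}(q(u))<\Phi_{c,u}(Q_\nu(u))$ wherever $Q_\nu(u)\notin M_c(u)$.
\end{itemize}
Minimality of $\nu$ then forces \eqref{eq:population-huber-score} directly, without existence or coercivity.
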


In dimension one, the distance-based Huber problem remains convex thanks to
the quantile representation
\[
\mathcal{W}_2(\nu,\mu)
=
\left\|Q_\nu-Q_\mu\right\|_{L^2(0,1)}.
\]
Indeed, the set of quantile functions is a convex subset of
$L^2(0,1)$, and the map $q\mapsto \rho_c\bigl(\|q-Q_\mu\|_{L^2(0,1)}\bigr)$ 
is convex, since $\rho_c$ is convex and nondecreasing on
$[0,\infty)$. Consequently, the functional
\[
\nu\longmapsto
\int_{\mathcal P_2(\mathbb R)}
\rho_c\bigl(\mathcal{W}_2(\nu,\mu)\bigr)\,d{\bf P}(\mu)
\]
is convex in  the quantile representation. Its minimizers are characterized by
\begin{equation}
\label{eq:outside-huber-score}
\int_{\mathcal P_2(\mathbb R)}
\omega^{\mathrm{out}}_{\nu}(\mu)
\bigl(Q_\nu(u)-Q_\mu(u)\bigr)\,d{\bf P}(\mu)
=0
\qquad\text{for a.e. }u\in(0,1),
\end{equation}
where
\begin{equation}
\label{eq:outside-huber-weight}
\omega^{\mathrm{out}}_{\nu}(\mu)
:=
\min\left\{
1,\frac{c}{\mathcal{W}_2(\nu,\mu)}
\right\},
\end{equation}
with the convention $\omega^{\mathrm{out}}_{\nu}(\mu)=1$ when
$\mathcal{W}_2(\nu,\mu)=0$. Comparing \eqref{eq:outside-huber-score} with the characterization
obtained when the Huber loss is placed inside the transport cost,
\begin{equation}
\label{eq:inside-huber-score}
\int_{\mathcal P_1(\mathbb R)}
\omega^{\mathrm{in}}_{\nu}(u,\mu)
\bigl(Q_\nu(u)-Q_\mu(u)\bigr)\,d{\bf P}(\mu)
=0
\qquad\text{for a.e. }u\in(0,1),
\end{equation}
where
\begin{equation}
\label{eq:inside-huber-weight}
\omega^{\mathrm{in}}_{\nu}(u,\mu)
:=
\min\left\{
1,
\frac{c}{|Q_\nu(u)-Q_\mu(u)|}
\right\},
\end{equation}
with the value $1$ when $Q_\nu(u)=Q_\mu(u)$, we see that both
constructions use the same Huber-type clipping mechanism, but at
different levels.

For the construction placing the Huber loss outside the transport cost, the weight $\omega^{\mathrm{out}}_{\nu}(\mu)$ depends on the full Wasserstein
distance between $\nu$ and $\mu$. Hence, each measure $\mu$ receives
a single scalar weight, which is applied simultaneously to all its
quantile levels. By contrast, for the construction placing the Huber loss inside the transport cost, the weight
$\omega^{\mathrm{in}}_{\nu}(u,\mu)$ depends on the individual
quantile displacement $|Q_\nu(u)-Q_\mu(u)|$ and may therefore vary
with $u$. Thus, both procedures bound the contribution of large
discrepancies, but the outside formulation downweights an observation
as a whole, whereas the inside formulation downweights each of its
quantile displacements separately.

\subsubsection{Local robustness and asymptotic efficiency in dimension one}
\label{sec:local-robustness-efficiency}

The breakdown point in \Cref{Theorem:BP} measures global resistance to
contamination, which we complement with a local robustness analysis based on
the influence function and first-order efficiency calculations. This is
the classical perspective of robust $M$-estimation developed in
\cite{Huber.1964.AMS,Hampel.1974.JASA}; see also
\cite{Hampel.etal.1986.Wiley,Huber.Ronchetti.2009}. In dimension one,
\Cref{prop:huber-barycenter-1d} shows that the proposed barycenter is obtained
by solving a scalar Huber location problem at each quantile level. This makes
the usual robustness--efficiency trade-off particularly transparent.

The quantile representation also clarifies the difference between placing the
Huber loss inside and outside the transport problem. By
\eqref{eq:inside-huber-score}, the proposed estimator clips each quantile
displacement separately. On the other hand, \eqref{eq:outside-huber-score} assigns
one weight to the whole probability measure through its $\mathcal W_2$
distance from the center. Thus, the two procedures coincide in ordinary
location models, but they may behave differently when only part of a
distribution is atypical. This distinction is closely related to the
difference between componentwise and casewise robustification in classical
robust statistics; see \cite{AlqallafEtAl2009,RousseeuwVanDenBossche2018}. For $\eta\in\mathcal P_2(\mathbb R)$, consider the family of contaminated measures
\begin{equation}
\label{eq:distribution-valued-contamination-path}
\{{\bf P}_{\varepsilon,\eta}\}_{\varepsilon\in [0,1)]}
:=
\{(1-\varepsilon){\bf P}+\varepsilon\delta_\eta\}_{\varepsilon\in [0,1)}.
\end{equation}
Notice that the contaminating observation in
\eqref{eq:distribution-valued-contamination-path} is itself a probability
measure. This differs from the influence-function analysis of transport-based
quantiles in
\cite{GonzalezSanz.Sheng.Wu.AvellaMedina.2026.Influence}, where the perturbed functional is a transport map evaluated at a fixed reference
point.

\begin{proposition}[Influence function]
\label{prop:inside-influence-function}
Let ${\bf P}\in\mathcal P(\mathcal P_2(\mathbb R))$ and assume that
$m_c({\bf P})=\{\nu_c\}$. Fix $\eta\in\mathcal P_2(\mathbb R)$ and assume that,
for all sufficiently small $\varepsilon>0$,
$m_c({\bf P}_{\varepsilon,\eta})=\{\nu_{c,\varepsilon,\eta}\}$.
For a fixed $u\in(0,1)$ such that \eqref{eq:population-huber-score} holds, set
\[
a_c(u)
:=
{\bf P}\!\left(
|Q_\mu(u)-Q_{\nu_c}(u)|<c
\right).
\]
If $a_c(u)>0$ and
${\bf P}(|Q_\mu(u)-Q_{\nu_c}(u)|=c)=0$, then
\begin{equation}
\label{eq:inside-IF-formula}
\operatorname{IF}_c(\eta;{\bf P})(u)
:=
\lim_{\varepsilon\downarrow0}
\frac{
Q_{\nu_{c,\varepsilon,\eta}}(u)-Q_{\nu_c}(u)
}{\varepsilon}
=
\frac{
\psi_c\!\left(Q_\eta(u)-Q_{\nu_c}(u)\right)
}{a_c(u)}.
\end{equation}
In particular,
$\sup_{\eta\in\mathcal P_2(\mathbb R)}
|\operatorname{IF}_c(\eta;{\bf P})(u)|
\leq c/a_c(u)$.
\end{proposition}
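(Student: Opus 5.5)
The plan is to reduce everything to a scalar Huber location problem at the fixed level $u$, using \Cref{prop:huber-barycenter-1d}. Write $q_\mu:=Q_\mu(u)$, $z_0:=Q_{\nu_c}(u)$ and $z_\varepsilon:=Q_{\nu_{c,\varepsilon,\eta}}(u)$. Define the population score
\[
S(z):=\int \psi_c\bigl(z-q_\mu\bigr)\,d{\bf P}(\mu).
\]
By \eqref{eq:population-huber-score} applied to ${\bf P}$ and to ${\bf P}_{\varepsilon,\eta}$ (which is again in $\mathcal P(\mathcal P_2(\mathbb R))$), and since the contaminated score is linear in the mixture, the two estimating equations are
\[
S(z_0)=0,\qquad (1-\varepsilon)S(z_\varepsilon)+\varepsilon\,\psi_c\bigl(z_\varepsilon-Q_\eta(u)\bigr)=0 .
\]

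A technical point is that \eqref{eq:population-huber-score} only holds for a.e.\ $u$. I would first restrict to a sequence $\varepsilon_n\downarrow0$. For each $\varepsilon_n$ the characterization holds off a null set, and a countable union of null sets is null, so for a.e.\ $u$ both equations hold simultaneously along the sequence. To pass to the full limit $\varepsilon\downarrow0$, I would either argue that every sequence has a further subsequence with the same limit, or use left-continuity of quantile functions together with the fact that the equation determines the unique zero (shown below) to extend the identity to every $u$ of interest. I expect this bookkeeping to be the main nuisance rather than a real obstacle.

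The analytic core consists of two facts about $S$.
\begin{enumerate}
\item $S$ is nondecreasing and $1$-Lipschitz, because $\psi_c$ is.
\item $S$ is differentiable at $z_0$ with $S'(z_0)=a_c(u)$. For $h\to0$, the quotient $\bigl(\psi_c(z_0+h-q_\mu)-\psi_c(z_0-q_\mu)\bigr)/h$ is bounded by $1$. It converges to $\mathbf 1\{|z_0-q_\mu|<c\}$ except on the event $\{|z_0-q_\mu|=c\}$, which is ${\bf P}$-null by assumption. Dominated convergence then gives the derivative.
\end{enumerate}

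Since $a_c(u)>0$, fact 2 implies $|S(z)|\ge \tfrac{a_c(u)}2|z-z_0|$ for $z$ in a neighbourhood of $z_0$. By monotonicity of $S$, $|S(z)|$ is bounded below by a positive constant outside that neighbourhood. In particular $z_0$ is locally the unique zero.

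From the contaminated equation, $|S(z_\varepsilon)|\le \varepsilon c/(1-\varepsilon)$. Combined with the lower bound on $|S|$, this forces $z_\varepsilon\to z_0$ and then $|z_\varepsilon-z_0|=O(\varepsilon)$.

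With this rate in hand, subtract the two equations and divide by $\varepsilon$:
\[
(1-\varepsilon)\,\frac{S(z_\varepsilon)-S(z_0)}{z_\varepsilon-z_0}\cdot\frac{z_\varepsilon-z_0}{\varepsilon}
=-\psi_c\bigl(z_\varepsilon-Q_\eta(u)\bigr).
\]
The case $z_\varepsilon=z_0$ is trivial. As $\varepsilon\downarrow0$, the first factor tends to $a_c(u)$ by fact 2. The right-hand side tends to $-\psi_c(z_0-Q_\eta(u))=\psi_c(Q_\eta(u)-z_0)$, by continuity and oddness of $\psi_c$. Solving for $(z_\varepsilon-z_0)/\varepsilon$ yields \eqref{eq:inside-IF-formula}.

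The uniform bound follows immediately from $|\psi_c|\le c$. The step I expect to need the most care is establishing the $O(\varepsilon)$ localization of $z_\varepsilon$. Without it, the difference quotient of $S$ cannot be replaced by $S'(z_0)$. The plan is to handle it via the positivity of $a_c(u)$ and the no-atom condition at the cutoff, exactly as above.
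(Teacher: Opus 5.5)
Your proposal is correct and follows essentially the same route as the paper: the two score equations at level $u$, the difference quotient of $S$ converging to $a_c(u)$ by dominated convergence using ${\bf P}(|Q_\mu(u)-Q_{\nu_c}(u)|=c)=0$, and then subtracting the equations and dividing by $\varepsilon$. Your additions fill points the paper passes over rather than changing the method. Deriving $z_\varepsilon\to z_0$ from $S'(z_0)=a_c(u)>0$ and monotonicity replaces the paper's brief appeal to uniqueness. The left-continuity argument is what justifies the contaminated score equation at the fixed $u$; the countable-union route alone only yields a.e.\ $u$. Finally, only $z_\varepsilon\to z_0$ is needed in the last step, not the $O(\varepsilon)$ rate.
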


The formula in \eqref{eq:inside-IF-formula} has the same structure as the
influence function of the classical scalar Huber location estimator: the
numerator is clipped at $c$, whereas the denominator is the local slope of the
population score. Hence the influence function is bounded at every quantile
level for which $a_c(u)>0$. If
$\operatorname*{ess\,inf}_{u\in(0,1)}a_c(u)>0$, this bound is uniform in $u$
and in the contaminating distribution.

On the one hand, as $c\uparrow \infty $, the pointwise influence function tends to 
$Q_\eta(u)-\int Q_\mu(u)\,d{\bf P}(\mu)$ and is therefore unbounded over
$\eta$. On the other hand, suppose that $Q_\mu(u)$ has a unique median $m(u)$
and a density $f_u$ that is continuous and positive at $m(u)$. Then, whenever
$Q_\eta(u)\neq m(u)$,
\begin{equation}
\label{eq:median-IF-limit}
\operatorname{IF}_c(\eta;{\bf P})(u)
\longrightarrow
\frac{
\operatorname{sgn}(Q_\eta(u)-m(u))
}{
2f_u(m(u))
}
\qquad\text{as }c\downarrow0.
\end{equation}

The scalar representation also gives a simple asymptotic distribution.
Recall the one-stage sampling model
$\mu_1,\ldots,\mu_n\overset{\mathrm{i.i.d.}}{\sim}{\bf P}$ and
${\bf P}_n=n^{-1}\sum_{i=1}^n\delta_{\mu_i}$. Among the empirical barycenters,
take the one obtained from the smallest pointwise minimizer in the proof of
\Cref{prop:huber-barycenter-1d}; denote it by $\widehat\nu_{n,c}$.

\begin{proposition}[Pointwise asymptotic normality]
\label{prop:inside-pointwise-clt}
Fix $u\in(0,1)$ and assume that $Q_{\nu_c}(u)$ is the  unique minimizer of $\Phi_{c,u}$.  Suppose that
$a_c(u)>0$ and
${\bf P}(|Q_\mu(u)-Q_{\nu_c}(u)|=c)=0$. Then
\begin{equation}
\label{eq:inside-pointwise-clt}
\sqrt n\left(
Q_{\widehat\nu_{n,c}}(u)-Q_{\nu_c}(u)
\right)
\Longrightarrow
N\left(
0,
\frac{1}{
a_c(u)^2
}{
\displaystyle
\int
\psi_c\!\left(Q_\mu(u)-Q_{\nu_c}(u)\right)^2
\,d{\bf P}(\mu)
}
\right).
\end{equation}
The analogous joint convergence at any finite collection of quantile levels
follows from the multivariate central limit theorem.
\end{proposition}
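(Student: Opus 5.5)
The approach is to treat the problem, at the fixed level $u$, as a scalar Huber $M$-estimation problem for a location parameter, with i.i.d.\ scalar observations $Y_i:=Q_{\mu_i}(u)$, $i=1,\dots,n$. Let $\widehat z_n:=Q_{\widehat\nu_{n,c}}(u)$. By the construction in the proof of \Cref{prop:huber-barycenter-1d}, $\widehat z_n$ is the smallest zero of the empirical score
\[
S_n(z):=\frac1n\sum_{i=1}^n\psi_c(z-Y_i).
\]
Similarly, $z_0:=Q_{\nu_c}(u)$ is the unique zero of
\[
S(z):=\int\psi_c(z-Q_\mu(u))\,d{\bf P}(\mu),
\]
since $\Phi_{c,u}$ is convex with derivative $S$, and uniqueness of the minimizer forces uniqueness of the zero. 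I will argue directly with the monotone score, which requires no moment assumptions because $\psi_c$ is bounded by $c$.

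\textbf{Step 1 (consistency).} $S_n$ and $S$ are nondecreasing in $z$. Fix $\delta>0$. Uniqueness of the zero gives $S(z_0-\delta)<0<S(z_0+\delta)$. By the strong law of large numbers applied at the two points $z_0\pm\delta$, we have $S_n(z_0-\delta)<0<S_n(z_0+\delta)$ eventually. Monotonicity then yields $|\widehat z_n-z_0|\le\delta$ eventually, so $\widehat z_n\to z_0$ almost surely.

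\textbf{Step 2 (differentiability of $S$ at $z_0$).} For each $y$, the map $z\mapsto\psi_c(z-y)$ is Lipschitz with derivative $\mathbf 1\{|z-y|<c\}$ off the two points $z=y\pm c$. Dominated convergence applies since difference quotients are bounded by $1$, and the hypothesis ${\bf P}(|Q_\mu(u)-z_0|=c)=0$ makes the exceptional set null at $z_0$. Together these give $S'(z_0)=a_c(u)>0$.

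\textbf{Step 3 (linearization via the monotone-zero argument).} This is the main obstacle, because $S_n$ is only piecewise linear and the estimator is defined as a generalized zero. I would use the classical Huber (1964) route. For fixed $t$, consider the event $\{\sqrt n(\widehat z_n-z_0)\le t\}$. Since $\widehat z_n$ is the smallest zero of the nondecreasing function $S_n$, this event is sandwiched between $\{S_n(z_0+t/\sqrt n)>0\}$ and $\{S_n(z_0+t/\sqrt n)\ge0\}$. Write $z_n=z_0+t/\sqrt n$. Then
\[
\sqrt n S_n(z_n)=\sqrt n\bigl(S_n(z_n)-S(z_n)\bigr)+\sqrt n S(z_n).
\]
The deterministic term $\sqrt n S(z_n)$ converges to $a_c(u)t$ by Step 2. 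For the stochastic term, the summands $\psi_c(z_n-Y_i)$ form a bounded triangular array, and
\[
\mathrm{Var}\bigl(\psi_c(z_n-Y)-\psi_c(z_0-Y)\bigr)\le E\bigl[(\psi_c(z_n-Y)-\psi_c(z_0-Y))^2\bigr]\le (t/\sqrt n)^2\to0,
\]
using the $1$-Lipschitz property of $\psi_c$. Hence $\sqrt n(S_n(z_n)-S(z_n))$ has the same limit as $\sqrt n S_n(z_0)$, which by the CLT converges to $N(0,\sigma^2)$ with
\[
\sigma^2=\int\psi_c\bigl(z_0-Q_\mu(u)\bigr)^2\,d{\bf P}(\mu),
\]
noting that $S(z_0)=0$ and that $\psi_c$ is odd, so the squared score equals the one in the statement. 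It follows that
\[
P\bigl(\sqrt n(\widehat z_n-z_0)\le t\bigr)\to P(Z+a_c(u)t\ge0)=\Phi\bigl(a_c(u)t/\sigma\bigr),
\]
the limit being the same for the strict and non-strict versions because the Gaussian law has no atoms (and trivially so if $\sigma=0$). This gives \eqref{eq:inside-pointwise-clt}.

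\textbf{Joint convergence.} For finitely many levels $u_1,\dots,u_k$, I would apply the same sandwich argument jointly at the vector $(z_0(u_j)+t_j/\sqrt n)_j$. The multivariate CLT for the bounded random vector $\bigl(\psi_c(z_0(u_j)-Q_\mu(u_j))\bigr)_j$ then yields joint normality with covariance entries $\int\psi_c(\cdot)\psi_c(\cdot)\,d{\bf P}/(a_c(u_j)a_c(u_l))$.

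Step 1 needed only the strong law at two points, so it is routine. Step 3 carries the real difficulty: converting a statement about the zero of a non-smooth monotone score into a distributional limit, which requires both the sandwich of events and the asymptotic equicontinuity bound on the centered score.
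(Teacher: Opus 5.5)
Your proof is correct, and it takes a different route from the paper's.

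\textbf{The paper's route.} The paper linearizes the empirical score along the segment from $q=Q_{\nu_c}(u)$ to $\widehat q_n$, using absolute continuity of $\psi_c$. This gives $S_n(\widehat q_n)-S_n(q)=(\widehat q_n-q)\widehat A_n$ with random slope $\widehat A_n=\int_0^1 n^{-1}\sum_i\mathbf 1\{|q+t(\widehat q_n-q)-Q_{\mu_i}(u)|<c\}\,dt$. It then argues that $\widehat A_n\to a_c(u)$ in probability from consistency, the law of large numbers, and the null-boundary assumption. This yields the Bahadur-type expansion $\sqrt n(\widehat q_n-q)=a_c(u)^{-1}n^{-1/2}\sum_i\psi_c(Q_{\mu_i}(u)-q)+o_{\mathbb P}(1)$, and the CLT finishes the proof.

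\textbf{Your route.} You use Huber's monotone-score sandwich instead. It needs only differentiability of the population score at the single point $z_0$, plus the trivial bound $\mathrm{Var}\bigl(\psi_c(z_n-Y)-\psi_c(z_0-Y)\bigr)\le t^2/n$ from the Lipschitz property of $\psi_c$. You never have to control the empirical slope uniformly over a shrinking random neighbourhood, which is the step the paper passes over quickly.

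Two further observations:
\begin{itemize}
\item Your Step 1 is never used in Step 3, so consistency is not actually needed.
\item The inclusions $\{S_n(z)>0\}\subseteq\{\widehat z_n\le z\}\subseteq\{S_n(z)\ge 0\}$ hold for any zero of the nondecreasing function $S_n$, not only the smallest one. So your argument does not depend on the selection rule for $\widehat\nu_{n,c}$.
\end{itemize}

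\textbf{What each route buys.} The paper's route produces the asymptotically linear representation itself. From it, the multi-level statement is immediate: apply the multivariate CLT to the stacked linear terms. The expansion can also be reused for delta-method arguments or joint limits with other statistics. In your route, the joint statement requires rerunning the sandwich on orthants. You then need the limiting Gaussian not to charge their boundaries. This is automatic at continuity points of the limiting distribution function, including coordinates with $\sigma_j=0$, where the estimator equals $z_0(u_j)$ exactly.

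A cosmetic point: using $\Phi$ for the standard normal distribution function clashes with the paper's $\Phi_{c,u}$.
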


Suppose, for example, that $Q_\mu(u)$ is symmetric about its mean
$q_0(u)$ and has variance $\sigma^2(u)<\infty$. Then
$Q_{\nu_c}(u)=q_0(u)$ and the asymptotic relative efficiency of the proposed
estimator with respect to the Wasserstein mean, at quantile level $u$, is
\begin{equation}
\label{eq:inside-pointwise-ARE}
\operatorname{ARE}_{c,\infty}(u)
=
\frac{
\sigma^2(u)\,
{\bf P}(|Q_\mu(u)-q_0(u)|<c)^2
}{
\displaystyle
\int
\psi_c(Q_\mu(u)-q_0(u))^2\,d{\bf P}(\mu)
}.
\end{equation}
This is similar to the scalar Huber efficiency formula, applied to the
random quantile $Q_\mu(u)$.

The distance-based Huber mean in
\eqref{eq:Wasserstein-bary-huber-BAD} is the Huber $M$-location of the random
quantile function under the $L^2(0,1)$ norm. Related robust location
functionals for functional data are studied in
\cite{Sinova.GonzalezRodriguez.VanAelst.2018.Bernoulli}, while asymptotic
results for metric- and manifold-valued Huber means are developed in
\cite{Lee-Jung.2026.JRRSS-b}. Rather than introducing a separate functional
asymptotic theory here, we compare the inside and outside constructions in two
models that isolate the statistical consequences of the two clipping
mechanisms.

\begin{proposition}[Coincidence under random translations]
\label{prop:inside-outside-translations}
Let $\mu_0\in\mathcal P_2(\mathbb R)$ and let $Z$ be a real random variable.
For each $z\in\mathbb R$, let
$\mu_z=(x\mapsto x+z)_\#\mu_0$, and let ${\bf P}$ be the law of $\mu_Z$.
If the scalar Huber location
$\theta_c\in\argmin_{\theta\in\mathbb R}
\mathbb E[\rho_c(|Z-\theta|)]$
is unique, then
\begin{equation}
\label{eq:translation-center-equality}
m_c({\bf P})
=
m'_{H,c}({\bf P})
=
\{(x\mapsto x+\theta_c)_\#\mu_0\}.
\end{equation}
The same identity holds for the corresponding empirical centers whenever the
empirical scalar Huber location is unique. In particular, under the
random-translation model above, the inside- and outside-Huber procedures
have identical influence functions and first-order asymptotic efficiencies.
\end{proposition}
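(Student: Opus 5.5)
The plan is to reduce both problems to the scalar Huber location problem for $Z$ through the quantile representation. Since $\mu_z$ is a translate of $\mu_0$, we have $Q_{\mu_z}(u)=Q_{\mu_0}(u)+z$ for every $u\in(0,1)$.

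\emph{Inside problem.} By \eqref{eq:objective-quantile-decomposition}, $\Gamma_{c,{\bf P}}(\nu)=\int_0^1\Phi_{c,u}(Q_\nu(u))\,du$, where
\[
\Phi_{c,u}(z')=\mathbb E\bigl[\rho_c(|z'-Q_{\mu_0}(u)-Z|)-c|Q_{\mu_0}(u)+Z|\bigr].
\]
Write $z'=Q_{\mu_0}(u)+\theta$. Then $\Phi_{c,u}(z')$ equals $\mathbb E[\rho_c(|\theta-Z|)]$ minus a term independent of $\theta$, the subtraction being understood in the renormalized, finite sense.

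Two checks are needed before this substitution is legitimate. First, $\mathbb E[\rho_c(|\theta-Z|)-c|Z|]$ is finite for every $\theta$, because $|\rho_c(|\theta-Z|)-c|Z||\le c|\theta|+c^2/2$. Second, the difference $c|Q_{\mu_0}(u)+Z|-c|Z|$ is bounded, so we may exchange centering terms. Hence $M_c(u)=Q_{\mu_0}(u)+\argmin_\theta \mathbb E[\rho_c(|Z-\theta|)-c|Z|]=\{Q_{\mu_0}(u)+\theta_c\}$, using the assumed uniqueness of $\theta_c$.

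Now apply \Cref{prop:huber-barycenter-1d}. A measure $\nu$ is a barycenter iff $Q_\nu(u)\in M_c(u)$ for a.e.\ $u$; this follows from \eqref{eq:population-huber-score}, since $S_{c,u}$ is the derivative of $\Phi_{c,u}$. So $Q_\nu=Q_{\mu_0}+\theta_c$ a.e., which gives $m_c({\bf P})=\{(x\mapsto x+\theta_c)_\#\mu_0\}$. Conversely, $Q_{\mu_0}+\theta_c$ is a genuine nondecreasing left-continuous quantile function, so this candidate is admissible.

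\emph{Outside problem.} We have $\mathcal W_2(\nu,\mu_z)=\|Q_\nu-Q_{\mu_0}-z\|_{L^2(0,1)}$. Decompose $Q_\nu-Q_{\mu_0}=\theta+h$ with $\theta=\int_0^1(Q_\nu-Q_{\mu_0})\,du$ and $\int_0^1 h=0$. Then
\[
\mathcal W_2(\nu,\mu_z)^2=(\theta-z)^2+\|h\|^2_{L^2(0,1)}.
\]
Since $\rho_c$ is nondecreasing on $[0,\infty)$, $\rho_c\bigl(\sqrt{(\theta-Z)^2+\|h\|^2}\bigr)\ge\rho_c(|\theta-Z|)$ pointwise. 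Taking expectations, the objective is at least $\mathbb E\rho_c(|\theta-Z|)\ge\mathbb E\rho_c(|\theta_c-Z|)$. (Here $\rho_c(|\theta-Z|)$ may be non-integrable, so I would compare against the centering $c\mathbb E|Z|$, or assume the integrability implicit in the definition of $m'_{H,c}$.)

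Equality throughout forces two things. First, $\theta=\theta_c$, by uniqueness of $\theta_c$. Second, $\|h\|=0$. For this I need strict inequality when $h\ne0$: $\rho_c$ is strictly increasing on $(0,\infty)$, so $\rho_c\bigl(\sqrt{a^2+b^2}\bigr)>\rho_c(|a|)$ whenever $b>0$, for every realization of $Z$. Hence the expectation is strictly larger. This gives $m'_{H,c}({\bf P})=\{(x\mapsto x+\theta_c)_\#\mu_0\}$.

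\emph{Empirical version and consequences.} The empirical statement is the same argument with $Z$ replaced by its empirical law. For the influence-function and efficiency statements, note that the contaminated family $(1-\varepsilon){\bf P}+\varepsilon\delta_\eta$ leaves the translation model only if $\eta$ is not a translate of $\mu_0$. So I would read the equality of influence functions for $\eta=\mu_z$, i.e.\ within the model, where both functionals reduce to the scalar Huber functional of $Z$. Asymptotic efficiencies follow because the empirical centers coincide exactly.

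The main obstacle is the handling of the renormalization and integrability in the outside functional, together with the strictness argument ensuring $h=0$. The inside part is routine given \Cref{prop:huber-barycenter-1d}.
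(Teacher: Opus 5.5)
Your proposal is correct and follows essentially the same route as the paper. The quantile shift $Q_{\mu_z}=Q_{\mu_0}+z$ reduces the inside problem to the scalar Huber problem via \Cref{prop:huber-barycenter-1d}. Your decomposition $Q_\nu-Q_{\mu_0}=\theta+h$ is exactly the paper's orthogonal projection onto the line of translates. Your strict-monotonicity step adds something the paper leaves implicit: it shows that every candidate off this line is strictly worse, so the outside minimizer is unique. That step works because the uniqueness hypothesis on $\theta_c$ already forces $\mathbb E|Z|<\infty$, so all expectations are finite.

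You are also right to restrict the influence-function claim to within-model contaminations $\eta=\mu_z$. Indeed, ${\bf P}_0=\delta_{\delta_0}$ is itself a random-translation model (with $Z\equiv 0$), and \Cref{prop:localized-tail-influence} shows the inside and outside influence functions differ for $\eta_{M,\alpha}$.
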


As a toy example, if
$Z\sim N(\theta,\sigma^2)$ and $c=k\sigma$, the common estimator in
\Cref{prop:inside-outside-translations} has asymptotic relative efficiency
with respect to the Wasserstein mean
\begin{equation}
\label{eq:gaussian-huber-ARE}
\operatorname{ARE}_{k,\infty}
=
\frac{
(2\Phi(k)-1)^2
}{
1-2k\phi(k)+2(k^2-1)\{1-\Phi(k)\}
},
\qquad
\gamma_k^*
=
\frac{k\sigma}{2\Phi(k)-1},
\end{equation}
where $\phi$ and $\Phi$ denote the standard normal density and distribution
function and $\gamma_k^*$ is the gross-error sensitivity. In particular,
$k\simeq1.345$ gives approximately $95\%$ Gaussian asymptotic efficiency,
which is the classical Huber calibration
\cite{Huber.1964.AMS,Huber.Ronchetti.2009}. Thus the difference between the
inside and outside constructions is not a location effect; it arises from
shape variation across the distribution.

\begin{proposition}[Localized contamination]
\label{prop:localized-tail-influence}
Let $\mu_0:=\delta_0$ and let ${\bf P}_0:=\delta_{\mu_0}$ be the
degenerate distribution-valued population concentrated at $\mu_0$. For
$M>0$ and $\alpha\in(0,1)$, define
$\eta_{M,\alpha}:=(1-\alpha)\delta_0+\alpha\delta_M.$
Consider contamination of ${\bf P}_0$ in the direction
$\eta_{M,\alpha}$, as in
\eqref{eq:distribution-valued-contamination-path}, and define analogously
the influence function of the outside-Huber barycenter $m'_{H,c}$.
If $M>c$ and  $M\sqrt{\alpha}>c,$ 
then, for $u\in(0,1)$,
\begin{equation}
\label{eq:localized-inside-outside-IF}
\operatorname{IF}^{\rm in}_c(\eta_{M,\alpha};{\bf P}_0)(u)
=
c\,\mathbbm{1}_{(1-\alpha,1)}(u),
\qquad
\operatorname{IF}^{\rm out}_c(\eta_{M,\alpha};{\bf P}_0)(u)
=
\frac{c}{\sqrt{\alpha}}\,
\mathbbm{1}_{(1-\alpha,1)}(u).
\end{equation}
\end{proposition}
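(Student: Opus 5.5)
The plan is to compute both contaminated centers exactly for every small $\varepsilon>0$, not merely to first order, using the quantile representation. This works because both centers reduce to explicit convex problems when the base population is the point mass ${\bf P}_0=\delta_{\delta_0}$. Throughout, ${\bf P}_{\varepsilon}:=(1-\varepsilon){\bf P}_0+\varepsilon\delta_{\eta_{M,\alpha}}$. The relevant quantile functions are
\[
Q_{\mu_0}\equiv 0,
\qquad
Q_{\eta_{M,\alpha}}=M\,\mathbbm 1_{(1-\alpha,1)} \quad\text{a.e.}
\]
First I note that at $\varepsilon=0$ both centers equal $\{\delta_0\}$: the inside objective is $\Phi_{c,u}(z)=\rho_c(|z|)$, and the outside objective is $\rho_c(\|q\|_{L^2})$. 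Each has the unique minimizer $0$.

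\textbf{Inside barycenter.} By \Cref{prop:huber-barycenter-1d}, $\nu$ is a barycenter of ${\bf P}_\varepsilon$ if and only if, for a.e.\ $u$,
\[
(1-\varepsilon)\,\psi_c\bigl(Q_\nu(u)\bigr)+\varepsilon\,\psi_c\bigl(Q_\nu(u)-Q_\eta(u)\bigr)=0 .
\]
Equivalently, $Q_\nu(u)\in M_c(u)$, the minimizer set of the pointwise scalar Huber problem. I split into two ranges of $u$.
\begin{itemize}
\item For $u\le 1-\alpha$, both atoms sit at $0$, so $z=0$ is the unique root.
\item For $u>1-\alpha$, I try $z\in(0,c)$ with $z-M<-c$; this is possible since $M>c$. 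The equation then becomes $(1-\varepsilon)z-\varepsilon c=0$, giving $z=\varepsilon c/(1-\varepsilon)$. For $\varepsilon$ small this $z$ is indeed in the assumed regime.
\end{itemize}
Uniqueness of the root at each $u$ follows because $\Phi_{c,u}$ is strictly convex on $(-c,c)$, where it carries weight $1-\varepsilon>0$, and because the score $z\mapsto(1-\varepsilon)\psi_c(z)+\varepsilon\psi_c(z-Q_\eta(u))$ is nondecreasing. The resulting $u\mapsto \frac{\varepsilon c}{1-\varepsilon}\mathbbm 1_{(1-\alpha,1)}(u)$ is nondecreasing and left-continuous, so it is a genuine quantile function. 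Hence $m_c({\bf P}_\varepsilon)$ is the singleton with this quantile function. Dividing by $\varepsilon$ and letting $\varepsilon\downarrow0$ gives $\operatorname{IF}^{\rm in}_c=c\,\mathbbm 1_{(1-\alpha,1)}$.

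\textbf{Outside barycenter.} In the quantile representation, the outside problem is to minimize, over the convex cone of quantile functions in $L^2(0,1)$,
\[
J_\varepsilon(q):=(1-\varepsilon)\rho_c(\|q\|)+\varepsilon\rho_c(\|q-Q_\eta\|).
\]
Note that $\|Q_\eta\|=M\sqrt\alpha>c$. I use the ansatz $q=sQ_\eta$ with $s$ small, so that $\|q\|<c$ and $\|q-Q_\eta\|=(1-s)M\sqrt\alpha>c$. Since $J_\varepsilon$ is Fréchet differentiable at such $q$, first-order stationarity in $L^2$ reads
\[
(1-\varepsilon)\,q+\varepsilon c\,\frac{q-Q_\eta}{\|q-Q_\eta\|}=0 .
\]
Substituting the ansatz gives
\[
(1-\varepsilon)s=\frac{\varepsilon c}{M\sqrt\alpha},
\qquad\text{so}\qquad
s=\frac{\varepsilon c}{(1-\varepsilon)M\sqrt\alpha}.
\]
I then justify that this stationary point is the unique constrained minimizer in three steps.
\begin{enumerate}
\item $J_\varepsilon$ is convex on $L^2$, so a point where the gradient vanishes is a global minimizer over all of $L^2$, hence over the cone.
\item $sQ_\eta$ is itself a quantile function, so it lies in the cone.
\item For uniqueness, the first term $(1-\varepsilon)\tfrac12\|q\|^2$ is strictly convex on $\{\|q\|<c\}$. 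The second term is convex. So any other minimizer, combined with the convex segment joining it to $sQ_\eta$, would contradict strict convexity near $sQ_\eta$.
\end{enumerate}
Therefore $m'_{H,c}({\bf P}_\varepsilon)$ is the singleton with quantile $\frac{\varepsilon c}{(1-\varepsilon)\sqrt\alpha}\mathbbm 1_{(1-\alpha,1)}$. Dividing by $\varepsilon$ and letting $\varepsilon\to0$ yields $\operatorname{IF}^{\rm out}_c=\frac{c}{\sqrt\alpha}\mathbbm 1_{(1-\alpha,1)}$.

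\textbf{Expected main obstacle.} The delicate step is the uniqueness and minimality argument for the outside problem. The reason is that the feasible set is the cone of quantile functions rather than all of $L^2$, so stationarity alone is not enough. My approach is to verify the unconstrained first-order condition, which is legitimate since the candidate is interior to the differentiable region of $J_\varepsilon$. I then use global convexity to pass to the constrained problem, and local strict convexity of the quadratic part to rule out other minimizers. The inside case is easier, because everything reduces to scalar monotone score equations via \Cref{prop:huber-barycenter-1d}.
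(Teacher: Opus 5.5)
Your proof is correct, but it follows a different route from the paper's.

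The paper gets the inside formula by substituting $a_c(u)=1$ into the general formula of \Cref{prop:inside-influence-function}. For the outside formula, it writes the stationarity condition $(1-\varepsilon)q_\varepsilon+\varepsilon c\,(q_\varepsilon-Q_{\eta_{M,\alpha}})/\|q_\varepsilon-Q_{\eta_{M,\alpha}}\|_{L^2}=0$ and divides by $\varepsilon$. In doing so it takes for granted that $q_\varepsilon\to0$ and that the two Huber terms lie in the quadratic and linear regimes, respectively.

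You instead solve both contaminated problems in closed form for small $\varepsilon$. The resulting centers have quantile functions $\frac{\varepsilon c}{1-\varepsilon}\mathbbm{1}_{(1-\alpha,1)}$ and $\frac{\varepsilon c}{(1-\varepsilon)\sqrt{\alpha}}\mathbbm{1}_{(1-\alpha,1)}$. You certify the inside one via the pointwise score characterization of \Cref{prop:huber-barycenter-1d}. You certify the outside one via convexity of $J_\varepsilon$ together with local strict convexity of the quadratic term near the candidate.

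The paper's argument is shorter and presents the inside result as an instance of a general formula. Its outside step is a linearization that does not depend on the base population being degenerate. Your argument leans on the degenerate structure of ${\bf P}_0$, which is what makes the one-parameter ansatz $q=sQ_{\eta_{M,\alpha}}$ exact. In exchange, it is self-contained and verifies what the paper leaves implicit:
\begin{itemize}
\item uniqueness of the contaminated centers, which is a hypothesis of \Cref{prop:inside-influence-function} and is needed for the outside influence function to be well defined;
\item convergence of the outside centers to $0$;
\item the regime assumptions behind the outside stationarity equation.
\end{itemize}
It also yields exact finite-$\varepsilon$ difference quotients rather than only their limits.
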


The difference in \Cref{prop:localized-tail-influence} is analogous to
cellwise versus casewise robustness. The outside formulation recognizes the
whole distribution as one observation and therefore rescales all of its
quantile displacements by the same factor. The proposed estimator clips only
the affected quantile displacements. Hence a distribution that is unusual in
one region may still contribute essentially full information in another
region. This is the distribution-valued analogue of retaining clean cells
inside an otherwise partially contaminated observation
\cite{AlqallafEtAl2009,RousseeuwVanDenBossche2018}.

\begin{proposition}[Efficiency under localized shape variation]
\label{prop:two-block-efficiency}
Fix $\varepsilon\in[0,1)$ and $b>0$. Let $U$ be symmetric with
$|U|\leq b$ almost surely and
$\mathbb E[U^2]=\sigma_U^2>0$. Let $U$, $B$, and $S$ be mutually independent, with
$B\sim\operatorname{Bernoulli}(\varepsilon)$ and $S$ a Rademacher random
variable. For $\tau>0$, choose $L_\tau>\tau+b$ and let
${\bf P}_\tau$ be the law of
\begin{equation}
\label{eq:two-block-random-measure}
\mu_\tau
=
\frac12\delta_U
+
\frac12\delta_{L_\tau+\tau BS}.
\end{equation}
Assume $c>2b$ and $\tau>\sqrt2c$. Let $\widehat\nu_n^{\rm in}$ and $\widehat\nu_n^{\rm out}$ denote the
empirical inside- and outside-Huber centers. Since all quantile functions in
this model are constant on $(0,1/2]$ and on $(1/2,1)$, write
\[
Q_{\widehat\nu_n^{\rm in}}(u)=\widehat\theta_n^{\rm in},
\qquad
Q_{\widehat\nu_n^{\rm out}}(u)=\widehat\theta_n^{\rm out},
\qquad
u\in(0,1/2].
\]
Then $\sqrt n\,\widehat\theta^{\,\rm in}_{n}
\Longrightarrow N(0,\sigma_U^2)$ and 
$\sqrt n\,\widehat\theta^{\,\rm out}_{n}
\Longrightarrow N(0,V_\tau),$ 
where
\begin{equation}
\label{eq:two-block-outside-variance}
V_\tau
=
\frac{
(1-\varepsilon)\sigma_U^2
+
2\varepsilon c^2
\mathbb E\!\left[\dfrac{U^2}{U^2+\tau^2}\right]
}{
\left[
(1-\varepsilon)
+
\varepsilon
\mathbb E\!\left\{
\dfrac{\sqrt2c\,\tau^2}{(U^2+\tau^2)^{3/2}}
\right\}
\right]^2
}.
\end{equation}
In particular,
$V_\tau\to\sigma_U^2/(1-\varepsilon)$ as $\tau\to\infty$, and therefore the
asymptotic relative efficiency of the outside estimator with respect to the
inside estimator for the unaffected lower-half location converges to
$1-\varepsilon$.
\end{proposition}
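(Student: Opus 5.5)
The plan is to reduce both estimators to finite-dimensional $M$-estimators in the two block values of the quantile function, and then read off the asymptotics. Every $Q_{\mu_i}$ is constant on $(0,1/2]$ and on $(1/2,1)$. By \Cref{prop:huber-barycenter-1d} and Jensen's inequality (averaging a candidate quantile function over each block), we may take both centers to be two-step quantile functions with values $(\theta_1,\theta_2)$, $\theta_1\le\theta_2$. The inside objective averages $\rho_c(\|\cdot\|)$ over blocks, and the outside objective is convex in $q\in L^2(0,1)$.

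\emph{Inside estimator.} On the lower block the inside problem is the scalar Huber location problem for $U_1,\dots,U_n$; that is, $\widehat\theta_n^{\rm in}$ minimizes $\theta\mapsto\sum_i\rho_c(|\theta-U_i|)$. For $\theta\in[-b,b]$ we have $|\theta-U_i|\le 2b<c$. Hence the score $\sum_i\psi_c(\theta-U_i)=\sum_i(\theta-U_i)$ is linear there and vanishes at $\bar U_n\in[-b,b]$. Since the score is nondecreasing, $\widehat\theta_n^{\rm in}=\bar U_n$ exactly. The classical CLT then gives $\sqrt n\,\widehat\theta_n^{\rm in}\Rightarrow N(0,\sigma_U^2)$.

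\emph{Outside estimator.} We have $\mathcal W_2^2(\nu,\mu_\tau)=\tfrac12(\theta_1-U)^2+\tfrac12(\theta_2-L_\tau-\tau BS)^2=:W(\theta)^2$. The estimator solves the estimating equation $\sum_i w_i(\theta)\,(\theta-(U_i,L_\tau+\tau B_iS_i))=0$, with $w=\min\{1,c/W\}$. By the symmetry of $U$ and $S$, the population minimizer is $\theta^*=(0,L_\tau)$. I would prove it is unique by showing the Hessian below is positive definite and using convexity of the objective. At $\theta^*$:
\begin{itemize}
\item if $B=0$, then $W=|U|/\sqrt2\le b/\sqrt2<c$, so $w=1$;
\item if $B=1$, then $W=\sqrt{(U^2+\tau^2)/2}\ge\tau/\sqrt2>c$, so $w=\sqrt2 c/\sqrt{U^2+\tau^2}$.
\end{itemize}
Thus the kink $W=c$ is avoided with a margin in a neighbourhood of $\theta^*$, and the objective is $C^2$ there almost surely. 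The constraint $\theta_1\le\theta_2$ is inactive because $L_\tau>\tau+b$.

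I then apply the standard convex $M$-estimation argument (consistency from convexity and uniqueness, then a Taylor expansion of the score, in the style of Hjort--Pollard). This gives $\sqrt n(\widehat\theta_n^{\rm out}-\theta^*)\Rightarrow N(0,A^{-1}BA^{-1})$, where $A$ is the expected Jacobian of the score and $B$ its covariance.

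\emph{Computing $A$ and $B$.} For $B=1$, using $\partial_{\theta_1}W=(\theta_1-U)/(2W)$,
\[
\partial_{\theta_1}\bigl[w(\theta_1-U)\bigr]=\frac cW-\frac{cU^2}{2W^3}=\frac{c\tau^2}{2W^3}=\frac{\sqrt2\,c\,\tau^2}{(U^2+\tau^2)^{3/2}},
\]
while for $B=0$ this derivative equals $1$. This gives the $(1,1)$ entry $(1-\varepsilon)+\varepsilon\,\mathbb E\{\sqrt2c\tau^2/(U^2+\tau^2)^{3/2}\}$ of $A$. The off-diagonal entry is proportional to $\mathbb E[U\,\tau S\,(\cdots)]$, which is $0$ because $S$ is independent and symmetric; the off-diagonal entry of $B$ vanishes for the same reason. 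Hence the first coordinate decouples. Its score variance is
\[
\mathbb E[w^2U^2]=(1-\varepsilon)\sigma_U^2+2\varepsilon c^2\,\mathbb E[U^2/(U^2+\tau^2)],
\]
which yields \eqref{eq:two-block-outside-variance}.

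\emph{Limit as $\tau\to\infty$.} Dominated convergence with $|U|\le b$ gives $\mathbb E[U^2/(U^2+\tau^2)]\to0$ and $\mathbb E\{\tau^2/(U^2+\tau^2)^{3/2}\}\to0$. Hence $V_\tau\to\sigma_U^2/(1-\varepsilon)$, and the asymptotic relative efficiency of the outside estimator with respect to the inside one, $\sigma_U^2/V_\tau$, tends to $1-\varepsilon$.

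The main obstacle I expect is rigor in the outside $M$-estimation step. One must establish uniqueness of the population minimizer and consistency of $\widehat\theta_n^{\rm out}$, so that with probability tending to one it lies in the neighbourhood where all weights are smooth. I would first try to obtain this from convexity of the outside objective on the convex set of quantile functions together with positive definiteness of $A$.
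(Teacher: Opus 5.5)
Your proposal is correct and follows essentially the same route as the paper. Both arguments reduce to block-constant candidates by $L^2$ projection, observe that the inside lower-block estimator is exactly $\bar U_n$ because $c>2b$, and treat the outside center as a two-dimensional $M$-estimator whose Jacobian and score covariance are diagonal by the symmetry of $U$ and $S$; this yields $V_\tau=B_\tau/A_\tau^2$, and dominated convergence gives the limit. Your extra care on three points fills in details the paper leaves implicit: uniqueness of the population center via convexity plus a positive-definite Hessian, the deterministic neighbourhood of $\theta^*$ that avoids the kink $W=c$, and inactivity of the ordering constraint $\theta_1\le\theta_2$.
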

We now explain the idea behind \Cref{prop:two-block-efficiency}.
The lower half of every distribution contains the clean signal $U$, while
only the upper half is perturbed for an $\varepsilon$-fraction of the
observations. For the inside-Huber estimator, the two halves are treated
separately. Since $c>2b$, the lower-half displacements stay in the quadratic
part of the Huber loss, so all observations contribute to the estimation of
the lower-half location. This gives asymptotic variance $\sigma_U^2$. For the outside-Huber estimator, the weight depends on the Wasserstein
distance of the whole distribution. When $B=1$ and $\tau$ is large, the
perturbation in the upper half makes this distance large, so the whole
observation is downweighted, including its clean lower half. As
$\tau\to\infty$, these observations receive almost no weight. The outside
estimator then effectively uses only the $1-\varepsilon$ fraction of
unperturbed observations, which explains the limiting variance
$\sigma_U^2/(1-\varepsilon)$ and the relative efficiency $1-\varepsilon$.

\Cref{prop:two-block-efficiency} is not a uniform dominance statement. If a
distribution is globally aberrant, assigning one robust weight to the whole
observation may be appropriate. The proposition isolates the complementary
regime in which one part of a distribution is anomalous while another part
remains informative. In that setting, clipping individual transport
displacements avoids the first-order information loss caused by downweighting
the whole distribution.

\paragraph{Insight in higher dimension.}
The same distinction persists beyond the quantile representation. Suppose
that the relevant optimal maps from a candidate barycenter $\nu$ to $\mu$ are
unique and sufficiently regular. If $H_{c,\mu}$ denotes a Huber optimal map
and $T_\mu$ the quadratic optimal map, the contribution of $\mu$ to the two
first-order conditions has, formally, the form
\begin{equation}
\label{eq:higher-dimensional-score-comparison}
\min\left\{
1,\frac{c}{\|x-H_{c,\mu}(x)\|}
\right\}
\bigl(x-H_{c,\mu}(x)\bigr),
\qquad
\min\left\{
1,\frac{c}{\mathcal W_2(\nu,\mu)}
\right\}
\bigl(x-T_\mu(x)\bigr),
\end{equation}
for the inside and outside constructions, respectively. Thus the proposed
method clips each transport displacement separately, whereas the
distance-based Huber mean applies one scalar weight to the complete
displacement field generated by $\mu$. A full influence-function theory in
$d>1$ would additionally require differentiating the optimal maps with
respect to the barycenter measure, which involves a linearization of the
Monge problem and is beyond the scope of the present paper. Nevertheless,
\eqref{eq:higher-dimensional-score-comparison} identifies the same mechanism
as in dimension one: localized transport anomalies need not suppress the
contribution of unaffected regions of the same distribution under inside
Huberization.

\section{Numerical experiments}
\label{sect:numerics}
We report three experiments in which each observation is a probability distribution, and
the statistical task is to estimate a representative distribution in the presence of structured
contamination. The first experiment is a two-dimensional image example, included primarily
to illustrate the interpolation between the Wasserstein mean and median provided by the
Huber--Wasserstein barycenters. The second and third experiments are one-dimensional
distributional examples in which the cutoff parameter \(c>0\) is selected in a data-driven
manner. In the second experiment, \(c\) is selected using held-out clean validation data,
whereas in the third experiment it is selected by a robust cross-validation procedure applied
directly to the contaminated training sample. The code used to generate the figures in this paper can be found at \href{https://github.com/carlosaccp/huber_ot}{https://github.com/carlosaccp/huber\_ot}. 

\subsection{MNIST digit contamination}
\label{subsec:mnist-contamination}

The first experiment uses nine handwritten digit images from the MNIST data set (\cite{LeCun.Cortes.Burges.MNIST}).  Each image is converted into a probability measure on the common \(28\times 28\) pixel grid by normalizing its nonnegative pixel intensity to have total mass one.  We use six images of the digit \(1\) as the target population and three images of the digit \(0\) as contaminants.  Let
\[
a_{1,1},\ldots,a_{1,6},a_{0,1},a_{0,2},a_{0,3}\in\Delta_m,
\qquad m=28^2,
\]
denote the resulting normalized images, where $\Delta_m:=\{x\in \mathbb{R}^m_{\geq0}: \sum_{i=1}^mx_i=1\}$ is the $m$-dimensional unit simplex.  We regard the digit-\(1\) images as the target population and the digit-\(0\) images as contamination, so the empirical distribution of distributions is
\begin{equation}
\widehat{\mathbf P}_{\rm MNIST}
=
\frac19\sum_{j=1}^6\delta_{a_{1,j}}
+
\frac19\sum_{j=1}^3\delta_{a_{0,j}}.
\end{equation}

We compute the Wasserstein mean, the Wasserstein median, and Huber barycenters with parameters
$c\in\{0.48,0.24,0.12,0.06,0.03\},$
using the Sinkhorn-type methods outlined in \hyperref[app:huber-sinkhorn-barycenter]{Appendix B}. The grid coordinates are rescaled to \([0,1]^2\), so the cutoff values are interpreted on the same normalized spatial scale across the image experiments.  To avoid artificially changing the sharpness of the Huber path as \(c\) varies, the Sinkhorn regularization is not fixed in absolute units.  Instead, for each \(c\) we use $\varepsilon_c
=
\eta\,s\!\left(C^{(c)}\right)$,
where \(s\!\left(C^{(c)}\right)\) is the \(0.90\)-quantile of the positive entries of the Huber cost matrix and \(\eta=0.006\).  Thus, any observed change along the path is attributable to the Huber loss itself rather than to a different effective amount of entropic smoothing.

\begin{figure}[h!]
\centering
\includegraphics[width=0.98\linewidth]{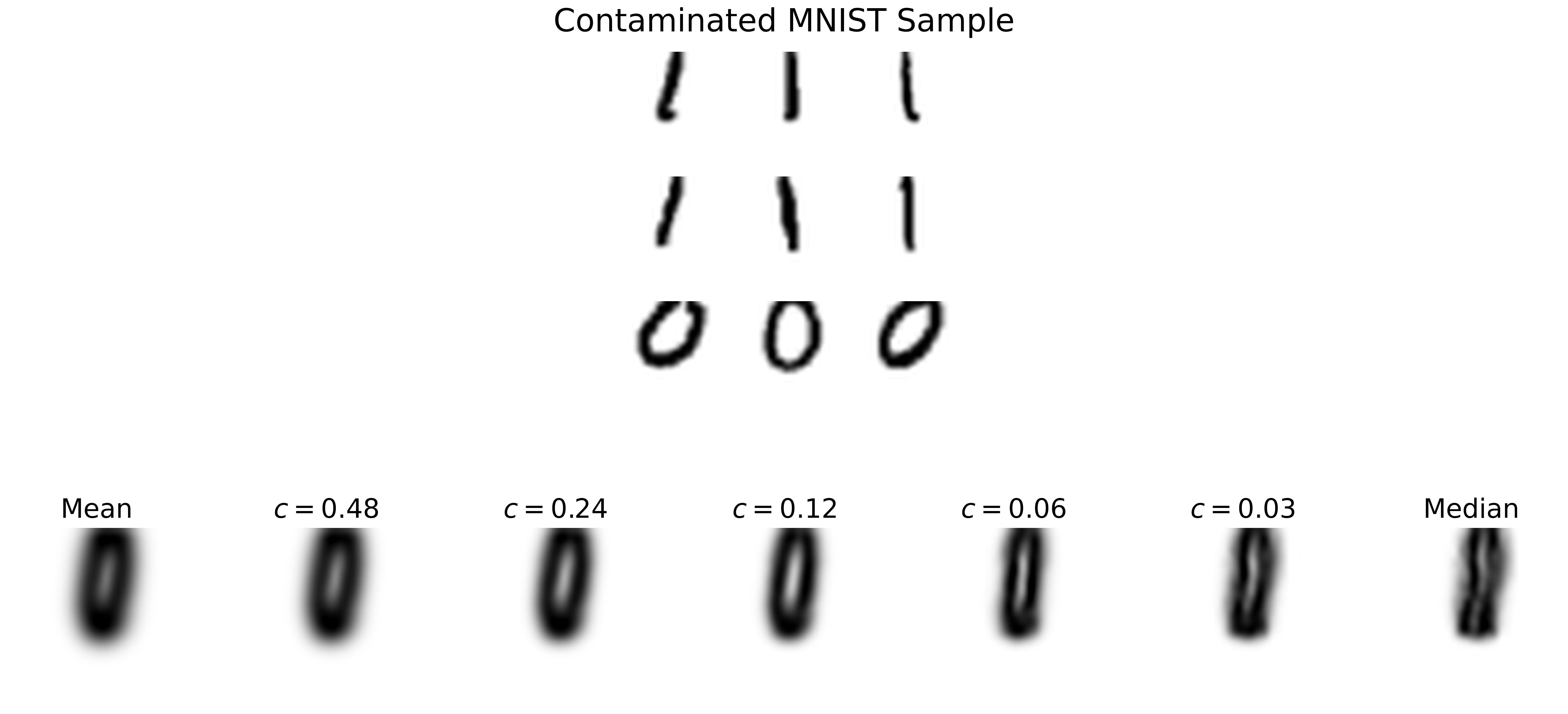}
\caption{MNIST digit contamination experiment.  The upper panel displays the nine input probability images in a \(3\times 3\) grid, consisting of six digit-\(1\) measures and three digit-\(0\) pollutants.  The lower row shows the Wasserstein mean, the Huber path ordered from larger to smaller cutoffs, and the Wasserstein median, all displayed under a common grayscale convention.}
\label{fig:mnist-contamination}
\end{figure}

Figure~\ref{fig:mnist-contamination} illustrates the robustness interpolation in a handwritten-digit example.  The Wasserstein mean is visibly affected by the digit-\(0\) pollutants, because the quadratic transport objective averages the geometry of all nine images and therefore retains mass from the closed loops. The Wasserstein median is more robust and remains closer to the vertical-stroke geometry shared by the majority digit-\(1\) images. Varying \(c\) interpolates between these two behaviors through the ground cost itself: when \(c\) is large, long displacements remain relatively quadratic and the center is closer to the barycenter, while when \(c\) is small, long displacements are effectively linear and the digit-\(0\) pollutants contribute less to the minimizer. The resulting interpolation shows how the proposed estimator moves from a mean-like representation of the contaminated sample toward a median-like representative of the majority digit class.
\subsection{London Underground pollution}
\label{subsec:london}

The second experiment uses passenger-flow profiles for London Underground stations\footnote{Tfl open data \href{https://tfl.gov.uk/info-for/open-data-users}{https://tfl.gov.uk/info-for/open-data-users}, accessed July 15th 2026.}, as in \cite{Carlier.Chenchene.2024.SIMA}. For each station $s$, let $y_s(k)$ be the recorded passenger count in time bin $k$, and define the normalized station profile
\begin{equation}
 p_s(k)=\frac{y_s(k)}{\sum_\ell y_s(\ell)},
 \qquad
 \mu_s=\sum_k p_s(k)\delta_{t_k},
\end{equation}
where $t_k$ is the rescaled time of day.  Stations whose profile peaks during the morning commute are used as the clean target population, while stations whose profile peaks during the evening commute are used as pollutants.  With the fixed random seed used in the figure, the contaminated training sample contains twelve morning-peaked profiles and five evening-peaked profiles, and the validation and test sets contain held-out morning-peaked profiles only.  The target of inference is therefore the typical morning-station distribution rather than the average profile of the contaminated training set.

Because this example is one-dimensional, the barycenters are computed exactly through quantiles (see \Cref{prop:huber-barycenter-1d}).  If $Q_i$ is the quantile function of $\mu_i$, then
\begin{equation}
\label{eq:empiical_mean_median}
Q_{\rm mean}(u)=\frac1n\sum_{i=1}^n Q_i(u),
\qquad
Q_{\rm median}(u)=\operatorname{median}\{Q_1(u),\ldots,Q_n(u)\},
\end{equation}
and for a fixed cutoff $c$,
\begin{equation}
\label{eq:empirical_huber}
Q_c(u)=\argmin_{z\in\mathbb R}\frac1n\sum_{i=1}^n\rho_c(|z-Q_i(u)|).
\end{equation}
The scalar minimization is solved independently for each quantile level by bisection search on the Huber score equation $\sum_i\min\{\max\{z-Q_i(u),-c\},c\}=0$.  The cutoff is selected by minimizing the average held-out Wasserstein distance
\begin{equation}
\mathcal L_{\rm val}(c)
=\frac1{|\mathcal V|}\sum_{\mu_j\in\mathcal V}\mathcal{W}_2(\nu_c,\mu_j)
=\frac1{|\mathcal V|}\sum_{\mu_j\in\mathcal V}
\left(\int_0^1\{Q_c(u)-Q_j(u)\}^2\,du\right)^{1/2}
\end{equation}
over the grid $c\in\{0.01,0.02,0.035,0.05,0.075,0.10,0.15\}$, and final performance is reported on an independent clean test set.

\begin{figure}[H]
\centering
\includegraphics[width=0.98\linewidth]{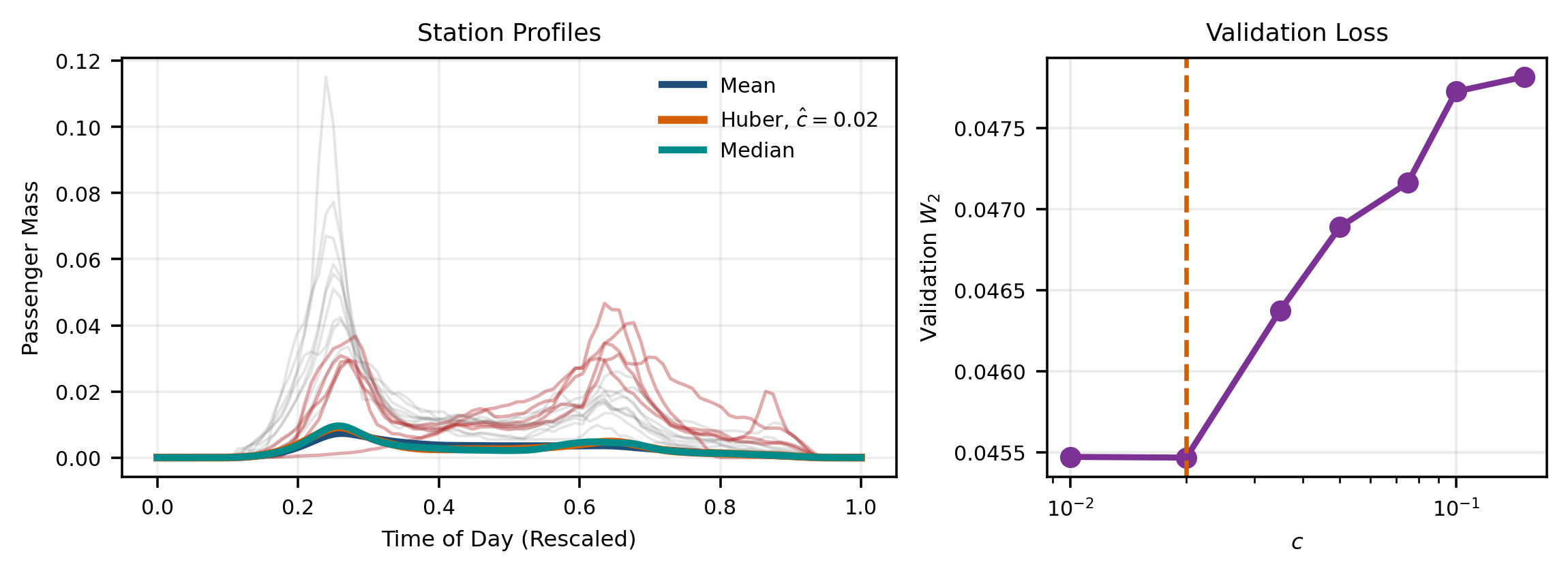}
\caption{London Underground pollution experiment.  Thin gray curves are clean morning-peaked station profiles, thin red curves are evening-peaked pollutants, and the highlighted curves are the Wasserstein mean, the validation-selected Huber center, and the Wasserstein median.  The right panel shows the validation loss over the cutoff grid.}
\label{fig:london}
\end{figure}

The left panel of Figure~\ref{fig:london} shows that the Wasserstein mean inherits a late-day shoulder from the evening pollutants, while the median and the Huber center suppress this shoulder and concentrate around the morning-commute pattern.  The selected value is $\widehat c=0.02$, placing the Huber center close to the median; this is consistent with the fact that the target and pollutant profiles are well separated on the time axis.  The figure also shows that the validation curve is quite flat near the selected value, which explains why the Huber center and the median have nearly identical losses.

\begin{table}[H]
\centering
\begin{tabular}{lcc}
\toprule
Estimator & Validation $\mathcal{W}_2$ & Test $\mathcal{W}_2$\\
\midrule
Wasserstein Mean & 0.049448 & 0.052597\\
\textbf{Huber--Wasserstein mean, $\mathbf{\widehat c=0.02}$} & \textbf{0.045468} & \textbf{0.051821}\\
Wasserstein Median & 0.045483 & 0.052161\\
\bottomrule
\end{tabular}
\caption{Held-out losses in the London Underground experiment.  The best-performing estimator and its loss entries are bolded.}
\label{tab:london}
\end{table}

Table~\ref{tab:london} confirms the visual reading.  Both robust centers improve on the Wasserstein mean, and the selected Huber center has the smallest validation and held-out test losses, although the margin over the median is negligible.  We therefore view this example primarily as a robustness check: the mean places more emphasis on the outliers, while the Huber and median centers estimate the majority morning population.

\subsection{Algorithmic-bias score distributions}
\label{subsec:algorithmic-bias}

The third experiment is motivated by Wasserstein regularization for algorithmic-bias mitigation, where a classifier produces a score \(f_\theta(X)\in[0,1]\) and the distributions of these scores are compared across sensitive groups; see \cite{Risser.GonzalezSanz.Vincenot.Loubes.2022.JMIV,gordaliza2019obtaining,Taskesen.Fairness}.  We do not retrain a classifier in this experiment; instead, we isolate the distributional problem that appears after a classifier has been fitted. We treat each observation as a subgroup-level score distribution on \([0,1]\), and the goal is to estimate a representative score distribution in the presence of biased or corrupted subgroups.

The clean subgroup score distributions are heterogeneous two-component beta mixtures,
\begin{equation}
\mu_{\rm clean}
=
(1-\pi)\,\operatorname{Beta}(2,18)
+
\pi\,\operatorname{Beta}(a,3),
\end{equation}
where $\pi=0.30+0.24B,$ $B\sim\operatorname{Beta}(1.2,3.5),$ $a=10+5B'$ and $B'\sim\operatorname{Beta}(2,1).$ 
These distributions represent subgroups whose classifier scores have a common low-score component but exhibit heterogeneous high-score behavior.  The biased subgroup distributions have a similar low-score component, but their positive-score component is shifted toward the middle of the unit interval:
\begin{equation}
\mu_{\rm bias}
=
(1-\pi_b)\,\operatorname{Beta}(2,21)
+
\pi_b\,\operatorname{Beta}(2.6,9.5),
\quad
\pi_b=\max\{\min\{0.32+\xi,1\},\varepsilon_\pi\}
\end{equation}
where $\xi\sim N(0, 0.015^2)$, and $\varepsilon_\pi = 10^{-6}$ to ensure $\pi_b$ is positive. Thus the biased distributions model subgroups whose scores are systematically depressed relative to the clean population.

The training sample contains both clean and biased score distributions.  With the fixed random seed used in the figure, the training collection consists of ten clean subgroup distributions and four biased subgroup distributions.  The test collection is generated independently and is also mixed, containing both clean and biased subgroup distributions.  Hence, the cutoff \(c\) is not selected using oracle-clean validation data. Instead, it is selected
directly from the contaminated training sample using a robust cross-validation criterion
designed to favor centers that remain stable in the presence of outlying score distributions. As in the previous example, the barycenter can be computed exactly using the one-dimensional quantile representation; see \cref{prop:huber-barycenter-1d}.  If \(Q_i\) denotes the quantile function of \(\mu_i\), then the Wasserstein mean and Wasserstein median are computed pointwise using \eqref{eq:empiical_mean_median} and for a fixed cutoff \(c\), the Huber center has quantile function given by \eqref{eq:empirical_huber},
where the scalar minimization is solved independently for each quantile level \(u\). 

The cutoff is selected over the grid
\begin{equation}
\label{eq:c_gird_bias}
 c\in
 \{0.005,0.01,0.02,0.03,0.04,0.06,0.08,0.10,0.14,0.20\}.
\end{equation}
For each candidate \(c\), we perform four-fold cross-validation on the contaminated training sample.  In each fold, the Huber center is fitted on the remaining folds and scored on the held-out fold.  Because the held-out fold may itself contain biased score distributions, the fold loss is not the ordinary average Wasserstein loss; instead, we use a trimmed Wasserstein criterion.  If \(\mathcal V\) is the held-out fold and \(\nu_c\) is the fitted center, let \(\mathcal V_{0.85}(c)\) denote the subset of the \(85\%\) smallest values\footnote{We round up, i.e.~ we select the $\lceil 0.85 N\rceil$ smallest values.} among \(\{\mathcal{W}_2(\nu_c,\mu_j):\mu_j\in\mathcal V\}\).  The fold loss is
\begin{equation}
\label{eq:fold_loss}
\mathcal L_{\rm fold}(c)
=
\frac{1}{|\mathcal V_{0.85}(c)|}
\sum_{\mu_j\in\mathcal V_{0.85}(c)}
\mathcal{W}_2(\nu_c,\mu_j).
\end{equation}
The selected cutoff is the minimizer of the average trimmed fold loss,
\begin{equation}
\widehat c
=
\argmin_c
\frac14
\sum_{r=1}^4
\mathcal L_{\rm fold}^{(r)}(c).
\end{equation}
This selection rule uses the same mixed clean-and-biased data available to the estimator, while preventing a small number of biased held-out distributions from dominating the tuning criterion.  The selected value is $ \widehat c=0.06$; after its selection, we refit the Huber center on the full contaminated training sample and evaluate the Wasserstein mean, the selected Huber center, and the Wasserstein median on the independent mixed test sample using \eqref{eq:fold_loss}.

\begin{figure}[t]
\centering
\includegraphics[width=0.98\linewidth]{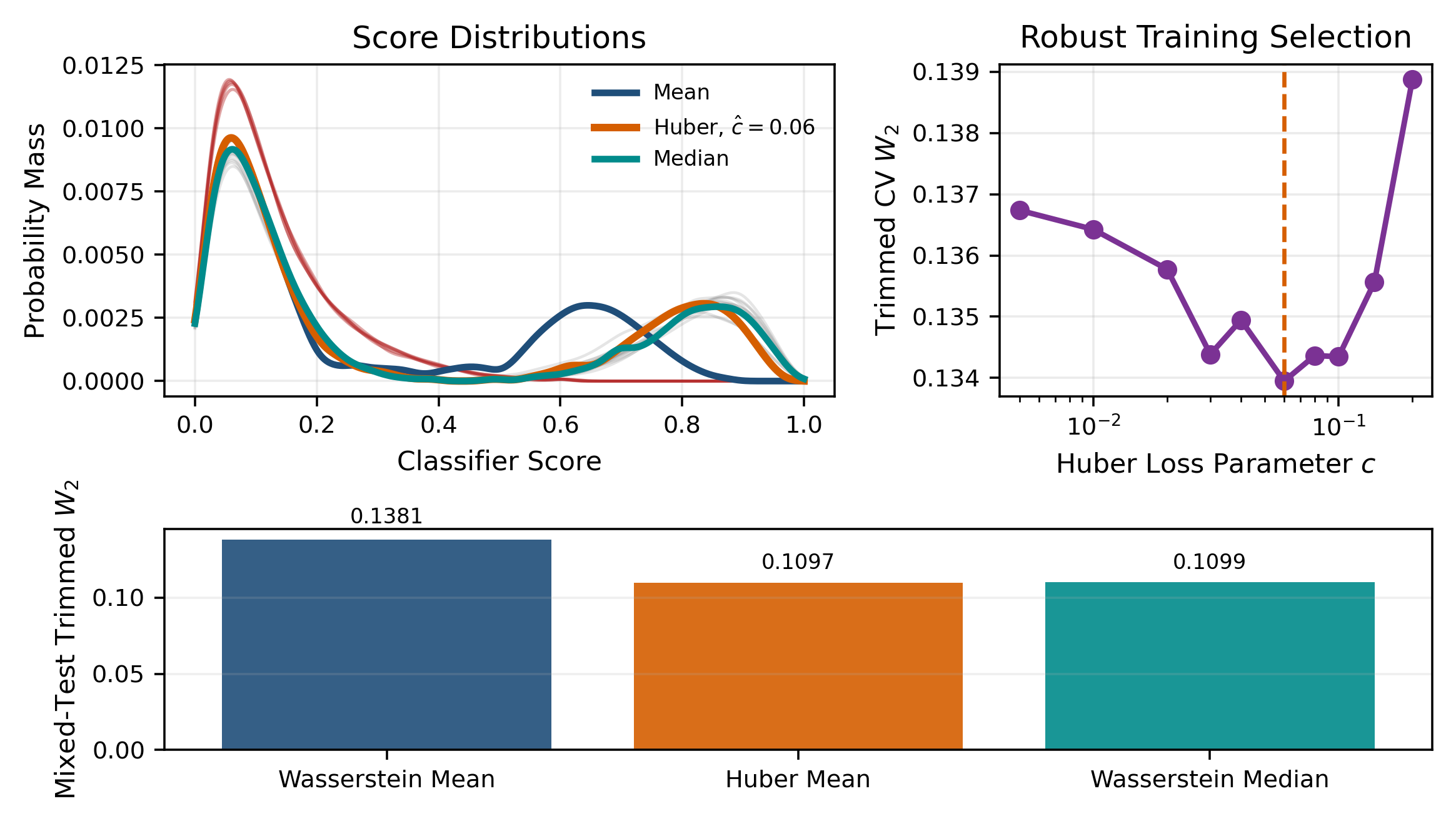}
\caption{Algorithmic-bias score-distribution experiment.  The left panel shows the mixed training sample, with clean subgroup score distributions in translucent gray and biased subgroup score distributions in translucent red, together with the Wasserstein mean, the cross-validation-selected Huber center, and the Wasserstein median.  The upper-right panel shows the robust cross-validation loss over the grid of Huber cutoffs.  The lower panel reports the mixed-test trimmed \(\mathcal{W}_2\) losses.}
\label{fig:bias}
\end{figure}

Figure~\ref{fig:bias} shows the effect of structured score-distribution contamination.  The Wasserstein mean is pulled toward the biased distributions, because it averages the shifted low-score mass with the clean subgroup distributions.  The Wasserstein median is substantially more stable, but it is also the most aggressively robust statistic, and therefore sacrifices some of the clean between-subgroup heterogeneity.  

The selected Huber center lies between these two behaviors.  It discounts the biased score distributions strongly enough to avoid the displacement seen in the mean, while retaining more averaging than the median.  This is the intended role of the cutoff \(c\): it controls how far the estimator moves from the efficient but nonrobust Wasserstein mean toward the highly robust Wasserstein median.

\begin{table}[h!]
\centering
\begin{tabular}{lcc}
\toprule
Estimator & Training-CV Trimmed \(\mathcal{W}_2\) & Mixed-Test Trimmed \(\mathcal{W}_2\)\\
\midrule
Wasserstein Mean & 0.155057 & 0.138132\\
\textbf{Huber--Wasserstein mean, \(\widehat c=0.06\)} & \textbf{0.133948} & \textbf{0.109739}\\
Wasserstein Median & 0.135911 & 0.109907\\
\bottomrule
\end{tabular}
\caption{Robust cross-validation and mixed-test losses in the algorithmic-bias score-distribution experiment.  The cutoff \(c\) is selected on contaminated training folds using an \(85\%\)-trimmed \(\mathcal{W}_2\) criterion.  The best-performing estimator and its loss entries are bolded.}
\label{tab:bias}
\end{table}

Table~\ref{tab:bias} is the strongest quantitative example in this section.  The Huber center selected from the contaminated training sample has the smallest robust cross-validation loss and the smallest mixed-test loss.  Relative to the Wasserstein mean, it substantially reduces the effect of biased low-score distributions.  Relative to the Wasserstein median, the selected Huber--Wasserstein
barycenter achieves a slightly smaller loss in this realization, indicating that the median is very robust but slightly inefficient for a heterogeneous clean target.  This is precisely the regime in which selecting the Huber cutoff from contaminated data is useful: the selected center remains robust to structured outliers without forcing the estimator all the way to the median endpoint.

\bibliographystyle{plainnat}
\bibliography{ref}
\appendix

\section{Proofs}
\label{app:proofs}
\subsection{Proofs in \Cref{sect:stability-properties-cost}}
\begin{proof}[Proof of \cref{lemma:Lipschitz-extension}]
Fix $\varepsilon>0$,  $y,y'\in {\rm dom}({\psi}^{\rho_c})$, with $y\neq y'$,   and $x_\varepsilon,x_\varepsilon'$ such that 
\begin{align}
 \label{eq:psi_y}
{\psi}^{\rho_c}(y)&=\inf_{{x}\in {\mathbb{R}^d}}\{ \rho_c(\|x-y\|)-{\psi}(x)\} \geq \rho_c(\|x_\varepsilon-y\|)-{\psi}(x_\varepsilon) -\varepsilon\|y-y'\| \\  
\label{eq:psi_y_prime}{\psi}^{\rho_c}(y')&=\inf_{{x}\in {\mathbb{R}^d}}\{ \rho_c(\|x-y'\|)-{\psi}(x)\} \geq \rho_c(\|x_\varepsilon'-y'\|)-{\psi}(x_\varepsilon') -\varepsilon\|y-y'\| . 
\end{align}
Using the equality in \eqref{eq:psi_y_prime}, \eqref{eq:Huber-loos-is-lip}, and the inequality \eqref{eq:psi_y}, we derive that
\begin{align*}
   {\psi}^{\rho_c}(y') &\leq   \rho_c(\|x_\varepsilon-y'\|)-{\psi}(x_\varepsilon) \\
   &\leq \rho_c(\|x_\varepsilon-y\|)-{\psi}(x_\varepsilon) +c\|y-y'\| \leq  {\psi}^{\rho_c}(y)+(c+\varepsilon)\|y-y'\| 
\end{align*}
and, by symmetry, 
$  {\psi}^{\rho_c}(y)  \leq  {\psi}^{\rho_c}(y')+(c+\varepsilon)\|y-y'\|$.
Since $\varepsilon$ is arbitrary, we conclude that ${\psi}^{\rho_c}$ is $c$-Lipschitz in its domain, showing (ii). Hence, it only remains to show (i); we will show that, for every $x\in \R^d$,  
$$ s=\inf_{y\in {\mathbb{R}^d}}\{ \rho_c(\|x-y\|)-\psi^{\rho_c}(y)\}>-\infty.$$
Fix $y_0\in {\rm dom}(\psi^{\rho_c})$. We use (ii) together with the triangle inequality to get
\begin{align*}
  s&=\inf_{y\in {\mathbb{R}^d}}\{ \rho_c(\|x-y\|)-\psi^{\rho_c}(y)\}\\
  &= \inf_{y\in {\rm dom}(\psi^{\rho_c})}\{ \rho_c(\|x-y\|)-\psi^{\rho_c}(y)\}\\
  &\geq  -\psi^{\rho_c}(y_0)+ \inf_{y\in {\mathbb{R}^d}}\{ \rho_c(\|x-y\|)-c\|y-y_0\|\}\\
  &\geq  -\psi^{\rho_c}(y_0)-c\|y_0-x\|+ \inf_{y\in {\mathbb{R}^d}}\{ \rho_c(\|x-y\|)-c\|y-x\|\}\\
  &=  -\psi^{\rho_c}(y_0)-c\|y_0-x\|+ \inf_{t\in \R}\{ \rho_c(t)-c |t| \}\\
  &\geq -\psi^{\rho_c}(y_0)-c\|y_0-x\|-\frac{c^2}{2}>-\infty,
\end{align*}
where we have used that
$  \rho_c(t) \geq c |t|-\frac{c^2}{2} $. This shows (i). The proof of the last claim follows directly from (i), (ii) and \cref{theorem:duality}.   
\end{proof}
\begin{proof}[Proof of \cref{lemma:sup-c-conjugate-reestriced}]
Fix $y\in \R^d$. Then 
$$ f^{\rho_c}(y)=\inf_{x\in \R^d}\{ \rho_c(\|x-y\|)-f(x)\}= \inf_{x\in \R^d}\{ \rho_c(\|x\|)-f(x+y)\}= \inf_{x\in \R^d}\{ \rho_c(\|x\|)-h(x)\} , $$
where $h(x):=f(x+y)$ is also $c$-Lipschitz.  
Fix $x$ with $ \|x\|>c$, and let $x_c:=\frac{c}{\|x\|} x $ be the projection of $x$ onto the ball $\overline{\mathbb{B}_c(0)}$.  Then
   \begin{align*}
       \rho_c(\|x\|)-h(x) &= c\|x\|-\frac{c^2}{2}-h(x) \\
       &\geq c (\|x\|-\|x_c-x\|)-\frac{c^2}{2}-h(x_c) = \frac{c^2}{2}-h(x_c)= \rho_c(\|x_c\|)-h(x_c).
   \end{align*}
  The fact that the infimum is attained follows from the compactness of $\overline{\mathbb{B}}_c(y)$. 
  $$f^{\rho_c}(y)=   \inf_{x\in \overline{\mathbb{B}_c(0)}}\{ \rho_c(\|x\|)-f(x+y)\}= \inf_{x\in \overline{\mathbb{B}_c(y)}}\{ \rho_c(\|x-y\|)-f(x)\}. $$ 
\end{proof}
\begin{proof}[Proof of \cref{lemma:cojugate-converges}]
Fix $R>0$ and $y\in \overline{\mathbb{B}_R(0)} $.  By \cref{lemma:sup-c-conjugate-reestriced} we can find $x_n(y), x(y) \in \overline{\mathbb{B}_{R+c}(0)}$  such that 
$ f^{\rho_c}_n(y)=  \rho_c(\|x_n(y)-y\|)-f_n(x_n(y))  $
and 
$  f^{\rho_c}(y)=  \rho_c(\|x(y)-y\|)-f(x(y)) . $
Then, for every $y\in \overline{\mathbb{B}_R(0)} $, 
$$ f^{\rho_c}_n(y) -f^{\rho_c}(y)\leq f(x(y))-f_n(x(y)) \leq \sup_{x\in \overline{\mathbb{B}}_{R+c}(0)}|f_n(x)-f(x)|  $$
and 
$$ f^{\rho_c}(y) -f^{\rho_c}_n(y)\leq f(x_n(y))-f_n(x_n(y)) \leq \sup_{x\in \overline{\mathbb{B}_{R+c}(0)}}|f_n(x)-f(x)| .  $$
Hence, we have shown that 
$$ \sup_{y\in \overline{\mathbb{B}_{R}(0)}}|f_n^{\rho_c}(y)-f^{\rho_c}(y)|  \leq \sup_{x\in \overline{\mathbb{B}_{R+c}(0)}}|f_n(x)-f(x)|, $$
which concludes the proof. 
\end{proof}
\begin{proof}[Proof of \cref{theorem:stability-cost}] 
 Let    $(\psi_i,\varphi_i)$ be $c$-Lipschitz potentials for $(\mu_i,\nu)$, $i=1,2$. Then it follows that  
 $$ \mathcal{T}_{ \rho_c}(\mu_1,\nu)-\mathcal{T}_{ \rho_c}(\mu_2,\nu) \leq  \int \psi_1 d(\mu_1-\mu_2) \leq c\cdot\mathcal{W}_1(\mu_1,\mu_2) ,$$
 where we used the fact that $\psi_1$ is $c$-Lipschitz as well as the dual formula for $\mathcal{W}_1$ \cite{Villani.09}. The inequality 
 $$ \mathcal{T}_{ \rho_c}(\mu_2,\nu)-\mathcal{T}_{ \rho_c}(\mu_1,\nu) \leq c\cdot \mathcal{W}_1(\mu_1,\mu_2)$$
 follows by symmetry of the argument above. 
\end{proof}
\begin{proof}[Proof of \cref{theorem:stsbilitypot}]

By the Arzelà-Ascoli theorem and \cref{lemma:Lipschitz-extension}, any sequence $f_n$ of dual potentials from $\mu_n$ to $\nu_n$ satisfying \eqref{eq:bounded_at_a_point} admits a further subsequence $f_{n_k}$ converging uniformly on compact sets to some $c$-Lipschitz function $f_\infty$. (Note that w.l.o.g.~we can assume that $f_n(0)=0$ for all $n$.)   Using  \cref{lemma:cojugate-converges}, we derive that  $f_{n_k}^{\rho_c}$ converges in $C_{loc}(\mathbb{R}^d)$ to $f_\infty^{\rho_c}$. By definition of $(f_{n_k}, f_{n_k}^{\rho_c})$, for any pair $c$-Lipschitz functions $(f,g)$ with $f(x)+g(y)\leq \rho_c(\|x-y\|)$,  
$$
A_n= \int f_{n_k} d\mu_{n_k}+ \int f_{n_k}^{\rho_c} d\nu_{n_k} \geq \int f d\mu_{n_k}+ \int g d\nu_{n_k}. 
$$
Since, $\mathcal{W}_1(\mu_{n_k}, \mu) \to 0$ and $\mathcal{W}_1(\nu_{n_k}, \nu) \to 0$, we obtain that 
$$  \int f d\mu_{n_k}+ \int g d\nu_{n_k}  \to  \int f d\mu+ \int g d\nu. $$
Note that it is enough to show that 
\begin{equation}
    \label{eq:stability-potentials}
    \int f_{n_k} d\mu_{n_k}+ \int f_{n_k}^{\rho_c} d\nu_{n_k} \to  \int f_{\infty} d\mu+ \int f_{\infty}^{\rho_c} d\nu, 
\end{equation}
as this implies that for all $c$-Lipschitz functions $(f,g)$ with $f(x)+g(y)\leq \rho_c(\|x-y\|)$,
$$ \int f_{\infty} d\mu+ \int f_{\infty}^{\rho_c} d\nu \geq  \int f d\mu+ \int g d\nu,  $$
which in turn implies that $(f_\infty, f_{\infty}^{\rho_c}) $ is a solution to \eqref{eq:Huber-dual} by \cref{theorem:duality} and \cref{lemma:Lipschitz-extension}. We show \eqref{eq:stability-potentials}. By \cite[Theorem~7.12]{Villani-topics}, for every $\varepsilon>0$, there exists  $R>0$ so that 
$$ c \int_{\|x\|\geq R} \|x\| d\mu_{n_k} + c \int_{\|y\|\geq R} \|y\| d\nu_{n_k}  \leq \varepsilon/2. $$
Furthermore, we can select $R$ in such a way that the set $\{x:\|x\|=R\}$ is negligible for $\mu$ and $\nu$.   
Hence, using \cref{lemma:Lipschitz-extension} we get 
$$ \left| \int f_{n_k} d\mu_{n_k}+ \int f_{n_k}^{\rho_c} d\nu_{n_k}- \int_{\overline{\mathbb{B}}_R(0)} f_{n_k} d\mu_{n_k} -\int_{\overline{\mathbb{B}}_R(0)} f_{n_k}^{\rho_c} d\nu_{n_k} \right| \leq \varepsilon/2 . $$
Since $(f_{n_k} ,f_{n_k}^{\rho_c})$ converges uniformly on the compact set $\overline{\mathbb{B}}_R(0)\times \overline{\mathbb{B}}_R(0)$ and $(\mu_{n_k} \otimes \nu_{n_k})$ converges in distribution to $(\mu \otimes \nu)$ and the boundary of $\overline{\mathbb{B}}_R(0)$ is negligible we derive that 
$$ \int_{\overline{\mathbb{B}}_R(0)} f_{n_k} d\mu_{n_k}+ \int_{\overline{\mathbb{B}}_R(0)} f_{n_k}^{\rho_c} d\nu_{n_k} \to \int_{\overline{\mathbb{B}}_R(0)} f_{\infty} d\mu+ \int_{\overline{\mathbb{B}}_R(0)} f_{\infty}^{\rho_c} d\nu. $$
With the same  choice of $R$ we have 
$$ \left|\int_{\overline{\mathbb{B}}_R(0)} f_{\infty} d\mu+ \int_{\overline{\mathbb{B}}_R(0)} f_{\infty}^{\rho_c} d\nu - \int f_{\infty} d\mu+ \int f_{\infty}^{\rho_c} d\nu \right| \leq \varepsilon/2.$$
Therefore, we have shown that for every $\varepsilon>0$, 
$$ \limsup_{k}  \left|\int f_{n_k} d\mu_{n_k}+ \int f_{n_k}^{\rho_c} d\nu_{n_k} -  \int f_{\infty} d\mu - \int f_{\infty}^{\rho_c} d\nu \right| \leq \varepsilon,  $$
so that \eqref{eq:stability-potentials} follows. 
\end{proof}
\begin{proof}[Proof of \Cref{theorem:Huber-maps}]
We define ${\rm Proj}_1(x,y)=x$ and  ${\rm Proj}_2(x,y)=y.$ 
Let \(\pi_c\) be a Huber plan for \((\mu,\nu)\), and define
\[
E_{\le c}:=\{(x,y):\|x-y\|\le c\},
\qquad
E_{>c}:=\{(x,y):\|x-y\|>c\}.
\]
Disintegrate \(\pi_c\) with respect to its first marginal as $\pi_c=\int \pi_c^x\,d\mu(x),$ 
and set
\[
a(x)
:=
\int \mathbf 1_{E_{>c}}(x,y)\,d\pi_c^x(y),
\qquad
A_{\le c}
:=
\{x\in\mathbb R^d:a(x)=0\}.
\]
The function \(a\) has a Borel representative, which we use in the
definition of \(A_{\le c}\). Set $\pi_{\le c}
:=
\pi_c|_{A_{\le c}\times\mathbb R^d}$ and 
$\pi_{>c}
:=
\pi_c|_{A_{\le c}^c\times\mathbb R^d}.$
Thus $({\rm Proj}_1)_\#\pi_{\le c}
=
\mu|_{A_{\le c}}$ and $({\rm Proj}_1)_\#\pi_{>c}
=
\mu|_{A_{\le c}^c}.$  
Here and below, the subscript \(>c\) labels the residual source part;
the plan \(\pi_{>c}\) may also charge pairs with \(\|x-y\|=c\).
By the definition of \(A_{\le c}\), the plan \(\pi_{\le c}\) is
concentrated on \(E_{\le c}\). We decompose the Huber cost as
\[
\int \rho_c(\|x-y\|)\,d\pi_c(x,y)
=
\underbrace{
\int \rho_c(\|x-y\|)\,d\pi_{\le c}(x,y)
}_{=:I_1}
+
\underbrace{
\int \rho_c(\|x-y\|)\,d\pi_{>c}(x,y)
}_{=:I_2}.
\]
\medskip

\noindent
Before giving the details, we summarize the main steps of the proof.
The basic idea is to separate the part of the optimal plan that remains
in the quadratic region of the Huber loss from the residual part that
enters its linear region, and to construct a transport map on each part.

\begin{itemize}
    \item \textit{Step 1: Quadratic part.}
    We show that the portion of the Huber plan corresponding to source
    points whose displacements remain within distance \(c\) is cyclically
    monotone for the quadratic cost. Since \(\mu\) is absolutely continuous,
    this part is therefore induced by the gradient of a convex function.

    \item \textit{Step 2: Truncation of the residual part.}
    Using an optimal Huber potential, we show that, for every source point
    in the residual region, all transported mass lies on a single oriented
    half-line. We then shorten each such displacement by \(c\). The resulting
    truncated plan is shown to be optimal for the Euclidean-distance cost.

    \item \textit{Step 3: A map for the truncated residual problem.}
    We apply the ray decomposition of \cite{Ambrosio.Pratelli.2003.LNM}
    to the truncated Euclidean-distance optimal plan. On each transport ray
    we replace the conditional plan by a monotone transport map, and then
    glue these ray-wise maps into a global Borel map \(R\) with the same
    marginals and the same Euclidean transport cost.

    \item \textit{Step 4: Recovery of the original residual target.}
    The truncation in Step 2 removed exactly a displacement of length \(c\)
    along each transport ray. We identify the orientation of these rays and
    add this displacement back to the map \(R\), thereby obtaining a map
    \(G\) that transports the residual source measure to the original
    residual target measure.

    \item \textit{Step 5: Residual Huber-cost comparison.}
    Since both \(R(x)\) and \(G(x)\) lie on the same oriented ray from \(x\),
    the residual Huber cost can be written in terms of the Euclidean
    distance traveled by \(R\). The optimality of the truncated plan then
    shows that the map \(G\) has no larger Huber cost than the original
    residual plan.

    \item \textit{Step 6: Gluing the two maps.}
    Finally, we combine the quadratic map from Step 1 with the residual map
    \(G\). The resulting Borel map \(H_c\) pushes \(\mu\) forward to \(\nu\)
    and has Huber cost no larger than the original optimal plan. Optimality
    of the latter forces equality, proving that \(H_c\) is a Huber optimal
    transport map.
\end{itemize}

\medskip

\noindent\textit{Step 1. The quadratic portion.}
We first note that
\(E_{\le c}\cap{\rm supp}(\pi_c)\) is cyclically monotone for the
quadratic cost. Indeed, since \(\pi_c\) is optimal for the Huber cost,
\({\rm supp}(\pi_c)\) is \(\rho_c\)-cyclically monotone. Hence, for every
finite family
\[
(x_i,y_i)_{i=1}^N
\subset
E_{\le c}\cap{\rm supp}(\pi_c)
\]
and every permutation \(\sigma\),
\[
\sum_{i=1}^N
\rho_c(\|x_i-y_i\|)
\le
\sum_{i=1}^N
\rho_c(\|x_i-y_{\sigma(i)}\|).
\]
Since \((x_i,y_i)\in E_{\le c}\), the left-hand side equals $\sum_{i=1}^N\frac12\|x_i-y_i\|^2.$ 
Moreover, $\rho_c(t)\le\frac12t^2,$ for $t\ge0.$ 
Therefore
\[
\sum_{i=1}^N\frac12\|x_i-y_i\|^2
\le
\sum_{i=1}^N
\rho_c(\|x_i-y_{\sigma(i)}\|)
\le
\sum_{i=1}^N
\frac12\|x_i-y_{\sigma(i)}\|^2.
\]
Thus \(E_{\le c}\cap{\rm supp}(\pi_c)\) is
\(\|\cdot\|^2\)-cyclically monotone. Hence there exists a convex
function \(f\) such that
\[
E_{\le c}\cap{\rm supp}(\pi_c)
\subset
\partial f.
\]
Since \(\mu\ll\mathcal L^d\), the convex function \(f\) is
differentiable \(\mu\)-a.e. As \(\pi_{\le c}\) is concentrated on
\(E_{\le c}\cap{\rm supp}(\pi_c)\) and has first marginal
\(\mu|_{A_{\le c}}\), it follows that
$\pi_{\le c}
=
(I\times\nabla f)_\#\mu|_{A_{\le c}}.$
Consequently,
\[
I_1
=
\int_{A_{\le c}}
\rho_c(\|x-\nabla f(x)\|)\,d\mu(x).
\]
Set
\[
\nu_{<c}
:=
({\rm Proj}_2)_\#\pi_{\le c}
=
(\nabla f)_\#\mu|_{A_{\le c}}.
\]
Then \(\nu_{<c}\le\nu\). Setting $\alpha_{>c}
:=
\mu|_{A_{\le c}^c}$ and $\nu_{>c}
:=
({\rm Proj}_2)_\#\pi_{>c},$ 
we have $({\rm Proj}_1)_\#\pi_{>c}
=
\alpha_{>c}$ and  $\nu_{>c}
=
\nu-\nu_{<c}.$ 

\medskip

\noindent\textit{Step 2. The residual affine part and the truncated plan.}
Let \((\varphi,\psi)\) be a pair of optimal Kantorovich potentials for
 the Huber cost. Since \(\varphi\) is \(c\)-Lipschitz, it is differentiable
 \(\mu\)-a.e. We choose a Borel representative of \(\nabla\varphi\) on its
 set of differentiability and extend it arbitrarily outside this set. Moreover,
 for \(\pi_c\)-a.e. \((x,y)\),
\[
\varphi(x)+\psi(y)=\rho_c(\|x-y\|).
\]
At every such pair for which \(\varphi\) is differentiable at \(x\), the
function
\[
w\longmapsto \rho_c(\|w-y\|)-\varphi(w)
\]
has a minimum at \(x\). Therefore, writing \(r:=\|x-y\|\),
\begin{equation}
\label{eq:gradient-fiber-dichotomy}
\nabla\varphi(x)
=
\begin{cases}
x-y, & r<c,\\[1mm]
c\,\dfrac{x-y}{r}, & r\ge c,
\end{cases}
\qquad
\pi_c\text{-a.e.}
\end{equation}
Since \eqref{eq:gradient-fiber-dichotomy} holds \(\pi_c\)-a.e., by
 disintegration there exists, for \(\mu\)-a.e. \(x\), a Borel set
 \(N_x\subset\mathbb R^d\) such that \(\pi_c^x(N_x)=0\) and
 \eqref{eq:gradient-fiber-dichotomy} holds for every
 \(y\in\mathbb R^d\setminus N_x\).

Fix \(x\in A_{\le c}^c\) with this property and such that
\(\varphi\) is differentiable at \(x\). Since
\[
a(x)=\pi_c^x\bigl(\{y:\|x-y\|>c\}\bigr)>0,
\]
we have $\pi_c^x\bigl(\{y:\|x-y\|>c\}\setminus N_x\bigr)>0.$ 
For every
\(y\in\{y':\|x-y'\|>c\}\setminus N_x\), the second alternative in
\eqref{eq:gradient-fiber-dichotomy} gives
$\nabla\varphi(x)=c\frac{x-y}{\|x-y\|},$
and hence \(\|\nabla\varphi(x)\|=c\).

Now let \(y\in\mathbb R^d\setminus N_x\). If \(\|x-y\|<c\), then the
first alternative in \eqref{eq:gradient-fiber-dichotomy} gives
\(\nabla\varphi(x)=x-y\), and therefore
\[
c=\|\nabla\varphi(x)\|=\|x-y\|<c,
\]
a contradiction. Thus $\|x-y\|\ge c$ for  $\pi_c^x\text{-a.e. }y.$ 
Consequently, only the second alternative in
\eqref{eq:gradient-fiber-dichotomy} can occur, and
\[
\nabla\varphi(x)
=
c\frac{x-y}{\|x-y\|}
\qquad
\text{for }\pi_c^x\text{-a.e. }y.
\]
Thus, for \(\mu\)-a.e. \(x\in A_{\le c}^c\), the conditional measure
\(\pi_c^x\) is concentrated on the oriented half-line
\begin{equation}
\label{eq:residual-half-line}
\{x+t\tau(x):t\ge c\},
\qquad
\tau(x):=-\frac{\nabla\varphi(x)}{c}.
\end{equation}
Define, on \(\{(x,y):\|x-y\|\ge c\}\),
\[
T(x,y):=y+c\,\frac{x-y}{\|x-y\|}.
\]
For \(\pi_{>c}\)-a.e. \((x,y)\), writing \(r:=\|x-y\|\), one has
$T(x,y)=x+(r-c)\tau(x).$
Furthermore, for every \(z\in\mathbb R^d\),
\[
\varphi(z)-\varphi(x)
\le
\rho_c(\|z-y\|)-\rho_c(\|x-y\|).
\]
Setting \(z:=T(x,y)\), and using \(\|z-y\|=c\) and \(r\ge c\), gives
\[
\varphi(T(x,y))-\varphi(x)
\le
\frac{c^2}{2}-\left(cr-\frac{c^2}{2}\right)
=-c(r-c).
\]
On the other hand, since \(\varphi\) is \(c\)-Lipschitz,
\[
\varphi(T(x,y))-\varphi(x)
\ge
-c\|T(x,y)-x\|
=-c(r-c).
\]
Therefore, \(\pi_{>c}\)-a.e.,
\begin{equation}
\label{case:eq-different}
\varphi(x)-\varphi(T(x,y))
=
c(r-c)
=
c\|T(x,y)-x\|.
\end{equation}

Let \(\widehat\pi\) be the push-forward of \(\pi_{>c}\) by
\((x,y)\mapsto(x,T(x,y))\), set \(u:=\varphi/c\), and let
$\beta:=({\rm Proj}_2)_\#\widehat\pi.$ 
Since \(T(x,y)\) lies on the segment joining \(x\) and \(y\), the measure
\(\beta\) has finite first moment. Since \(u\) is \(1\)-Lipschitz,
\[
u(x)-u(z)\le \|x-z\|
\qquad
\text{for all }x,z,
\]
and, by \eqref{case:eq-different},
\begin{equation}
\label{eq:hatpi-calibration}
u(x)-u(z)=\|x-z\|
\qquad
\widehat\pi\text{-a.e.}
\end{equation}
It follows that \(\widehat\pi\) is an optimal finite coupling of
\(\alpha_{>c}\) and \(\beta\) for the Euclidean distance: for every finite
coupling \(\gamma\) of these two measures,
\[
\int\|x-z\|\,d\gamma(x,z)
\ge
\int u\,d\alpha_{>c}-\int u\,d\beta
=
\int\|x-z\|\,d\widehat\pi(x,z).
\]

Since the truncation leaves the first coordinate unchanged, the
disintegration of \(\widehat\pi\) with respect to \(\alpha_{>c}\) may be
chosen so that
\[
\widehat\pi^x
=
\bigl(T(x,\cdot)\bigr)_\#\pi_c^x
\qquad
\text{for }\alpha_{>c}\text{-a.e. }x.
\]
Together with \eqref{eq:residual-half-line}, this gives
\[
\widehat\pi^x
\bigl(\{x+s\tau(x):s\ge0\}\bigr)=1
\qquad
\text{for }\alpha_{>c}\text{-a.e. }x,
\]
and
\begin{equation}
\label{eq:nontrivial-residual-fibers}
\widehat\pi^x
\bigl(\{x+s\tau(x):s>0\}\bigr)
=
\widehat\pi^x(\mathbb R^d\setminus\{x\})
=
a(x)>0
\qquad
\text{for }\alpha_{>c}\text{-a.e. }x.
\end{equation}
If \(\alpha_{>c}(\mathbb R^d)=0\), then \(\nu_{<c}=\nu\) and
\(H_{<c}:=\nabla f\) is already the desired map. Hence, in the remainder of
 the proof, assume \(\alpha_{>c}(\mathbb R^d)>0\). Set
\[
m:=\alpha_{>c}(\mathbb R^d)
=\beta(\mathbb R^d)
=\widehat\pi(\mathbb R^d\times\mathbb R^d)>0
\]
and normalize $\widetilde\alpha:=\frac{\alpha_{>c}}{m},$ $\widetilde\beta:=\frac{\beta}{m},$ and  $\widetilde\pi:=\frac{\widehat\pi}{m}.$ 
Then \(\widetilde\pi\in\Pi(\widetilde\alpha,\widetilde\beta)\) and is
optimal for the Euclidean distance.

\medskip

\noindent\textit{Step 3. Ray-wise selection for the truncated residual plan.}
Define
\[
\Gamma_u
:=
\left\{
(x,z)\in\mathbb R^d\times\mathbb R^d:
 u(x)-u(z)=\|x-z\|
\right\}.
\]
The set \(\Gamma_u\) is closed, hence \(\sigma\)-compact, and
\(\widetilde\pi\) is concentrated on \(\Gamma_u\). We use the notation
\(T_{\Gamma_u}\) and \(F_{\Gamma_u}\) for the transport set and the
fixed-point set associated with \(\Gamma_u\), as in
\cite[Section~6]{Ambrosio.Pratelli.2003.LNM}. By
\cite[Theorem~6.2]{Ambrosio.Pratelli.2003.LNM}, for the Euclidean norm the
no-crossing condition and assumptions~(ii)--(iii) of
\cite[Theorem~6.1]{Ambrosio.Pratelli.2003.LNM} hold for \(\Gamma_u\).
Assumption~(i) also holds because
\(\widetilde\alpha\ll\mathcal L^d\).

We also note that the isolated fixed-point part of \(\Gamma_u\) carries no
source mass. Indeed, since \(\widehat\pi\) is concentrated on
\(\Gamma_u\), for \(\alpha_{>c}\)-a.e. \(x\in F_{\Gamma_u}\) the
conditional measure \(\widehat\pi^x\) is concentrated on
$\{z:(x,z)\in\Gamma_u\}=\{x\}.$
Thus
\[
\widehat\pi^x(\mathbb R^d\setminus\{x\})=0
\qquad
\text{for }\alpha_{>c}\text{-a.e. }x\in F_{\Gamma_u}.
\]
Comparing with \eqref{eq:nontrivial-residual-fibers}, which gives the
strictly positive value \(a(x)\) for \(\alpha_{>c}\)-a.e. \(x\), yields
\begin{equation}
\label{eq:no-fixed-source-mass}
\alpha_{>c}(F_{\Gamma_u})=0.
\end{equation}

We apply the construction in the proof of
\cite[Theorem~6.1]{Ambrosio.Pratelli.2003.LNM} to
\(\widetilde\pi\). In Step~5 of that proof, the diagonal pairs associated
with the isolated fixed points
\(F_{\Gamma_u}\setminus T_{\Gamma_u}\) are removed before Steps~1--4 are
applied. By \eqref{eq:no-fixed-source-mass}, this removal does not change
\(\widetilde\pi\). Let
$r:\mathbb R^d\times\mathbb R^d\longrightarrow S_c(\mathbb R^d)$
be a Borel extension of the closed-ray label appearing in Step~1 of that
proof. Its measurability follows from the measurable ray decomposition of
\cite[Lemma~6.1]{Ambrosio.Pratelli.2003.LNM} together with the construction
in Step~1 of the proof of
\cite[Theorem~6.1]{Ambrosio.Pratelli.2003.LNM}. Set
$\sigma:=r_\#\widetilde\pi.$
Step~1 gives a disintegration
\begin{equation}
\label{eq:AP-conditional-plan}
\widetilde\pi
=
\int_{S_c(\mathbb R^d)}\widehat\pi_C\,d\sigma(C),
\end{equation}
where \(\widehat\pi_C\) is concentrated on the fiber \(r^{-1}(C)\)
for \(\sigma\)-a.e.~\(C\). Since \(\widetilde\pi\) is concentrated
on the domain on which \(r\) is the Ambrosio--Pratelli closed-ray label,
\(\widehat\pi_C\) is in addition concentrated on \(C\times C\) for
\(\sigma\)-a.e. \(C\). Set $\alpha_C:=({\rm Proj}_1)_\#\widehat\pi_C$ and 
$\beta_C:=({\rm Proj}_2)_\#\widehat\pi_C.$ 
Then
\begin{equation}
\label{eq:AP-conditional-marginals}
\widetilde\alpha
=
\int_{S_c(\mathbb R^d)}\alpha_C\,d\sigma(C),
\qquad
\widetilde\beta
=
\int_{S_c(\mathbb R^d)}\beta_C\,d\sigma(C), 
\end{equation}
which are the conditional source and target measures used in
Steps~2--4 of the proof of
\cite[Theorem~6.1]{Ambrosio.Pratelli.2003.LNM}. Since
\(\widetilde\alpha\) and \(\widetilde\beta\) have finite first moments,
the same is true of \(\alpha_C\) and \(\beta_C\) for
\(\sigma\)-a.e.~\(C\). By Step~2 of that proof, \(\alpha_C\) is
non-atomic and concentrated on the relative interior of \(C\) for
\(\sigma\)-a.e.~\(C\). Hence Step~3, using
\cite[Theorem~5.1]{Ambrosio.Pratelli.2003.LNM}, produces a nondecreasing
map
$R_C:\operatorname{relint}(C)\to C$
such that
\begin{equation}
\label{eq:AP-conditional-push}
(R_C)_\#\alpha_C=\beta_C
\qquad
\text{for }\sigma\text{-a.e. }C.
\end{equation}
Finally, Step~4, using \cite[Theorem~9.3]{Ambrosio.Pratelli.2003.LNM},
glues these conditional maps into a Borel map on the transport set. Since
\(T_{\Gamma_u}\) is Borel, we extend this map arbitrarily outside
\(T_{\Gamma_u}\) and denote the resulting Borel map by
\(R:\mathbb R^d\to\mathbb R^d\). It satisfies
\begin{equation}
\label{eq:AP-glued-map}
R(x)=R_C(x)
\qquad
\text{for }\alpha_C\text{-a.e. }x,
\quad
\text{for }\sigma\text{-a.e. }C.
\end{equation}
Consequently,
\begin{equation}
\label{eq:conditional-ray-transport}
R_\#\alpha_C=\beta_C
\qquad
\text{for }\sigma\text{-a.e. }C,
\end{equation}
and, after integration in \(C\),
$R_\#\alpha_{>c}=\beta.$ Moreover, \cite[Theorem~6.1(c)]{Ambrosio.Pratelli.2003.LNM}, applied to
\(\widetilde\pi\) with the convex function \(t\mapsto t\), and then
multiplied by \(m\), gives
\begin{equation}
\label{eq:AP-cost-comparison}
\int\|x-R(x)\|\,d\alpha_{>c}(x)
\le
\int\|x-z\|\,d\widehat\pi(x,z).
\end{equation}
Since \((I\times R)_\#\alpha_{>c}\) is a finite coupling of
\(\alpha_{>c}\) and \(\beta\), the optimality of \(\widehat\pi\) gives
the reverse inequality. Therefore equality holds in
\eqref{eq:AP-cost-comparison}. Using \eqref{eq:hatpi-calibration},
we obtain
\[
\int
\Bigl[\|x-R(x)\|-u(x)+u(R(x))\Bigr]
\,d\alpha_{>c}(x)=0.
\]
The integrand is nonnegative because \(u\) is \(1\)-Lipschitz. Hence
\begin{equation}
\label{eq:R-calibrated}
u(x)-u(R(x))=\|x-R(x)\|
\qquad
\text{for }\alpha_{>c}\text{-a.e. }x.
\end{equation}

\medskip

\noindent\textit{Step 4. Recovering the original residual target.}
Define $F(x,y):=(x,T(x,y)),$ 
so that \(F_\#\pi_{>c}=\widehat\pi\). We disintegrate the normalized
residual plan with respect to the same closed-ray label as in Step~3. Since
\[
(r\circ F)_\#\left(\frac{\pi_{>c}}m\right)
=
r_\#\widetilde\pi
=
\sigma,
\]
there exist probability measures \(\{(\pi_{>c})_C\}_C\) such that
\begin{equation}
\label{eq:original-residual-ray-disintegration}
\frac{\pi_{>c}}m
=
\int_{S_c(\mathbb R^d)}(\pi_{>c})_C\,d\sigma(C),
\end{equation}
where \((\pi_{>c})_C\) is concentrated on
\((r\circ F)^{-1}(C)\) for \(\sigma\)-a.e. \(C\). Pushing
\eqref{eq:original-residual-ray-disintegration} forward by \(F\) gives
\[
\widetilde\pi
=
\int_{S_c(\mathbb R^d)}F_\#(\pi_{>c})_C\,d\sigma(C).
\]
For \(\sigma\)-a.e. \(C\), \(F_\#(\pi_{>c})_C\) is concentrated on
\(r^{-1}(C)\). Hence this representation and
\eqref{eq:AP-conditional-plan} are two disintegrations of the same measure
with respect to the same Borel map \(r\). By uniqueness of disintegration
\cite[Theorem~9.2]{Ambrosio.Pratelli.2003.LNM},
\begin{equation}
\label{eq:conditional-truncation-pushforward}
F_\#(\pi_{>c})_C=\widehat\pi_C
\qquad
\text{for }\sigma\text{-a.e. }C.
\end{equation}
In particular, since \({\rm Proj}_1\circ F={\rm Proj}_1\),
\begin{equation}
\label{eq:conditional-original-source}
({\rm Proj}_1)_\#(\pi_{>c})_C=\alpha_C
\qquad
\text{for }\sigma\text{-a.e. }C.
\end{equation}

Let \(\tau_C\) denote the unit orientation of the ray \(C\), chosen in the
direction in which \(u\) decreases with unit slope; this orientation is the
one in \cite[Theorem~6.2(i)]{Ambrosio.Pratelli.2003.LNM}. Because
\(x\mapsto\widehat\pi^x\) is a Borel probability kernel and the subsets of
\(\mathbb R^d\times\mathbb R^d\) appearing below are Borel, the corresponding
section probabilities are Borel functions of \(x\). Hence we may choose a
Borel set \(X_0\subset A_{\le c}^c\) of full
\(\alpha_{>c}\)-measure such that, for every \(x\in X_0\),
\eqref{eq:nontrivial-residual-fibers} holds and
\(\widehat\pi^x\) is concentrated both on $\{z:(x,z)\in\Gamma_u\}$ and 
$\{x+s\tau(x):s\ge0\}.$  Since
\eqref{eq:AP-conditional-marginals} gives
$\widetilde\alpha
=
\int_{S_c(\mathbb R^d)}\alpha_C\,d\sigma(C),$
we have $\alpha_C(X_0)=1$ for $\sigma$-a.e.~$C.$ 
Fix such a \(C\). By Step~2 of the Ambrosio--Pratelli construction,
\(\alpha_C\) is concentrated on \(\operatorname{relint}(C)\). Thus, for
\(\alpha_C\)-a.e. \(x\), we have \(x\in X_0\cap\operatorname{relint}(C)\).
By \eqref{eq:nontrivial-residual-fibers}, there exists a nontrivial truncated
pair starting from \(x\), say \(z\ne x\), with \((x,z)\in\Gamma_u\). By
\eqref{eq:residual-half-line}, $z=x+s\tau(x)$ for some $s>0$. 
Since \(z\neq x\) and \((x,z)\in\Gamma_u\), let \(S'\) be a maximal
transport ray containing the open segment \(]]x,z[[\), and set
$C':=\overline{S'}.$
Since \(x\in\operatorname{relint}(C)\cap C'\), the no-crossing property of
\cite[Theorem~6.2]{Ambrosio.Pratelli.2003.LNM} excludes \(S'\) and the
transport ray underlying \(C\) from having different orientations. If their
orientations agree, the two rays overlap, and maximality forces them to
coincide. Consequently,
$C'=C.$
Hence the orientation of the pair \((x,z)\) is \(\tau_C\). Since $z=x+s\tau(x)$ for some $s>0$, 
we also have $z=x+s\tau_C,$ 
and therefore
\begin{equation}
\label{eq:tau-ray-identification}
\tau(x)=\tau_C
\qquad
\text{for }\alpha_C\text{-a.e. }x,
\quad
\text{for }\sigma\text{-a.e. }C.
\end{equation}

For \((\pi_{>c})_C\)-a.e. \((x,y)\), write \(z:=T(x,y)\). By the definition
of \(T\) and \eqref{eq:residual-half-line},
$y=z+c\tau(x).$
Using \eqref{eq:conditional-original-source} and
\eqref{eq:tau-ray-identification}, it follows that
\[
y=z+c\tau_C
\qquad
\text{for }(\pi_{>c})_C\text{-a.e. }(x,y).
\]
Define $\Theta_C(w):=w+c\tau_C$ and  $(\nu_{>c})_C:=({\rm Proj}_2)_\#(\pi_{>c})_C.$ 
Then
\[
\frac{\nu_{>c}}m
=
\int_{S_c(\mathbb R^d)}(\nu_{>c})_C\,d\sigma(C).
\]
Moreover, \eqref{eq:conditional-truncation-pushforward} shows that the
\(z\)-marginal of \((\pi_{>c})_C\) is \(\beta_C\). Consequently,
\begin{equation}
\label{eq:conditional-original-target}
(\nu_{>c})_C=(\Theta_C)_\#\beta_C
\qquad
\text{for }\sigma\text{-a.e. }C.
\end{equation}

Define $G(x):=R(x)+c\tau(x)$ 
on \(A_{\le c}^c\), after an arbitrary modification on an
\(\alpha_{>c}\)-negligible set. The map \(G\) is Borel. By
\eqref{eq:AP-glued-map} and \eqref{eq:tau-ray-identification}, for
\(\alpha_C\)-a.e. \(x\),
\[
G(x)=R_C(x)+c\tau_C=\Theta_C(R_C(x)).
\]
Hence, by \eqref{eq:conditional-ray-transport} and
\eqref{eq:conditional-original-target},
\[
G_\#\alpha_C
=
(\Theta_C)_\#(R_\#\alpha_C)
=
(\Theta_C)_\#\beta_C
=
(\nu_{>c})_C
\]
for \(\sigma\)-a.e. \(C\). Integrating over \(C\) and multiplying by \(m\)
gives
\begin{equation}
\label{marginal-identify-2}
G_\#\alpha_{>c}=\nu_{>c}.
\end{equation}

\medskip

\noindent\textit{Step 5. Cost comparison on the residual part.}
By \eqref{eq:AP-conditional-marginals} and Step~2 of the
Ambrosio--Pratelli construction, for \(\alpha_{>c}\)-a.e. \(x\) there
is a unique closed maximal ray \(C\) such that \(x\in\operatorname{relint}(C)\).
For such an \(x\), \eqref{eq:AP-glued-map} gives \(R(x)\in C\). If
\(R(x)=x\), set \(s(x):=0\). Otherwise,
\eqref{eq:R-calibrated} implies \((x,R(x))\in\Gamma_u\). Since \(x\) is
in the relative interior of \(C\), the same no-crossing and maximality
argument used in Step~4 shows that the maximal ray associated with
\((x,R(x))\) is \(C\). By the orientation convention for \(\Gamma_u\),
\(R(x)\) lies forward from \(x\) along \(C\). Hence, in either case, there exists
\(s(x)\ge0\) such that
$R(x)=x+s(x)\tau_C.$
Using \eqref{eq:tau-ray-identification}, $R(x)=x+s(x)\tau(x).$ 
Therefore
\[
G(x)
=
R(x)+c\tau(x)
=
x+(s(x)+c)\tau(x),
\]
and hence
\[
\|x-G(x)\|
=
\|x-R(x)\|+c.
\]
Since
\(\|x-G(x)\|\ge c\),
\[
\rho_c(\|x-G(x)\|)
=
c\|x-R(x)\|
+
\frac{c^2}{2}.
\]
Thus
\[
\int
\rho_c(\|x-G(x)\|)
\,d\alpha_{>c}(x)
=
c
\int
\|x-R(x)\|
\,d\alpha_{>c}(x)
+
\frac{c^2}{2}
\alpha_{>c}(\mathbb R^d).
\]

On the other hand, for
\(\pi_{>c}\)-a.e. \((x,y)\), with $z:=T(x,y),$ 
we have $\|x-y\|
=
\|x-z\|+c.$ 
Therefore $\rho_c(\|x-y\|)
=
c\|x-z\|
+
\frac{c^2}{2},$ 
and hence
\[
\int
\rho_c(\|x-y\|)
\,d\pi_{>c}(x,y)
=
c
\int
\|x-z\|
\,d\widehat\pi(x,z)
+
\frac{c^2}{2}
\pi_{>c}
(\mathbb R^d\times\mathbb R^d).
\]
Since $\alpha_{>c}(\mathbb R^d)
=
\pi_{>c}
(\mathbb R^d\times\mathbb R^d)$ 
and, by
\eqref{eq:AP-cost-comparison},
\[
\int
\|x-R(x)\|
\,d\alpha_{>c}(x)
\le
\int
\|x-z\|
\,d\widehat\pi(x,z),
\]
we obtain
\[
\int
\rho_c(\|x-G(x)\|)
\,d\alpha_{>c}(x)
\le
\int
\rho_c(\|x-y\|)
\,d\pi_{>c}(x,y).
\]

\medskip

\noindent\textit{Step 6. Construction of the global map.}
We have constructed
\[
H_{<c}:A_{\le c}\longrightarrow\mathbb R^d,
\qquad
H_{<c}(x):=\nabla f(x),
\]
such that $(H_{<c})_\#
\mu|_{A_{\le c}}
=
\nu_{<c},$ 
and
$G:A_{\le c}^c\to\mathbb R^d$
such that $G_\#\alpha_{>c}
=
G_\#
\mu|_{A_{\le c}^c}
=
\nu_{>c}.$ 
Since \(H_{<c}=\nabla f\) is defined \(\mu|_{A_{\le c}}\)-a.e. and
\(G\) is defined \(\alpha_{>c}=\mu|_{A_{\le c}^c}\)-a.e., we choose
Borel representatives of these maps on \(A_{\le c}\) and
\(A_{\le c}^c\), respectively, modifying them only on sets of
\(\mu\)-measure zero. Since \(A_{\le c}\) is Borel, the piecewise map
\[
H_c(x)
:=
\begin{cases}
H_{<c}(x),
&
x\in A_{\le c},
\\[1mm]
G(x),
&
x\in A_{\le c}^c
\end{cases}
\]
is Borel measurable.
For every Borel set
\(B\subset\mathbb R^d\),
\begin{align*}
(H_c)_\#\mu(B)
&=
\mu
\bigl(
A_{\le c}\cap H_{<c}^{-1}(B)
\bigr)
+
\mu
\bigl(
A_{\le c}^c\cap G^{-1}(B)
\bigr)
\\
&=
(H_{<c})_\#
\mu|_{A_{\le c}}(B)
+
G_\#
\mu|_{A_{\le c}^c}(B)
\\
&=
\nu_{<c}(B)
+
\nu_{>c}(B)
=
\nu(B).
\end{align*}
Thus $(I\times H_c)_\#\mu
\in
\Pi(\mu,\nu).$ 
Finally,
\begin{align*}
\int
\rho_c(\|x-H_c(x)\|)
\,d\mu(x)
&=
\int_{A_{\le c}}
\rho_c(\|x-H_{<c}(x)\|)
\,d\mu(x)
\\
&\quad+
\int_{A_{\le c}^c}
\rho_c(\|x-G(x)\|)
\,d\mu(x)
\\
&\le
\int
\rho_c(\|x-y\|)
\,d\pi_{\le c}(x,y)
\\
&\quad+
\int
\rho_c(\|x-y\|)
\,d\pi_{>c}(x,y)
\\
&=
\int_{\mathbb R^d\times\mathbb R^d}
\rho_c(\|x-y\|)
\,d\pi_c(x,y).
\end{align*}
Since $(I\times H_c)_\#\mu
\in
\Pi(\mu,\nu)$ 
and \(\pi_c\) is optimal, the reverse inequality follows from
optimality. Hence equality holds and $(I\times H_c)_\#\mu$ 
is a Huber optimal plan. Therefore \(H_c\) is a Huber optimal
transport map from \(\mu\) to \(\nu\).
\end{proof}

\begin{proof}[Proof of \cref{prop:ot_is_huber_ot}]
Clearly, {\rm (i)} implies {\rm (iv)}. Assume {\rm (iv)}, and let
\(\pi_c\) be a Huber optimal plan concentrated on
\(\{(x,y):\|x-y\|\le c\}\). Since
\(\rho_c(t)=t^2/2\) for \(t\le c\), we have
\[
\frac12\mathcal W_2^2(\mu,\nu)
\le
\int \frac12\|x-y\|^2\,d\pi_c(x,y)
=
\int \rho_c(\|x-y\|)\,d\pi_c(x,y)
=
\mathcal T_{\rho_c}(\mu,\nu).
\]
On the other hand, since \(\rho_c(t)\le t^2/2\) for every \(t\ge0\),
\(\mathcal T_{\rho_c}(\mu,\nu)\le\frac12\mathcal W_2^2(\mu,\nu)\). Thus {\rm (ii)}
holds.

Assume now {\rm (ii)}, and let \(\pi_*\) be any solution of
\eqref{eq:OT-plan}. Then
\[
\mathcal T_{\rho_c}(\mu,\nu)
\le
\int \rho_c(\|x-y\|)\,d\pi_*(x,y)
\le
\int \frac12\|x-y\|^2\,d\pi_*(x,y)
=
\frac12\mathcal W_2^2(\mu,\nu).
\]
By {\rm (ii)}, equality holds throughout. Hence \(\pi_*\) is Huber optimal and
\[
0
=
\int
\left(
\frac12\|x-y\|^2-\rho_c(\|x-y\|)
\right)
\,d\pi_*(x,y).
\]
Since
\[
\frac12t^2-\rho_c(t)
=
\frac12(t-c)^2\mathbf 1_{\{t>c\}},
\]
we conclude that
\(\pi_*(\{(x,y):\|x-y\|>c\})=0\). Since \(\pi_*\) was arbitrary,
{\rm (iii)} follows.

Finally, the quadratic transport problem admits a solution for
\(\mu,\nu\in\mathcal P_2(\mathbb R^d)\); see
\cite{Cuesta1989NotesOT,brenier1991polar}. Therefore, {\rm (iii)} implies
{\rm (i)}.
\end{proof}
\begin{proof}[Proof of \Cref{propositon:Stable-cost-quantitative}]
The inequality \(\rho_c(t)\le t^2/2\) for every \(t\ge0\) yields
\(\mathcal T_{\rho_c}(\mu,\nu)\le\frac12\mathcal W_2^2(\mu,\nu)\). Moreover,
\begin{align*}
\frac{\|x-y\|^2}{2}-\rho_c(\|x-y\|)
&=
\begin{cases}
0,
& \|x-y\|<c,\\[1mm]
\displaystyle
\frac{\|x-y\|^2}{2}
-c\|x-y\|
+\frac{c^2}{2},
& \|x-y\|\ge c,
\end{cases}\\
&\le
\left(\frac{\|x-y\|^2}{2}-\frac{c^2}{2}\right)_+\\
&\le
\left(\|x\|^2+\|y\|^2-\frac{c^2}{2}\right)_+\\
&\le
\left(\|x\|^2-\frac{c^2}{4}\right)_+
+
\left(\|y\|^2-\frac{c^2}{4}\right)_+.
\end{align*}
Here, the last inequality follows from
\[
(a+b-r)_+
\le
\left(a-\frac r2\right)_+
+
\left(b-\frac r2\right)_+,
\qquad a,b,r\ge0.
\]

Let \(\pi_c\) be a Huber optimal plan. Since \(\pi_c\in\Pi(\mu,\nu)\),
\begin{align*}
\frac12\mathcal W_2^2(\mu,\nu)-\mathcal T_{\rho_c}(\mu,\nu)
&\le
\int
\left(
\frac{\|x-y\|^2}{2}
-\rho_c(\|x-y\|)
\right)
\,d\pi_c(x,y)\\
&\le
\int\left(\|x\|^2-\frac{c^2}{4}\right)_+\,d\mu(x)
+
\int\left(\|y\|^2-\frac{c^2}{4}\right)_+\,d\nu(y).
\end{align*}
This proves the claimed estimate. Finally, both integrands converge pointwise
to zero as \(c\to\infty\) and are dominated respectively by
\(\|x\|^2\) and \(\|y\|^2\). The conclusion follows from the dominated
convergence theorem.
\end{proof}
\begin{proof}[Proof of \cref{theorem:stability_map}]
There exists a convex function \(f_0\) such that
\(H_0=\nabla f_0\); see
\cite{Cuesta1989NotesOT,brenier1991polar}. Since \(H_0\) is
\(\alpha\)-Lipschitz, the convex conjugate \(f_0^*\) is
\(1/\alpha\)-strongly convex. Therefore, for \(p,q\in\operatorname{dom}(f_0^*)\)
and \(v\in\partial f_0^*(q)\),
\[
f_0^*(p)-f_0^*(q)
\ge
\langle v,p-q\rangle+\frac{1}{2\alpha}\|p-q\|^2.
\]
Since \(x\in\partial f_0^*(H_0(x))\) for \(\mu\)-a.e. \(x\), we obtain
\[
f_0^*(H_c(x))-f_0^*(H_0(x))
\ge
\langle x,H_c(x)-H_0(x)\rangle
+
\frac{1}{2\alpha}\|H_c(x)-H_0(x)\|^2.
\]
Moreover, \((H_c)_\#\mu=(H_0)_\#\mu=\nu\), and hence
\[
\int f_0^*(H_c)\,d\mu=\int f_0^*(H_0)\,d\mu.
\]
Consequently,
\begin{equation}
\label{eq:stability-map-first}
0
\ge
\int\langle x,H_c-H_0\rangle\,d\mu
+
\frac{1}{2\alpha}\|H_c-H_0\|_{L^2(\mu)}^2.
\end{equation}
Since \(H_c\) and \(H_0\) have the same push-forward,
\[
\int\|H_c(x)\|^2\,d\mu(x)
=
\int\|H_0(x)\|^2\,d\mu(x),
\]
and therefore
\[
\int\langle x,H_c-H_0\rangle\,d\mu
=
\frac12\int\|x-H_0(x)\|^2\,d\mu
-
\frac12\int\|x-H_c(x)\|^2\,d\mu.
\]
Let \(\pi_c=(I\times H_c)_\#\mu\). Then
\[
\frac12\int\|x-H_c(x)\|^2\,d\mu
=
\mathcal T_{\rho_c}(\mu,\nu)
+
\int
\left(
\frac12\|x-y\|^2-\rho_c(\|x-y\|)
\right)
\,d\pi_c(x,y).
\]
Thus \eqref{eq:stability-map-first} yields
\begin{align*}
0
&\ge
\frac12\mathcal W_2^2(\mu,\nu)-\mathcal T_{\rho_c}(\mu,\nu)\\
&\quad
-
\int
\left(
\frac12\|x-y\|^2-\rho_c(\|x-y\|)
\right)
\,d\pi_c(x,y)
+
\frac{1}{2\alpha}\|H_c-H_0\|_{L^2(\mu)}^2.
\end{align*}
Since
\(\frac12\mathcal W_2^2(\mu,\nu)\ge\mathcal T_{\rho_c}(\mu,\nu)\), it follows that
\[
\frac{1}{2\alpha}\|H_c-H_0\|_{L^2(\mu)}^2
\le
\int
\left(
\frac12\|x-y\|^2-\rho_c(\|x-y\|)
\right)
\,d\pi_c(x,y).
\]
Arguing as in the proof of \Cref{propositon:Stable-cost-quantitative}, we have
\[
\frac12\|x-y\|^2-\rho_c(\|x-y\|)
\le
\left(\|x\|^2-\frac{c^2}{4}\right)_+
+
\left(\|y\|^2-\frac{c^2}{4}\right)_+.
\]
Using the marginals of \(\pi_c\), we conclude that
\begin{align*}
\|H_c-H_0\|_{L^2(\mu)}^2
\le
2\alpha\bigg[
&\int\left(\|x\|^2-\frac{c^2}{4}\right)_+\,d\mu(x)\\
&+
\int\left(\|y\|^2-\frac{c^2}{4}\right)_+\,d\nu(y)
\bigg],
\end{align*}
which proves the result.
\end{proof}
\subsection{Proofs in \cref{sect:Barycenters}}
\begin{proof}[Proof of \cref{propo:finite}]
Since \(\rho_c(t)\le ct\) for every \(t\ge0\), for any
\(\pi\in\Pi(\mu,\nu)\),
\[
\int\rho_c(\|x-y\|)\,d\pi(x,y)
\le
c\int\|x-y\|\,d\pi(x,y)
\le
c\int\|x\|\,d\mu(x)
+
c\int\|y\|\,d\nu(y).
\]
Taking the infimum over \(\pi\in\Pi(\mu,\nu)\) gives
\[
\mathcal T_{\rho_c}(\nu,\mu)
-
c\int\|x\|\,d\mu(x)
\le
c\int\|y\|\,d\nu(y).
\]

On the other hand, since
\(\rho_c(t)\ge ct-c^2/2\), the reverse triangle inequality yields, for every
\(\pi\in\Pi(\mu,\nu)\),
\begin{align*}
\int\rho_c(\|x-y\|)\,d\pi(x,y)
&\ge
c\int\|x-y\|\,d\pi(x,y)-\frac{c^2}{2}\\
&\ge
c\int\|x\|\,d\mu(x)
-
c\int\|y\|\,d\nu(y)
-
\frac{c^2}{2}.
\end{align*}
Taking the infimum over \(\pi\in\Pi(\mu,\nu)\), we obtain
\[
\mathcal T_{\rho_c}(\nu,\mu)
-
c\int\|x\|\,d\mu(x)
\ge
-
c\int\|y\|\,d\nu(y)
-
\frac{c^2}{2}.
\]
Combining the two inequalities proves \eqref{eq:bound-integrand-bary}.

Finally,
\begin{align*}
|\Gamma_{c,\mathbf P}(\nu)|
&\le
\int
\left|
\mathcal T_{\rho_c}(\nu,\mu)
-
c\int\|x\|\,d\mu(x)
\right|
\,d\mathbf P(\mu)\\
&\le
c\int\|y\|\,d\nu(y)+\frac{c^2}{2}<\infty,
\end{align*}
which proves the second claim.
\end{proof}

\begin{proof}[Proof of \cref{theorem:Stability}]
We first prove (i). Fix \(\nu\) such that
\(\Gamma_{c,{\bf P}_n}(\nu)\leq\varepsilon\).
Since \({\bf P}_n\to{\bf P}\), the family
\(\{{\bf P}_n:n\in\mathbb N\}\cup\{{\bf P}\}\) is tight. Hence, there exists a
compact set \(\mathcal K\subset\mathcal P_1(\mathbb R^d)\), independent of
\(n\), such that $\inf_{n\in\mathbb N}{\bf P}_n(\mathcal K)\geq\frac34.$ 
Hence,
\begin{multline}
\label{eq:bound-in-K}
\int_{\mathcal P_1(\mathbb R^d)\setminus\mathcal K}
\left(
c\int\|x\|\,d\mu(x)-\mathcal T_{\rho_c}(\nu,\mu)
\right)d{\bf P}_n(\mu)
+\varepsilon
\\
\geq
\int_{\mathcal K}
\left(
\mathcal T_{\rho_c}(\nu,\mu)
-c\int\|x\|\,d\mu(x)
\right)d{\bf P}_n(\mu).
\end{multline}
By \eqref{eq:bound-integrand-bary},
\begin{align}
\label{bound-ink-left}
&\int_{\mathcal P_1(\mathbb R^d)\setminus\mathcal K}
\left(
c\int\|x\|\,d\mu(x)-\mathcal T_{\rho_c}(\nu,\mu)
\right)d{\bf P}_n(\mu)
\nonumber\\
&\qquad\leq
\bigl(1-{\bf P}_n(\mathcal K)\bigr)
\left(
c\int\|y\|\,d\nu(y)+\frac{c^2}{2}
\right)
\nonumber\\
&\qquad\leq
\frac c4\int\|y\|\,d\nu(y)+\frac{c^2}{8}.
\end{align}
Furthermore, since \(\rho_c(t)\geq ct-c^2/2\), we obtain
\[
\mathcal T_{\rho_c}(\nu,\mu)
\geq
c\mathcal W_1(\nu,\mu)-\frac{c^2}{2}
\geq
c\int\|y\|\,d\nu(y)
-c\int\|x\|\,d\mu(x)
-\frac{c^2}{2},
\]
where the last inequality follows from the fact that
\(x\mapsto\|x\|\) is \(1\)-Lipschitz. Consequently,
\[
\mathcal T_{\rho_c}(\nu,\mu)
-c\int\|x\|\,d\mu(x)
\geq
c\int\|y\|\,d\nu(y)
-2c\int\|x\|\,d\mu(x)
-\frac{c^2}{2}.
\]
Since \(\mathcal K\) is compact in \(\mathcal P_1(\mathbb R^d)\),
\[
K_1:=\sup_{\mu\in\mathcal K}\int\|x\|\,d\mu(x)<\infty.
\]
Setting \(C_{\mathcal K}:=2cK_1+c^2/2\), we obtain, for every
\(\mu\in\mathcal K\),
\[
\mathcal T_{\rho_c}(\nu,\mu)
-c\int\|x\|\,d\mu(x)
\geq
c\int\|y\|\,d\nu(y)-C_{\mathcal K}.
\]
Thus,
\begin{align}
\label{bound-ink-right}
&\int_{\mathcal K}
\left(
\mathcal T_{\rho_c}(\nu,\mu)
-c\int\|x\|\,d\mu(x)
\right)d{\bf P}_n(\mu)
\nonumber\\
&\geq
{\bf P}_n(\mathcal K)
\left(
c\int\|y\|\,d\nu(y)-C_{\mathcal K}
\right)
\nonumber\\
&\qquad\geq
\frac{3c}{4}\int\|y\|\,d\nu(y)-C_{\mathcal K}.
\end{align}
Combining \eqref{eq:bound-in-K}, \eqref{bound-ink-left}, and
\eqref{bound-ink-right}, we derive
\begin{equation}
\label{eq:bound-integral-nu}
\int\|y\|\,d\nu(y)
\leq
M(\varepsilon)
:=
\frac{2}{c}
\left(
C_{\mathcal K}+\frac{c^2}{8}+\varepsilon
\right).
\end{equation}
Moreover, \eqref{eq:bound-integrand-bary} gives
\[
\Gamma_{c,{\bf P}_n}(\nu)
\geq
-c\int\|y\|\,d\nu(y)-\frac{c^2}{2}.
\]
Together with \eqref{eq:bound-integral-nu}, this proves (i), after enlarging
the constant \(C\) if necessary.

We now prove (ii). Notice that
\begin{align}
\label{eq:value-diract-at-0}
\Gamma_{c,{\bf P}_n}(\nu_n)
&\leq
\Gamma_{c,{\bf P}_n}(\delta_0)+\varepsilon_n
\nonumber\\
&=
\int_{\mathcal P_1(\mathbb R^d)}
\left[
\int_{\mathbb R^d}
\bigl(\rho_c(\|x\|)-c\|x\|\bigr)\,d\mu(x)
\right]d{\bf P}_n(\mu)
+\varepsilon_n
\nonumber\\
&\leq\varepsilon_n.
\end{align}
Since \(\varepsilon_n\to0\), the sequence \(\{\varepsilon_n\}_n\) is bounded.
Thus, \eqref{eq:bound-integral-nu} yields $\sup_{n\in\mathbb N}\int\|y\|\,d\nu_n(y)\leq M$ 
for some \(M<\infty\). Define
\[
\mathcal M_M
:=
\left\{
\nu\in\mathcal P(\mathbb R^d):
\int\|y\|\,d\nu(y)\leq M
\right\},
\]
endowed with the topology of weak convergence.
The set \(\mathcal M_M\) is weakly closed: indeed, if
\(\gamma_n\rightharpoonup\gamma\) and \(\gamma_n\in\mathcal M_M\), then the
Portmanteau theorem gives
\[
\int\|y\|\,d\gamma(y)
\leq
\liminf_{n\to\infty}\int\|y\|\,d\gamma_n(y)
\leq M.
\]
Moreover, by Markov's inequality, for every \(r>0\),
\[
\gamma(\mathbb R^d\setminus B_r(0))
\leq
\frac1r\int\|y\|\,d\gamma(y)
\leq
\frac Mr,
\qquad \gamma\in\mathcal M_M.
\]
Thus \(\mathcal M_M\) is tight and weakly closed, and is therefore compact by
Prokhorov's theorem. 
Since \(\{\nu_n\}_n\subset\mathcal M_M\), the sequence is relatively compact
in \(\mathcal P(\mathbb R^d)\).

We now show that every limit point belongs to \(m_c({\bf P})\). Passing to a
subsequence, assume that \(\nu_n\rightharpoonup\nu_*\). By the preceding
argument, \(\nu_*\in\mathcal M_M\subset\mathcal P_1(\mathbb R^d)\).

For fixed \(\nu\in\mathcal M_M\), the function
\[
\mathcal P_1(\mathbb R^d)\ni\mu
\longmapsto
\mathcal T_{\rho_c}(\nu,\mu)
-c\int\|x\|\,d\mu(x)
\]
is bounded by \eqref{eq:bound-integrand-bary} and continuous.
Consequently, \(\Gamma_{c,{\bf P}_n}(\nu)\to
\Gamma_{c,{\bf P}}(\nu)\). By the approximate minimality of \(\nu_n\), for every
\(\nu\in\mathcal P_1(\mathbb R^d)\),
\[
\Gamma_{c,{\bf P}_n}(\nu_n)
\leq
\Gamma_{c,{\bf P}_n}(\nu)+\varepsilon_n,
\]
and therefore
\begin{equation}
\label{eq:limsup-before-fatou}
\limsup_{n\to\infty}\Gamma_{c,{\bf P}_n}(\nu_n)
\leq
\Gamma_{c,{\bf P}}(\nu).
\end{equation}

Define
\[
\beta(\mu,\nu)
:=
\mathcal T_{\rho_c}(\nu,\mu)
-c\int\|x\|\,d\mu(x),
\qquad
(\mu,\nu)\in
\mathcal P_1(\mathbb R^d)\times\mathcal M_M.
\]
The function \(\beta\) is lower semicontinuous and, by
\eqref{eq:bound-integrand-bary}, satisfies $\beta(\mu,\nu)\geq-cM-\frac{c^2}{2}.$ 
Since \({\bf P}_n\to{\bf P}\) and \(\nu_n\rightharpoonup\nu_*\), we have
\[
{\bf P}_n\otimes\delta_{\nu_n}
\longrightarrow
{\bf P}\otimes\delta_{\nu_*}
\quad\text{in}\quad
\mathcal P\bigl(
\mathcal P_1(\mathbb R^d)\times\mathcal M_M
\bigr).
\]
Therefore, the Portmanteau theorem yields
\begin{align*}
\liminf_{n\to\infty}\Gamma_{c,{\bf P}_n}(\nu_n)
&=
\liminf_{n\to\infty}
\int\beta(\mu,\nu)\,
d({\bf P}_n\otimes\delta_{\nu_n})(\mu,\nu)\\
&\geq
\int\beta(\mu,\nu)\,
d({\bf P}\otimes\delta_{\nu_*})(\mu,\nu)=
\Gamma_{c,{\bf P}}(\nu_*).
\end{align*}
Combining this inequality with \eqref{eq:limsup-before-fatou}, we obtain $\Gamma_{c,{\bf P}}(\nu_*)
\leq
\Gamma_{c,{\bf P}}(\nu)$ for every $\nu\in\mathcal P_1(\mathbb R^d).$ 
Hence \(\nu_*\in m_c({\bf P})\), which concludes the proof.
\end{proof}
\begin{proof}[Proof of \cref{theorem:Existence-General}]
For every \(\mu\in\mathcal P_1(\mathbb R^d)\), the map
\(\nu\mapsto\mathcal T_{\rho_c}(\nu,\mu)\) is convex. Indeed, if
\(\pi_i\in\Pi(\nu_i,\mu)\), \(i=0,1\), and \(t\in[0,1]\), then
\((1-t)\pi_0+t\pi_1\in\Pi((1-t)\nu_0+t\nu_1,\mu)\). Taking infima shows that
\[
\mathcal T_{\rho_c}((1-t)\nu_0+t\nu_1,\mu)
\leq
(1-t)\mathcal T_{\rho_c}(\nu_0,\mu)
+t\mathcal T_{\rho_c}(\nu_1,\mu).
\]
Thus \(\Gamma_{c,{\bf P}}\) is convex, and consequently
\(m_c({\bf P})\) is convex.

To prove existence, first notice that \(\Gamma_{c,{\bf P}}\) is bounded from
below. Indeed, if \(\Gamma_{c,{\bf P}}(\nu)>0\), this is immediate, while if
\(\Gamma_{c,{\bf P}}(\nu)\leq0\), \cref{theorem:Stability}~(i), applied with
\(\varepsilon=0\), gives
\(\Gamma_{c,{\bf P}}(\nu)\geq-C\). Hence
\[
-\infty<
\inf_{\nu\in\mathcal P_1(\mathbb R^d)}
\Gamma_{c,{\bf P}}(\nu)
<\infty.
\]
Choose a minimizing sequence \(\{\nu_n\}_n\) such that
\[
\Gamma_{c,{\bf P}}(\nu_n)
\leq
\Gamma_{c,{\bf P}}(\nu)+\frac1n
\qquad
\text{for every }\nu\in\mathcal P_1(\mathbb R^d).
\]
Applying \cref{theorem:Stability}~(ii) with
\({\bf P}_n={\bf P}\) and \(\varepsilon_n=1/n\), we obtain a subsequence
converging weakly to some \(\nu_*\in m_c({\bf P})\). Thus
\(m_c({\bf P})\neq\varnothing\). Furthermore, \cref{theorem:Stability}~(i), applied with
\(\varepsilon=0\), gives a uniform bound on
\(\int\|x\|\,d\nu(x)=\mathcal W_1(\nu,\delta_0)\) for
\(\nu\in m_c({\bf P})\). Hence \(m_c({\bf P})\) is bounded in
\(\mathcal W_1\). Finally, let \(\nu_n\in m_c({\bf P})\) and assume that
\(\mathcal W_1(\nu_n,\nu)\to0\). Applying
\cref{theorem:Stability}~(ii) with \({\bf P}_n={\bf P}\) and
\(\varepsilon_n=0\), we obtain \(\nu\in m_c({\bf P})\). Therefore
\(m_c({\bf P})\) is closed in \(\mathcal W_1\).
\end{proof}
\begin{proof}[Proof of \cref{corollary:consistency-Huber-barycenters}]
Since \(\mathcal P_1(\mathbb R^d)\), endowed with \(\mathcal W_1\), is a
Polish space, \cite[Theorem~3]{Varadarajan} gives
\[
{\bf P}_n\longrightarrow{\bf P}
\qquad\text{in }
\mathcal P\bigl(\mathcal P_1(\mathbb R^d)\bigr)
\quad\text{almost surely}.
\]
Fix an outcome in the probability-one event on which this convergence
holds. By \cref{theorem:Stability},
\(\{\widehat\nu_{n,c}\}_n\) is relatively compact in
\(\mathcal P(\mathbb R^d)\), and every limit point belongs to
\(m_c({\bf P})\). If \(m_c({\bf P})=\{\nu_c\}\), every subsequence of
\(\{\widehat\nu_{n,c}\}_n\) has a further subsequence converging in
\(\mathcal P(\mathbb R^d)\) to an element of \(m_c({\bf P})\), which must
be \(\nu_c\). Therefore,
\[
\widehat\nu_{n,c}\longrightarrow\nu_c
\qquad\text{in }\mathcal P(\mathbb R^d)
\quad\text{almost surely}.
\]
We now consider the two-stage model. We first prove that
\begin{equation}
\label{eq:two-stage-empirical-convergence}
\widehat{\bf P}_{n,\mathbf N}
\longrightarrow{\bf P}
\qquad\text{in }
\mathcal P\bigl(\mathcal P_1(\mathbb R^d)\bigr)
\quad\text{almost surely}.
\end{equation}
Let $\mathrm{BL}_1(X)$ be the set of continuous, bounded (by $1$) functions   and $1$-Lipschitz (with respect to $\mathcal{W}_1$) functions $f:X\to [-1,1]$, where $X\subseteq\mathcal{P}_1(\mathbb{R}^d)$. Let $f\in  \mathrm{BL}_1(\mathcal P_1(\mathbb R^d)\bigr)$; for each fixed \(n\), the random variables
\(f\bigl(\widehat\mu_{1,N}\bigr),\ldots,f\bigl(\widehat\mu_{n,N}\bigr) \) are iid and satisfy $|f\bigl(\widehat\mu_{i,N}\bigr)|\leq\|f\|_\infty$ for all $1\leq i\leq n$. Hence, for every \(\varepsilon>0\), Hoeffding's inequality gives
\[
\mathbb P\left(
\left|
\frac1n\sum_{i=1}^n
\biggl(f\bigl(\widehat\mu_{i,N}\bigr)-\mathbb E \biggl[f\bigl(\widehat\mu_{i,N}\bigr)\biggr]\biggr)
\right|>\varepsilon
\right)
\leq
2\exp\left(
-\frac{n\varepsilon^2}{2}
\right).
\]
The right-hand side is summable in \(n\). Therefore, by the
Borel--Cantelli lemma,
\begin{equation}
\label{eq:two-stage-centered}
\frac1n\sum_{i=1}^n
\bigl(f\bigl(\widehat\mu_{i,N}\bigr)-\mathbb E \bigl[f\bigl(\widehat\mu_{1,N}\bigr)\bigr]\bigr)=\frac1n\sum_{i=1}^n\bigl(f\bigl(\widehat\mu_{i,N}\bigr)-\mathbb E \bigl[f\bigl(\widehat\mu_{i,N}\bigr)\bigr]\bigr)
\longrightarrow0
\qquad\text{almost surely}.
\end{equation}
Furthermore,  by 
$$ |\mathbb E[ f\bigl(\widehat\mu_{i,N}\bigr)] - \mathbb E[ f\bigl(\mu_{i}\bigr)] |  \leq \mathbb E[ \min(2, \mathcal{W}_1( \widehat\mu_{i,N}, \mu_{i})) ] . $$
We conclude that $\mathbb E[ \min(2, \mathcal{W}_1( \widehat\mu_{i,N}, \mu_{i})) ]\to 0$ for all $1\leq i\leq n$ by arguing as in the proof of \cite[Lemma~6.3]{bachoc2026wassersteinspatialdepth}. Hence, \eqref{eq:two-stage-centered} yields
\[
\frac{1}{n}\sum_{i=1}^n
\left(
f(\widehat{\mu}_{i,N})-\mathbb{E}[f(\mu_1)]
\right)
\to 0
\qquad \text{almost surely},
\]
for every fixed
\(f\in \mathrm{BL}_1(\mathcal{P}_1(\mathbb{R}^d))\). We now prove that the convergence is uniform over
\(\mathrm{BL}_1(\mathcal{P}_1(\mathbb{R}^d))\).
Fix \(\varepsilon>0\). Since the law of \(\mu_1\) is tight, there exists
a compact set \(K\subset \mathcal{P}_1(\mathbb{R}^d)\) such that $\mathbb{P}(\mu_1\in K)\geq 1-\varepsilon.$  By the Arzelà--Ascoli theorem, \(\mathrm{BL}_1(K)\) is compact in the
uniform norm. Thus, there exist \(f_1,\ldots,f_M\in
\mathrm{BL}_1(\mathcal{P}_1(\mathbb{R}^d))\) such that, for every
\(f\in \mathrm{BL}_1(\mathcal{P}_1(\mathbb{R}^d))\), there is some
\(j\in\{1,\ldots,M\}\) satisfying
\[
\sup_{\mu\in K}|f(\mu)-f_j(\mu)|\leq \varepsilon.
\]
For such \(f\) and \(f_j\),
\begin{align*}
\left|
\frac{1}{n}\sum_{i=1}^n f(\widehat{\mu}_{i,N})
-\mathbb{E}[f(\mu_1)]
\right|
&\leq
\left|
\frac{1}{n}\sum_{i=1}^n f_j(\widehat{\mu}_{i,N})
-\mathbb{E}[f_j(\mu_1)]
\right| \\
&\quad+
\frac{1}{n}\sum_{i=1}^n
|f(\widehat{\mu}_{i,N})-f_j(\widehat{\mu}_{i,N})| \\
&\quad+
\mathbb{E}|f(\mu_1)-f_j(\mu_1)|.
\end{align*}
Since \(f\) and \(f_j\) are \(1\)-Lipschitz and bounded by $1$, 
\[
|f(\widehat{\mu}_{i,N})-f_j(\widehat{\mu}_{i,N})|
\leq
2\min(1, \mathcal{W}_1(\widehat{\mu}_{i,N},\mu_i))
+
|f(\mu_i)-f_j(\mu_i)|.
\]
Moreover,
\[
|f(\mu_i)-f_j(\mu_i)|
\leq
\varepsilon+2\mathbf{1}_{\{\mu_i\notin K\}},
\]
and
\[
\mathbb{E}|f(\mu_1)-f_j(\mu_1)|
\leq
\varepsilon+2\mathbb{P}(\mu_1\notin K)
\leq 3\varepsilon.
\]
Therefore,
\begin{multline}
    \label{eq:decomop-two-stage}
    \sup_{f\in \mathrm{BL}_1(\mathcal{P}_1(\mathbb{R}^d))}
\left|
\frac{1}{n}\sum_{i=1}^n f(\widehat{\mu}_{i,N})
-\mathbb{E}[f(\mu_1)]
\right| 
\\
\leq
\max_{1\leq j\leq M}
\left|
\frac{1}{n}\sum_{i=1}^n f_j(\widehat{\mu}_{i,N})
-\mathbb{E}[f_j(\mu_1)]
\right| +
4\varepsilon
\\+
\frac{2}{n}\sum_{i=1}^n
\mathbf{1}_{\{\mu_i\notin K\}}+
\frac{4}{n}\sum_{i=1}^n
\min(1, \mathcal{W}_1(\widehat{\mu}_{i,N},\mu_i)).
\end{multline}
By \eqref{eq:two-stage-centered},
\begin{equation}
    \label{eq:decomop-two-stage-2}
    \max_{1\leq j\leq M}
\left|
\frac{1}{n}\sum_{i=1}^n f_j(\widehat{\mu}_{i,N})
-\mathbb{E}[f_j(\mu_1)]
\right|
\to 0
\qquad \text{almost surely}.
\end{equation}
Also, by the strong law of large numbers,
\[
\frac{1}{n}\sum_{i=1}^n
\mathbf{1}_{\{\mu_i\notin K\}}
\to
\mathbb{P}(\mu_1\notin K)
\leq \varepsilon
\qquad \text{almost surely}.
\]
Since, for every \(i\), $\min\left( 2, \mathcal{W}_1\bigl(\widehat{\mu}_{i,N},\mu_i\bigr) \right)
\to 0$, a.s.,  
by \cite[Theorem~3]{Varadarajan} and the convergence of the first moments, the dominated convergence theorem gives
\begin{equation}
    \label{eq:two-stage-L1}
    \mathbb{E}\left[
\min\left( 2, \mathcal{W}_1\bigl(\widehat{\mu}_{i,N},\mu_i\bigr) \right)
\right]
\to 0.
\end{equation}
For every \(n\), the random variables $\{\min\left( 2, \mathcal{W}_1\bigl(\widehat{\mu}_{i,N},\mu_i\bigr) \right)\}_{i=1}^n$ 
are independent, identically distributed, and bounded between \(0\)
and \(2\). Hence, Hoeffding's inequality yields, for every
\(\varepsilon>0\),
\[
\begin{aligned}
\mathbb{P}\Bigg(
\Bigg|
\frac{1}{n}\sum_{i=1}^n
\min\left( 2, \mathcal{W}_1\bigl(\widehat{\mu}_{i,N},\mu_i\bigr) \right)
&-
\mathbb{E}\left[
\min\left( 2, \mathcal{W}_1\bigl(\widehat{\mu}_{1,N},\mu_1\bigr) \right)
\right]
\Bigg|
>\varepsilon
\Bigg) \leq
2\exp\left(-\frac{n\varepsilon^2}{2}\right).
\end{aligned}
\]
Therefore,
\[
\sum_{n=1}^{\infty}
\mathbb{P}\Bigg(
\Bigg|
\frac{1}{n}\sum_{i=1}^n
\min\left( 2, \mathcal{W}_1\bigl(\widehat{\mu}_{i,N},\mu_i\bigr) \right)
-
\mathbb{E}\left[
\min\left( 2, \mathcal{W}_1\bigl(\widehat{\mu}_{1,N},\mu_1\bigr) \right)
\right]
\Bigg|
>\varepsilon
\Bigg) 
<\infty.
\]
By the Borel--Cantelli lemma,
\[
\frac{1}{n}\sum_{i=1}^n
\min\left( 2, \mathcal{W}_1\bigl(\widehat{\mu}_{i,N},\mu_i\bigr) \right)
-
\mathbb{E}\left[
\min\left( 2, \mathcal{W}_1\bigl(\widehat{\mu}_{1,N},\mu_1\bigr) \right)
\right]
\to 0
\qquad\text{almost surely}.
\]
By \eqref{eq:two-stage-L1}, 
we conclude that
\[
\frac{1}{n}\sum_{i=1}^n
\min\left( 2, \mathcal{W}_1\bigl(\widehat{\mu}_{i,N},\mu_i\bigr) \right)
\to 0
\qquad\text{almost surely}.
\]
This  together with \eqref{eq:decomop-two-stage} and \eqref{eq:decomop-two-stage-2} yields
\[
\limsup_{n\to\infty}
\sup_{f\in \mathrm{BL}_1(\mathcal{P}_1(\mathbb{R}^d))}
\left|
\frac{1}{n}\sum_{i=1}^n f(\widehat{\mu}_{i,N})
-\mathbb{E}[f(\mu_1)]
\right|
\leq 6\varepsilon
\]
almost surely. Since \(\varepsilon>0\) is arbitrary,
\[
\sup_{f\in \mathrm{BL}_1(\mathcal{P}_1(\mathbb{R}^d))}
\left|
\frac{1}{n}\sum_{i=1}^n f(\widehat{\mu}_{i,N})
-\mathbb{E}[f(\mu_1)]
\right|
\to 0
\qquad \text{almost surely},
\]
and thus \eqref{eq:two-stage-empirical-convergence} holds. Applying
\Cref{theorem:Stability} with
${\mathbf P}_n=\widehat{\mathbf P}_{n,N(n)}$
shows that the empirical barycenters are relatively compact and that
all their limit points belong to \(m_c({\bf P})\). If
\(m_c({\bf P})=\{\nu_c\}\), the usual subsequence argument yields
\(\widehat\nu_{n,N(n),c}\to\nu_c\) almost surely.
\end{proof}
\begin{proof}[Proof of \cref{theorem:stability-in-c}]
\medskip
\noindent
We first show (i). Define
\[
\mathcal G_c(\nu)
:=
\frac1c\Gamma_{c,{\bf P}}(\nu)
=
\int_{\mathcal P_1(\mathbb R^d)}
\left[
\frac1c\mathcal T_{\rho_c}(\nu,\mu)-m_1(\mu)
\right]\,d{\bf P}(\mu).
\]
Clearly, $\argmin\mathcal G_c=m_c({\bf P}).$ For every \(t\geq0\),
\begin{equation}
\label{eq:rho-c-L1-uniform}
0
\leq
t-\frac{\rho_c(t)}{c}
\leq
\frac c2,
\end{equation}
as if \(t\leq c\), then
$t-\frac{\rho_c(t)}c
=
t-\frac{t^2}{2c},$
whose values lie in \([0,c/2]\), whereas for \(t>c\),
$t-\frac{\rho_c(t)}c=\frac c2.$
It follows that, for every
\(\nu,\mu\in\mathcal P_1(\mathbb R^d)\),
\begin{equation}
\label{eq:transport-Huber-W1}
0
\leq
\mathcal W_1(\nu,\mu)
-
\frac1c\mathcal T_{\rho_c}(\nu,\mu)
\leq
\frac c2.
\end{equation}
Indeed, lower bound inequality follows from
\(\rho_c(t)/c\leq t\). For the upper bound, integrate
\eqref{eq:rho-c-L1-uniform} against an arbitrary coupling and then take
infima. After subtracting \(m_1(\mu)\) and integrating with respect to
\({\bf P}\), we obtain the uniform estimate
\begin{equation}
\label{eq:F1-Gc}
0
\leq
\mathcal F_1(\nu)-\mathcal G_c(\nu)
\leq
\frac c2
\qquad
\text{for every }\nu\in\mathcal P_1(\mathbb R^d).
\end{equation}
Since \(\nu_n\in m_{c_n}({\bf P})\), for every
\(\eta\in\mathcal P_1(\mathbb R^d)\),
\begin{align}
\mathcal F_1(\nu_n)
&\leq
\mathcal G_{c_n}(\nu_n)+\frac{c_n}{2}
\nonumber\\
&\leq
\mathcal G_{c_n}(\eta)+\frac{c_n}{2}\leq
\mathcal F_1(\eta)+\frac{c_n}{2}.
\label{eq:approximate-F1-minimizer}
\end{align}
In particular, since $\mathcal F_1(\delta_0)=0,$ 
we have
\begin{equation}
\label{eq:F1-nun-upper}
\mathcal F_1(\nu_n)\leq\frac{c_n}{2}.
\end{equation}

We next establish weak relative compactness. Since \({\bf P}\) is a
probability measure on the Polish space
\(\mathcal P_1(\mathbb R^d)\), it is tight. Hence there exists a compact set
\(K\subset\mathcal P_1(\mathbb R^d)\) such that
$p:={\bf P}(K)>\frac12.$ 
Because \(K\) is compact in \(\mathcal W_1\),
\[
M_K:=\sup_{\mu\in K}m_1(\mu)<\infty.
\]
For \(\mu\in K\), the triangle inequality gives
$\mathcal W_1(\nu,\mu)
\geq
m_1(\nu)-m_1(\mu),$
and thus
\[
\mathcal W_1(\nu,\mu)-m_1(\mu)
\geq
m_1(\nu)-2M_K.
\]
Since  $\mathcal W_1(\nu,\mu)-m_1(\mu)
\geq
-m_1(\nu), $
we derive
\begin{align}
\mathcal F_1(\nu)
&\geq
p\bigl(m_1(\nu)-2M_K\bigr)
-(1-p)m_1(\nu)
\nonumber\\
&=
(2p-1)m_1(\nu)-2pM_K.
\label{eq:F1-weak-coercivity}
\end{align}
Combining \eqref{eq:F1-nun-upper} and
\eqref{eq:F1-weak-coercivity} yields
$\sup_n m_1(\nu_n)<\infty.$ 
By Markov's inequality, \(\{\nu_n\}_n\) is tight, and hence
relatively compact in the weak topology. Let \(\nu_{n_k}\to \nu_*\) in distribution. For a fixed \(\mu\),
lower semicontinuity of the transport cost gives
\[
\mathcal W_1(\nu_*,\mu)
\leq
\liminf_{k\to\infty}
\mathcal W_1(\nu_{n_k},\mu).
\]
Moreover,
\[
\mathcal W_1(\nu_{n_k},\mu)-m_1(\mu)
\geq
-m_1(\nu_{n_k})
\geq -C
\]
for some constant \(C<\infty\). Fatou's lemma therefore gives
\begin{equation}
\label{eq:F1-lsc-limit}
\mathcal F_1(\nu_*)
\leq
\liminf_{k\to\infty}\mathcal F_1(\nu_{n_k}).
\end{equation}
On the other hand, \eqref{eq:approximate-F1-minimizer} implies that, for
every \(\eta\in\mathcal P_1(\mathbb R^d)\),
\[
\limsup_{k\to\infty}\mathcal F_1(\nu_{n_k})
\leq
\mathcal F_1(\eta).
\]
Consequently, $\mathcal F_1(\nu_*)
\leq
\mathcal F_1(\eta)$ for every  $\eta\in\mathcal P_1(\mathbb R^d),$ 
and hence $\nu_*\in m_0({\bf P}).$ If \(m_0({\bf P})=\{\nu_0\}\), every subsequence of \(\{\nu_n\}\) admits a
further subsequence converging weakly to \(\nu_0\) and the result follows. 

\medskip
\noindent
We now show (ii). For \(a>0\), define
\[
\mathcal J_a(\nu)
:=
\int_{\mathcal P_2(\mathbb R^d)}
\mathcal T_{\rho_a}(\nu,\mu)\,d{\bf P}(\mu).
\]
The centering term in \(\Gamma_{a,{\bf P}}\) does not depend on \(\nu\);
therefore,
$m_a({\bf P})=\argmin_\nu\mathcal J_a(\nu).$
By optimality of \(\nu_n\) and comparison with \(\delta_0\),
\begin{align}
\mathcal J_{c_n}(\nu_n)
&\leq
\mathcal J_{c_n}(\delta_0)
\nonumber\\
&=
\int
\int \rho_{c_n}(\|x\|)\,d\mu(x)\,d{\bf P}(\mu)
\nonumber\\
&\leq
\frac12
\int
\int\|x\|^2\,d\mu(x)\,d{\bf P}(\mu)
=:C_2<\infty.
\label{eq:Jcn-uniform}
\end{align}
Fix \(a>0\). Since \(c\mapsto\rho_c(t)\) is nondecreasing, for all
sufficiently large \(n\),
\[
\mathcal J_a(\nu_n)
\leq
\mathcal J_{c_n}(\nu_n)
\leq C_2.
\]
Choose a compact set \(K\subset\mathcal P_2(\mathbb R^d)\) such that
$p:={\bf P}(K)>0.$
Since the embedding $\mathcal P_2(\mathbb R^d)\hookrightarrow\mathcal P_1(\mathbb R^d)$ 
is continuous,
\[
M_K:=\sup_{\mu\in K}m_1(\mu)<\infty.
\]
Using $\rho_a(t)\geq at-\frac{a^2}{2},$ 
we obtain
\begin{align*}
\mathcal T_{\rho_a}(\nu,\mu)
&\geq
a\mathcal W_1(\nu,\mu)-\frac{a^2}{2}\geq
a\bigl(m_1(\nu)-m_1(\mu)\bigr)-\frac{a^2}{2}.
\end{align*}
Since \(\mathcal T_{\rho_a}\geq0\),
\begin{equation}
\label{eq:Ja-coercivity}
\mathcal J_a(\nu)
\geq
\int_K\mathcal T_{\rho_a}(\nu,\mu)\,d{\bf P}(\mu)\geq
p\left[
a\bigl(m_1(\nu)-M_K\bigr)-\frac{a^2}{2}
\right].
\end{equation}
Applying this inequality to \(\nu_n\) shows that $\sup_n m_1(\nu_n)<\infty.$ 
Thus \(\{\nu_n\}_n\) is tight and relatively compact in the weak topology. Let \(\nu_{n_k}\to\nu_*\) in distribution. For every fixed \(a>0\), eventually
\(c_{n_k}\geq a\), and hence
$\mathcal J_{c_{n_k}}(\nu_{n_k})
\geq
\mathcal J_a(\nu_{n_k}).$
By lower semicontinuity of
\(\nu\mapsto\mathcal T_{\rho_a}(\nu,\mu)\) and Fatou's lemma,
\[
\mathcal J_a(\nu_*)
\leq
\liminf_{k\to\infty}
\mathcal J_a(\nu_{n_k})
\leq
\liminf_{k\to\infty}
\mathcal J_{c_{n_k}}(\nu_{n_k}).
\]
Taking the supremum over \(a>0\) gives
\begin{equation}
\label{eq:large-c-liminf}
\sup_{a>0}\mathcal J_a(\nu_*)
\leq
\liminf_{k\to\infty}
\mathcal J_{c_{n_k}}(\nu_{n_k}).
\end{equation}
We claim that
\begin{equation}
\label{eq:monotone-cost-limit}
\sup_{a>0}\mathcal T_{\rho_a}(\nu,\mu)
=
\frac12\mathcal W_2^2(\nu,\mu),
\end{equation}
with the convention that the right-hand side is \(+\infty\) when one of
the measures does not belong to \(\mathcal P_2\). Indeed,
\[
\rho_a(t)\uparrow\frac12t^2
\qquad\text{as }a\uparrow\infty,
\]
from which the ``\(\leq\)'' inequality in \eqref{eq:monotone-cost-limit} follows. For the reverse inequality, let \(a_j\uparrow\infty\) and let
\(\pi_j\in\Pi(\nu,\mu)\) be optimal for \(\rho_{a_j}\). The family
\(\{\pi_j\}\) is tight because its marginals are fixed. Passing to a weakly
convergent subsequence, say \(\pi_j\to\pi\in\Pi(\nu,\mu)\) in distribution,
for every fixed \(b>0\) and every \(j\) sufficiently large, $\rho_{a_j}\geq\rho_b.$ 
Hence
\[
\liminf_{j\to\infty}
\mathcal T_{\rho_{a_j}}(\nu,\mu)
\geq
\liminf_{j\to\infty}
\int\rho_b(\|x-y\|)\,d\pi_j
\geq
\int\rho_b(\|x-y\|)\,d\pi.
\]
Letting \(b\uparrow\infty\) and using monotone convergence yields
\[
\liminf_{j\to\infty}
\mathcal T_{\rho_{a_j}}(\nu,\mu)
\geq
\frac12\int\|x-y\|^2\,d\pi
\geq
\frac12\mathcal W_2^2(\nu,\mu).
\]
This proves \eqref{eq:monotone-cost-limit}.

Therefore, by monotone convergence,
\[
\sup_{a>0}\mathcal J_a(\nu_*)
=
\int
\sup_{a>0}\mathcal T_{\rho_a}(\nu_*,\mu)\,d{\bf P}(\mu)
=
\mathcal F_2(\nu_*).
\]
Thus \eqref{eq:large-c-liminf} becomes
\begin{equation}
\label{eq:F2-liminf-final}
\mathcal F_2(\nu_*)
\leq
\liminf_{k\to\infty}
\mathcal J_{c_{n_k}}(\nu_{n_k}).
\end{equation}
Let \(\eta\in\mathcal P_2(\mathbb R^d)\). By optimality of \(\nu_{n_k}\)
and the inequality \(\rho_c(t)\leq t^2/2\),
\[
\mathcal J_{c_{n_k}}(\nu_{n_k})
\leq
\mathcal J_{c_{n_k}}(\eta)
\leq
\mathcal F_2(\eta).
\]
Combining this with \eqref{eq:F2-liminf-final}, we obtain
\[
\mathcal F_2(\nu_*)
\leq
\mathcal F_2(\eta)
\qquad
\text{for every }\eta\in\mathcal P_2(\mathbb R^d).
\]
In particular, \(\mathcal F_2(\nu_*)<\infty\), so
\(\nu_*\in\mathcal P_2(\mathbb R^d)\), and $\nu_*\in m_\infty({\bf P}).$ 
If \(m_\infty({\bf P})=\{\nu_\infty\}\), every subsequence of
\(\{\nu_n\}\) has a further subsequence converging weakly to
\(\nu_\infty\), as required.
\end{proof}
\begin{proof}[Proof of \cref{theorem:charcterization}]
The fact that \textup{(i)}--\textup{(iii)} are sufficient follows from
\eqref{Subgradient-1-Gamma}. Indeed, define
\[
g(x):=\int f_{\nu,\mu}(x)\,d{\bf P}(\mu).
\]
Then, for every \(\gamma\in\mathcal P_1(\mathbb R^d)\),
\begin{align*}
\Gamma_{c,{\bf P}}(\gamma)
&\geq
\Gamma_{c,{\bf P}}(\nu)
+
\int g(x)\,d(\gamma-\nu)(x)\\
&=
\Gamma_{c,{\bf P}}(\nu)
+
\int g\,d\gamma
\geq
\Gamma_{c,{\bf P}}(\nu),
\end{align*}
where we used \textup{(ii)} and \textup{(iii)}.
We now prove necessity. Let \(\nu\) be a Huber-Wasserstein barycenter of
\({\bf P}\). Then, for every
\(\gamma\in\mathcal P_1(\mathbb R^d)\),
\begin{equation}
\label{eq:directional-nonnegative}
\int
\sup_{f\in{\rm Sol}^*(\nu,\mu)}
\int f(x)\,d(\gamma-\nu)(x)\,d{\bf P}(\mu)
\geq0,
\end{equation}
since otherwise \eqref{eq:Gateaux-Gamma} would contradict the minimality of
\(\nu\). Let \(\mathcal S(\nu)\) be the set of jointly measurable functions
$(\mu,x)\mapsto f_{\nu,\mu}(x)$
such that, for \({\bf P}\)-a.e.\ \(\mu\),
\(f_{\nu,\mu}\in{\rm Sol}^*(\nu,\mu)\), and impose the normalization
\begin{equation}
\label{eq:normalization-potentials}
\int f_{\nu,\mu}\,d\nu=0.
\end{equation}
We show the supremum in \eqref{eq:directional-nonnegative} can be moved outside the integrals. Let
\[
\mathscr X_{c,\nu}
:=
\left\{
h\in \mathcal C(\mathbb R^d):
\operatorname{Lip}(h)\le c,
\quad
\int h\,d\nu=0
\right\},
\]
endowed with the topology of uniform convergence on compact sets. We claim that
\(\mathscr X_{c,\nu}\) is a compact Polish space. Indeed, for every \(h\in\mathscr X_{c,\nu}\),
\[
|h(0)|
=
\left|
\int \bigl(h(0)-h(x)\bigr)\,d\nu(x)
\right|
\le
c\int \|x\|\,d\nu(x),
\]
and therefore
\begin{equation}
\label{eq:bound_on_h}
|h(x)|
\le
c\left(
\int \|y\|\,d\nu(y)+\|x\|
\right),
\qquad x\in\mathbb R^d.
\end{equation}
Hence \(\mathscr X_{c,\nu}\) is equicontinuous and uniformly bounded on
every compact subset of \(\mathbb R^d\). By the Arzelà--Ascoli theorem,
it is relatively compact for the topology of uniform convergence on
compact sets.

Moreover, \(\mathscr X_{c,\nu}\) is closed in this topology. Indeed, if
\(h_n\in\mathscr X_{c,\nu}\) and \(h_n\to h\) uniformly on compact sets,
then \(h\) is \(c\)-Lipschitz. In addition, \eqref{eq:bound_on_h} holds
uniformly in \(n\), and since \(\nu\in\mathcal P_1(\mathbb R^d)\),
dominated convergence yields
\[
\int h\,d\nu
=
\lim_{n\to\infty}\int h_n\,d\nu
=
0.
\]
Thus \(h\in\mathscr X_{c,\nu}\), so \(\mathscr X_{c,\nu}\) is compact. Finally, the topology of uniform convergence on compact sets on
\(\mathcal C(\mathbb R^d)\) is metrizable. Therefore
\(\mathscr X_{c,\nu}\), being a compact metrizable space, is complete
and separable, and hence Polish.

Let $S$ be the space \((\mathcal P_1(\mathbb R^d),\,\mathcal{W}_1) \) equipped with its Borel \(\sigma\)-algebra, and consider the
constant correspondence \(
\Phi_0:S\to \mathcal{B}(\mathscr X_{c,\nu}),
\Phi_0(\mu):=\mathscr X_{c,\nu}\), which is measurable and has nonempty compact values. Define
\[
D:S\times\mathscr X_{c,\nu}\longrightarrow\mathbb R,
\qquad
D(\mu,h)
:=
\int h\,d\nu+\int h^{\rho_c}\,d\mu.
\]
We claim that \(D\) is a Carathéodory function, see \cite[p.~156]{AliprantisBorder2006} for the definition. For each fixed
\(h\in\mathscr X_{c,\nu}\), the function \(h^{\rho_c}\) is \(c\)-Lipschitz.
Consequently, $\mu\mapsto D(\mu,h)$ is continuous with respect to \(\mathcal W_1\).

On the other hand, suppose that \(h_n\to h\) locally uniformly in
\(\mathscr X_{c,\nu}\). By the stability of \(\rho_c\)-conjugation, $h_n^{\rho_c}\to h^{\rho_c}$ uniformly on compact sets. Moreover,
\[
\sup_n|h_n(x)|\leq c(\mathbb{E}_\nu [\|X\|]+\|x\|),
\qquad
\sup_n|h_n^{\rho_c}(x)|\leq c\mathbb{E}_\nu[\|X\|]+c^2+c\|x\|.
\]
Since \(\mu,\nu\in\mathcal P_1(\mathbb R^d)\), these
bounds and uniform convergence on compact sets imply $D(\mu,h_n)\to D(\mu,h)$. Thus, \(h\mapsto D(\mu,h)\) is continuous for every fixed \(\mu\), showing
that \(D\) is Carathéodory.

Applying the measurable maximum theorem
\cite[Theorem~18.19]{AliprantisBorder2006} to \(\Phi_0\) and \(D\), we obtain
that the correspondence
\[
\Phi(\mu)
:=
\operatorname{argmax}_{h\in\mathscr X_{c,\nu}}D(\mu,h)
\]
is measurable, has nonempty compact values, and admits a measurable selection.
By Kantorovich duality,
\[
\Phi(\mu)
=
\left\{
h\in{\rm Sol}^*(\nu,\mu):\int h d\nu =0
\right\},
\]
as every dual potential can be normalized to have zero mean. 

Now fix \(\gamma\in\mathcal P_1(\mathbb R^d)\) and define
\[
L_\gamma:S\times\mathscr X_{c,\nu}\longrightarrow\mathbb R,
\qquad
L_\gamma(\mu,h)
:=
\int h\,d(\gamma-\nu).
\]
The function \(L_\gamma\) is Carathéodory. It is independent of \(\mu\), and
its continuity in \(h\) follows from uniform convergence on compact sets, the
bound \eqref{eq:bound_on_h}, and the assumption that \(\gamma,\nu\in\mathcal P_1(\mathbb R^d)\).

Applying the measurable maximum theorem a second time, now to the
correspondence \(\Phi\) and the function \(L_\gamma\), shows that
\[
\mu\longmapsto
\operatorname{argmax}_{h\in\Phi(\mu)}
\int h\,d(\gamma-\nu)
\]
is measurable, has nonempty compact values, and admits a measurable selector $
\mu\mapsto f_{\nu,\mu}^{\gamma}$.
Therefore,
\[
\int f_{\nu,\mu}^{\gamma}\,d(\gamma-\nu)
=
\max_{h\in\Phi(\mu)}
\int h\,d(\gamma-\nu)
=
\sup_{f\in{\rm Sol}^*(\nu,\mu)}
\int f\,d(\gamma-\nu),
\]
where the last equality holds because integration against \(\gamma-\nu\) is
invariant under the addition of constants. 

Finally, since the evaluation map $e:\mathscr X_{c,\nu}\times \mathbb{R}^d\to \mathbb{R},\, e(h,x)=h(x)$ is continuous in the product topology of the topology of uniform convergence on compact sets with the Euclidean topology on $\mathbb{R}^d$, and as $\mu\mapsto f_{\nu,\mu}^\gamma$ is measurable, the function $(\mu, x)\mapsto f_{\nu,\mu}^\gamma (x)$ is jointly measurable, and so $f_{\nu,(\cdot)}^\gamma \in \mathcal{S}(\nu)$. Recalling the definition of $\mathcal{S}(\nu)$, we obtain
\[
\sup_{f_{\nu,(\cdot)}\in\mathcal S(\nu)}
\int\!\!\int
f_{\nu,\mu}(x)\,d(\gamma-\nu)(x)\,d{\bf P}(\mu)
\geq0.
\]
Define
\[
\mathcal G
:=
\left\{
x\longmapsto
\int f_{\nu,\mu}(x)\,d{\bf P}(\mu):
f_{\nu,(\cdot)}\in\mathcal S(\nu)
\right\}.
\]
By Fubini's theorem,
\begin{equation}
\label{eq:direction-G}
\sup_{g\in\mathcal G}\int g\,d(\gamma-\nu)\geq0
\qquad
\text{for every }\gamma\in\mathcal P_1(\mathbb R^d).
\end{equation}
By \eqref{eq:normalization-potentials}, $\int g\,d\nu=0$ for every $g\in\mathcal G.$  
Hence, \eqref{eq:direction-G} becomes
\begin{equation}
\label{eq:direction-G-normalized}
\sup_{g\in\mathcal G}\int g\,d\gamma\geq0
\qquad
\text{for every }\gamma\in\mathcal P_1(\mathbb R^d).
\end{equation}
The set \(\mathcal G\) is convex, since the set of optimal potentials is
convex. Furthermore, every \(g\in\mathcal G\) is \(c\)-Lipschitz. We now show that there exist \(g\in\mathcal G\) and
\(a\in\mathbb R\) such that
\begin{enumerate}
\item[(a)] \(g(x)=a\) for \(\nu\)-a.e.\ \(x\);
\item[(b)] \(g(x)\geq a\) for every \(x\in\mathbb R^d\).
\end{enumerate}
We first assume that $\operatorname{supp}(\nu)\neq\mathbb R^d.$ 
Choose
\(\beta\in\mathcal P_1(\mathbb R^d)\) such that
\begin{equation}
\label{eq:properties-beta}
\beta(\operatorname{supp}\nu)=0,
\qquad
\operatorname{supp}(\nu+\beta)=\mathbb R^d.
\end{equation}
For example, one may take a countable convex combination of probability
measures supported on balls contained in
\(\mathbb R^d\setminus\operatorname{supp}\nu\), chosen so that their union is
dense in this open set. Define \(V:=L^1(\nu+\beta)/\langle1\rangle.\) 
Its dual is identified with
\[
V^*
=
\left\{
h\in L^\infty(\nu+\beta):
\int h\,d(\nu+\beta)=0
\right\}.
\]
Define
\[
M
:=
\left\{
[f]\in V:
\text{there exists }a\in\mathbb R
\text{ such that }
f=a\ \nu\text{-a.e. and }f\geq a\ \beta\text{-a.e.}
\right\}.
\]

\medskip

\noindent
\textit{The set \(M\) is convex and closed.}
Convexity is immediate. To prove closedness, choose in every equivalence class
the representative satisfying $\int f\,d\nu=0.$ 
For such representatives, the constant in the definition of \(M\) is zero.
Thus,
\[
M
=
\left\{
[f]\in V:
f=0\ \nu\text{-a.e. and }f\geq0\ \beta\text{-a.e.}
\right\}.
\]
If \(f_n\to f\) in \(L^1(\nu+\beta)\), then a subsequence converges
\((\nu+\beta)\)-a.e. Therefore,
\[
f=0\quad\nu\text{-a.e.},
\qquad
f\geq0\quad\beta\text{-a.e.},
\]
and hence \(f\in M\).

\medskip

\noindent
\textit{The set \(\mathcal G\) is relatively compact in \(V\).}
Let \(\{g_n\}_n\subset\mathcal G\), with
\[
g_n(z)
=
\int f_{\nu,\mu}^{(n)}(z)\,d{\bf P}(\mu),
\]
where $\int f_{\nu,\mu}^{(n)}\,d\nu=0.$ 
For every \(z\in\mathbb R^d\),
\begin{align*}
0
=
\int f_{\nu,\mu}^{(n)}(x)\,d\nu(x)
\leq
f_{\nu,\mu}^{(n)}(z)
+
c\int\|x-z\|\,d\nu(x)
\leq
f_{\nu,\mu}^{(n)}(z)
+c\|z\|
+c\int\|x\|\,d\nu(x),
\end{align*}
and similarly,
\[
0
\geq
f_{\nu,\mu}^{(n)}(z)
-c\|z\|
-c\int\|x\|\,d\nu(x).
\]
Setting $C:=c\int\|x\|\,d\nu(x),$ 
we obtain
\begin{equation}
\label{eq:bounded-f}
|f_{\nu,\mu}^{(n)}(z)|
\leq
C+c\|z\|,
\qquad
|g_n(z)|
\leq
C+c\|z\|.
\end{equation}
The sequence \(\{g_n\}_n\) is equi-Lipschitz and locally uniformly bounded.
By the Arzelà--Ascoli theorem, every subsequence admits a further subsequence
converging uniformly on compact sets to some \(c\)-Lipschitz function \(g\).
The bound \eqref{eq:bounded-f} passes to the limit:
$|g(z)|\leq C+c\|z\|.$
Since \(\nu+\beta\in\mathcal P_1(\mathbb R^d)\), for every
\(\varepsilon>0\), there exists a compact set \(K\subset\mathbb R^d\) such
that
\[
\int_{\mathbb R^d\setminus K}
(C+c\|x\|)\,d(\nu+\beta)(x)
<
\frac{\varepsilon}{4}.
\]
The uniform convergence on \(K\) and the linear growth bound give $\|g_n-g\|_{L^1(\nu+\beta)}\longrightarrow0.$ 
Thus \(\mathcal G\) is relatively compact in \(V\).

\medskip

\noindent
\textit{The set \(\mathcal G\) is closed in \(V\).}
Assume that
\begin{equation}
\label{eq:assumethatmean-to-mean}
g_n
:=
\int f_{\nu,\mu}^{(n)}\,d{\bf P}(\mu)
\longrightarrow F
\quad\text{in }V.
\end{equation}
The estimate \eqref{eq:bounded-f} implies that
\(\{f_{\nu,\mu}^{(n)}\}_n\) is uniformly integrable in $L^1\bigl({\bf P}\otimes(\nu+\beta)\bigr).$ 
By  \cite[Theorem~4.7.18]{bogachev2007measure}, after passing to a subsequence,
\[
f_{\nu,\mu}^{(n)}
\rightharpoonup
f_{\nu,\mu}
\quad\text{weakly in }
L^1\bigl({\bf P}\otimes(\nu+\beta)\bigr).
\]
Testing this convergence against functions depending only on \(x\) and using
\eqref{eq:assumethatmean-to-mean}, we obtain
\begin{equation}
\label{eq:mean-of-limit}
F(x)
=
\int f_{\nu,\mu}(x)\,d{\bf P}(\mu)
\quad
\text{for }(\nu+\beta)\text{-a.e. }x.
\end{equation}
By the weak Banach--Saks property of \(L^1\)  (see e.g.~\cite{foghem2023banachsakstheoreml1revisited}), there exists a subsequence
\(\{n_k\}_k\) such that
\[
\bar f_{\nu,\mu}^{(k)}
:=
\frac1k\sum_{j=1}^k f_{\nu,\mu}^{(n_j)}
\longrightarrow
f_{\nu,\mu}
\]
strongly in $L^1\bigl({\bf P}\otimes(\nu+\beta)\bigr).$ 
After passing to a further subsequence,
$\bar f_{\nu,\mu}^{(k)}(x)
\longrightarrow
f_{\nu,\mu}(x)$
for \({\bf P}\otimes(\nu+\beta)\)-a.e.\ \((\mu,x)\).

For \({\bf P}\)-a.e.\ \(\mu\), the functions
\(\bar f_{\nu,\mu}^{(k)}\) are \(c\)-Lipschitz. By
\eqref{eq:properties-beta}, the measure \(\nu+\beta\) has full support.
Therefore, the almost-everywhere convergence and equi-Lipschitz continuity
imply, after choosing the continuous representative of \(f_{\nu,\mu}\),
\[
\bar f_{\nu,\mu}^{(k)}
\longrightarrow
f_{\nu,\mu}
\quad\text{uniformly on compact sets}.
\]
Moreover, since the set of optimal potentials is convex, $\bar f_{\nu,\mu}^{(k)}
\in{\rm Sol}^*(\nu,\mu).$ 
Hence, by \cref{theorem:stsbilitypot},
\[
f_{\nu,\mu}\in{\rm Sol}^*(\nu,\mu)
\qquad
\text{for }{\bf P}\text{-a.e. }\mu.
\]
Together with \eqref{eq:mean-of-limit}, this proves that \(F\in\mathcal G\).
Thus \(\mathcal G\) is closed, and therefore compact in \(V\).

\medskip

\noindent
\textit{Conclusion of the proof: Hahn--Banach separation.}
Assume by contradiction that $\mathcal G\cap M=\varnothing.$ 
Since \(\mathcal G\) is compact and convex and \(M\) is closed and convex, the
geometric Hahn--Banach theorem
\citep[Theorem~1.7]{brezis2011functional} gives
\(\varepsilon>0\), \(\eta\in\mathbb R\), and
\[
h=(h_1,h_2)\in
L^\infty(\nu)\times L^\infty(\beta),
\qquad
\int h_1\,d\nu+\int h_2\,d\beta=0,
\]
such that
\begin{equation}
\label{eq:separation-G}
\int gh_1\,d\nu+\int gh_2\,d\beta
\leq
\eta-\varepsilon
\qquad
\text{for every }g\in\mathcal G,
\end{equation}
and
\begin{equation}
\label{eq:separation-M}
\int fh_1\,d\nu+\int fh_2\,d\beta
\geq
\eta
\qquad
\text{for every }f\in M.
\end{equation}
Since \(0\in M\), one has \(\eta\leq0\).

For every nonnegative \(u\in L^1(\beta)\), the function equal to zero
\(\nu\)-a.e.~and to \(u\) \(\beta\)-a.e.~belongs to \(M\). Therefore,
\[
t\int uh_2\,d\beta\geq\eta
\qquad
\text{for every }t>0.
\]
Rearranging and letting \(t\to\infty\), we obtain
\[
\int uh_2\,d\beta\geq0
\qquad
\text{for every }u\geq0,
\]
and consequently $h_2\geq0$, $\beta$-a.e. Define the finite signed measure \(\sigma\) by $d\sigma=h_1\,d\nu+h_2\,d\beta.$ 
Then $\sigma(\mathbb R^d)=0.$ 
For every sufficiently small \(t>0\), namely $0<t\leq
\frac{1}{\|h_1^-\|_{L^\infty(\nu)}},$ 
the measure $\gamma_t:=\nu+t\sigma$ 
is a probability measure in \(\mathcal P_1(\mathbb R^d)\), since
\[
d\gamma_t=(1+th_1)\,d\nu+th_2\,d\beta\geq0.
\]
By \eqref{eq:separation-G} and \(\eta\leq0\),
$\sup_{g\in\mathcal G}\int g\,d\sigma
\leq-\varepsilon.$ 
Therefore,
\[
\sup_{g\in\mathcal G}
\int g\,d(\gamma_t-\nu)
=
t\sup_{g\in\mathcal G}\int g\,d\sigma
\leq-t\varepsilon<0.
\]
This contradicts \eqref{eq:direction-G}, applied with \(\gamma=\gamma_t\).
Thus, $\mathcal G\cap M\neq\varnothing.$ Choose \(g\in\mathcal G\cap M\). Then there exists
\(f_{\nu,(\cdot)}\in\mathcal S(\nu)\) such that
\[
g(x)=\int f_{\nu,\mu}(x)\,d{\bf P}(\mu).
\]
Moreover, for some \(a\in\mathbb R\),
\[
g=a\quad\nu\text{-a.e.},
\qquad
g\geq a\quad\beta\text{-a.e.}
\]
Since \(g\) is continuous, \(\operatorname{supp}(\nu+\beta)=\mathbb R^d\),
and \(g=a\) on \(\operatorname{supp}\nu\) while
\(g\geq a\) on \(\operatorname{supp}\beta\), it follows that
\[
g(x)\geq a
\qquad
\text{for every }x\in\mathbb R^d.
\]
Finally, define $\widetilde f_{\nu,\mu}:=f_{\nu,\mu}-a.$ 
Since potentials are invariant under addition of constants,
\(\widetilde f_{\nu,\mu}\) is still a potential for \((\nu,\mu)\), and
\[
\int\widetilde f_{\nu,\mu}(x)\,d{\bf P}(\mu)=0
\quad
\text{for }\nu\text{-a.e. }x,
\]
while
\[
\int\widetilde f_{\nu,\mu}(x)\,d{\bf P}(\mu)\geq0
\quad
\text{for every }x\in\mathbb R^d.
\]
If \(\operatorname{supp}\nu=\mathbb R^d\), the same argument is carried out 
in $V=L^1(\nu)/\langle1\rangle$ 
with \(M=\{0\}\). Indeed, a continuous function which is constant
\(\nu\)-a.e. is constant everywhere. The separation argument then shows that
\(0\in\mathcal G\), which directly gives \textup{(ii)} and \textup{(iii)}. This concludes the proof.
\end{proof}
\begin{proof}[Proof of \cref{lemma:Gateaux}]
Fix \(f\in{\rm Sol}^*(\nu,\mu)\). Then
\[
\mathcal T_{\rho_c}(\mu,\nu)
=
\int f\,d\nu+\int f^{\rho_c}\,d\mu
\]
and
\[
\mathcal T_{\rho_c}(\mu,\gamma)
\geq
\int f\,d\gamma+\int f^{\rho_c}\,d\mu,
\]
so that \eqref{Subgradient-1} follows.

Set \( \nu_t:=\nu+t(\gamma-\nu)\). For \(f_0\in{\rm Sol}^*(\nu,\mu)\) and
\(f_t\in{\rm Sol}^*(\nu_t,\mu)\), we have
\[
\int f_0\,d(\gamma-\nu)
\leq
\frac{
\mathcal T_{\rho_c}(\mu,\nu_t)
-
\mathcal T_{\rho_c}(\mu,\nu)
}{t}
\leq
\int f_t\,d(\gamma-\nu).
\]
Therefore,
\begin{equation}
\label{eq:Gateaux-ineq}
\sup_{f\in{\rm Sol}^*(\nu,\mu)}
\int f\,d(\gamma-\nu)
\leq
\liminf_{t\downarrow0}
\frac{
\mathcal T_{\rho_c}(\mu,\nu_t)
-
\mathcal T_{\rho_c}(\mu,\nu)
}{t}.
\end{equation}
We show that equality holds. Let \(t_n\downarrow0\), and choose \( f_{t_n}\in{\rm Sol}^*(\nu_{t_n},\mu)\).
Since potentials are defined up to additive constants, we
normalize them by \(f_{t_n}(0)=0\). By
\cref{lemma:Lipschitz-extension}, the functions \(f_{t_n}\) are
\(c\)-Lipschitz and satisfy
$|f_{t_n}(x)|\leq c\|x\|.$
Hence, after passing to a subsequence,
\[
f_{t_n}\longrightarrow f_*
\quad\text{uniformly on compact sets}
\]
for some \(c\)-Lipschitz function \(f_*\) with \(f_*(0)=0\). Since \(\nu_{t_n}\to\nu\) in \(\mathcal W_1\),
\cref{theorem:stsbilitypot} implies that 
$f_*\in{\rm Sol}^*(\nu,\mu).$
Moreover, the local uniform convergence and the bound
\[
|f_{t_n}(x)|+|f_*(x)|\leq2c\|x\|
\]
give
\[
\int f_{t_n}\,d(\gamma-\nu)
\longrightarrow
\int f_*\,d(\gamma-\nu).
\]
Consequently,
\begin{align*}
\limsup_{n\to\infty}
\frac{
\mathcal T_{\rho_c}(\mu,\nu_{t_n})
-
\mathcal T_{\rho_c}(\mu,\nu)
}{t_n}
&\leq
\lim_{n\to\infty}\int f_{t_n}\,d(\gamma-\nu)\\
&=
\int f_*\,d(\gamma-\nu)\\
&\leq
\sup_{f\in{\rm Sol}^*(\nu,\mu)}
\int f\,d(\gamma-\nu).
\end{align*}
Since \(t_n\downarrow0\) was arbitrary, this inequality together with
\eqref{eq:Gateaux-ineq} proves the result.
\end{proof}
\begin{proof}[Proof of \cref{lemma:Gateaux-Ganmma}]
The inequality \eqref{Subgradient-1-Gamma} follows directly by integrating
\eqref{Subgradient-1} with respect to \({\bf P}\).
To prove \eqref{eq:Gateaux-Gamma}, set \(\nu_t:=\nu+t(\gamma-\nu).\) 
For normalized potentials
\[
f_0\in{\rm Sol}^*(\nu,\mu),
\qquad
f_t\in{\rm Sol}^*(\nu_t,\mu),
\]
the proof of \cref{lemma:Gateaux} gives
\[
\int f_0\,d(\gamma-\nu)
\leq
\frac{
\mathcal T_{\rho_c}(\mu,\nu_t)
-
\mathcal T_{\rho_c}(\mu,\nu)
}{t}
\leq
\int f_t\,d(\gamma-\nu).
\]
By \cref{lemma:Lipschitz-extension}, we may normalize the potentials at the
origin, in which case
\[
|f_0(x)|+|f_t(x)|\leq2c\|x\|.
\]
Thus,
\[
\left|
\int f_0\,d(\gamma-\nu)
\right|
+
\left|
\int f_t\,d(\gamma-\nu)
\right|
\leq
2c
\left(
\int\|x\|\,d\gamma(x)+\int\|x\|\,d\nu(x)
\right).
\]
The right-hand side is independent of \(\mu\) and \(t\).
Hence, \eqref{eq:Gateaux-Gamma} follows from
\cref{lemma:Gateaux} and the dominated convergence theorem.
\end{proof}
\begin{proof}[Proof of \Cref{Theorem:BP}]
    Since all measures involved in this proof are empirical, they have finite
first moments. Moreover, the term $-c\int\|x\|\,d\mu(x)$ 
does not depend on the barycenter variable. Hence, without loss of generality,
throughout this subsection we use the equivalent functional
\[
\Gamma_{c,\boldsymbol{\mu}}(\nu)
:=
\frac1n\sum_{i=1}^n\mathcal T_{\rho_c}(\mu_i,\nu),
\qquad
\boldsymbol{\mu}
=
\frac1n\sum_{i=1}^n\delta_{\mu_i}.
\]
This modification does not change the set \(m_c(\boldsymbol{\mu})\) of
minimizers.

\textit{Proof of}
\({\rm BP}(m_c(\boldsymbol{\mu}))
\geq n^{-1}\lceil n/2\rceil\)].
We recall the following straightforward inequalities:
\begin{align}
\rho_c(t)
&\geq
c|t|-\frac{c^2}{2},
\qquad t\in\mathbb R,
\label{eq:Hubber-ineq-1}\\
\rho_c(t)
&\leq
c|t|,
\qquad t\in\mathbb R.
\label{eq:Hubber-ineq-2}
\end{align}
From the monotonicity of
\([0,\infty)\ni t\mapsto\rho_c(t)\),
\eqref{eq:Hubber-ineq-1}, and \eqref{eq:Hubber-ineq-2}, we obtain
\begin{align}
\label{eq:bound-Hubber-x-y}
\rho_c(\|x-y\|)
&\geq
\rho_c\bigl(|\|x\|-\|y\||\bigr)
\nonumber\\
&\geq
-\frac{c^2}{2}
+
c|\|x\|-\|y\||
\nonumber\\
&\geq
-\frac{c^2}{2}
+
c\|x\|-c\|y\|\geq
-\frac{c^2}{2}
+
\rho_c(\|x\|)-c\|y\|.
\end{align}

Fix \(s<n/2\), let
\(\boldsymbol{\nu}\in\mathcal Q_s(\boldsymbol{\mu})\), and choose
\(\nu_*\in m_c(\boldsymbol{\nu})\). Since
\(m_c(\boldsymbol{\mu})\) is bounded by
\cref{theorem:Existence-General}, it is enough to show that $\int\|x\|\,d\nu_*(x)\leq C$ 
for a constant independent of
\(\boldsymbol{\nu}\in\mathcal Q_s(\boldsymbol{\mu})\).

Without loss of generality, after relabeling the atoms, we may write
\[
\boldsymbol{\nu}
=
\frac1n\sum_{i=1}^s\delta_{\nu_i}
+
\frac1n\sum_{i=s+1}^n\delta_{\mu_i},
\]
where \(\nu_1,\ldots,\nu_s\) are arbitrary. Define
\[
\boldsymbol{\gamma}
=
\frac1n\sum_{i=1}^s\delta_{\nu_i}
+
\frac{n-s}{n}\delta_{\delta_0}.
\]
By \cref{theorem:stability-cost}, for every
\(\nu\in\mathcal P_1(\mathbb R^d)\),
\begin{equation}
\label{eq:Bound-barycenter-by-trivial}
\left|
\Gamma_{c,\boldsymbol{\gamma}}(\nu)
-
\Gamma_{c,\boldsymbol{\nu}}(\nu)
\right|
\leq
\frac cn\sum_{i=s+1}^n
\mathcal W_1(\delta_0,\mu_i)
=:C_1<\infty.
\end{equation}

Since
\begin{align*}
\Gamma_{c,\boldsymbol{\gamma}}(\nu)
&=
\frac1n\sum_{i=1}^s
\mathcal T_{\rho_c}(\nu_i,\nu)
+
\frac{n-s}{n}
\mathcal T_{\rho_c}(\delta_0,\nu)=
\frac1n\sum_{i=1}^s
\mathcal T_{\rho_c}(\nu_i,\nu)
+
\frac{n-s}{n}
\int\rho_c(\|y\|)\,d\nu(y),
\end{align*}
the relation \eqref{eq:bound-Hubber-x-y} yields
\begin{align*}
\Gamma_{c,\boldsymbol{\gamma}}(\nu)
&\geq
\frac1n\sum_{i=1}^s
\int\rho_c(\|x\|)\,d\nu_i(x)
-\frac{sc^2}{2n}
-\frac{sc}{n}\int\|y\|\,d\nu(y)+
\frac{n-s}{n}
\int\rho_c(\|y\|)\,d\nu(y).
\end{align*}
Using \eqref{eq:Hubber-ineq-1}, we obtain
\begin{align}
\label{eq:Bound-in-Gamma-gamma}
\Gamma_{c,\boldsymbol{\gamma}}(\nu)
&\geq
\frac1n\sum_{i=1}^s
\int\rho_c(\|x\|)\,d\nu_i(x)
-\frac{c^2}{2}
+
\frac{(n-2s)c}{n}
\int\|y\|\,d\nu(y)
\nonumber\\
&=
\Gamma_{c,\boldsymbol{\gamma}}(\delta_0)
-\frac{c^2}{2}
+
\frac{(n-2s)c}{n}
\int\|y\|\,d\nu(y).
\end{align}

Therefore, by \eqref{eq:Bound-barycenter-by-trivial} and
\eqref{eq:Bound-in-Gamma-gamma},
\begin{align*}
\Gamma_{c,\boldsymbol{\nu}}(\nu_*)
&\geq
\Gamma_{c,\boldsymbol{\gamma}}(\nu_*)-C_1\\
&\geq
\Gamma_{c,\boldsymbol{\gamma}}(\delta_0)
-\frac{c^2}{2}
-C_1
+
\frac{(n-2s)c}{n}
\int\|y\|\,d\nu_*(y)\\
&\geq
\Gamma_{c,\boldsymbol{\nu}}(\delta_0)
-\frac{c^2}{2}
-2C_1
+
\frac{(n-2s)c}{n}
\int\|y\|\,d\nu_*(y)\\
&\geq
\Gamma_{c,\boldsymbol{\nu}}(\nu_*)
-\frac{c^2}{2}
-2C_1
+
\frac{(n-2s)c}{n}
\int\|y\|\,d\nu_*(y),
\end{align*}
where the last inequality follows from the optimality of \(\nu_*\).
Consequently,
\[
\frac{(n-2s)c}{n}
\int\|y\|\,d\nu_*(y)
\leq
\frac{c^2}{2}+2C_1.
\]
Since \(s<n/2\), this gives the required uniform first-moment bound. 
Moreover, for every
\(\alpha\in m_c(\boldsymbol{\nu})\) and
\(\beta\in m_c(\boldsymbol{\mu})\),
\[
\mathcal W_1(\alpha,\beta)
\leq
\mathcal W_1(\alpha,\delta_0)
+
\mathcal W_1(\delta_0,\beta)
=
\int\|x\|\,d\alpha(x)
+
\int\|x\|\,d\beta(x).
\]
Thus the uniform first-moment bound, together with the boundedness of
\(m_c(\boldsymbol{\mu})\), implies that
\[
\sup_{\boldsymbol{\nu}\in\mathcal Q_s(\boldsymbol{\mu})}
{\rm dist}
\bigl(m_c(\boldsymbol{\nu}),m_c(\boldsymbol{\mu})\bigr)
<\infty.
\] 
Hence, ${\rm BP}(m_c(\boldsymbol{\mu}))
\geq
\frac1n\left\lceil\frac n2\right\rceil.$ 

\textit{Proof of}
\({\rm BP}(m_c(\boldsymbol{\mu}))
\leq n^{-1}(\lfloor n/2\rfloor+1)\)]
Set $s:=\left\lfloor\frac n2\right\rfloor+1, $ 
so that \(2s>n\). For \(m>0\), define
\[
\boldsymbol{\nu}^m
=
\frac1n\sum_{i=1}^s\delta_{\delta_{me_1}}
+
\frac1n\sum_{i=s+1}^n\delta_{\mu_i}.
\] 
Then
\(\boldsymbol{\nu}^m\in\mathcal Q_s(\boldsymbol{\mu})\). Define  $\boldsymbol{\gamma}^m
=
\frac{s}{n}\delta_{\delta_{me_1}}
+
\frac{n-s}{n}\delta_{\delta_0}.$  
By \cref{theorem:stability-cost}, for every
\(\nu\in\mathcal P_1(\mathbb R^d)\),
\begin{equation}
\label{eq:BP-stability-comparison}
\left|
\Gamma_{c,\boldsymbol{\nu}^m}(\nu)
-
\Gamma_{c,\boldsymbol{\gamma}^m}(\nu)
\right|
\leq
\frac cn\sum_{i=s+1}^n
\mathcal W_1(\mu_i,\delta_0)
=:C_1<\infty,
\end{equation} 
where \(C_1\) does not depend on \(m\). Fix $\nu_m\in m_c(\boldsymbol{\nu}^m).$ 
Then
\begin{align}
\label{eq:BP-proof-1}
\Gamma_{c,\boldsymbol{\gamma}^m}(\nu_m)
&\leq
\Gamma_{c,\boldsymbol{\nu}^m}(\nu_m)+C_1
\nonumber\\
&\leq
\Gamma_{c,\boldsymbol{\nu}^m}(\delta_{me_1})+C_1
\leq
\Gamma_{c,\boldsymbol{\gamma}^m}(\delta_{me_1})+2C_1.
\end{align}
Moreover,
\begin{equation}
\label{eq:BP-proof-2}
\Gamma_{c,\boldsymbol{\gamma}^m}(\delta_{me_1})
=
\frac{n-s}{n}\rho_c(m)
\leq
\frac{c m(n-s)}{n}.
\end{equation}
On the other hand,
\begin{align*}
\Gamma_{c,\boldsymbol{\gamma}^m}(\nu_m)
&=
\frac{s}{n}
\int\rho_c(\|me_1-x\|)\,d\nu_m(x)
+
\frac{n-s}{n}
\int\rho_c(\|x\|)\,d\nu_m(x).
\end{align*}
Using \eqref{eq:Hubber-ineq-1}, we obtain
\begin{align*}
\Gamma_{c,\boldsymbol{\gamma}^m}(\nu_m)
&\geq
\frac{cs}{n}
\int\|me_1-x\|\,d\nu_m(x)
+
\frac{c(n-s)}{n}
\int\|x\|\,d\nu_m(x)
-\frac{c^2}{2}\\
&\geq
\frac{csm}{n}
+
\frac{c(n-2s)}{n}
\int\|x\|\,d\nu_m(x)
-\frac{c^2}{2},
\end{align*} 
where we used $\|me_1-x\|\geq m-\|x\|.$ Combining \eqref{eq:BP-proof-1} and \eqref{eq:BP-proof-2}, we obtain
\[
\frac{csm}{n}
+
\frac{c(n-2s)}{n}
\int\|x\|\,d\nu_m(x)
-\frac{c^2}{2}
\leq
\frac{cm(n-s)}{n}+2C_1.
\]
Since \(2s>n\), this is equivalent to
\[
\frac{c(2s-n)}{n}
\int\|x\|\,d\nu_m(x)
\geq
\frac{cm(2s-n)}{n}
-\frac{c^2}{2}
-2C_1.
\] 
Consequently,
\[
\int\|x\|\,d\nu_m(x)
\geq
m
-
\frac{n}{c(2s-n)}
\left(
\frac{c^2}{2}+2C_1
\right),
\]
and hence
\begin{equation}
\label{eq:uniform-divergence-barycenters}
\inf_{\nu_m\in m_c(\boldsymbol{\nu}^m)}
\int\|x\|\,d\nu_m(x)
\longrightarrow\infty
\qquad\text{as }m\to\infty.
\end{equation}
Finally, for every
\(\alpha\in m_c(\boldsymbol{\nu}^m)\) and
\(\beta\in m_c(\boldsymbol{\mu})\), the  triangle inequality for
\(\mathcal W_1\) gives
\[
\mathcal W_1(\alpha,\beta)
\geq
\left|
\mathcal W_1(\alpha,\delta_0)
-
\mathcal W_1(\beta,\delta_0)
\right|
=
\left|
\int\|x\|\,d\alpha(x)
-
\int\|x\|\,d\beta(x)
\right|.
\]
Since \(m_c(\boldsymbol{\mu})\) is bounded in \(\mathcal W_1\),
\eqref{eq:uniform-divergence-barycenters} implies that
${\rm dist}
\bigl(
m_c(\boldsymbol{\nu}^m),
m_c(\boldsymbol{\mu})
\bigr)
\longrightarrow\infty.$
Therefore,
\[
{\rm BP}(m_c(\boldsymbol{\mu}))
\leq
\frac1n
\left(
\left\lfloor\frac n2\right\rfloor+1
\right).
\]
\end{proof}
\begin{proof}[Proof of \Cref{prop:huber-barycenter-1d}]
For $u\in(0,1)$ and $z\in\mathbb R$, set
\[
S_{c,u}(z)
:=
\int_{\mathcal P_1(\mathbb R)}
\psi_c\bigl(z-Q_\mu(u)\bigr)\,d{\bf P}(\mu).
\]
For every $q\in\mathbb R$, the map $z\mapsto \rho_c(|z-q|)$ 
is convex and continuously differentiable, with derivative
$\psi_c(z-q)$. Since $|\psi_c|\leq c$, differentiation under the
integral sign gives
$\Phi_{c,u}'(z)=S_{c,u}(z).$ 
In particular, $\Phi_{c,u}$ is convex and continuously differentiable,
while $S_{c,u}$ is continuous and nondecreasing. For every fixed $\mu\in\mathcal P_1(\mathbb R)$ and $u\in(0,1)$,
the quantity $Q_\mu(u)$ is finite. Hence
$\lim_{z\to \pm \infty}\psi_c\bigl(z-Q_\mu(u)\bigr) =\pm  c$. 
Since $|\psi_c|\leq c$, dominated convergence yields
\[
\lim_{z\to-\infty}S_{c,u}(z)=-c,
\qquad
\lim_{z\to+\infty}S_{c,u}(z)=c.
\]
Therefore, the zero set of $S_{c,u}$ is nonempty and bounded. It is
closed by continuity and is an interval because $S_{c,u}$ is
nondecreasing. Finally, since $\Phi_{c,u}$ is convex and
differentiable,
\[
\operatorname{argmin}_{z\in\mathbb R}\Phi_{c,u}(z)
=
\{z\in\mathbb R:S_{c,u}(z)=0\}.
\]
This proves that $M_c(u)$ is a nonempty compact interval. We next establish a monotonicity property of its endpoints. Define $a_c(u):=\min M_c(u).$ 
If $0<u\leq v<1$, then
\[
Q_\mu(u)\leq Q_\mu(v)
\qquad\text{for every }\mu\in\mathcal P_1(\mathbb R).
\]
Since $\psi_c$ is nondecreasing, it follows that
\[
S_{c,u}(z)\geq S_{c,v}(z)
\qquad\text{for every }z\in\mathbb R.
\]
Moreover, $a_c(u)=\min\{z\in\mathbb R:S_{c,u}(z)=0\}.$ 
Consequently, $a_c(u)\leq a_c(v),$ 
so $a_c$ is nondecreasing and, in particular, measurable. Now we can show the final claim.  Suppose first that
\[
S_{c,u}\bigl(Q_\nu(u)\bigr)=0
\qquad\text{for a.e. }u\in(0,1).
\]
Then $Q_\nu(u)\in M_c(u)$ for almost every $u$. Thus, for every
$\gamma\in\mathcal P_1(\mathbb R)$,
\[
\Phi_{c,u}\bigl(Q_\nu(u)\bigr)
\leq
\Phi_{c,u}\bigl(Q_\gamma(u)\bigr)
\qquad\text{for a.e. }u.
\]
Integrating and using
\eqref{eq:objective-quantile-decomposition}, we obtain $\Gamma_{c,{\bf P}}(\nu)
\leq
\Gamma_{c,{\bf P}}(\gamma).$ 
Hence $\nu$ is a Huber--Wasserstein barycenter.  Conversely, suppose that $\nu$ is a Huber--Wasserstein barycenter. Since $a_c$ is monotone, there exists a probability measure $\nu'$ with quantile function $a_c$. Furthermore, by definition, for all $\gamma \in \mathcal{P}_1(\mathbb{R})$,
$$
\Phi_{c,u}(a_c(u))\leq \Phi_{c,u}(Q_\gamma(u))\qquad\text{for a.e. }u.
$$
In particular, $\nu'\in \mathcal{P}_1(\mathbb{R}^d)$. Using  \eqref{eq:objective-quantile-decomposition} and the optimality of $\nu$ gives
\[
0
\leq
\Gamma_{c,{\bf P}}(\nu')-\Gamma_{c,{\bf P}}(\nu)
=
\int_0^1
\left[
\Phi_{c,u}\bigl(a_c(u)\bigr)
-
\Phi_{c,u}\bigl(Q_\nu(u)\bigr)
\right]du.
\]
The integrand is nonpositive, and thus $\Phi_{c,u}\bigl(a_c(u)\bigr)
=
\Phi_{c,u}\bigl(Q_\nu(u)\bigr)$ for a.e.~$u\in (0,1).$ 
Therefore, $Q_\nu(u)\in M_c(u) $  for a.e.~$u\in (0,1).$
Equivalently,
\[
\int_{\mathcal P_1(\mathbb R)}
\psi_c\bigl(Q_\nu(u)-Q_\mu(u)\bigr)\,d{\bf P}(\mu)
=0
\qquad\text{for a.e. }u\in(0,1).
\]
\end{proof}
\begin{proof}[Proof of \Cref{prop:inside-influence-function}]
Fix $u$ and write
$q=Q_{\nu_c}(u)$ and
$q_\varepsilon=Q_{\nu_{c,\varepsilon,\eta}}(u)$.
By \Cref{prop:huber-barycenter-1d},
\[
(1-\varepsilon)
\int
\psi_c\bigl(q_\varepsilon-Q_\mu(u)\bigr)\,d{\bf P}(\mu)
+
\varepsilon
\psi_c\bigl(q_\varepsilon-Q_\eta(u)\bigr)
=0.
\]
The corresponding equation at $\varepsilon=0$ is
$\int\psi_c(q-Q_\mu(u))\,d{\bf P}(\mu)=0$.
The uniqueness assumption and continuity of the scalar Huber objective imply
$q_\varepsilon\to q$.

For $s\neq0$, let
\[
R_s
:=
\frac1s
\int
\left[
\psi_c(q+s-Q_\mu(u))
-
\psi_c(q-Q_\mu(u))
\right]
d{\bf P}(\mu).
\]
Since $\psi_c$ is $1$-Lipschitz, $|R_s|\leq1$. Moreover, for every $\mu$ such
that $|Q_\mu(u)-q|\neq c$, the integrand divided by $s$ converges to
$\mathbbm 1_{\{|Q_\mu(u)-q|<c\}}$. Hence dominated convergence gives
$R_s\to a_c(u)$.

Subtracting the score equation at $\varepsilon=0$ from the contaminated score
equation gives
\[
(1-\varepsilon)
R_{q_\varepsilon-q}
\frac{q_\varepsilon-q}{\varepsilon}
+
\psi_c\bigl(q_\varepsilon-Q_\eta(u)\bigr)
=0.
\]
Since $a_c(u)>0$, letting $\varepsilon\downarrow0$ yields
\[
\frac{q_\varepsilon-q}{\varepsilon}
\longrightarrow
-\frac{\psi_c(q-Q_\eta(u))}{a_c(u)}
=
\frac{\psi_c(Q_\eta(u)-q)}{a_c(u)},
\]
where we used that $\psi_c$ is odd. The bound follows from
$|\psi_c|\leq c$.
\end{proof}

\begin{proof}[Proof of \eqref{eq:median-IF-limit}]
Fix $u$ and let $m(u)$ be the unique median of $Q_\mu(u)$. By
\Cref{prop:huber-barycenter-1d}, the Huber center at level $u$ converges to
$m(u)$ as $c\downarrow0$. Continuity of $f_u$ at $m(u)$ gives
\[
{\bf P}\bigl(
|Q_\mu(u)-Q_{\nu_c}(u)|<c
\bigr)
=
2c f_u(m(u))+o(c).
\]
If $Q_\eta(u)\neq m(u)$, then, for all sufficiently small $c$,
$\psi_c(Q_\eta(u)-Q_{\nu_c}(u))
=
c\,\operatorname{sgn}(Q_\eta(u)-m(u))$.
The result follows from \eqref{eq:inside-IF-formula}.
\end{proof}

\begin{proof}[Proof of \Cref{prop:inside-pointwise-clt}]
Fix $u$, write $q=Q_{\nu_c}(u)$ and
$\widehat q_n=Q_{\widehat\nu_{n,c}}(u)$, and set
\[
S_n(z)
:=
\frac1n\sum_{i=1}^n
\psi_c\bigl(z-Q_{\mu_i}(u)\bigr).
\]
By construction, $S_n(\widehat q_n)=0$. The uniqueness of the population
minimizer and the law of large numbers imply
$\widehat q_n\to q$ in probability.

Since $\psi_c$ is absolutely continuous with derivative
$\mathbbm 1_{\{|t|<c\}}$ almost everywhere,
\[
S_n(\widehat q_n)-S_n(q)
=
(\widehat q_n-q)
\int_0^1
\frac1n\sum_{i=1}^n
\mathbbm 1_{\{
|q+t(\widehat q_n-q)-Q_{\mu_i}(u)|<c
\}}
\,dt.
\]
The factor multiplying $\widehat q_n-q$ converges in probability to $a_c(u)$
by the law of large numbers, the consistency of $\widehat q_n$, and the
assumption
${\bf P}(|Q_\mu(u)-q|=c)=0$. Therefore,
\[
\sqrt n(\widehat q_n-q)
=
\frac1{a_c(u)\sqrt n}
\sum_{i=1}^n
\psi_c\bigl(Q_{\mu_i}(u)-q\bigr)
+
o_{\mathbb P}(1).
\]
The summands are bounded, have mean zero by
\eqref{eq:population-huber-score}, and have variance
$\int\psi_c(Q_\mu(u)-q)^2\,d{\bf P}(\mu)$; therefore, applying the central limit theorem proves
\eqref{eq:inside-pointwise-clt}. The finite-dimensional statement follows by
the multivariate central limit theorem.
\end{proof}

\begin{proof}[Proof of \Cref{prop:inside-outside-translations}]
For every $u\in(0,1)$,
$Q_{\mu_Z}(u)=Q_{\mu_0}(u)+Z$. Hence
\Cref{prop:huber-barycenter-1d} implies that the proposed barycenter has
quantile function $Q_{\mu_0}(u)+\theta_c$, and is therefore
$(x\mapsto x+\theta_c)_\#\mu_0$.

For the distance-based Huber mean, all random quantile functions belong to
the affine line
$\{Q_{\mu_0}+\theta:\theta\in\mathbb R\}$ in $L^2(0,1)$.
Orthogonal projection of any candidate quantile function onto this affine
line cannot increase its $L^2$ distance to any $Q_{\mu_0}+Z$. Thus the
minimization in \eqref{eq:Wasserstein-bary-huber-BAD} can be restricted to
this line. For such candidates,
\[
\mathcal W_2\bigl(
(x\mapsto x+\theta)_\#\mu_0,\mu_Z
\bigr)
=
|\theta-Z|,
\]
so the outside objective is
$\mathbb E[\rho_c(|Z-\theta|)]$, whose unique minimizer is $\theta_c$.
The empirical statement follows identically.
\end{proof}

\begin{proof}[Proof of \Cref{prop:localized-tail-influence}]
The contaminating quantile is
$Q_{\eta_{M,\delta}}(u)
=
M\mathbbm 1_{(1-\alpha,1)}(u)$.
For the proposed estimator, \Cref{prop:inside-influence-function} applies at
${\bf P}_0$ with $a_c(u)=1$. Since $M>c$,
\[
\operatorname{IF}^{\rm in}_c(\eta_{M,\alpha};{\bf P}_0)(u)
=
c\mathbbm 1_{(1-\alpha,1)}(u).
\]

For the outside estimator, let $q_\varepsilon$ be the quantile function of
the contaminated center. At ${\bf P}_0$ the center is zero. For small
$\varepsilon$, the contribution of ${\bf P}_0$ is in the quadratic part of
the Huber loss, while
$\|Q_{\eta_{M,\alpha}}\|_{L^2}=M\sqrt\alpha>c$. The first-order condition is
\[
(1-\varepsilon)q_\varepsilon
+
\varepsilon c
\frac{
q_\varepsilon-Q_{\eta_{M,\alpha}}
}{
\|q_\varepsilon-Q_{\eta_{M,\alpha}}\|_{L^2}
}
=0.
\]
Since $q_\varepsilon\to0$, division by $\varepsilon$ gives
\[
\frac{q_\varepsilon}{\varepsilon}
\longrightarrow
c\frac{
Q_{\eta_{M,\alpha}}
}{
\|Q_{\eta_{M,\alpha}}\|_{L^2}
}
=
\frac{c}{\sqrt\alpha}
\mathbbm 1_{(1-\alpha,1)}.
\]
The stated $L^2$ magnitudes follow by direct integration.
\end{proof}

\begin{proof}[Proof of \Cref{prop:two-block-efficiency}]
The condition $L_\tau>\tau+b$ guarantees that the two atoms in
\eqref{eq:two-block-random-measure} are ordered. Hence every observed quantile
function is constant on $(0,1/2]$ and on $(1/2,1)$. Projection onto the
two-dimensional space of such functions cannot increase the $L^2$ distance
to any observation, so the outside-Huber center may be computed within this
two-block class.

For the inside estimator, the lower half depends only on $U$. Since
$|U|\leq b$ and $c>2b$, the Huber loss is quadratic throughout the relevant
range, and therefore
$\widehat\theta^{\,\rm in}_{n}=n^{-1}\sum_{i=1}^nU_i$. This gives the first
central limit theorem.

For the outside estimator, write a two-block candidate as $\theta_1$ on the
lower half and $L_\tau+\theta_2$ on the upper half. Up to an irrelevant
constant factor, the first component of the estimating equation is
\[
\frac1n\sum_{i=1}^n
w_i(\theta_1,\theta_2)(\theta_1-U_i)=0,
\]
where
\[
w_i(\theta_1,\theta_2)
=
\min\left\{
1,
\frac{\sqrt2c}{
\sqrt{(\theta_1-U_i)^2+(\theta_2-\tau B_iS_i)^2}
}
\right\}.
\]
By symmetry the population center is $(0,0)$. The cross derivative of the two
population score equations and the covariance between their score components
are both zero at the origin. Thus the asymptotic variance of the lower
coordinate is the variance of its score divided by the square of the
corresponding derivative.

When $B=0$, the weight at the origin is one. When $B=1$,
$\tau>\sqrt2c$ gives
\[
w_i(0,0)
=
\frac{\sqrt2c}{\sqrt{U_i^2+\tau^2}}.
\]
Differentiating the lower score and computing its variance give
\[
A_\tau
=
(1-\varepsilon)
+
\varepsilon
\mathbb E\left[
\frac{\sqrt2c\,\tau^2}{(U^2+\tau^2)^{3/2}}
\right],
\]
and
\[
B_\tau
=
(1-\varepsilon)\sigma_U^2
+
2\varepsilon c^2
\mathbb E\left[
\frac{U^2}{U^2+\tau^2}
\right].
\]
The standard multivariate $M$-estimator expansion yields
$V_\tau=B_\tau/A_\tau^2$, which is
\eqref{eq:two-block-outside-variance}. Finally, dominated convergence gives
$A_\tau\to1-\varepsilon$ and
$B_\tau\to(1-\varepsilon)\sigma_U^2$, proving the stated variance and
efficiency limits.
\end{proof}
\section{Sinkhorn iterations for Huber--Wasserstein barycenters}
\label{app:huber-sinkhorn-barycenter}

We describe the fixed-grid Sinkhorn procedure used to compute the image Huber--Wasserstein barycenters.  The iterations are the standard entropic barycenter iterations of \cite{Cuturi.Doucet.2014.ICML}, equivalently obtained from the iterative Bregman projection formulation of entropic barycenters in \cite{Benamou.Carlier.2015.SISC}, using the Huber cost $\rho_c$ instead of the quadratic cost.

Let \(x_1,\ldots,x_m\in\mathbb R^d\), and set  \(\Delta_m:=\{a\in\mathbb R_{\geq 0}^m:\sum_i a_i=1\}\). Consider  $a^{(1)},\ldots,a^{(N)}\in\Delta_m,$ 
which we identify with discrete probability measures $P^{(1)},\dots, P^{(N)}$ supported on $\{x_1,\dots,x_m\}$ via the relation $P^{(k)}(x_j)=a^{(k)}_j$ for all $1\leq k\leq N$ and $1\leq j\leq m$.  For a fixed Huber cutoff \(c>0\), we define the discrete Huber cost matrix
\[
(C^{(c)}_{ij})_{i,j=1}^m
=
(\rho_c(\|x_i-x_j\|))_{i,j=1}^m\in \mathbb{R}_{\geq 0}^{m\times m}.
\]
\begin{definition}
For \(\varepsilon>0\), the discrete entropically regularized Huber transport cost is
\begin{equation}
    \label{eq:def_of_discrete_huber_primal_bary}
T_{c,\varepsilon}(q,a)
:=
\min_{P\in\Pi(q,a)}
\left\{
\sum_{i,j} C^{(c)}_{ij}P_{ij}
+
\varepsilon\sum_{i,j}P_{ij}(\log P_{ij}-1)
\right\}.
\end{equation}
A regularized Huber barycenter is then
\begin{equation}
\label{eq:regularised_huber_barycenter_problem}
\widehat q_{c,\varepsilon}
\in 
\argmin_{q\in\Delta_m}
\sum_{\ell=1}^N
\lambda_\ell
T_{c,\varepsilon}(q,a^{(\ell)}),
\end{equation}
where $\sum_{\ell=1}^N\lambda_\ell=1$.
\end{definition}
\begin{remark}
For the empirical computation, the additional term $-c\int \|x\|\,d\mu(x)$ 
is omitted, because it is independent of the candidate barycenter \(q\) and hence does not affect the minimizer.
\end{remark}

\begin{theorem}[Existence and uniqueness of the regularized Huber barycenter]
\label{thm:huber-sinkhorn-existence-uniqueness}
Fix \(c>0\) and \(\varepsilon>0\). Set
\[
\lambda_\ell>0,
\qquad
\sum_{\ell=1}^N\lambda_\ell=1,\quad 
a^{(1)},\ldots,a^{(N)}\in\Delta_m.
\]Then the
regularized Huber barycenter problem \eqref{eq:regularised_huber_barycenter_problem} admits a unique minimizer \(\widehat q_{c,\varepsilon}\in\mathbb{R}_{>0}^m\).
\end{theorem}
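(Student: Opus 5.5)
We prove existence by compactness and continuity, and uniqueness by strict convexity of the objective
\[
F(q):=\sum_{\ell=1}^N\lambda_\ell T_{c,\varepsilon}(q,a^{(\ell)})
\]
on the simplex $\Delta_m$. Positivity of the minimizer will come from a directional-derivative argument at the boundary.

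\emph{Existence.} For fixed $q\in\Delta_m$ the inner problem \eqref{eq:def_of_discrete_huber_primal_bary} minimizes a continuous, strictly convex function of $P$ (the entropy term is strictly convex on $\mathbb{R}^{m\times m}_{\geq0}$, with the convention $0\log 0=0$) over the compact convex polytope $\Pi(q,a)$. So $T_{c,\varepsilon}(q,a)$ is finite and attained at a unique $P^*(q)$. Continuity of $q\mapsto T_{c,\varepsilon}(q,a)$ on $\Delta_m$ will follow from the dual representation
\[
T_{c,\varepsilon}(q,a)=\max_{f,g}\Bigl\{\langle f,q\rangle+\langle g,a\rangle-\varepsilon\sum_{i,j}e^{(f_i+g_j-C^{(c)}_{ij})/\varepsilon}\Bigr\}.
\]
Alternatively, it follows from continuity of the feasible-set correspondence $q\mapsto\Pi(q,a)$ (a transportation polytope, which is both lower and upper hemicontinuous) combined with Berge's maximum theorem. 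Since $\Delta_m$ is compact, $F$ then attains its minimum.

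\emph{Convexity and uniqueness.} Joint convexity of the inner objective in $(q,P)$, together with linearity of the marginal constraints, makes $q\mapsto T_{c,\varepsilon}(q,a)$ convex. Strict convexity in $q$ is the key step. We get it from the dual: $T_{c,\varepsilon}(\cdot,a)$ is the convex conjugate (up to sign conventions) of
\[
f\mapsto \varepsilon\sum_i \log\Bigl(\sum_j e^{(f_i+g_j(f)-C_{ij})/\varepsilon}\Bigr)\text{-type}
\]
smooth functions. More directly, we use the semi-dual $T(q,a)=\max_g\{\langle g,a\rangle-\varepsilon\sum_i q_i\log\sum_j e^{(g_j-C_{ij})/\varepsilon}+\varepsilon\sum_i q_i\log q_i\}$. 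This formula is obtained by optimizing over $f$ explicitly. It writes $T$ as the strictly convex function $\varepsilon\sum_i q_i\log q_i$ plus a supremum of affine functions of $q$, hence a convex function. Strict convexity of the sum on $\Delta_m$ then gives uniqueness of $\widehat q_{c,\varepsilon}$, since all $\lambda_\ell>0$. I expect the main obstacle to be carrying out this explicit elimination of $f$ carefully; it also requires a check that the maximum over $g$ is attained, which holds because the objective is invariant under constants and coercive modulo constants.

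\emph{Positivity.} Suppose some coordinate $\widehat q_i=0$. The term $\varepsilon\sum_i q_i\log q_i$ has derivative $-\infty$ in the direction moving mass into coordinate $i$, while the remaining supremum-of-affine part is Lipschitz on $\Delta_m$, because the optimal $g$ stays bounded uniformly in $q$ (the $C_{ij}$ are bounded). Moving a small mass $t$ from a positive coordinate to $i$ therefore decreases $F$ at rate $\varepsilon t\log t$ plus $O(t)$, which contradicts minimality. Hence $\widehat q_{c,\varepsilon}\in\mathbb{R}^m_{>0}$.
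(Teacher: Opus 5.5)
Your proof is correct, but it takes a genuinely different route from the paper. The paper stays entirely primal. It rewrites the barycenter problem as a single minimization of $\sum_\ell\lambda_\ell\operatorname{KL}(P^{(\ell)}\mid K^{(c,\varepsilon)})$ over the compact convex set of coupling tuples with column marginals $a^{(\ell)}$ and a common row marginal. Existence then comes from compactness. Strict convexity in $\boldsymbol P$ gives a unique optimal tuple $\boldsymbol P_\star$; this is where $\lambda_\ell>0$ for \emph{every} $\ell$ is used. Uniqueness of $q$ follows because any minimizing $q$, together with its unique inner couplings, must coincide with $\boldsymbol P_\star$. Positivity comes from moving a fraction $\theta$ of row $k$ into the empty row $i$ in all couplings at once: this changes the cost by $O(\theta)$ and the entropy by $\varepsilon(q_\star)_k[\theta\log\theta+(1-\theta)\log(1-\theta)]$.

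Your semi-dual instead proves a stronger structural fact: each map $q\mapsto T_{c,\varepsilon}(q,a)$ is itself strictly convex on $\Delta_m$, being $\varepsilon\sum_i q_i\log q_i$ plus a convex, indeed Lipschitz, function. This buys three things:
\begin{itemize}
\item uniqueness even if some weights vanish;
\item continuity, and hence existence, without Berge's theorem;
\item positivity as a one-line consequence of the infinite boundary slope of the entropy.
\end{itemize}
The paper's route avoids all duality bookkeeping. It also produces uniqueness of the optimal coupling tuple, which it reuses to identify the limit of the Bregman projections in the Sinkhorn convergence theorem.

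A few details of your duality step need repair.
\begin{itemize}
\item With the paper's normalization $P(\log P-1)$, the semi-dual carries an extra constant $-\varepsilon$. This is harmless.
\item When $q$ or $a$ has zero entries, there is no strictly positive feasible coupling. Strong duality should then be obtained by restricting to the supports and letting the corresponding potentials tend to $-\infty$.
\item In particular, if $a_j=0$ the semi-dual objective strictly increases as $g_j\downarrow-\infty$. So the supremum over $g$ is \emph{not} attained, and your coercivity-modulo-constants claim is false. Attainment is unnecessary, though: a pointwise supremum of affine functions is convex whether or not it is attained.
\item For the same reason, do not justify the Lipschitz bound through boundedness of an optimal $g$. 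Use instead the $g$-independent estimate $|\log\sum_j e^{(g_j-C_{ij})/\varepsilon}-\log\sum_j e^{(g_j-C_{i'j})/\varepsilon}|\le\varepsilon^{-1}\max_j|C_{ij}-C_{i'j}|$. It shows that every affine piece, and hence their supremum, is Lipschitz on $\Delta_m$ with respect to $\|\cdot\|_1$, with constant $\frac12\max_{i,i',j}|C_{ij}-C_{i'j}|$.
\end{itemize}
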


\begin{proof}
We use the convention \(0\log 0=0\). For matrices \(P\) and \(R\) with nonnegative and strictly positive entries respectively, define the generalized
Kullback--Leibler divergence
\[
\operatorname{KL}(P\mid R)
:=
\sum_{i,j}
\left[
P_{ij}\log\left(\frac{P_{ij}}{R_{ij}}\right)
-P_{ij}+R_{ij}
\right].
\]
Define the matrix $K^{(c,\varepsilon)}$ entrywise as $K_{ij}^{(c,\varepsilon)} := \exp\left(-{C_{ij}^{(c)}}/{\varepsilon}\right)$. As \(C_{ij}^{(c)}<\infty\), it follows that $K^{(c,\varepsilon)}_{ij}>0$, for $1\leq i,j\leq m$.
Furthermore,
\begin{align*}
\varepsilon\operatorname{KL}(P\mid K^{(c,\varepsilon)})
&=
\sum_{i,j}C_{ij}^{(c)}P_{ij}
+
\varepsilon\sum_{i,j}
P_{ij}\left(\log P_{ij}-1\right)
+
\varepsilon\sum_{i,j}K^{(c,\varepsilon)}_{ij}.
\end{align*}
Note that the last term is independent of \(P\); consequently, the regularized barycenter problem is equivalent, up to an additive constant, to
\begin{equation}
\label{eq:joint-kl-huber-barycenter}
\min_{\boldsymbol P\in\mathcal C_1\cap\mathcal C_2}
\sum_{\ell=1}^N
\lambda_\ell
\operatorname{KL}\left(P^{(\ell)}\mid K^{(c,\varepsilon)}\right),
\end{equation}
where
\(
\boldsymbol P
=
\left(P^{(1)},\ldots,P^{(N)}\right),
\) and
\[
\mathcal C_1
=
\left\{
\boldsymbol P:
\left(P^{(\ell)}\right)^\top\mathbf 1_m
=
a^{(\ell)}
\text{ for every }\ell
\right\}, \quad
\mathcal C_2
=
\left\{
\boldsymbol P:
P^{(1)}\mathbf 1_m
=
\cdots
=
P^{(N)}\mathbf 1_m
\right\}.
\]
Indeed, the common row marginal of any
\(\boldsymbol P\in\mathcal C_1\cap\mathcal C_2\) belongs to
\(\Delta_m\), because each \(P^{(\ell)}\) has total mass one, and the equality of both problems then follows by the definition of $\mathcal{C}_1$.

We show that the feasible set is nonempty: for any \(q_0\in\Delta_m\), we have
\[
P_0^{(\ell)}
=
q_0\left(a^{(\ell)}\right)^\top\in\mathcal C_1\cap\mathcal C_2,
\]
showing the feasible set is nonempty. The feasible set is also
compact and convex, as it is the intersection of two compact and convex sets.
 Note that the function
\[
\boldsymbol P
\longmapsto
\sum_{\ell=1}^N
\lambda_\ell
\operatorname{KL}\left(P^{(\ell)}\mid K\right)
\]
is strictly convex, since $s\longmapsto s\log s-s$ is strictly convex on \([0,\infty)\) and every \(\lambda_\ell\) is
positive. Therefore, \eqref{eq:joint-kl-huber-barycenter} has a unique
minimizer
$\boldsymbol P_\star
=
\left(P_\star^{(1)},\ldots,P_\star^{(N)}\right).$
Let \(q_\star\) denote its common row marginal.

For every fixed pair of marginals \(q,a\in\Delta_m\), the regularized
transport objective is strictly convex in \(P\). Hence the minimizer
in the definition of \(T_{c,\varepsilon}(q,a)\) is unique, even when
\(q\) or \(a\) has zero entries.

Suppose that \(q\) minimizes the barycenter objective, and let
\(P^{(\ell)}\) be the unique regularized optimal coupling between
\(q\) and \(a^{(\ell)}\). Then $\left(P^{(1)},\ldots,P^{(N)}\right)$ 
minimizes \eqref{eq:joint-kl-huber-barycenter}. By uniqueness of the
joint minimizer, \(
P^{(\ell)}
=
P_\star^{(\ell)}\) for every $\ell$, and hence \(q=q_\star\). Therefore the regularized barycenter is unique
and $q_\star=\widehat q_{c,\varepsilon}$.

It remains to show that \(q_\star\) is strictly positive. Suppose that $(q_\star)_i=0$ for some \(i\), and choose \(k\) such that \((q_\star)_k>0\). For
\(0<\theta<1\), define \(P_\theta^{(\ell)}\) by
\[
(P_\theta^{(\ell)})_{ij}
=
\theta(P_\star^{(\ell)})_{kj},
\qquad
(P_\theta^{(\ell)})_{kj}
=
(1-\theta)(P_\star^{(\ell)})_{kj},
\]
leaving all other entries unchanged. This preserves every column
marginal and gives the same perturbed row marginal for all \(\ell\);
hence the collection \(\boldsymbol P_\theta\) remains feasible.

Let
\[
\mathcal J(\boldsymbol P)
:=
\sum_{\ell=1}^N\lambda_\ell
\left[
\sum_{r,j=1}^mC_{rj}^{(c)}P_{rj}^{(\ell)}
+
\varepsilon\sum_{r,j=1}^m\phi(P_{rj}^{(\ell)})
\right],
\qquad
\phi(s)=s(\log s-1).
\]
Since only rows \(i\) and \(k\) change, the variation of the transport
cost is
\begin{align*}
&\sum_{\ell=1}^N\lambda_\ell
\sum_{j=1}^m
\left[
C_{ij}^{(c)}(P_\theta^{(\ell)})_{ij}
+
C_{kj}^{(c)}(P_\theta^{(\ell)})_{kj}
-
C_{kj}^{(c)}(P_\star^{(\ell)})_{kj}
\right] =
\theta
\sum_{\ell=1}^N\lambda_\ell
\sum_{j=1}^m
\left(C_{ij}^{(c)}-C_{kj}^{(c)}\right)
(P_\star^{(\ell)})_{kj}.
\end{align*}
For the entropy term, applying the identities
\[
\phi(\theta x)+\phi((1-\theta)x)-\phi(x)
=
x\left[
\theta\log\theta+(1-\theta)\log(1-\theta)
\right],\quad
\sum_j(P_\star^{(\ell)})_{kj}=(q_\star)_k,
\]
we obtain
\begin{align*}
\mathcal J(\boldsymbol P_\theta)
-
\mathcal J(\boldsymbol P_\star)
&=
\theta
\sum_{\ell=1}^N\lambda_\ell
\sum_j
\left(C_{ij}^{(c)}-C_{kj}^{(c)}\right)
(P_\star^{(\ell)})_{kj}
\\
&\quad+
\varepsilon(q_\star)_k
\left[
\theta\log\theta+(1-\theta)\log(1-\theta)
\right].
\end{align*}
Dividing by \(\theta\) yields
\begin{align*}
\frac{
\mathcal J(\boldsymbol P_\theta)
-
\mathcal J(\boldsymbol P_\star)
}{\theta}
&=
\sum_{\ell=1}^N\lambda_\ell
\sum_j
\left(C_{ij}^{(c)}-C_{kj}^{(c)}\right)
(P_\star^{(\ell)})_{kj}
\\
&\quad+
\varepsilon(q_\star)_k
\left[
\log\theta
+
\frac{1-\theta}{\theta}\log(1-\theta)
\right].
\end{align*}
The first term is finite, while the second tends to \(-\infty\) as
\(\theta\downarrow0\). Hence
\[
\mathcal J(\boldsymbol P_\theta)
<
\mathcal J(\boldsymbol P_\star)
\]
for sufficiently small \(\theta>0\), contradicting optimality.
Therefore \(q_\star\) is strictly positive.
\end{proof}

For fixed \(c>0\) and \(\varepsilon>0\), in the proof above we defined the matrix $K^{(c,\varepsilon)}$ entrywise as
\[
\left(K^{(c,\varepsilon)}\right)_{ij}
:=
\exp\left(-\frac{C^{(c)}_{ij}}{\varepsilon}\right)
=
\begin{cases}
\displaystyle
\exp\left(
-\frac{\|x_i-x_j\|^2}{2\varepsilon}
\right),
&
\|x_i-x_j\|\leq c,
\\[3mm]
\displaystyle
\exp\left(
-\frac{c\|x_i-x_j\|-c^2/2}{\varepsilon}
\right),
&
\|x_i-x_j\|>c.
\end{cases}
\]
This matrix $K^{(c,\varepsilon)}$ is the optimal coupling, up to row and column scalings:
\begin{lemma}[Form of the optimal entropic Huber coupling]
\label{lemma:sinkhorn-form-huber}
Fix \(q\in\Delta_m\cap \mathbb{R}^m_{>0}\), and let \(a\in\Delta_m\), possibly with zero entries. Define
\[
S(a)
:=
\left\{
j\in\{1,\ldots,m\}:a_j>0
\right\}.
\]
The unique minimizer \(P_\star\) in
\eqref{eq:def_of_discrete_huber_primal_bary} satisfies
\[
(P_\star)_{ij}=0
\qquad
\text{for every }j\notin S(a).
\]
Moreover, there exist vectors $u\in\mathbb R_{+}^m,\,
v\in\mathbb R_{\geq 0}^m$ such that
\[
v_j>0
\quad\Longleftrightarrow\quad
j\in S(a);\quad 
P_\star
=
\operatorname{diag}(u)
K^{(c,\varepsilon)}
\operatorname{diag}(v).
\]
The marginal conditions are, equivalently,
\[
u\odot K^{(c,\varepsilon)}v=q,
\qquad
v\odot
\left(K^{(c,\varepsilon)}\right)^\top u=a.
\]
\end{lemma}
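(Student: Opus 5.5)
The argument is finite-dimensional convex duality for the strictly convex problem \eqref{eq:def_of_discrete_huber_primal_bary}. Fix \(q\in\Delta_m\cap\mathbb R^m_{>0}\) and \(a\in\Delta_m\). Existence and uniqueness of \(P_\star\) are already available from the proof of \Cref{thm:huber-sinkhorn-existence-uniqueness}.

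\emph{Support.} If \(j\notin S(a)\), the column constraint gives \(\sum_i (P_\star)_{ij}=a_j=0\). Since \(P_\star\geq0\), every entry in column \(j\) must vanish. So we may discard those columns and work with the reduced problem on \(\{1,\ldots,m\}\times S(a)\). There both marginals \(q\) and \(a|_{S(a)}\) are strictly positive.

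\emph{Positivity of the reduced minimizer.} Next we show that \((P_\star)_{ij}>0\) for every \(i\) and every \(j\in S(a)\). We argue as in the positivity step of \Cref{thm:huber-sinkhorn-existence-uniqueness}. Suppose \((P_\star)_{ij}=0\). Because the marginals are positive, there exist \(k\neq i\) and \(l\neq j\) with \(l\in S(a)\), \((P_\star)_{il}>0\), \((P_\star)_{kj}>0\) and \((P_\star)_{kl}>0\). Indeed, row \(i\) has positive mass somewhere else, column \(j\) has positive mass somewhere else, and a standard cycle argument yields such a rectangle; if necessary one can also use the product coupling \(qa^\top\) as a positive feasible direction.

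The simplest choice is to move along \(P_\theta=(1-\theta)P_\star+\theta\, q a^\top\), which is feasible and strictly positive on \(\{1,\ldots,m\}\times S(a)\). The entropy term \(\phi(s)=s(\log s-1)\) has derivative \(\log s\), which equals \(-\infty\) at \(s=0\). Hence the one-sided derivative of the objective at \(\theta=0\) along this direction is \(-\infty\) whenever some reduced entry of \(P_\star\) vanishes. This contradicts optimality. I expect this to be the main, though mild, obstacle: one must check that the remaining terms stay finite, which holds because \(C^{(c)}\) has finite entries and \(\log\) is bounded on the positive entries.

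\emph{Lagrange conditions.} Since \(P_\star\) lies in the relative interior of the reduced orthant, the KKT conditions for the equality constraints apply. There exist \(f\in\mathbb R^m\) and \(g\in\mathbb R^{S(a)}\) such that
\[
C^{(c)}_{ij}+\varepsilon\log (P_\star)_{ij}=f_i+g_j
\qquad\text{for all } i,\ j\in S(a).
\]
Set \(u_i=e^{f_i/\varepsilon}>0\), \(v_j=e^{g_j/\varepsilon}>0\) for \(j\in S(a)\), and \(v_j=0\) for \(j\notin S(a)\). Then \((P_\star)_{ij}=u_iK^{(c,\varepsilon)}_{ij}v_j\) for all \(i,j\), using the support step for the columns outside \(S(a)\). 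This gives \(v_j>0\) exactly when \(j\in S(a)\).

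Finally, the row and column constraints \(P_\star\mathbf 1=q\) and \(P_\star^\top\mathbf 1=a\) read \(u\odot K^{(c,\varepsilon)}v=q\) and \(v\odot (K^{(c,\varepsilon)})^\top u=a\). This is the stated equivalent form of the marginal conditions.
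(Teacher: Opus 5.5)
Your proof is correct and follows the paper's argument. You discard the columns outside $S(a)$, rule out zero entries on $\{1,\ldots,m\}\times S(a)$ by perturbing toward $qa^\top$ (where $\phi(\theta x)/\theta = x\log\theta+\phi(x)$ forces a $-\infty$ one-sided derivative), and read off the scaling form from the Lagrange conditions. The one blemish is the preliminary ``rectangle'' claim requiring $(P_\star)_{kl}>0$, which is false in general (take a diagonal coupling and $(i,j)=(1,2)$), but you never use it, because the $qa^\top$ direction does all the work.
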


\begin{proof}  We use the convention \(0\log 0=0\).
If \(a_j=0\), then every \(P\in\Pi(q,a)\) satisfies $\sum_{i=1}^mP_{ij}=0$, which implies
$P_{ij}=0$ for every $1\leq i\leq m$.
Thus the columns outside \(S(a)\) are effectively zeros which can be
ignored. 

The set of all possible couplings on $S(a)$ contains the  matrix $qa^\top$, which satisfies 
$$ \begin{cases}
    (qa^\top)_{ij}>0 & {\rm if }\  j\in S(a),\\
    (qa^\top)_{ij}=0 & {\rm if }\  j\notin S(a).
\end{cases}
$$
Now we show that the minimizer is unique and satisfies the conclusion of the result. Uniqueness follows easily from the strict convexity of the functional and the convexity of $\Pi(q,a)$. 

To show that the minimizer has strictly positive entries in each column in $S(a)$, assume {\it ad absurdum} that the minimizer $P'$ satisfies $P'_{ik}=0$ for some $k\in S(a)$ (note that we already know that $P'_{i,j}=0$ for $j\notin S(a)$ in order that $P'\in \Pi(q,a)$.) Define $P_\theta = (1-\theta) P' + \theta qa^\top\in \Pi(q,a)$. Let
\[
\mathcal E(P)
=
\langle C^{(c)},P\rangle
+
\varepsilon\sum_{i,j}\phi(P_{ij}),
\qquad
\phi(s):=s(\log s-1).
\]
Writing \(\Delta:=qa^\top-P'\), we have
\begin{align*}
\frac{\mathcal E(P_\theta)-\mathcal E(P')}{\theta}
&=
\langle C^{(c)},\Delta\rangle+
\varepsilon\sum_{P'_{ij}>0}
\frac{\phi(P'_{ij}+\theta\Delta_{ij})-\phi(P'_{ij})}{\theta}+
\varepsilon\sum_{P'_{ij}=0}
\frac{\phi(P'_{ij}+\theta\Delta_{ij})}{\theta}\\
&= \langle C^{(c)},\Delta\rangle+
\varepsilon\sum_{P'_{ij}>0}
\frac{\phi(P'_{ij}+\theta\Delta_{ij})-\phi(P'_{ij})}{\theta}+
\varepsilon\sum_{P'_{ij}=0}
[(qa^\top)_{ij}\log \theta + \phi((qa^\top)_{ij})].
\end{align*}
The first term is finite, and the second converges to
\[
\varepsilon\sum_{P'_{ij}>0}\Delta_{ij}\log P'_{ij}
\]
as $\theta \downarrow 0$. Since \((qa^\top)_{ij}>0\) and \(\{i,j: P'_{ij}=0\}\neq\varnothing\), the last term tends to
\(-\infty\) as \(\theta\downarrow0\). Therefore
\[
\lim_{\theta\downarrow0}
\frac{\mathcal E(P_\theta)-\mathcal E(P')}{\theta}
=
-\infty,
\]
contradicting the optimality of \(P'\). 

We may therefore introduce finite Lagrange multipliers
\(\alpha\in\mathbb R^m\) and
\(\beta\in\mathbb R^{|S(a)|}\) for the two reduced marginal
constraints. The first-order condition is
\[
C_{ij}^{(c)}
+
\varepsilon\log(P_\star)_{ij}
-
\alpha_i
-
\beta_j
=
0,
\qquad
j\in S(a).
\]
Hence
\[
(P_\star)_{ij}
=
\exp\left(\frac{\alpha_i}{\varepsilon}\right)
\exp\left(-\frac{C_{ij}^{(c)}}{\varepsilon}\right)
\exp\left(\frac{\beta_j}{\varepsilon}\right),
\qquad
j\in S(a).
\]
Set
\[
u_i
=
\exp\left(\frac{\alpha_i}{\varepsilon}\right),\quad 
v_j
=
\begin{cases}
\displaystyle
\exp\left(\frac{\beta_j}{\varepsilon}\right),
&
j\in S(a),
\\[2mm]
0,
&
j\notin S(a),
\end{cases}
\]
which combined with the definition of $K^{(c,\varepsilon)}$ gives
\[
P_\star
=
\operatorname{diag}(u)
K^{(c,\varepsilon)}
\operatorname{diag}(v).
\]
This implies
\[
P_\star\mathbf 1_m
=
u\odot K^{(c,\varepsilon)}v,\quad 
P_\star^\top\mathbf 1_m
=
v\odot
\left(K^{(c,\varepsilon)}\right)^\top u,
\]
which proves the marginal identities.
\end{proof}

\begin{theorem}[Convergence of the Huber--Sinkhorn iterates]
\label{thm:huber-sinkhorn-convergence}
Fix \(c>0\) and \(\varepsilon>0\), and for every $\ell \in \{ 1,\dots, N\}$, let
\[
\lambda_\ell>0,
\qquad
\sum_{\ell=1}^N\lambda_\ell=1,\quad 
a^{(1)},\ldots,a^{(N)}\in\Delta_m,\quad
S_\ell
:=
\left\{
j\in\{1,\ldots,m\}:a_j^{(\ell)}>0
\right\}.
\]
Initialize
\[
v_j^{(\ell,0)}
=
\begin{cases}
1,
&
j\in S_\ell,
\\
0,
&
j\notin S_\ell.
\end{cases}
\]
For \(t\geq0\), define
\[
g_i^{(t)}
:=
\prod_{\ell=1}^N
\left([K^{(c,\varepsilon)}v^{(\ell,t)}]_i\right)^{\lambda_\ell},
\quad
q^{(t)}
:=
\frac{g^{(t)}}
{\left\langle g^{(t)},\mathbf 1_m\right\rangle},
\]
and
\[
u^{(\ell,t)}
:=
q^{(t)}
\oslash
\left(K^{(c,\varepsilon)}v^{(\ell,t)}\right),\quad
v^{(\ell,t+1)}
:=
a^{(\ell)}
\oslash
\left(\left[K^{(c,\varepsilon)}\right]^\top u^{(\ell,t)}\right).
\]
Then all the iterates are well defined,
$q^{(t)}\in \mathbb{R}^m_{+}$ for all $t>0$,
and
\[
q^{(t)}
\longrightarrow
\widehat q_{c,\varepsilon}
\qquad
\text{as }t\to\infty.
\]
Furthermore, if
\[
P^{(\ell,t)}
:=
\operatorname{diag}\left(u^{(\ell,t)}\right)
K^{(c,\varepsilon)}
\operatorname{diag}\left(v^{(\ell,t+1)}\right),
\]
then
\[
P^{(\ell,t)}
\longrightarrow
P_\star^{(\ell)},
\qquad
\ell=1,\ldots,N,
\]
where \(P_\star^{(\ell)}\) is the unique regularized optimal coupling
between \(\widehat q_{c,\varepsilon}\) and \(a^{(\ell)}\).
\end{theorem}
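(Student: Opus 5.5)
The plan is to recognize the iterations as iterative Bregman (KL) projections in the sense of \cite{Benamou.Carlier.2015.SISC}, applied to the joint problem \eqref{eq:joint-kl-huber-barycenter} established in the proof of \Cref{thm:huber-sinkhorn-existence-uniqueness}. We alternate KL projections of $\boldsymbol P=(P^{(1)},\ldots,P^{(N)})$, with weights $\lambda_\ell$, onto the affine sets $\mathcal C_1$ (prescribed column marginals $a^{(\ell)}$) and $\mathcal C_2$ (common row marginal). We start from $\boldsymbol P^{(0)}$ with $P^{(\ell,0)}=K^{(c,\varepsilon)}\operatorname{diag}(v^{(\ell,0)})$.

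First, we compute both projections in closed form. The $\lambda$-weighted KL projection onto $\mathcal C_2$ of matrices $\operatorname{diag}(\tilde u^{(\ell)})K\operatorname{diag}(v^{(\ell)})$ rescales rows so that every $P^{(\ell)}$ has row marginal equal to the weighted geometric mean of the current row marginals, normalized. The first-order conditions give $q\propto\prod_\ell (P^{(\ell)}\mathbf 1)^{\lambda_\ell}$. For the current row marginals $u\odot Kv^{(\ell,t)}$, this yields exactly $q^{(t)}$ and the update $u^{(\ell,t)}=q^{(t)}\oslash Kv^{(\ell,t)}$. The normalization appears because all $P^{(\ell)}$ lie in $\mathcal C_1$ and so have total mass one. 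The projection onto $\mathcal C_1$ decouples in $\ell$ and is the column scaling $v^{(\ell,t+1)}=a^{(\ell)}\oslash K^\top u^{(\ell,t)}$.

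Next, we check well-definedness by induction. Since $K$ has strictly positive entries, $v^{(\ell,t)}\ge0$ with support exactly $S_\ell\neq\varnothing$ implies $Kv^{(\ell,t)}>0$ entrywise. Hence $g^{(t)}>0$ and $q^{(t)}>0$, so $u^{(\ell,t)}>0$ and $K^\top u^{(\ell,t)}>0$, and therefore $v^{(\ell,t+1)}$ again has support $S_\ell$.

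Both constraint sets are affine, so Bregman's cyclic projection theorem (Csisz\'ar's alternating $I$-projections) applies. With an initial point of the Gibbs form $\operatorname{diag}\cdot K\cdot\operatorname{diag}$, the iterates $\boldsymbol P$ converge to the KL projection of $K$ onto $\mathcal C_1\cap\mathcal C_2$. Zero columns outside $S_\ell$ are handled by restricting to the face $\{P^{(\ell)}_{ij}=0,\ j\notin S_\ell\}$. By \Cref{lemma:sinkhorn-form-huber} the optimizer lives on this face, and on it the restricted KL objective agrees with \eqref{eq:joint-kl-huber-barycenter}. That projection is the unique minimizer $\boldsymbol P_\star$, so $P^{(\ell,t)}\to P_\star^{(\ell)}$. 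Finally, $q^{(t)}$ is the row marginal of $P^{(\ell,t)}$, which is continuous in $P$, so $q^{(t)}\to q_\star=\widehat q_{c,\varepsilon}$.

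The main obstacle is justifying convergence of cyclic Bregman projections with the entropy, rather than just affine constraints in the interior. There are two issues: iterates may approach the boundary, and the $\mathcal C_2$ projection is a weighted, non-separable one. I would try first to adapt the standard argument. It uses the Pythagorean identity $\mathrm{KL}(\boldsymbol R\|\boldsymbol P^{t})=\mathrm{KL}(\boldsymbol R\|\boldsymbol P^{t+1})+\mathrm{KL}(\boldsymbol P^{t+1}\|\boldsymbol P^{t})$ for every feasible $\boldsymbol R$ on the face. This gives monotone decrease and summability of successive KL gaps, and compactness then yields a feasible limit point. The Gibbs form passes to limits because the scaled log-potentials stay bounded, by strict positivity of $K$ and boundedness of the marginals. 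The limit is therefore the unique minimizer, and the whole sequence converges.
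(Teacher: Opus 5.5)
Your route is the paper's: support induction for $v^{(\ell,t)}$, restriction to the active columns $S_\ell$, closed-form weighted KL projections onto $\mathcal C_1$ (column scaling) and $\mathcal C_2$ (row scaling to the $\lambda$-weighted geometric mean of the row marginals), the cyclic Bregman projection theorem for affine sets, and identification of the limit through \Cref{thm:huber-sinkhorn-existence-uniqueness}. The paper simply cites \cite{Benamou.Carlier.2015.SISC} for convergence of the cyclic projections, whereas you sketch the Csisz\'ar-type Pythagorean argument. That sketch is sound, with one correction. The bound on the log-potentials that keeps limit points strictly positive needs $q^{(t)}$ bounded away from zero, not merely bounded marginals. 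Let $\kappa_{\min},\kappa_{\max}$ be the extreme entries of $K=K^{(c,\varepsilon)}$. Then $(Kv)_i\le(\kappa_{\max}/\kappa_{\min})(Kv)_{i'}$ for every $v\ge0$, $v\neq0$, which gives $q^{(t)}_i\ge\kappa_{\min}/(m\kappa_{\max})$.

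Two of your statements are wrong as written, although both are easily repaired.

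First, consider the weighted KL projection onto $\mathcal C_2$ as you define it (equal row marginals only). It returns the unnormalized geometric mean $g=\prod_\ell (r^{(\ell)})^{\lambda_\ell}$ of the current row marginals $r^{(\ell)}$. By weighted AM--GM, $\langle g,\mathbf 1_m\rangle\le\sum_\ell\lambda_\ell\langle r^{(\ell)},\mathbf 1_m\rangle=1$, with strict inequality unless the $r^{(\ell)}$ already coincide. So mass-one inputs do not yield a normalized output. You can fix this in either of two ways:
\begin{itemize}
\item add the affine constraint $\langle P^{(1)}\mathbf 1_m,\mathbf 1_m\rangle=1$ to $\mathcal C_2$; this leaves $\mathcal C_1\cap\mathcal C_2$ unchanged, and its multiplier produces exactly $q^{(t)}$;
\item or argue as the paper does, that the scalar $\langle g^{(t)},\mathbf 1_m\rangle$ cancels in $\operatorname{diag}(u)K\operatorname{diag}(v)$ once the column scaling is applied.
\end{itemize}

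Second, $q^{(t)}$ is not the row marginal of $P^{(\ell,t)}=\operatorname{diag}(u^{(\ell,t)})K\operatorname{diag}(v^{(\ell,t+1)})$. That matrix lies in $\mathcal C_1$ and has row marginal $u^{(\ell,t)}\odot Kv^{(\ell,t+1)}$. The vector $q^{(t)}$ is the common row marginal of the half-step iterate $\operatorname{diag}(u^{(\ell,t)})K\operatorname{diag}(v^{(\ell,t)})\in\mathcal C_2$. Your Pythagorean argument also gives convergence of these half-step iterates, so apply the continuity argument to them instead. This matches the paper's use of the intermediate projection onto $\mathcal C_2$.
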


\begin{proof}
For ease of notation, we set $K:= K^{(c,\varepsilon)}$ throughout, noting that both $c,\varepsilon$ are fixed. We show by induction that, for every fixed $t$, $v_j^{(\ell,t)}>0$ if and only if  $j\in S_\ell.$ 
 By construction, this holds for $t = 0$. Since \(S_\ell\neq\varnothing\) and \(K_{ij}>0\), one has
\[
\left[Kv^{(\ell,t)}\right]_i
=
\sum_{j\in S_\ell}
K_{ij}v_j^{(\ell,t)}
>0
\]
for every \(i\). Therefore, $g_i^{(t)}, q_i^{(t)} , u_i^{(\ell,t)}>0$. It follows that $ \left[K^\top u^{(\ell,t)}\right]_j>0$ for every \(j\), and hence
\[
v_j^{(\ell,t+1)}
=
\begin{cases}
\displaystyle
\frac{a_j^{(\ell)}}
{\left(K^\top u^{(\ell,t)}\right)_j}
>0,
&
j\in S_\ell,
\\[4mm]
0,
&
j\notin S_\ell.
\end{cases}
\]
By induction, all the iterations are well defined and every
\(q^{(t)}\) has strictly positive entries.

We now restrict the \(\ell\)-th iteration's transport problem to its active columns. Set
\[
K^{(\ell)}
:=
(K_{i,j})_{i\leq m, j\in S_\ell}
\in
\mathbb R_{>0}^{m\times |S_\ell|},\quad
\bar a^{(\ell)}
:=
\left(a_j^{(\ell)}\right)_{j\in S_\ell}
\in
\mathbb R_{>0}^{|S_\ell|}.
\]
Let \(\bar v^{(\ell,t)}\) denote the restriction of
\(v^{(\ell,t)}\) to \(S_\ell\). For collections $\boldsymbol P
=
\left(P^{(1)},\ldots,P^{(N)}\right)$ 
of matrices $P^{(\ell)}
\in
\mathbb R_+^{m\times |S_\ell|}$, define the weighted KL divergence
\[
\operatorname{KL}_{\boldsymbol\lambda}
\left(
\boldsymbol P
\mid
\boldsymbol R
\right)
:=
\sum_{\ell=1}^N
\lambda_\ell
\operatorname{KL}
\left(
P^{(\ell)}
\mid
R^{(\ell)}
\right).
\]
For a nonempty constraint set \(\mathcal C\), its weighted KL
projection is
\[
\operatorname{Proj}_{\mathcal C}^{
\operatorname{KL}_{\boldsymbol\lambda}}
\left(\boldsymbol R\right)
:=
\argmin_{\boldsymbol P\in\mathcal C}
\operatorname{KL}_{\boldsymbol\lambda}
\left(
\boldsymbol P
\mid
\boldsymbol R
\right).
\]
Consider the affine constraint sets
\[
\mathcal C_1
=
\left\{
\boldsymbol P:
\left(P^{(\ell)}\right)^\top\mathbf 1_m
=
\bar a^{(\ell)}
\text{ for every }\ell
\right\},\quad
\mathcal C_2
=
\left\{
\boldsymbol P:
P^{(1)}\mathbf 1
=
\cdots
=
P^{(N)}\mathbf 1
\right\}.
\]
We first compute the weighted KL projection onto \(\mathcal C_1\).
By definition, it solves
\[
\argmin_{P^{(1)},\dots, P^{(N)}} \sum_{\ell=1}^N
\lambda_\ell
\operatorname{KL}\left(P^{(\ell)}\mid R^{(\ell)}\right)
 \  \text{s.t.}\ 
\left(P^{(\ell)}\right)^\top\mathbf 1_m
=
\bar a^{(\ell)},
\
\ell=1,\ldots,N.
\]
The problem separates over \(\ell\), and the positive factor
\(\lambda_\ell\) does not affect the corresponding minimizer. For
fixed \(\ell\), introduce multipliers \(\beta_j\) for the column
constraints. The first-order condition is
\[
\log\left(
\frac{P_{ij}^{(\ell)}}{R_{ij}^{(\ell)}}
\right)
+
\beta_j
=
0,
\]
and therefore
\begin{equation}
    \label{eq:rearranged_exp_beta}
P_{ij}^{(\ell)}
=
R_{ij}^{(\ell)}e^{-\beta_j}.
\end{equation}
Imposing the \(j\)-th column constraint gives
\begin{equation}
    \label{eq:lagrange_exp_is_relation}
e^{-\beta_j}
=
{\bar a_j^{(\ell)}}/\left(
{\sum_i R_{ij}^{(\ell)}}\right),
\end{equation}
and hence, plugging \eqref{eq:lagrange_exp_is_relation} into \eqref{eq:rearranged_exp_beta} and expressing the result in matrix form gives
\[
\left(
\operatorname{Proj}_{\mathcal C_1}^{
\operatorname{KL}_{\boldsymbol\lambda}}
\left(\boldsymbol R\right)
\right)^{(\ell)}
=
R^{(\ell)}
\operatorname{diag}\left(
\bar a^{(\ell)}
\oslash
\left(
\left(R^{(\ell)}\right)^\top\mathbf 1_m
\right)
\right).
\]
Strict convexity of the KL divergence shows that this projection is
unique. We next compute the weighted KL projection onto \(\mathcal C_2\).
By definition, it solves
\[
\argmin_{P^{(1)},\dots, P^{(N)}} \sum_{\ell=1}^N
\lambda_\ell
\operatorname{KL}\left(P^{(\ell)}\mid R^{(\ell)}\right)
 \  \text{s.t.}\ 
P^{(\ell)}\mathbf 1
=
P^{(k)}\mathbf 1,
\
\ell\neq k.
\]
Denote the common marginal by \(p\), so that
$P^{(\ell)}\mathbf 1=p,$ for  $\ell=1,\ldots,N.$ 
For fixed \(p\), the minimization separates over \(\ell\) and over the
rows. Let $r^{(\ell)}
:=
R^{(\ell)}\mathbf 1.$ 
Introducing a multiplier for the \(i\)-th row constraint gives
\[
\log\left(
{P_{ij}^{(\ell)}}/{R_{ij}^{(\ell)}}
\right)
+\alpha_i^{(\ell)}
=0.
\]
Therefore, $P_{ij}^{(\ell)}
=
R_{ij}^{(\ell)}e^{-\alpha_i^{(\ell)}}.$ 
Imposing the common marginal constraint yields $e^{-\alpha_i^{(\ell)}}
=
\frac{p_i}{r_i^{(\ell)}},$ 
and hence $P^{(\ell)}
=
\operatorname{diag}\left(
p\oslash r^{(\ell)}
\right)R^{(\ell)}.$ Substituting this expression into the objective, the terms depending
on \(p_i\) are
\[
\sum_{\ell=1}^N
\lambda_\ell
\left[
p_i\log\left(\frac{p_i}{r_i^{(\ell)}}\right)
-p_i+r_i^{(\ell)}
\right].
\]
Differentiating with respect to \(p_i\) gives
\[
\sum_{\ell=1}^N
\lambda_\ell
\log\left(\frac{p_i}{r_i^{(\ell)}}\right)
=0.
\]
Since $\sum_{\ell=1}^N\lambda_\ell=1$, we obtain
\[
p_i
=
\exp\left(
\sum_{\ell=1}^N
\lambda_\ell\log r_i^{(\ell)}
\right)
=
\prod_{\ell=1}^N
\left(r_i^{(\ell)}\right)^{\lambda_\ell}=\prod_{\ell=1}^N
\left(r_i^{(\ell)}\right)^{\lambda_\ell}.
\]
Thus, defining
\[
g_i
:=
\prod_{\ell=1}^N
\left(r_i^{(\ell)}\right)^{\lambda_\ell}=\prod_{\ell=1}^N
\left([R^{(\ell)}\mathbf 1]_i\right)^{\lambda_\ell},
\]
the weighted KL projection onto \(\mathcal C_2\) is
\[
\left(
\operatorname{Proj}_{\mathcal C_2}^{
\operatorname{KL}_{\boldsymbol\lambda}}
\left(\boldsymbol R\right)
\right)^{(\ell)}
=
\operatorname{diag}\left(
g\oslash r^{(\ell)}
\right)R^{(\ell)}=\operatorname{diag}\left(
g\oslash (R^{(\ell)}\mathbf 1)
\right)R^{(\ell)}.
\]
Strict convexity of the KL objective shows that this projection is
unique. Notice that \(g\) need not have total mass one, since
\(\mathcal C_2\) imposes equality of the row marginals, but does not their normalization.

Starting from $R^{(\ell,0)}
=
K^{(\ell)},$ 
alternating the projections onto \(\mathcal C_2\) and
\(\mathcal C_1\) gives the unnormalized iterations
\[
\widetilde u^{(\ell,t)}
=
g^{(t)}
\oslash
\left(K^{(\ell)}\bar v^{(\ell,t)}\right),\quad
\widetilde v^{(\ell,t+1)}
=
\bar a^{(\ell)}
\oslash
\left(
\left(K^{(\ell)}\right)^\top
\widetilde u^{(\ell,t)}
\right).
\]
The corresponding full-cycle coupling is
\[
\widetilde P^{(\ell,t)}
=
\operatorname{diag}\left(\widetilde u^{(\ell,t)}\right)
K^{(\ell)}
\operatorname{diag}\left(\widetilde v^{(\ell,t+1)}\right).
\]

The cyclic Bregman projection theorem for affine constraint sets
implies that $(
\widetilde P^{(1,t)},\ldots,
\widetilde P^{(N,t)}
)$ converges to the unique weighted KL projection of
$\left(
K^{(1)},\ldots,K^{(N)}
\right)$ onto \(\mathcal C_1\cap\mathcal C_2\); see
\cite[Section 2.1]{Benamou.Carlier.2015.SISC}. Equivalently, the limit is the
unique solution of
\[
\min_{\boldsymbol P\in\mathcal C_1\cap\mathcal C_2}
\sum_{\ell=1}^N
\lambda_\ell
\operatorname{KL}
\left(
P^{(\ell)}
\mid
K^{(\ell)}
\right).
\]
By
\cref{thm:huber-sinkhorn-existence-uniqueness}, this solution exists, is unique and has common row marginal
\(\widehat q_{c,\varepsilon}\).

It remains to compare the unnormalized projection iteration with the
normalized iteration in the statement. Let
$Z_t
:=
\left\langle g^{(t)},\mathbf 1_m\right\rangle.$ 
Since $q^{(t)}
=
\frac{g^{(t)}}{Z_t},$ 
one has $u^{(\ell,t)}
=
\frac{1}{Z_t}
\widetilde u^{(\ell,t)}.$ 
Consequently, $v^{(\ell,t+1)}
=
Z_t\widetilde v^{(\ell,t+1)}.$ 
The scalar factors ($Z_t$) cancel in the coupling:
\begin{align*}
&
\operatorname{diag}\left(u^{(\ell,t)}\right)
K^{(\ell)}
\operatorname{diag}\left(\bar v^{(\ell,t+1)}\right) =
\operatorname{diag}\left(\widetilde u^{(\ell,t)}\right)
K^{(\ell)}
\operatorname{diag}\left(\widetilde v^{(\ell,t+1)}\right).
\end{align*}
Therefore, the normalized and unnormalized algorithms generate the
same sequence of coupling matrices. It follows that $P^{(\ell,t)}
\longrightarrow
P_\star^{(\ell)}$ for every $\ell$.  

The intermediate projection onto \(\mathcal C_2\) has common row
marginal \(g^{(t)}\). Convergence of the alternating Bregman
projections therefore gives $g^{(t)}
\longrightarrow
\widehat q_{c,\varepsilon}.$ 
Since $\left\langle
\widehat q_{c,\varepsilon},
\mathbf 1_m
\right\rangle
=
1,$ 
we also have
\[
Z_t
=
\left\langle g^{(t)},\mathbf 1_m\right\rangle
\longrightarrow
1.
\]
Thus
\[
q^{(t)}
=
\frac{g^{(t)}}{Z_t}
\longrightarrow
\widehat q_{c,\varepsilon}.
\]
Finally, extending the reduced coupling matrices by zero in the
columns outside \(S_\ell\) gives convergence in the original
\(m\times m\) representation.
\end{proof}

\begin{remark}[Zero columns]
If \(a_j^{(\ell)}=0\), then the \(j\)-th column of every feasible
coupling between \(q\) and \(a^{(\ell)}\) is identically zero.
Accordingly, the associated Sinkhorn scaling satisfies $v_j^{(\ell,t)}=0$ 
at every iteration $t$. These zero columns should not be replaced by
a positive numerical floor unless one intentionally wishes to modify
the input histogram. By contrast, for fixed \(\varepsilon>0\), every barycenter iterate
\(q^{(t)}\) and the limiting barycenter
\(\widehat q_{c,\varepsilon}\) have strictly positive entries.
\end{remark}

\begin{remark}
The theorem does not assert convergence of the individual scaling
vectors \(u^{(\ell,t)}\) and \(v^{(\ell,t)}\). These vectors have the reciprocal scalar indeterminacy:
\[
\left(
u^{(\ell,t)},v^{(\ell,t)}
\right)
\longmapsto
\left(
\alpha^{-1}u^{(\ell,t)},
\alpha v^{(\ell,t)}
\right),
\]
which leaves the associated coupling unchanged. The intrinsic
quantities are the barycenter vectors and coupling matrices, whose
convergence is established above.
\end{remark}

We can use \cref{thm:huber-sinkhorn-convergence} to implement an algorithm similar to the ones in \cite{Cuturi.Doucet.2014.ICML, Benamou.Carlier.2015.SISC}.

\begin{algorithm}[H]
\caption{Huber--Sinkhorn barycenter iterations}
\label{alg:huber-sinkhorn-barycenter}
\begin{algorithmic}[1]
\State \textbf{Input:} vectors \(a^{(1)},\ldots,a^{(N)}\in\Delta_m\), weights \(\lambda_\ell\), points \(\{x_1,\ldots,x_m\}\subset \mathbb{R}^d\), Huber parameter \(c>0\), entropic regularization parameter \(\varepsilon>0\).
\State Compute the Huber cost and kernel
\[
C^{(c)}_{ij}=\rho_c(\|x_i-x_j\|),\quad 
K^{(c,\varepsilon)}=\exp(-C^{(c)}/\varepsilon).
\]
\State Initialize \(v^{(\ell)}=\mathbf 1_{S^{(\ell)}}\), \(\ell=1,\ldots,N\), where $S^{(\ell)}=\{i: a^{(\ell)}_i>0\}$.
\For{\(t=0,\ldots,T-1\)}
    \State For each \(\ell\), compute
    \[
    r^{(\ell)}=K^{(c,\varepsilon)}v^{(\ell)}.
    \]
    \State Update the barycenter:
    \[
    \widetilde q_i
    =
    \exp\left(
    \sum_{\ell=1}^N
    \lambda_\ell\log r_i^{(\ell)}
    \right),
    \qquad
    q_i
    =
    \frac{\widetilde q_i}{\sum_{k=1}^m \widetilde q_k}.
    \]
    \State For each \(\ell\), update the Sinkhorn scalings:
    \[
    u^{(\ell)}
    =
    \frac{q}{K^{(c,\varepsilon)}v^{(\ell)}},
    \qquad
    v^{(\ell)}
    =
    \frac{a^{(\ell)}}{(K^{(c,\varepsilon)})^\top u^{(\ell)}}.
    \]
\EndFor
\State \textbf{Output:} \(q\).
\end{algorithmic}
\end{algorithm}

In the algorithm above, all divisions are componentwise. In the implementation, the updates are evaluated with small numerical floors in the nonzero entries, and, for smaller values of \(\varepsilon\), in the log domain using log-sum-exp stabilization.  The stopping criterion monitors both the change in \(q\) and the marginal residuals of
\[
P^{(\ell)}
=
\operatorname{diag}(u^{(\ell)})
K^{(c,\varepsilon)}
\operatorname{diag}(v^{(\ell)}).
\]
Furthermore, as our goal is to compute the Huber-Wasserstein barycenter of barycenter of the image collection, we take $\lambda_\ell = 1/N$ for all $1\leq \ell \leq N$.

\end{document}